\newtheorem{theorem}{Theorem}[section]
\newtheorem{lemma}[theorem]{Lemma}
\newtheorem{fact}[theorem]{Fact}
\newtheorem{corollary}[theorem]{Corollary}
\newtheorem{observation}[theorem]{Observation}
\crefname{observation}{Observation}{Observations}
\newtheorem{definition}[theorem]{Definition}
\newtheorem{claim}[theorem]{Claim}
\newcommand{\Oh}{O}
\newcommand{\tOh}{\tilde{\Oh}}
\newcommand{\Dyck}{\mathsf{Dyck}}
\newcommand{\ed}{\mathsf{ed}}
\newcommand{\edd}{\mathsf{ed_d}}
\newcommand{\ded}{\mathsf{ded}}
\newcommand{\ted}{\mathsf{ted}}
\newcommand{\dedd}{\mathsf{ded_d}}
\newcommand{\C}{\mathcal{C}}
\newcommand{\D}{\mathcal{D}}
\newcommand{\X}{\mathcal{X}}
\newcommand{\hvy}{\mathcal{H}}
\newcommand{\hvys}{\mathcal{S}}
\newcommand{\Lp}{\mathcal{L}}
\newcommand{\Rp}{\mathcal{R}}
\newcommand{\T}{\mathcal{T}}
\newcommand{\dd}{\mathinner{.\,.\allowbreak}}
\newcommand{\lcp}{\mathsf{LCP}}
\newcommand{\lcpt}{\mathsf{LMP}}
\newcommand{\add}{\mathsf{add}}
\newcommand{\addT}{\mathsf{addT}}
\newcommand{\spl}{\mathsf{split}}
\newcommand{\tw}{\mathsf{twin}}
\newcommand{\rg}{\mathsf{range}}
\newcommand{\con}{\mathsf{concat}}
\newcommand{\out}{\mathbf{O}}
\newcommand{\heavy}[1]{\lfloor \lg |#1| \rfloor}
\newcommand{\chd}{\mathsf{child}}
\newcommand{\piece}{\sigma}
\newcommand{\laq}{\mathsf{LAQ}}
\newcommand{\lca}{\mathsf{LCA}}
\newcommand{\ipm}{\mathsf{IPM}}
\newcommand{\A}{\mathcal{A}}
\newcommand{\onto}{\to}
\newcommand{\Zz}{\mathbb{Z}_{\ge 0}}
\newcommand{\F}{\mathcal{F}}
\newcommand{\G}{\mathcal{G}}
\newcommand{\Hcal}{\mathcal{H}}
\newcommand{\str}[2][]{\mathsf{P}_{#1}(#2)}
\newcommand{\ta}{\mathsf{TA}}
\newcommand{\hld}{\mathsf{hld}}
\newcommand{\paren}{\mathsf{P}_{\hld}}
\newcommand{\dedt}{\mathsf{ted_d}}
\title{\vspace{-1.5cm}Dynamic Dyck and Tree Edit Distance: \\ 
 Decompositions and Reductions to String Edit Distance%
 \thanks{This work is partially supported by DARPA QuICC, ONR MURI 2024 award on Algorithms, Learning, and Game Theory, Army-Research Laboratory (ARL) grant W911NF2410052, NSF AF:Small grants 2218678, 2114269, 2347322.}}
\author[1]{Debarati Das}
\author[2]{Jacob Gilbert}
\author[2]{MohammadTaghi Hajiaghayi}
\author[3]{Tomasz Kociumaka}
\author[4]{Barna Saha}
\affil[1]{Pennsylvania State University, United States}
\affil[ ]{\texttt{debaratix710@gmail.com}}
\affil[2]{University of Maryland, United States}
\affil[ ]{\texttt{jgilber8@umd.edu}\; \texttt{hajiaghayi@gmail.com}}
\affil[3]{Max Planck Institute for Informatics, Germany}
\affil[ ]{\texttt{tomasz.kociumaka@mpi-inf.mpg.de}}
\affil[4]{University of California, San Diego, United States}
\affil[ ]{\texttt{bsaha@ucsd.edu}}
\date{\vspace{-1.5cm}}
\begin{document}

\maketitle
\setcounter{page}{0}
\thispagestyle{empty}
\begin{abstract} 
In this paper, we present the first dynamic algorithms for Dyck edit distance and tree edit distance that achieve subpolynomial update times. The Dyck edit distance measures how far a parenthesis string is from a well-parenthesized expression (i.e., the Dyck language), while the tree edit distance quantifies the minimum number of node insertions, deletions, and substitutions required to transform one rooted, ordered, and labeled tree into another. These problems have been studied extensively since the 1970s, with recent advances in both algorithmic efficiency and fine-grained complexity lower bounds.  

Despite this progress, no prior work has addressed efficient dynamic algorithms for these problems, even though many real-world applications involve evolving structured data such as LaTeX, JSON, XML, HTML, hierarchical datasets, and RNA secondary structures. We make the first step in this direction by designing new approximation algorithms for Dyck and tree edit distances in the dynamic setting.  

Our key technical contribution is a set of novel reduction and decomposition techniques that transform instances of Dyck and tree edit distance into efficiently maintainable instances of string edit distance. Leveraging existing dynamic algorithms for string edit distance, we obtain an \( n^{o(1)} \)-approximation for Dyck edit distance with \(n^{o(1)}\) update time. This builds upon but significantly extends prior work on Dyck language decomposition ([Saha, FOCS'14] and [Koucký, Saks, SODA'23]). For tree edit distance, we introduce a new static reduction that improves the best-known approximation bound from \( n^{3/4} \) [Akutsu et al., Algorithmica, 2010] to \( \tilde{O}(\sqrt{n}) \), while generalizing beyond the prior constraint of constant-degree trees. While Akutsu's result only held for constant degree trees, our's hold for any trees. We then extend this reduction dynamically, yielding a dynamic tree edit distance algorithm with an approximation factor of \(n^{1/2+o(1)}\) and update time \(n^{o(1)} \).  

A core component of our approach is a new dynamic maintenance algorithm for heavy-light decomposition, a widely used technique in tree algorithms. Given its broad applicability, we believe this result is of independent interest. Finally, we introduce a novel static and dynamic decomposition method that achieves an $\tilde{O}(k)$-approximation for tree edit distance when the tree edit distance is at most \( k \); combined with the trivial bound $k\le n$, this yields a deterministic \( \tilde{O}(\sqrt{n}) \)-approximation.
While similar decompositions exist for strings, no prior work has successfully extended them to trees. Our approach breaks this barrier, improving the best-known approximations for tree edit distance both in the static and dynamic setting. In the static setting, our algorithm runs in $\tilde{O}(n)$ time; in the dynamic setting, it only requires a poly-logarithmic worst-case update time. The best known linear-time static algorithm for tree edit distance previously achieved an $O(\sqrt{n})$-approximation [Boroujeni, Ghodsi, Hajiaghayi, Seddighin, STOC'19].
\end{abstract}
\newpage
\setcounter{page}{1}
\section{Introduction}

The \emph{Dyck language}  represents the set of well-balanced sequences of parentheses and is one of the most fundamental context-free languages. Early work by Chomsky and Schützenberger \cite{CHOMSKY1963118} showed that a nondeterministic version of Dyck is equivalent to the entire set of context-free grammars. Since then, the Dyck language has not only played a pivotal role in formal language theory but has also found numerous applications in other fields. For example, the popular \emph{RNA folding} problem in bioinformatics requires a similar balance condition between various nucleotides \cite{Nussinov1980, bringmann2019, DasKS2022} and the modern-day programming languages HTML, XML, and JSON maintain balanced sequences of parentheses to determine the structure of stored information. Most of these applications also require correcting errors, and thus determining how far the underlying data semantics deviate from the well-balanced property. This is formulated as the \emph{Dyck edit distance} and has been studied since the early seventies \cite{aho1972minimum}.  The Dyck edit distance counts the minimum number of insertions, deletions, and substitutions needed for a given parenthesis string to become well balanced, that is to become a member of the Dyck language. Furthermore, in many of the above applications, the parenthesis strings being considered change frequently (e.g., mutations in RNA sequences, edits in programming text, etc.) and the corresponding Dyck edit distance must be recomputed many times over. This motivates the study of \emph{dynamic Dyck edit distance} where the goal is to maintain the Dyck edit distance quickly over such updates.

Along with Dyck edit distance, we consider a fundamental measure of distances over trees: \emph{tree edit distance}. The tree edit distance involves comparing two ordered, rooted, and labeled trees to determine the minimum number of node insertions, deletions, and relabelings required to transform one tree into another. Originally introduced by Selkow~\cite{SELKOW1977184}, tree edit distance has found a wide range of applications including in computational biology (e.g., in analyzing RNA molecules, whose secondary structure is depicted as a rooted tree)~\cite{10.1016/j.tcs.2004.12.030,10.5555/262228,10.1137/0213024,DBLP:journals/bioinformatics/ShapiroZ90}, image analysis~\cite{10.1016/S0167-8655(97)00179-7}, and compiler optimization~\cite{10.1145/1644015.1644017}, as well as in NoSQL big databases like MongoDB~\cite{MongoDB}. Like in Dyck edit distance, the change in underlying data requires updating the tree edit distance, and the goal of \emph {dynamic tree edit distance} is to maintain these updates quickly.

In this work, we initiate the study of dynamic Dyck and tree edit distance problems. In this setting, the underlying sequence goes through dynamic insertions, deletions, and substitutions. The goal is to maintain Dyck edit distance or tree edit distance fast with such updates. The dynamic variants are particularly important for practical applications, such as efficiently tracking and updating changes in structured documents or correcting compilation errors. However, even the fastest known algorithms in the static setting for these problems are quite slow.  The static Dyck edit distance problem admits a simple folklore $\Oh(n^3)$-time dynamic programming solution which was improved in a series of papers \cite{bringmann2019, williams2020truly,CDX22,chi2022faster1,NPSVXY25} to $\tilde{O}(n^{\frac{3+\omega}{2}})$\footnote{$\tilde{O}(n)$ notation is commonly used to hide $\text{poly}\log{n}$ factors.} where $\omega$ is the exponent of fast matrix multiplication, $\omega \leq 2.371339$ from \cite{ADVXXZ24}. For tree edit distance, Mao \cite{Xiao21} gave the first subcubic algorithm with a runtime of $\Oh(n^{2.9546})$ for tree edit distance, which was then slightly improved by D\"{u}rr to 2.9148 \cite{Durr23}. While this improvement came after a long series of works \cite{10.1145/322139.322143,ZhangS89,Klein98,10.1145/1644015.1644017}, it also uses fast matrix multiplication. For both of these problems, the best combinatorial algorithms that do not use fast matrix multiplication stands at $O(n^3)$, and fine-grained complexity states that this running time is conditionally optimal \cite{AbboudBW2018, 10.1145/3381878}. 

To obtain faster static running time, a series of work have studied designing approximation algorithms \cite{Saha2014, Saha2015, DasKS2022, KSSODA2023, DBLP:conf/stoc/BoroujeniGHS19,DBLP:conf/innovations/SeddighinS22}. If we concentrate our attention to nearly-linear time algorithms, that is algorithms running in $O(n^{1+\epsilon})$ time for any $\epsilon >0$, the best known approximation factor known for the Dyck edit distance is $O(\log{n})$ in $O(n^{1+\epsilon})$ time for any constant $\epsilon >0$ and $O(2^{\sqrt{\log{n}\log{\log{n}}}})$ in $O(n^{1+o(1)})$ time\footnote{These bounds come from \cite{Saha2014, KSSODA2023} utilizing the best bounds known for the string edit distance problem in the desired time bounds \cite{10.1145/1536414.1536444,ANFOCS20}.}, whereas for the tree edit distance, it is $O(\sqrt{n})$  in $\tilde{O}(n)$ time \cite{DBLP:conf/stoc/BoroujeniGHS19}. This raises the following question.

\begin{center}
{\it 
Can we obtain a dynamic algorithm for the Dyck and Tree edit distance problems that have update times of $n^{\epsilon}$, $\epsilon >0$, and achieve approximation factors that are close to known near-linear time static algorithms for these problems?}
\end{center}

\paragraph*{Results.} In this paper, we make significant progress towards resolving the above question. We show for the Dyck edit distance, if we allow $n^{o(1)}$ update time, then a $O(2^{\sqrt{\log{n}\log{\log{n}}}})$ approximation is possible. On the other hand, if we let the update time to be $n^{\epsilon}$, for any constant $\epsilon >0$, then we can achieve an approximation factor of $(\log{n})^{O(\frac{1}{\epsilon})}$.
\sloppy

\begin{theorem}
\label{thm:dyck-improved-simple}
{\normalfont{(Informal)}}
   There exists a randomized dynamic algorithm, that maintains a $\text{poly}\log n$-approximation of Dyck edit distance (correctly with high probability against an oblivious adversary) for an (initially empty) string $X$ of length at most $n$ undergoing edits. The expected amortized update time of the algorithm is $n^{O(\epsilon)}$ per update for any constant $\epsilon >0$. Moreover, if we allow an approximation factor of $O(2^{\sqrt{\log{n}\log{\log{n}}}})$, then an expected amortized update time of $n^{o(1)}$ is possible.
\end{theorem}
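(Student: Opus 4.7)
The plan is to dynamize the hierarchical Dyck-to-string reduction of Saha \cite{Saha2014} and Kouck\'y--Saks \cite{KSSODA2023}, which, statically, reduces one length-$n$ Dyck edit distance instance to $O(\log n)$ related string edit distance instances at a cost of only an $O(\log n)$ multiplicative loss in approximation. Composing this reduction black-box with a dynamic string edit distance data structure immediately yields both trade-offs claimed in the theorem: plugging in an $n^{O(\epsilon)}$-update, $(\log n)^{O(1/\epsilon)}$-approximate dynamic edit distance algorithm gives the first regime, while plugging in an $n^{o(1)}$-update, $2^{O(\sqrt{\log n \log\log n})}$-approximate dynamic edit distance algorithm gives the second, since the extra $\log n$ factor from the reduction is absorbed in either case.

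First I would recast the static reduction as a balanced interval hierarchy over $X$: every interval $I$ at every level of the hierarchy carries a pair of ``signature'' strings (essentially the unmatched opening/closing brackets emerging from each half of $I$), and the Dyck edit distance of $X$ is sandwiched, up to an $O(\log n)$ factor, by the sum over the $O(\log n)$ levels of ordinary string edit distances between the paired signatures. I would maintain, at every internal node of the hierarchy, (i) the two signature strings, (ii) a handle to a dynamic string edit distance data structure instantiated on them, and (iii) a cached running sum so that the current approximation is readable at the root in $\tilde{O}(1)$ per query.

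On each character update at position $i$ of $X$, only the $O(\log n)$ nodes on the root-to-leaf path through $i$ need their signatures refreshed, and at each such node the refresh consists of an $n^{o(1)}$-sized batch of string edits that are fed into the dynamic edit distance instance kept there. Aggregating the updated edit distance values back up the path takes an additional $\tilde{O}(1)$ work per level. Summed over the $O(\log n)$ affected levels, this gives total expected amortized update time $n^{o(1)}$ (resp.\ $n^{O(\epsilon)}$), with the $\log n$ factors absorbed into the subpolynomial exponent. High-probability correctness against an oblivious adversary is inherited directly from the underlying hashing-based dynamic string edit distance algorithm.

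The main obstacle is that the signature strings are not genuinely local to a node: a single bracket flip deep in the hierarchy can, in the worst case, ripple up and rewrite long prefixes of the signatures many levels above, thereby triggering a large batch of edits into an inner dynamic instance whose internal resources are not prepared for it. The crux of the proof will be to modify the Kouck\'y--Saks construction so that each node's signatures depend on its children's only through short, stably-defined summaries, bounding the per-level change by $n^{o(1)}$ amortized; this combined with a periodic rebuilding / deamortization schedule (standard in dynamic string algorithmics) keeps each root-to-leaf path within budget and is where the bulk of the technical effort lies.
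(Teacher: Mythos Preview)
Your high-level plan is right: dynamize a Dyck-to-string reduction and plug in the dynamic string edit distance algorithm of Kociumaka--Mukherjee--Saha as a black box, inheriting both trade-offs. But your concrete decomposition differs from the paper's, and the gap you yourself flag is exactly the gap the paper fills---via a different mechanism than the one you sketch.

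You propose to maintain the Kouck\'y--Saks LR-decomposition (or a balanced-interval variant of it) directly, keeping signature strings at each hierarchy node. The paper argues in its technical overview that this does not dynamize well: a single edit can rewrite many LR-segments at once (they give an explicit example where $\C(X)$ and $\C(X')$ differ in over half their strings after one deletion), and there is no obvious way to bound the per-level change by $n^{o(1)}$. Your proposed fix---``modify the construction so signatures depend on children only through short, stably-defined summaries, plus periodic rebuilding''---is the hard part of the whole theorem, and you have not said what those summaries are or why they would be stable.

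The paper's solution is a genuinely different decomposition. Instead of an interval hierarchy over positions of $X$, it builds the \emph{height tree} $\T_X$ whose nodes are twin-pairs of parentheses (parent $=$ nearest enclosing pair), then takes a heavy--light decomposition of $\T_X$ using the nonstandard rule that a child $v$ is heavy iff $\lfloor\lg|\T_X(v)|\rfloor=\lfloor\lg|\T_X(u)|\rfloor$. Each heavy path yields one LR-string, and the sum of their (string) edit distances is an $O(\log n)$-approximation of $\ded(X)$. The crucial structural lemma is that a single edit to $X$ affects only the $O(\log n)$ heavy paths on the root-to-edit path in $\T_X$, and each such path changes by $O(1)$ characters; moreover only $O(\log n)$ nodes in total can flip heavy/light status because of the $\lfloor\lg\cdot\rfloor$-based definition. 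This is what replaces your ``short stably-defined summaries'': the heavy paths themselves are the stable summaries, and the stability is a consequence of the heavy--light combinatorics rather than of any rebuilding schedule. No deamortization is needed; the $O(\log n)$ bound on affected string-ED instances is worst-case per update.
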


For tree edit distance, we get the following result, which achieves an $\Oh(\sqrt{n\log n})$-factor approximation in the worst case
(since the maintained value can be capped by $n$ without loss of generality). However, when $\ted(\F,\G)$ is small (less than $\sqrt{n/\log{n}}$), it achieves a significantly better approximation ratio. Moreover, the update time is only polylogarithmic.

\newcommand{\apted}{\widetilde{\smash{\ted}}}

\begin{theorem}\label{thm:dyn-tree-kk2-simple}
   There exists a deterministic dynamic algorithm that, for (initially empty) forests $\F$ and $\G$ of size at most $n$ undergoing edits, maintains a value $\apted(\F,\G)$  such that $\ted(\F,\G) \le \apted(\F,\G) \le \Oh(\ted(\F,\G)^2 \log n)$.
   The algorithm supports updates in  $\Oh(\log^{3+o(1)}n)$ worst-case time.
\end{theorem}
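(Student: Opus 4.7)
The plan is a doubling search over candidate values $k \in \{2^i : 0 \le i \le \lceil \log n \rceil\}$ for $\ted(\F,\G)$. For each $k$, I would invoke a subroutine that either certifies $\ted(\F,\G) \le \Oh(k^2 \log n)$ (returning this bound) or reports $\ted(\F,\G) > k$. Then $\apted(\F,\G)$ is set to the bound produced by the smallest successful $k^\star$; since $k^\star/2$ failed, $\ted(\F,\G) > k^\star/2$, hence $\apted(\F,\G) = \Oh((k^\star)^2 \log n) \le \Oh(\ted(\F,\G)^2 \log n)$, as required. The lower bound $\apted(\F,\G) \ge \ted(\F,\G)$ is maintained by the fact that the subroutine only returns bounds witnessed by an explicit alignment.

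The per-$k$ subroutine realizes the $\tOh(k)$-approximation when $\ted(\F,\G) \le k$ promised in the introduction. The approach I would pursue is a hierarchical heavy-light block decomposition of both forests parametrized by $k$: at each of $\Oh(\log n)$ levels, heavy paths and light subtrees whose weight exceeds $\Oh(n/k)$ are cut into blocks of size $\Oh(n/k)$, yielding $\Oh(k)$ blocks overall. Each block carries a canonical structural signature maintained through a deterministic dynamic dictionary keyed on its Euler-tour encoding. Testing $\ted(\F,\G) \le k$ then reduces to counting block-level mismatches between $\F$ and $\G$ under the induced positional alignment; if the number of mismatches exceeds $\Oh(k)$, the test rejects, otherwise it returns $\Oh(k^2 \log n)$, since each mismatched block contributes at most $\Oh(k \log n)$ to the total distance by the recursive upper bound on distances within a block.

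Dynamically, each update to $\F$ or $\G$ perturbs $\Oh(1)$ positions of the Euler tour and propagates to $\Oh(\log n)$ block-boundary changes across the heavy-light hierarchy, leveraging the dynamic heavy-light decomposition data structure developed earlier in the paper. Each affected block's signature and each candidate $k$'s mismatch counter are updated in $\Oh(\log^{1+o(1)} n)$ worst-case time per boundary event. Aggregating over the $\Oh(\log n)$ values of $k$ probed by doubling search yields the claimed $\Oh(\log^{3+o(1)} n)$ worst-case update time; determinism is preserved by using a deterministic structural labeling over the dictionary rather than Karp--Rabin hashing.

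The main obstacle is the structural lemma at the heart of the decomposition: when $\ted(\F,\G) \le k$, the heavy-light blocks of $\F$ align with those of $\G$ with at most $\Oh(k \log n)$ mismatches across all hierarchy levels. This is delicate because a single tree edit near a root may reshape many heavy-path boundaries at lower levels, so a naive bound loses a super-polylog factor. I would overcome this via an amortized charging argument that assigns each of the $k$ edits in an optimal alignment to $\Oh(\log n)$ boundary-shift events, one per level of the hierarchy, absorbing the level count into the final $\log n$ factor of the approximation. Analogous decompositions are known for strings, and bridging this analysis to trees is precisely the long-standing barrier that this paper claims to cross; I expect the rest of the argument to compose cleanly once the stability of heavy-light boundaries under forest edits is established.
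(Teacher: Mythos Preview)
Your proposal has a genuine gap at the core. The paper's per-$k$ gap test (Section~\ref{sec:kTreeApprox}) is not a fixed block decomposition but an \emph{adaptive refinement}: it starts with the single piece $\str{\F}$ and repeatedly takes a piece $\sigma$ from a worklist $S$; if $|\sigma|\le 4k$ it is discarded, if it has an exact occurrence in $\str{\G}$ within shift $k$ (the $\hm$ routine) it is discarded, and otherwise it is split into at most eight subpieces each of size at most $|\sigma|/2$. The pieces are not just substrings but come in two flavors, subforests and \emph{contexts} (a subtree with an inner subtree removed), and the partitioning lemma (\cref{lem:partitionpiece}) finding a central node via level-ancestor binary search is what makes the constant-fanout, half-size split possible on trees. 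Correctness then follows because at each level of the refinement tree at most $k$ pieces can contain one of the $k$ edits, so after $O(k\log n)$ iterations $S$ empties; conversely the matched pieces stitch into a tree alignment of cost $O(k^2\log n)$.

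Your fixed heavy-light block scheme with ``positional alignment'' does not realize this. A single node insertion can shift every position in $\str{\F}$ by the size of the inserted subtree, so comparing blocks at equal positions is meaningless; the paper instead allows a shift of up to $k$ inside $\hm$ and crucially relies on the \emph{adaptive} splitting to confine the unmatched mass. The structural lemma you flag as the ``main obstacle'' is exactly the place where a fixed decomposition fails, and your charging sketch does not address why boundary shifts stay bounded when a root-level edit relabels the heavy-light status of an entire spine. Moreover, the paper's dynamic implementation of this theorem does not use the heavy-light machinery at all: updates just maintain Navarro's balanced-parenthesis structure (\cref{thm:navarro}) and the dynamic-strings/IPM structure of~\cite{KK22} in $O(\log^{1+o(1)}n)$ time, and the query for threshold $k$ runs in $O(k\log^{3+o(1)}n)$ time. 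The worst-case bound is then obtained not by running all $k$'s in parallel but by a de-amortization scheme (the ``extended query procedure''): buffer incoming updates, process two buffered updates per step while running one $O(\log^{3+o(1)}n)$-time step of an exponential search over $k$, and return $C\kappa^2\log n + b$ where $\kappa$ is the last completed search's output and $b$ the number of updates since.
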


\paragraph*{{\bf Reductions and Decompositions to String Edit Distance.}}
At the core of two of our main results lie new dynamic reductions and decomposition methods to string edit distance. Simplistically speaking, we show an instance of a Dyck edit distance problem can be decomposed into multiple disjoint string edit distance instances in such a way that it is possible to maintain this decomposition fast with dynamic updates, and we do not lose much in the approximation factor. In the static setting, Saha~\cite{Saha2014} gave the first such decomposition method for Dyck edit distance that loses a $\log {n}$ factor in approximation. While her method was randomized, Kouck{\'{y}} and Saks \cite{KSSODA2023} simplified the reduction and made it deterministic. In this paper, one of our main contributions is to implement a novel decomposition inspired by these reductions under dynamic updates. 

For tree edit distance, we also take a similar approach. We show there exists a reduction from tree edit distance to a \emph{single} instance of string edit distance which loses only an $O(\sqrt{n})$ factor in the approximation. This significantly improves the best-known such result by Akutsu, Fukagawa, and Takasu~\cite{AFT10} that provides an $O(n^{3/4})$-approximate reduction, and only for trees that have constant maximum degree. Not only do we improve the approximation factor in the static setting, we also show how this reduction can be maintained dynamically. Using this novel tree to string reduction, we are able to obtain the following dynamic result.

\begin{theorem}{\normalfont{(Informal)}}
\label{thm:dyn-tree-simple}
   There exists a randomized dynamic algorithm, that maintains an $O(n^{\frac{1}{2}+o(1)})$-approximation of tree edit distance between $\F$ and $\G$ (correctly with high probability against an oblivious adversary) for (initially empty) forests $\F, \G$ of size at most $n$ undergoing edits. The expected amortized update time of the algorithm is  $n^{o(1)}$ per edit.
\end{theorem}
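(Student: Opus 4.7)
The plan is to reduce dynamic tree edit distance to a single instance of dynamic string edit distance, so that composing the reduction's approximation loss with the loss of the dynamic string edit distance algorithm (the same one invoked in \Cref{thm:dyck-improved-simple}) yields the claimed bounds.

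The starting point is a static reduction sending a forest $\F$ on $n$ nodes to a parenthesis-type string $\paren(\F)$: compute the heavy-light decomposition of $\F$, then traverse each tree in an Euler tour that always descends into the heavy child first, emitting a matched open/close pair carrying the label of each visited node. The string $\paren(\F)$ has length $\Theta(n)$. I would prove
\[
  \ted(\F,\G) \le \Oh(\ed(\paren(\F),\paren(\G))) \le \tOh(\sqrt{n})\cdot \ted(\F,\G),
\]
i.e.\ that edit distance on the HLD bracket encoding approximates tree edit distance up to an $\tOh(\sqrt{n})$ factor. The lower bound is standard: every string edit on $\paren(\cdot)$ can be simulated by a tree edit of the same asymptotic cost. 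For the upper bound I would charge each tree edit to its string cost through the heavy paths it disturbs. A node lies on at most $O(\log n)$ heavy paths along its ancestral chain, and on each such path the disturbance contributes only $\tOh(\sqrt{n})$ bracket edits via the standard HLD size-balancing argument (long paths accumulate few disturbances per edit; short paths are few in number). This generalizes the Akutsu--Fukagawa--Takasu reduction from the constant-degree, $n^{3/4}$ regime to arbitrary-degree trees with the improved $\tOh(\sqrt{n})$ ratio.

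The main obstacle is maintaining $\paren(\F)$ under edits. A single node operation can flip the identity of the heavy child at $\Theta(\log n)$ ancestors, and each flip permutes a long contiguous block of $\paren(\F)$. The approach would couple the dynamic heavy-light decomposition developed separately in the paper (which reports heavy/light swaps in polylogarithmic worst-case time per update) with a balanced-tree representation of $\paren(\F)$ supporting the $\spl$ and $\con$ primitives. Each reported swap at an ancestor $v$ is processed by splicing out the two affected subtree encodings and reconcatenating them in the new order, so a single tree edit produces only a polylogarithmic batch of $\spl$/$\con$ operations in total. Crucially, these coarse substring operations are exactly the interface accepted by the dynamic string edit distance oracle used in \Cref{thm:dyck-improved-simple}, which processes substring insertions and deletions in $n^{o(1)}$ amortized time apiece.

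Composing the two sources of loss gives an approximation ratio of $\tOh(\sqrt{n}) \cdot n^{o(1)} = n^{1/2+o(1)}$, and the amortized update time is the polylogarithmic cost of maintaining the HLD and the balanced-tree string representation plus the $n^{o(1)}$ cost of the string edit distance oracle, for a total of $n^{o(1)}$. Correctness against an oblivious adversary (with high probability) is inherited from the randomized string edit distance subroutine, since the tree-to-string reduction itself is deterministic.
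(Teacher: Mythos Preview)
Your proposal has the approximation inequalities backwards, and this is not a typo but a genuine gap. You claim $\ted(\F,\G) \le O(\ed(\paren(\F),\paren(\G)))$ is the ``standard'' direction, arguing that string edits can be simulated by tree edits. In fact the standard direction is the reverse: every \emph{tree} edit becomes at most two character edits in the parenthesis string, so $\ed \le 2\,\ted$ is trivial, whereas $\ted \le O(\ed)$ is false for your encoding. The paper's chain example (a spine with identical light subtrees $A,B$ hanging off each spine node) has $\ed = O(1)$ but $\ted = \Theta(\sqrt{n})$; reordering the Euler tour to visit the heavy child first does not break the periodicity that causes this gap, because all spine nodes still carry identical labels. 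The paper's actual fix is not a reordering but a \emph{relabeling}: each node $v$ is labeled by the pair $(\F'(v),\heavy{\F(v)})$, where $\F'(v)$ is the light subtree of $v$ with the heavy child replaced by a sentinel. This label embedding is what destroys the periodicity and enables the hard direction $\ted \le O(\sqrt{n})\cdot \ed$, whose proof (via a careful analysis of ``chains'' of misaligned nodes) is the technical core of the section; your sketch does not touch it.

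The dynamic maintenance also fails as stated. You propose handling heavy-child swaps by splicing blocks with $\spl/\con$ and feeding them to the dynamic string edit distance oracle, but that oracle (the algorithm of \cite{KMS2023} underlying \Cref{thm:dyck-improved-simple}) supports only single-character edits, not substring splits and concatenations; the paper says this explicitly. The reason the paper's reduction is maintainable is precisely that it uses relabeling rather than reordering: a tree edit changes the light subtree $\F'(v)$ of only $O(\log n)$ ancestors (the parents and grandparents of light ancestors of the edited node), so $\paren(\F)$ undergoes only $O(\log n)$ character substitutions per tree edit, which the string oracle handles directly.
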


The above approach is both simple and powerful as it enables us to use the known dynamic approximation algorithms for string edit distance \cite{KMS2023} as a black box. 

Compared to dynamic graph algorithms, the development in dynamic string algorithms is relatively recent~\cite{chen2013dynamic, CKM20, KMS2023, bringmann2024dynamic}. We add several important tools to this growing field of research. As part of our Dyck and tree reductions to string edit distance, we propose a new dynamic heavy-light decomposition (see Section~\ref{subsec:heavy}) method where the heavy paths do not need to be split or merged, and therefore can be maintained efficiently over updates. 
Given the significance of heavy-light decomposition in the literature of strings, we believe our approach will be useful for many other applications as well. 

\paragraph*{Improved Static Approximation Algorithm for Tree Edit Distance}
Our results on tree edit distance not only provide the first nontrivial algorithm in the dynamic setting, but also surpass the best known approximation bound for tree edit distance achievable in near-linear time.
Specifically, our approximation algorithm from Theorem~\ref{thm:dyn-tree-kk2-simple} distinguishes whether a given tree edit distance instance has distance at most $k$ or at least $c k^2 \log n$, for an integer $k$, input size $n$, and a sufficiently large constant $c$.
Consequently, whenever $\ted(\F, \G) = o\big(\sqrt{n / \log n}\big)$, our method outperforms the state-of-the-art near-linear-time algorithm, achieving an $O(\sqrt{n})$-factor approximation~\cite{DBLP:conf/stoc/BoroujeniGHS19}.

The $k$-versus-$k^2$ gap problem has been studied extensively for string edit distance approximation in near-linear~\cite{LandauV11988,GRS:20,KSSODA2023} and sublinear~\cite{GKS19,KS20a,GRS:20} time, as well as for Dyck edit distance approximation in near-linear time~\cite{Saha2014,KSSODA2023}.
However, none of the existing techniques for edit distance problems in this regime extend naturally to trees.
We develop a new string decomposition approach that not only generalizes to the tree setting—yielding improved approximation guarantees and runtime over prior work on tree edit distance—but can also be efficiently maintained in the dynamic setting.

\subsection*{Roadmap}
In \cref{sec:prelim}, we provide a short preliminaries section with useful definitions, notations, and brief observations.  
This is followed by a technical overview (\cref{sec:to}) describing our reductions to string edit distance, our novel tree edit distance approximation algorithm, and a discussion of related work in Section~\ref{sec:related}.  
In Section~\ref{sec:fastapprox}, we present our dynamic Dyck approximation algorithm, along with proofs of its runtime and correctness.  
In \cref{sec:treetostring}, we introduce our new tree-to-string reduction in both the static and dynamic settings.  
Section~\ref{sec:kTreeApprox} provides our state-of-the-art tree edit distance approximation algorithm with a proof of correctness and its adaptation to the dynamic setting.  
Finally, \cref{sec:approx,sec:exact} present additional results on exact dynamic Dyck edit distance using techniques from prior work.

\section{Preliminaries}
\label{sec:prelim}
Given a string $X = x_0 x_1\ldots x_{n-1}$, we denote $x_i$ as $X[i]$ and substrings $x_i x_{i+1} \ldots x_{j-1}$ as $X[i \dd j-1] = X(i-1\dd j)$. $X[i\dd]$ denotes $X[i\dd|X|)$, the \emph{suffix} of $X$ starting at $i$. Given an interval $I \subseteq [0 \dd n)$, $X[I]$ is the substring of $X$ of all characters in $X$ with indices that fall in interval $I$. Any substring of $X$ starting with the first character of $X$ is a \emph{prefix} of $X$, i.e. $X[0\dd i)$ for any $i \in [0\dd n]$. We let $\overline{X}$ represent the reverse string of $X$, i.e. $x_{n-1} x_{n-2} \ldots x_0$. For two strings $X, Y$, $X \cdot Y$ is the concatenation of $X$ and $Y$. For a sequence of strings $X_1, X_2, \ldots, X_k$, $\bigodot_{i = 1}^k X_i = X_1 \cdot X_2 \cdot \ldots \cdot X_k$. We denote $\lcp(X, Y)$ as the length of the longest common prefix of $X$ and $Y$. Given two strings $X$ and $Y$, the \emph{edit distance} $\ed(X, Y)$ is the minimum number of insertions, deletions, and substitutions needed to transform $X$ into $Y$.

\subsection{Parenthesis Strings}
For a parenthesis string $X$, let the \emph{transpose} of $X$, denoted by $T(X)$, be the string corresponding to $\overline{X}$ with each opening parenthesis replaced by a closing parenthesis of the same type and each closing parenthesis replaced by an opening parenthesis of the same type.  For example, if $X = ``(([)["$, then $T(X) = ``](]))"$.
We define the height of a character in $X$ as the height of the index at which it occurs, where the height of an index $i \in [0\dd n]$, denoted $h_X(i)$, is defined as the difference in the number of opening and closing parentheses in $X[0\dd i)$.
A parenthesis string is \emph{monotone} if it is composed of only opening parentheses or only closing parentheses. For a string $X$, $\lcpt(X)$ denotes the longest monotone prefix of $X$.

A \emph{Dyck} sequence is a sequence of opening and closing parentheses that is well-parenthesized. We denote $\Dyck$ as the set of all Dyck sequences.

\begin{definition}\label{def:twins}
Let $X$ be a parenthesis string.
Indices $0 \le i < j < |X|$ are called \emph{twins} if $h_X(i) = h_X(j+1)$ and $h_X(j') > h_X(i)$ holds for all $j' \in (i\dd j]$.\footnote{Note that $X[i]$ is an opening parenthesis since $h_X(i+1) > h_X(i)$, and $X[j]$ is a closing parenthesis since $h_X(j) > h_X(j+1)$.}
We write $\tw_X(i) := j$ and $\tw_X(j) := i$ when $i$ and $j$ are twins.\footnote{When the parenthesis string $X$ is clear from context, we omit the subscript $X$ in $h_X(\cdot)$ and $\tw_X(\cdot)$.}
\end{definition}

The following observation is easy to see from the above definition.

\begin{observation}
\label{obs:noncrossingtwins}
The set of twins is \emph{non-crossing}; that is, for twins $i < i'$ and $j < j'$ with $i' = \tw_X(i)$ and $j' = \tw_X(j)$, it cannot occur that $i < j < i' < j'$.
\end{observation}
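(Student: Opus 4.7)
The plan is to prove the claim by contradiction. Suppose that twins $(i,i')$ and $(j,j')$ with $i' = \tw_X(i)$ and $j' = \tw_X(j)$ satisfy $i < j < i' < j'$, and I will extract a contradiction purely from the height conditions in \cref{def:twins}.

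First, I would apply the strict-height condition of the pair $(i,i')$: since $j \in (i \dd i']$, the definition forces $h_X(j) > h_X(i)$. Next, I would apply the strict-height condition of the pair $(j,j')$ to the index $i'+1$. The key arithmetic step is to verify $i'+1 \in (j \dd j']$: we have $i'+1 > j$ because $j < i'$ (so $j \le i'$, hence $j < i'+1$), and $i'+1 \le j'$ because $i' < j'$. Therefore $h_X(i'+1) > h_X(j)$.

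Finally, I would combine these inequalities with the matching-heights condition $h_X(i'+1) = h_X(i)$ from the twin pair $(i,i')$. This gives
\[
h_X(i) \;=\; h_X(i'+1) \;>\; h_X(j) \;>\; h_X(i),
\]
which is the desired contradiction, so the configuration $i < j < i' < j' $ is impossible.

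The proof is essentially a short interval-bookkeeping argument; the only place where a slip is possible is confirming that $i'+1$ lies in the half-open interval $(j \dd j']$ (both endpoints must be checked against the strict inequalities $j < i'$ and $i' < j'$). No additional machinery or auxiliary lemma is needed.
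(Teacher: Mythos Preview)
Your argument is correct. The paper does not actually write out a proof of this observation (it merely states that it ``is easy to see from the above definition''), and your contradiction via the two height inequalities $h_X(j)>h_X(i)$ and $h_X(i'+1)>h_X(j)$ combined with $h_X(i)=h_X(i'+1)$ is exactly the natural way to fill in that omission.
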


In this work, we consider the problem of Dyck edit distance, defined as follows.
\begin{definition}[Dyck edit distance]
    Given a parenthesis string $X$, the \emph{Dyck edit distance} of $X$, denoted as $\ded(X)$, is the minimum number of character deletions, insertions, and substitutions needed to place $X$ in $\Dyck$.
\end{definition}

\subsection{Tree Edit Distance Preliminaries}

\subsubsection{Alignments}
\label{sec:tree-prelim}
Recall that given two strings $X$ and $Y$, the edit distance $\ed(X, Y)$ is the minimum number of insertions, deletions, and substitutions needed to transform $X$ into $Y$.  Edit distance may be equivalently defined via the minimum cost \emph{alignment} $\A: X \rightarrow Y$. 

\begin{definition}[Alignments]
\label{def:alignment}
  A sequence $\A = (x_t,y_t)_{t=0}^m$ is an \emph{alignment}
  of a string $X\in \Sigma^*$ onto a string $Y\in\Sigma^*$, denoted $\A : X \onto Y$, if $(x_0,y_0)=(0,0)$,
  $(x_m,y_m)=(|X|,|Y|)$, and $(x_{t+1},y_{t+1})\in \{(x_t+1,y_t+1),(x_t+1,y_t),(x_t,y_t+1)\}$ for $t\in [0\dd m)$.
\end{definition}
Given an alignment $\A = (x_t,y_t)_{t=0}^m : X \onto Y$, for every $t\in [0\dd m)$:
\begin{itemize}
  \item If $(x_{t+1},y_{t+1})=(x_t+1,y_t)$, we say that $\A$ \emph{deletes} $X[x_t]$.
  \item If $(x_{t+1},y_{t+1})=(x_t,y_t+1)$, we say that $\A$ \emph{deletes} $Y[y_t]$.
  \item If $(x_{t+1},y_{t+1})=(x_t+1,y_t+1)$, we say that $\A$ \emph{aligns} $X[x_t]$ and $Y[y_t]$, denoted $X[x_t] \sim_\A Y[y_t]$. If~additionally $X[x_t]= Y[y_t]$, we say that $\A$ \emph{matches} $X[x_t]$ and $Y[y_t]$, denoted $X[x_t] \simeq_\A Y[y_t]$.
  Otherwise, we say that $\A$ \emph{substitutes} $X[x_t]$ for $Y[y_t]$.
\end{itemize}

The \emph{cost} of an edit distance alignment $\A$ is the total number characters that $\A$ deletes, inserts, or substitutes.
We denote the cost by $\ed_\A(X,Y)$.
The cost of an alignment $\A=(x_t,y_t)_{t=0}^m$ is at least its \emph{width}, defined as $\max_{t=0}^m |x_t-y_t|$.
Observe that $\ed(X,Y)$ can be defined as the minimum cost of an alignment of $X$ and~$Y$,
that is, $\ed(X,Y) = \min_{\A : X \onto Y} \ed_{\A}(X,Y)$.
An alignment of $X$ and $Y$ is \emph{optimal} if its cost is equal to $\ed(X, Y)$. We often consider a restricted version of edit distance, the \emph{deletion-only edit distance}, which is the minimum cost of an alignment of $X$ and $Y$ in which only deletions and matches are allowed (no substitutions are allowed), denoted $\ded(X, Y)$. It is easy to see that $\ded(X, Y) \leq 2 \cdot \ed(X, Y)$ as any substitution $(x_t, y_t), (x_t + 1, y_t + 1)$ can be replaced with two deletions $(x_t, y_t), (x_t + 1, y_t), (x_t + 1, y_t + 1)$. We define analogous deletion-only versions for Dyck and Tree edit distances as well, denoted $\dedd$ and $\dedt$, respectively.

We say that a set $M \subseteq \Zz \times \Zz$ is a \emph{monotone matching} if there do not exist distinct pairs $(x, y), (x', y') \in M$ such that $x \le x'$ and $y \ge y'$.
Observe that for every alignment $\A$ between $X$ and $Y$, the set $\{(x,y)\in [0\dd |X|)\times [0\dd |Y|): X[x]\sim_{\A} Y[y]\}$
is a monotone matching.

\subsubsection{Forests}
\label{subsubsec:forests}
A \emph{tree} $\T = (V, E, \lambda)$ is a directed acyclic graph with node set $V$, edge set $E$, and node labeling $\lambda: V \to \Sigma$ for some alphabet $\Sigma$, such that every node except the unique \emph{root} of $\T$ has exactly one incoming edge. The \emph{size} of $\T$, written $|\T|$, is $|V|$. 
For a node $u \in V$, we write $u \in \T$. If $(u, v) \in E$, then $u$ is the \emph{parent} of $v$ and $v$ is a \emph{child} of $u$; let $\chd(v)$ denote the set of children of $v$. A node $v$ is \emph{internal} if $|\chd(v)| \ge 1$, and a \emph{leaf} otherwise. 
For $v \in V$, let $P$ denote the unique path from the root to $v$ (inclusive). Every node on $P$ is an \emph{ancestor} of $v$, and $v$ is a \emph{descendant} of each such node. A path from the root to a leaf is called a \emph{branch} of $\T$. 
For $v \in V$, $\T(v)$ denotes the subtree of $\T$ rooted at $v$, i.e., the subgraph of $\T$ induced by the descendants of $v$.

Throughout this work, we consider \emph{ordered} trees, where the children of each node have a fixed left-to-right order. 
We also say that a \emph{forest} is a sequence of trees. For every node $v$ in a forest $\F$, we write $\F(v)$ for the subtree $\T(v)$ of the tree $\T$ containing $v$.

Given two forests $\F=(V_{\F}, E_{\F}, \lambda_{\F})$ and $\G=(V_{\G}, E_{\G}, \lambda_{\G})$, the \emph{tree edit distance} $\ted(\F, \G)$ is the minimum number of node insertions, node deletions, and node label substitutions needed to transform $\F$ into $\G$. In a forest $\F$, a \emph{node deletion} of a node $v \in \F$ with parent $u$ and children $w_1, w_2, \ldots, w_\ell$ removes $v$ from $\F$ and sets the parent of $w_1, w_2, \ldots, w_\ell$ to $u$. If $v$ has no parent, then each of $w_1, w_2, \ldots, w_\ell$ become root nodes of its own tree (its respective subtree) in $\F$. In a forest $\F$ containing node $u$ with children $w_1, w_2, \ldots, w_\ell$, a \emph{node insertion} adds a new node $v$ as the child of some node $u \in \F$ such that $u$ is to the right of $w_i$, to the left of $w_{j+1}$, and $u$ is set to be the parent of $w_{i+1}, w_{i+2}, \ldots, w_j$ for any $0 \leq i < j \leq \ell$. If $i = 0$ or $j = \ell$, then $u$ will have no left or right sibling, respectively, after insertion since there is no $w_0$ or $w_{\ell + 1}$ node. Furthermore, if a node $v$ is inserted as a new root of $\F$, $v$ becomes the child of the ``virtual root'' of $\F$, the parent of all root nodes in $\F$, and so $v$ will become the parent of a subset of the root nodes of $\F$.

Given a forest $\F$, a folklore mapping of $\F$ to a string, called the \emph{parenthesis representation} of $\F$ and denoted $\str{\F}$, is defined as follows.  
Let $\F$ consist of trees $\T_1, \T_2, \ldots, \T_m$ with respective root nodes $r_1, r_2, \ldots, r_m$, arranged from left to right so that $\T_1$ is the leftmost tree in $\F$. Then,
$\str{\F} = \bigodot_{i = 1}^m \texttt{(}_{\lambda(r_i)} \str{\T_i \setminus r_i} \texttt{)}_{\lambda(r_i)}$.  
In this definition, the symbols $\texttt{(}_\ell$ and $\texttt{)}_\ell$ represent parentheses in the constructed string (as opposed to ordinary algebraic parentheses), and the subscript $\ell$ denotes their \emph{type}, which matches only with parentheses having the same subscript.

For a node $u \in \F$, let $o_P(u)$ and $c_P(u)$ denote the indices of the opening and closing parentheses in $\str{\F}$ corresponding to $u$; note that $o_P(u)$ and $c_P(u)$ are twins (\cref{def:twins}).  
It is easy to verify inductively that, for any node $u$ in a forest $\F$, the substring of $\str{\F}$ between $o_P(u)$ and $c_P(u)$ corresponds exactly to the parenthesis representation of the subtree rooted at $u$, i.e., $\str{\F}[o_P(u) \dd c_P(u)] = \str{\F(u)}$.  
This yields the following useful observation.

\begin{observation}
\label{obs:paren-anc}
Let $\F$ be a forest with nodes $u$ and $v$. Then, $u$ is an ancestor of $v$ if and only if 
$o_P(u) \le o_P(v) < c_P(v) \le c_P(u)$.
\end{observation}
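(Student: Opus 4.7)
The plan is to derive both implications from the identity $\str{\F}[o_P(u) \dd c_P(u)] = \str{\F(u)}$ (recalled immediately before the observation) combined with two auxiliary facts about $\str{\cdot}$: that each opening parenthesis appearing in $\str{\F}$ corresponds to a unique node of $\F$ (and, by \cref{def:twins}, is paired with its unique matching closing parenthesis), and that the offset between the position of any node's parenthesis inside $\str{\F(u)}$ and its position inside $\str{\F}$ is exactly $o_P(u)$. Both auxiliary facts will be obtained by a one-line induction on $|\F|$ that mirrors the recursive definition $\str{\F} = \bigodot_i \texttt{(}_{\lambda(r_i)} \str{\T_i \setminus r_i} \texttt{)}_{\lambda(r_i)}$, so I treat them as available for the rest of the argument.

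For the forward implication, I would assume that $u$ is an ancestor of $v$, so $v \in \F(u)$. Inside $\str{\F(u)}$, node $v$ owns an opening parenthesis at some position $p$ and a matching closing parenthesis at some $q > p$; shifting by $o_P(u)$ places these at positions $o_P(u)+p$ and $o_P(u)+q$ of $\str{\F}$, which both lie in $[o_P(u) \dd c_P(u)]$. Uniqueness of the node-to-parenthesis correspondence then forces these indices to coincide with $o_P(v)$ and $c_P(v)$, respectively, yielding $o_P(u) \le o_P(v) < c_P(v) \le c_P(u)$.

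For the reverse implication, I would assume $o_P(u) \le o_P(v) < c_P(v) \le c_P(u)$, so the opening parenthesis at position $o_P(v)$ of $\str{\F}$ lies inside the block $\str{\F}[o_P(u) \dd c_P(u)]$ that equals $\str{\F(u)}$. This parenthesis therefore represents some node in $\F(u)$; by uniqueness that node must be $v$, so $v \in \F(u)$ and hence $u$ is an ancestor of $v$. The only step I view as even mildly subtle is pinning down the bijectivity and offset-consistency of the node-to-parenthesis correspondence used in both directions, but these follow from the same induction the paper already invokes when asserting the substring identity, so I do not expect any real obstacle.
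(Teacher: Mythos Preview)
Your proposal is correct and follows essentially the same approach as the paper, which does not give a detailed proof but merely states that the observation follows from the substring identity $\str{\F}[o_P(u)\dd c_P(u)]=\str{\F(u)}$. You have simply made explicit the bijectivity and offset-consistency details that the paper leaves implicit.
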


\begin{definition}[Tree Alignments]\label{def:ta}
    We say that an alignment $\A : \str{\F} \onto \str{\G}$ is a \emph{tree alignment} of forests $\F$ and $\G$, denoted by $\A : \F \onto \G$
    if the following \emph{consistency} conditions are satisfied for each $u\in V_\F$:
    \begin{itemize}
        \item either $\A$ deletes both $\str{\F}[o_P(u)]$ and $\str{\F}[c_P(u)]$, or
        \item there exists $v\in V_\G$ such that $\str{\F}[o_P(u)] \sim_\A \str{\G}[o_P(v)]$ and $\str{\F}[c_P(u)] \sim_\A \str{\G}[c_P(v)]$.
    \end{itemize}
\end{definition}

Note that $\ted(\F, \G)$ can be defined as the minimum cost tree alignment of $\str{\F}$ and $\str{\G}$, i.e., $2 \cdot \ted(\F, \G) = \min_{\A \in \ta(\F, \G)} \ted_{\A}(\F, \G)$.

\subsubsection{Heavy-Light Decompositions}
\label{subsec:heavy}

Given a forest $\F$, a heavy-light decomposition of $\F$ assigns a label \emph{heavy} or \emph{light} to each node of $\F$. For a given node $u \in \F$, a child node $v$ of $u$ is \emph{heavy} if $\heavy{\F(v)} = \heavy{\F(u)}$, and \emph{light} otherwise. A node with no parent is light by default. Here, for a node $u \in \F$, $\F(u)$ denotes the subtree of $\F$ rooted at vertex $u$, and $\heavy{\F(u)}$ denotes the \emph{heavy depth} of $u$ in $\F$. A set of heavy nodes in $\F$ is called a \emph{heavy path} if it forms a connected path in $\F$. The following useful observations are folklore:

\begin{observation}
\label{obs:heavy-light}
    Given a forest $\F$ with $|\F| = n$, any root-to-leaf path includes at most $\Oh(\log n)$ light nodes.
\end{observation}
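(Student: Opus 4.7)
The plan is to exploit the fact that the heavy-depth statistic $\heavy{\F(u)} = \lfloor \lg |\F(u)|\rfloor$ is a non-negative integer that is monotone non-increasing along any root-to-leaf descent and strictly decreases at every light non-root node. First, I would verify that for a child $v$ of a node $u$, the inequality $|\F(v)| < |\F(u)|$ holds (since $u$ itself lies in $\F(u) \setminus \F(v)$), which gives $\heavy{\F(v)} \le \heavy{\F(u)}$. Then, by the definition recalled in \cref{subsec:heavy}, the child $v$ is heavy exactly when these two heavy depths are equal; consequently, whenever $v$ is light, $\heavy{\F(v)} < \heavy{\F(u)}$, i.e., the integer-valued heavy depth drops by at least one as we descend through a light node.

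With that key monotonicity in hand, I would fix an arbitrary root-to-leaf path $v_0, v_1, \ldots, v_k$ and consider the sequence of heavy depths $\heavy{\F(v_0)}, \heavy{\F(v_1)}, \ldots, \heavy{\F(v_k)}$. This sequence lies in the integer interval $[0, \lfloor \lg n\rfloor]$, is non-increasing, and strictly decreases at every index $i \ge 1$ for which $v_i$ is light. Hence at most $\lfloor \lg n \rfloor$ of the nodes $v_1, \ldots, v_k$ can be light. Adding the root $v_0$, which the excerpt declares to be light by default, gives at most $\lfloor \lg n \rfloor + 1 = \Oh(\log n)$ light nodes on the path, as claimed.

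The statement is elementary and I do not expect any genuine obstacle. The only points that need mild care are (i) remembering to count the root separately because of the \emph{light by default} convention for nodes without a parent, and (ii) justifying the strict inequality $|\F(v)| < |\F(u)|$ for a child $v$, which is what turns a non-strict $\lfloor \lg \cdot \rfloor$ comparison into a genuine decrease at every light step.
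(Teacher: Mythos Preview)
Your proposal is correct and follows essentially the same argument as the paper: both observe that the heavy depth is a non-increasing integer sequence along any root-to-leaf path that strictly drops at every light (non-root) node, and since it ranges over $[0,\lfloor\lg n\rfloor]$ there can be at most $\Oh(\log n)$ light nodes. If anything, your treatment of the root's light-by-default status and the strict-versus-non-strict inequality is slightly more explicit than the paper's.
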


\begin{proof}
    A node $v$ is light if it does not have the same heavy depth as its parent $u$.
    For nodes $u$ and $v$ such that $v$ is in the subtree of $u$, note that the heavy depth of $v$ is at most the heavy depth of $u$ since $|\F(v)| < |\F(u)|$. Therefore, for any root-to-leaf path, the sequence of heavy depth values corresponding to each node on the path is non-increasing.  Since the maximum heavy depth value is $\heavy{\F} = \heavy{n}$ and the minimum heavy depth value is 0, there are at most $\heavy{n} + 1 = \Oh(\log n)$ pairs of nodes $u$ and $v$ where $u$ is the parent of $v$ in a root-to-leaf path such that the heavy depth of $v$ is smaller than the heavy depth of $u$.  Therefore, there are at most $\Oh(\log n)$ light nodes in a root-to-leaf path.
\end{proof}

\begin{observation}
\label{obs:heavy-child}
    Given a forest $\F$, every node $u \in \F$ has at most one child node that is heavy.    
\end{observation}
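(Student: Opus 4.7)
The plan is to prove the observation by contradiction, relying directly on the definition of heavy via the macro $\heavy{\F(u)} = \lfloor \lg |\F(u)| \rfloor$ and the fact that distinct children of $u$ root disjoint subtrees, all contained in $\F(u) \setminus \{u\}$.

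First, I would set $d := \heavy{\F(u)} = \lfloor \lg |\F(u)| \rfloor$, so that by definition $2^d \le |\F(u)| < 2^{d+1}$. Suppose toward a contradiction that $u$ has two distinct children $v_1, v_2$ that are both heavy. By the definition of heavy, $\heavy{\F(v_i)} = d$ for $i \in \{1,2\}$, which gives $|\F(v_i)| \ge 2^d$ for each $i$.

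Next I would observe that, since $\F$ is a forest and $v_1 \neq v_2$ are distinct children of $u$, the subtrees $\F(v_1)$ and $\F(v_2)$ are vertex-disjoint and neither contains $u$. Hence
\[
|\F(u)| \ge 1 + |\F(v_1)| + |\F(v_2)| \ge 1 + 2 \cdot 2^d = 2^{d+1} + 1 > 2^{d+1},
\]
which contradicts $|\F(u)| < 2^{d+1}$. This forces $u$ to have at most one heavy child.

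The only subtlety is being careful that we are using the precise macro definition $\heavy{x} = \lfloor \lg |x| \rfloor$ rather than any other quantity called "heavy depth"; otherwise the argument is completely elementary and I do not anticipate an obstacle. The same calculation incidentally shows that a heavy child $v$ always satisfies $|\F(v)| \ge |\F(u)|/2$, which is the standard quantitative form of the property and from which \Cref{obs:heavy-light} can also be read off.
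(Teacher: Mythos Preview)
Your proof is correct and takes essentially the same approach as the paper: assume two heavy children, use the definition $\heavy{\F(u)} = \lfloor \lg |\F(u)| \rfloor$ to bound each child's subtree size below by $2^d$, and derive a contradiction with $|\F(u)| < 2^{d+1}$. The only cosmetic difference is that you include the extra $+1$ for the root $u$ itself, which the paper omits but does not need.
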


\begin{proof}
    Assume for sake of contradiction that a node $u$ has at least two heavy children $v_1, v_2$. Let $2^{k+1} > |\F(u)| \geq 2^k$ for some $k \in \mathbb{Z}^+$.  Then, $\heavy{\F(v_1)} = \heavy{\F(v_2)} =k$, and so,
    \begin{align}
        |\F(v_1)| + |\F(v_2)| &\geq 
        2^{\heavy{\F(v_1)}} + 2^{\heavy{\F(v_2)}} = 2*2^k = 2^{k+1} > |\F(u)|.
        \label{ineq:heavychild}
    \end{align}
    Since $v_1$ and $v_2$ are both children of $u$, however, $|\F(u)| \geq |\F(v_1)| + |\F(v_2)|$, which is a contradiction with Inequality~\ref{ineq:heavychild}.  
\end{proof}

\section{Technical Overview}\label{sec:to}

\subsection{Dynamic Dyck Approximation Algorithm.} We start with a technical overview of our first algorithm that approximates Dyck edit distance in the dynamic setting by reducing the Dyck edit distance instance to a set of dynamic string edit distance instances. In the static setting, similar reductions were studied~\cite{Saha2014, KSSODA2023}, where the core idea was to decompose the Dyck edit distance instance into multiple string edit distance instances. Our new dynamic reduction begins with a similar decomposition concept but employs a new technique that diverges significantly from previous approaches. As \cite{KSSODA2023} directly decomposes the input string, even a single dynamic edit in their approach can affect multiple string edit distance instances, making it challenging for efficient dynamic maintenance (we discuss this further in the following paragraphs).
 
We introduce a new reduction framework that, instead of directly decomposing the input Dyck edit distance instance, first constructs a tree representation of the problem.
This tree is then decomposed into paths via a carefully designed heavy-light decomposition, which we subsequently convert back into pairs of strings.
These strings can be interpreted as string edit distance instances, allowing us to approximate the Dyck edit distance of the original parenthesis sequence.
Our heavy-light decomposition is specifically designed to support polylogarithmic maintenance time, yielding subpolynomial update time overall.
Although the steps in our new reduction differ entirely from those in the static reduction, we leverage one of its core analytical techniques to prove that our reduction achieves a logarithmic approximation factor in the static setting and a subpolynomial approximation factor in the dynamic setting.

We now provide a brief overview of the reduction and analysis framework of \cite{KSSODA2023} and discuss the challenges of maintaining the reduction dynamically. Following this, we introduce our new dynamic reduction and provide an overview of the technical details and analysis for our main Dyck edit distance result.

\paragraph{Reduction from Dyck to String Edit Distance in the Static Setting \cite{KSSODA2023}.} Given an input string $X$, the Dyck to string reduction of \cite{KSSODA2023} begins by constructing a collection $\C(X)$ of strings corresponding to subsequences of $X$ such that each string contains a sequence of opening parentheses followed by a sequence of closing parentheses of equal length.  Each string in $\C(X)$ corresponds to an instance of the string edit distance problem, and the resulting string edit distance instances are used to compute an approximation for the Dyck edit distance of $X$.

The construction of $\C(X)$ begins by partitioning $X$ according to its \emph{LR-decomposition} where each part of the LR-decomposition is a maximal-length disjoint substring composed of a sequence of opening parentheses followed by a sequence of closing parentheses, called \emph{LR-segments}. For each LR-segment, either a prefix of opening parentheses or a suffix of closing parentheses is trimmed so that the remaining string has an equal number of opening and closing parentheses. The resulting substring of $X$ is added to $\C(X)$ and removed from $X$ for the remainder of the construction algorithm.  After each LR-segment is processed, the input string $X$ is comprised of exactly the trimmed prefixes and suffixes. This process is then repeated iteratively until only a sequence of opening or a sequence of closing parentheses remains in $X$.  See Table~\ref{tab:C-example} for an example of the construction of $\C(X_1)$ for a string $X_1$.

The collection $\C(X)$ constructed as above corresponds to a set of string edit distance instances that are used to approximate the Dyck edit distance of $X$. For each LR-segment in $\C(X)$, recall that the segment contains a sequence of opening parentheses followed by a sequence of closing parentheses. 
Each segment is split in half, the closing parenthesis sequence is reversed, and each closing parenthesis is replaced with an opening parenthesis of the same type.\footnote{e.g., segment ``(([((\}))])'' would become pair ``(([(('', ``([((\{'' .} The sum of the edit distances between each pair of halves for each string of the LR-decomposition yields an $\Oh(\log n)$-approximation of the Dyck edit distance for the original input string $X$.

\begin{table}
    \centering
    \begin{tabular}{c|c|c}
        Iteration & $X_1$ & $\C(X_1)$ \\ \hline
        1 & \texttt{\textcolor{blue}{(([(]}\textcolor{red}{([]]]}\textcolor{blue}{(()}\textcolor{red}{(()))]]}} & \texttt{(], ([]], (), (())} \\ \hline
        2 & \texttt{\textcolor{blue}{(([]}\textcolor{red}{()]]}} & \texttt{(], ([]], (), (()), [], ()} \\ \hline
        3 & \texttt{\textcolor{blue}{((]]}} & \texttt{(], ([]], (), (()), [], (), ((]]}
    \end{tabular}
    \caption{The above table contains an example of the construction routine for $\C(X_1)$ given string $X_1$. The red and blue parenthesis coloring separate the LR-segments of $X_1$ in each iteration.}
    \label{tab:C-example}
\end{table}

\begin{table}
    \centering
    \begin{tabular}{c|c|c}
        Iteration & $X_2$ & $\C(X_2)$ \\ \hline
        1 & \texttt{\textcolor{red}{(([(([]]]}\textcolor{blue}{(()}\textcolor{red}{(()))]]}} & \texttt{(([]]], (), (())} \\ \hline
        2 & \texttt{\textcolor{red}{(([()]]}} & \texttt{(([]]], (), (()), ([()]]} \\ \hline
        3 & \texttt{(} & \texttt{(([]]], (), (()), ([()]]}
    \end{tabular}
    \caption{The above table contains an example of the construction routine for $\C(X_2)$ given string $X_2$. The red and blue parenthesis coloring separate the LR-segments of $X_2$ in each iteration. $X_1$ and $X_2$ differ by a single parenthesis, $X_1[4]$ is not present in $X_2$. However, $\C(X_1)$ and $\C(X_2)$ differ in over half of their strings.}
    \label{tab:C-example-2}
\end{table}

The core concept behind the LR-decomposition comes from a natural observation. Note that the Dyck edit distance can be defined as the number of deletions or substitutions needed to balance a parenthesis string. In a balanced parenthesis string, every parenthesis has a ``match'': the parenthesis with opposite parity (open or closed), of the matching type and matching height. Here the height of a parenthesis is defined as the difference between the number of opening and closing parentheses preceding it, see Figure~\ref{fig:parenthesis-tree}a. Consider an opening parenthesis followed immediately by a closing parenthesis contained in a parenthesis string. If these parentheses do not match after all edits in an optimal sequence, one of the two parentheses must have been deleted. Instead, the sequence could be modified to match these two parentheses, which may require a substitution to align their types and an additional deletion for a later parenthesis that was intended to match with one of them. This adjustment would require 2 edits instead of the original 1 to handle this pair of parentheses.

This simple idea can be extended to LR-segments of $X$ that contain an equal number of opening and closing parentheses.  If $d$ of the opening parentheses in an LR-segment are not matched to the following sequence of $d$ closing parentheses after performing an optimal sequence of edits, then at least $d$ deletions must take place on this LR-segment. These $d$ deletions can be replaced with $d$ substitutions with an additional $d$ deletions of the later parentheses to obtain a 2-approximation for the number of edits required to balance the parentheses in this LR-segment. By computing this approximation for each segment of the LR-decomposition of $X$, a 2-approximation of the number of edits needed to balance all characters contained in these segments is found excluding a prefix of opening parentheses or a suffix of closing parentheses per segment. The same steps are repeated iteratively on the string comprised of the remaining prefixes and suffixes until no more LR-segments are contained in $X$. Since over half of the string is removed in each iteration, there are $\Oh(\log n)$ iterations. Determining the substitutions and deletions needed for each LR-segment can be done by computing the string edit distance between the two halves of the LR-segment. Thus, the LR-decomposition provides a way to achieve a $2 \cdot \Oh(\log n)$-approximation for the Dyck edit distance.

\paragraph{Reduction from Dyck to String Edit Distance in the Dynamic Setting.} The key challenge in implementing this reduction dynamically is maintaining $\C(X)$ as the input string $X$ undergoes updates. Reconstructing the collection of strings after each update following the above construction method is too time-consuming. Even a single pass to parse the entire string to compute the LR-decomposition, and then to trim prefixes and suffixes would require linear time in the number of LR-segments, which could be $\Oh(n)$ in the worst case.

Furthermore, a single parenthesis insertion or deletion in $X$ can change a super-constant number of strings in $\C(X)$, see Table~\ref{tab:C-example-2} for an example of how deleting a single parenthesis greatly changes $\C(X)$. A key challenge, therefore, lies in structuring $\C(X)$ to efficiently identify and update only the affected strings. To address these challenges, rather than constructing a collection of subsequences of  $X$ directly, we first represent $X$ with a tree $T_X$. We then re-frame the reduction by representing $X$ through a collection of paths within this tree $T_X$.

To build a tree $T_X$ from $X$, we begin by pairing together the opening parentheses and closing parentheses of equal height, which we refer to as \emph{twins}. For each opening parenthesis, its twin is the closest closing parenthesis to its right that has the same height. Note, that this procedure defines a one-to-one mapping between opening and closing parentheses in $X$.  Each pair of twins corresponds to a node in the tree. 
For an opening parenthesis at index $i$ in $X$, referred to as $X[i]$, its corresponding node in the tree is the child of the node that corresponds to the nearest opening parenthesis $X[j]$ with $j < i$ and of height exactly 1 less than that of $X[i]$. 
Any opening parentheses with no parent represent the root of their own tree. 
In this tree representation of $X$, each path corresponds to an LR-segment of $X$ by arranging the sequence of opening parentheses, followed by the sequence of closing parentheses, that corresponds to the nodes appearing in the path.
See Figure~\ref{fig:parenthesis-tree}b for an example of $T_X$. 

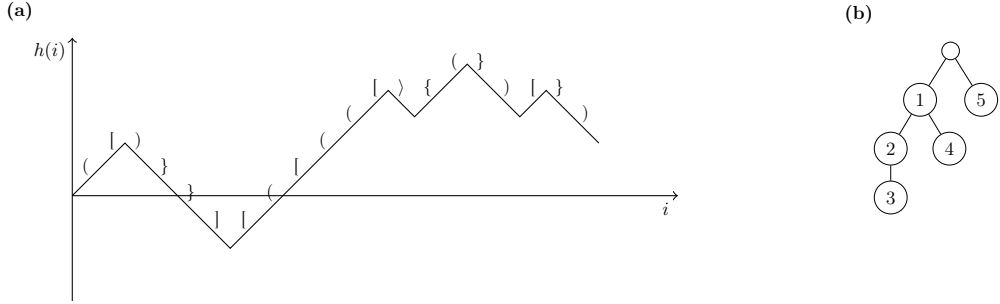
\begin{figure}
    \centering

    \scalebox{1}{\begin{tikzpicture}[scale=0.35,every node/.style={scale=0.65}]
\draw[->] (0,-4)--(0,6) node[pos=0.95,left] {{$h(i)$}}; 
\draw[->] (0,0)--(23,0) node[pos=0.98,below] {{$i$}};

\foreach \height/\para [count=\xi] in {1/(,2/[,2/),1/\},0/\},-1/],-1/[,0/(,1/[,2/(,3/(,4/[,4/$\rangle$,4/\{,5/(,5/\},4/),4/[,4/\},3/)}
	\node at (\xi-0.5,\height+0.1) {\small{\para}};  
	
\draw (0,0)--(2,2)--(6,-2)--(12,4)--(13,3)--(15,5)--(17,3)--(18,4)--(20,2);
\draw (-2, 7) node {\textbf{(a)}};
\end{tikzpicture}}\hspace{2cm}\scalebox{.65}{\begin{tikzpicture}
    \draw (-1.25, 1.75) node {\textbf{(b)}};
    \draw (.625, 1) node[circle, draw] (p0) {};
    \draw (0, 0) node[circle,draw] (p1) {1};
    \draw (-.6, -1) node[circle,draw] (p2) {2};
    \draw (-.6, -2) node[circle,draw] (p3) {3};
    \draw (.6, -1) node[circle,draw] (p4) {4};
    \draw (1.25, 0) node[circle,draw] (p5) {5};
    \draw (p1) -- (p4);
    \draw (p1) -- (p2);
    \draw (p2) -- (p3);
    \draw (p0) -- (p1);
    \draw (p0) -- (p5);
    \draw (0, -4) node {};
\end{tikzpicture}}
    \caption{(a) Example of a graph of heights of parentheses in a given input string from \cite{DBLP:conf/soda/FriedGKKPS22}. (b) The tree representation $T_X$ of parenthesis string $X = $ $\texttt{(}_1$$\texttt{(}_2$$\texttt{(}_3$$\texttt{)}_3$$\texttt{)}_2$$\texttt{(}_4$$\texttt{)}_4$$\texttt{)}_1$$\texttt{(}_5$$\texttt{)}_5$ where matching indices in the subscripts of the parentheses indicate pairs of twins. The path containing nodes labeled 1, 2, 3 corresponds to string $\texttt{(}_1$$\texttt{(}_2$$\texttt{(}_3$$\texttt{)}_3$$\texttt{)}_2$$\texttt{)}_1$.}
    \label{fig:parenthesis-tree}
\end{figure}

To reduce Dyck to string edit distance with our tree representation of $X$, we first perform a heavy-light decomposition on $T_X$. We recall the definition of heavy-light decomposition here.

\begin{definition}[Heavy-Light Decomposition]
\label{def:heavy-light}
A \emph{heavy-light decomposition} of a forest $\F$ assigns each node a label of \emph{heavy} or \emph{light}.
A node $u \in \F$ with parent $v$ is heavy if $\heavy{\F(u)} = \heavy{\F(v)}$, and light otherwise; roots are light by default.
Here, $\F(u)$ denotes the subtree rooted at $u$, and $\heavy{\F(u)}$ its \emph{heavy depth}.
A maximal sequence of heavy nodes forming a path is called a \emph{heavy path}.
\end{definition}

As in a typical heavy-light decomposition, every node in a forest has at most one heavy child. We also make one other useful observation that is folklore for heavy-light decompositions.

\begin{observation}
\label{obs1:heavy-light}
    Given a forest $\F$ with $|\F| = n$, any root-to-leaf path includes at most $\Oh(\log n)$ light nodes.
\end{observation}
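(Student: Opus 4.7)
My plan is to track the \emph{heavy depth} $\heavy{\F(u)} = \lfloor \lg |\F(u)| \rfloor$ along a root-to-leaf path and show that every light node forces a strict drop in this quantity. First I would observe that whenever $v$ is a child of $u$ in $\F$, the subtree $\F(v)$ is strictly contained in $\F(u)$, so $|\F(v)| < |\F(u)|$ and therefore $\heavy{\F(v)} \le \heavy{\F(u)}$. Hence, walking from the root to a leaf, the heavy-depth values form a non-increasing sequence of integers.

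Next I would invoke the definition of a light node directly: by \cref{def:heavy-light}, a non-root child $v$ of $u$ is light precisely when $\heavy{\F(v)} \ne \heavy{\F(u)}$, and combined with the inequality above this forces $\heavy{\F(v)} < \heavy{\F(u)}$. Thus each non-root light node on the path corresponds to a strict decrease in the heavy-depth sequence. Since heavy depths take integer values in $[0 \dd \lfloor \lg n \rfloor]$, the total number of strict decreases along the path — and therefore the number of non-root light nodes — is at most $\lfloor \lg n \rfloor$. Adding the root itself, which is light by default, yields the bound $\lfloor \lg n \rfloor + 1 = \Oh(\log n)$.

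I do not anticipate any substantive obstacle: the argument rests entirely on the definition of heavy depth and a pigeonhole-style count over an integer range of size $\Oh(\log n)$. The only minor care needed is the boundary treatment of the root of the containing tree, which contributes a single additive term and does not affect the asymptotic bound; the same reasoning applies uniformly whether $\F$ is a tree or a forest, since every node belongs to a unique tree whose root is light.
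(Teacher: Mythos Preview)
Your proposal is correct and follows essentially the same argument as the paper: both track the heavy depth $\lfloor \lg |\F(u)| \rfloor$ along a root-to-leaf path, observe it is non-increasing, note that each light node forces a strict drop, and bound the number of drops by the range $[0 \dd \lfloor \lg n \rfloor]$. Your treatment of the root as an additional light node is slightly more explicit than the paper's, but the approaches are otherwise identical.
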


Once the heavy-light decomposition is obtained, we partition the tree into paths based on the heavy-light status of each node, ensuring that each path consists solely of heavy nodes except for the starting node, which is a light node.
Since each path of the tree corresponds to a sequence of opening and closing parentheses of matching height, the core principle of the reduction from~\cite{KSSODA2023} still holds; for each string corresponding to these heavy node paths, we can find a 2-approximation for the number of edits needed to balance the parentheses in this string using the corresponding string edit distance estimation. We compute the string edit distance between the opening and closing parenthesis sequences for each path, and as a heavy-light decomposition ensures each root-to-leaf path has at most $\Oh(\log n)$ light nodes, the 2-approximation factor only compounds additively $\Oh(\log n)$ times. Thus, the re-framed Dyck to string reduction using the tree still provides an $\Oh(\log n)$-approximation.

\paragraph{New Dynamic Heavy-light Decomposition.} 
With our new tree representation, we have shifted the challenge of the dynamic approach from maintaining $\C(X)$ to maintaining the heavy-light decomposition of the tree representation of $X$ in the dynamic setting.
In a traditional heavy-light decomposition, which typically designates the child with the largest subtree as heavy while categorizing the remaining children as light, a single dynamic edit to the tree can result in many nodes changing their type from heavy to light and vice versa.
By defining our heavy-light decomposition based on the logarithm of the size of the subtrees, a dynamic update to $X$ creates a structured impact on the nodes of the tree, which our algorithm utilizes. This approach ensures that any root-to-leaf path contains only $\mathcal{O}(\log n)$ light nodes, which is consistent with the conventional heavy-light decomposition.
We observe that when a parenthesis is inserted or deleted at index $i$, all the parentheses appearing at an index greater than $i$ have their height changed by 1. Further, any pair of twin parentheses, where both indices are greater than $i$, have their height shifted and thus, these pairs are unaffected. Thus, the only pairs of twins, with opening parenthesis appearing at an index $j_1 < i$ and closing parenthesis appearing at an index $j_2 > i$ are affected and need to be handled. These are exactly the ancestor nodes of the node corresponding to $X[i]$. Since any root-to-leaf path can contain at most $\Oh(\log n)$ heavy paths, we only need to update a logarithmic number of heavy paths in the path decomposition of $T_X$ for each edit. Moreover, since the heights of the parentheses only change by 1, we are primarily concerned with a shift of just 1 node between adjacent affected heavy paths, specifically the first or last nodes of these heavy paths. These are specifically the nodes that may change their status from heavy to light or vice versa. 
For each of these first and last nodes of affected heavy paths, we recompute their heavy-light status by comparing the size of their subtree to that of their parent’s subtree. Since we only need to check $\Oh(1)$ starting and ending nodes per heavy path across $\Oh(\log n)$ heavy paths, we may maintain the entire heavy-light decomposition in polylogarithmic time.

When a dynamic update occurs on $X$, we can identify the set of heavy paths that contain the ancestor nodes of the node corresponding to the inserted or deleted parenthesis. Subsequently, we can adjust the first and last nodes of these heavy paths as needed, all within logarithmic time.
 For all the affected paths, we then perform $\Oh(1)$ edits on the corresponding string edit distance instances according to these shifts. By utilizing a dynamic string edit distance algorithm that can handle dynamic insertions and deletions in sub-polynomial time, we recompute the distances for these updated instances and then update our Dyck edit distance approximation with these new distances.

We highlight that our proposed dynamic heavy-light decomposition introduces significantly new ideas, allowing heavy paths to remain intact without needing to be split or merged. Traditional dynamic heavy-light decompositions, such as those in link-cut trees, lack this property and could not be applied in our reduction (since the dynamic string edit distance approximation does not support splits and concatenations). Given the heavy-light decomposition’s broad utility, we believe our approach has potential for further applications and could be of independent interest.

\subsection{Dynamic Approximate Tree to String Edit Distance}
We now concentrate our attention on an overview of the $\tilde{O}(\sqrt{n})$-approximation algorithm for dynamic tree edit distance. Our method begins with a new $\tilde{O}(\sqrt{n})$-approximate reduction from tree to string edit distance in the dynamic setting. This reduction enables us to use the $\Oh(b \log_b n)$-approximation dynamic string edit distance algorithm where $2 \leq b \leq n$ is an integer parameter we may set, which has an update time of $\Oh(b^2 (\log n)^{\Oh(\log_b n)})$~\cite{KMS2023} to achieve our final result. We start by discussing the new reduction in the static setting and then outline the additional techniques required to implement it dynamically.

\paragraph{Reduction from Tree to String Edit Distance in the Static Setting.}
In the static setting, the state-of-the-art reduction from tree to string edit distance, by Akutsu, Fukagawa, and Takasu~\cite{AFT10}, achieves an $\tilde{O}(n^{3/4})$ approximation. Despite over a decade passing, no improvement on this bound has been proposed; moreover, their reduction only applies to trees of constant degree. Until now, it remained open whether a reduction for trees of unbounded degree could achieve approximation loss $o(n)$.
We not only significantly improve the previous bound, providing a reduction with approximation $\tilde{O}(\sqrt{n})$, but also show that it works for all trees and can be maintained under dynamic updates.
In~\cite{AFT10}, two rooted labeled trees $\T$ and $\T'$ are converted into strings by labeling edges based on specific structural properties. A node $v$ is defined as \emph{special} if its subtree is small while its parent’s subtree is large. Labels are then assigned to edges according to the presence of special children: each non-special edge $(u,v)$ is labeled by $v$, while each special edge receives a unique label that encodes information about the subtrees of $v$’s special children. After labeling, $\T$ and $\T'$ are transformed into strings $S$ and $S'$ via an Euler tour. An optimal string alignment $\A$ between $S$ and $S'$ is then computed, but converting $\A$ into a valid tree alignment incurs an approximation loss~of~$\tilde{O}(n^{3/4})$.

Our algorithm is inspired by a similar framework but introduces a new approach from the outset, beginning with the label generation step. Instead of assigning labels to edges, our method assigns labels to nodes using a heavy-light decomposition, rather than the heavy subtree approach in~\cite{AFT10}. Incorporating heavy-light decomposition in label generation gives us more flexibility in exploiting the structural properties of the Euler tour representation, allowing the optimal string alignment to be transformed directly into a valid tree alignment with an improved approximation guarantee. Our string-to-tree alignment step is also entirely new and simpler than that in~\cite{AFT10}. Further, the cost analysis of this step ensures a $\tilde{O}(\sqrt{n})$ approximation and introduces several foundational techniques and concepts, serving as a core contribution of our work.

\subparagraph*{Parentheses representation.} 
Our reduction begins by performing a heavy-light decomposition (\cref{def:heavy-light}) on the input trees $\T$ and $\T'$; recall that a node is \emph{light} if it does not share the same heavy depth as its parent.
Each node $v$ is assigned a label encoding information about the subtree rooted at $v$, excluding its heavy child, if any, and the subtree rooted at that child, which is represented by a dummy character. We then perform an Euler tour of each tree using these labels, producing strings that serve as instances for string edit distance (see formal definition in Subsection~\ref{subsubsec:forests} above).

\subparagraph*{String to tree alignment.} The second step begins by computing an optimal alignment $\A$ between $\str{\T}$ and $\str{\T'}$. A key observation here is that, under $\A$, a match between a node $u \in \T$ and a node $u' \in \T'$ satisfies the tree alignment constraints if $\A$ aligns both $\str{\T}[o(u)]$ with $\str{\T'}[o(u')]$ and $\str{\T}[c(u)]$ with $\str{\T'}[c(u')]$. Nodes that do not meet this condition are referred to as misaligned and thus violate the tree alignment constraints.

For a node $u \in \T$, misalignments can be categorized as follows: (i) $\A$ aligns only one of the parentheses corresponding to $u$ while deleting the other, making $u$ \emph{partially deleted}; (ii) $\A$ aligns $o(u)$ with $o(v_1)$ and $c(u)$ with $c(v_2)$, where $v_1$ and $v_2$ lie on the same root-to-leaf path in $\T'$, classifying $u$ as a \emph{single-branch} misaligned node; (iii) similar to (ii), but $v_1$ and $v_2$ appear on different root-to-leaf paths in $\T'$, making $u$ a \emph{multi-branch} misaligned node.

Next, we discuss how our algorithm $\mathsf{TreeAlign}$ transforms $\A$ to eliminate all misaligned nodes, resulting in a valid tree alignment. If a misaligned node $u$ is either light, partially deleted, or a multi-branch misaligned node, we update $\A$ by removing the matches related to $u$.  
For a heavy single-branch misaligned node, more complex processing is required. One common structure in single-branch misaligned nodes involves a sequence of single-branch nodes alternating between $\T$ and $\T'$ where each node aligns only one of its parentheses with the previous node in the sequence and its other parenthesis with the next node. We formally define this as a \emph{chain}.
To handle a heavy single-branch misaligned node $ u $, we modify $\A$ by removing the matches associated with $ u $ if $ u $ is not part of a chain, if it appears near either end of a chain, or if there is a large subtree (size $\ge \sqrt{n}$) rooted at $ u $.
Otherwise, we adjust $\A$ to align the subtree rooted at $ u $ (including $ u $) with the subtree rooted at $ v $ (including $ v $), where $ v \in \T' $ is the node following $ u $ in the chain sequence, while maintaining $\A$ as a valid alignment. It becomes fairly intuitive that after these modifications, no nodes remain misaligned under $\A$, allowing $\A$ to be interpreted as a valid tree alignment between $\T$ and $\T'$.

\vspace{1mm}
\textbf{Cost analysis.} We now prove that performing $\mathsf{TreeAlign}$ increases the number of edits in alignment $\A$ by at most a factor of $\tOh(\sqrt{n})$. Before, we delve into details, we present example instances of tree edit distance that shed light on the functioning of our reduction and give an intuitive idea of why we achieve a $\sqrt{n}$ approximation factor.

\begin{figure}[h!]
    \centering
        \scalebox{.6}{\begin{tikzpicture}
    \draw (-2, .75) node {\textbf{\Large{(a)}}};
    \draw (-1, .5) node {$T_1$};
    \draw (0, 0) node[circle,draw] (p1) {};
    \draw (0, -2) node[circle,draw] (p2) {};
    \draw (0, -4) node[circle, draw] (p3) {};
    \draw (0, -7) node[circle, draw] (p4) {};
    \draw (p1) -- (p2);
    \draw (p2) -- (p3);
    \draw[dashed] (p3) -- (p4);
    \draw (p1) -- (-1, -.8)
    node [isosceles triangle,minimum width=.75cm,draw, rotate=90, anchor=apex] {};
    \draw (-1, -1.4) node {$A_1$};
    \draw (p1) -- (1, -.8)
    node [isosceles triangle,minimum width=.75cm, draw,rotate=90, anchor=apex] {};
    \draw (1, -1.4) node {$B_1$};
    \draw (p2) -- (-1, -2.8)
    node [isosceles triangle,minimum width=.75cm,draw, rotate=90, anchor=apex] {};
    \draw (-1, -3.4) node {$A_2$};
    \draw (p2) -- (1, -2.8)
    node [isosceles triangle,minimum width=.75cm, draw,rotate=90, anchor=apex] {};
    \draw (1, -3.4) node {$B_2$};
    \draw (p3) -- (-1, -4.8)
    node [isosceles triangle,minimum width=.75cm,draw, rotate=90, anchor=apex] {};
    \draw (-1, -5.4) node {$A_3$};
    \draw (p3) -- (1, -4.8)
    node [isosceles triangle,minimum width=.75cm, draw,rotate=90, anchor=apex] {};
    \draw (1, -5.4) node {$B_3$};
    \draw (p4) -- (-1, -7.8)
    node [isosceles triangle,minimum width=.75cm,draw, rotate=90, anchor=apex] {};
    \draw (-1, -8.4) node {$A_i$};
    \draw (p4) -- (1, -7.8)
    node [isosceles triangle,minimum width=.75cm, draw,rotate=90, anchor=apex] {};
    \draw (1, -8.4) node {$B_i$};
\end{tikzpicture}\hspace{2cm}\begin{tikzpicture}
    \draw (-1, .5) node {$T_2$};
    \draw (0, 0) node[circle,draw] (p1) {};
    \draw (0, -2) node[circle,draw] (p2) {};
    \draw (0, -4) node[circle, draw] (p3) {};
    \draw (0, -7) node[circle, draw] (p4) {};
    \draw (p1) -- (p2);
    \draw (p2) -- (p3);
    \draw[dashed] (p3) -- (p4);
    \draw (p1) -- (-1, -.8)
    node [isosceles triangle,minimum width=.75cm,draw, rotate=90, anchor=apex] {};
    \draw (-1, -1.4) node {$A_2$};
    \draw (p1) -- (-2, -.8)
    node [isosceles triangle,minimum width=.75cm,draw, rotate=90, anchor=apex] {};
    \draw (-2, -1.4) node {$A_1$};
    \draw (p1) -- (1, -.8)
    node [isosceles triangle,minimum width=.75cm, draw,rotate=90, anchor=apex] {};
    \draw (1, -1.4) node {$B_1$};
    \draw (p2) -- (-1, -2.8)
    node [isosceles triangle,minimum width=.75cm,draw, rotate=90, anchor=apex] {};
    \draw (-1, -3.4) node {$A_3$};
    \draw (p2) -- (1, -2.8)
    node [isosceles triangle,minimum width=.75cm, draw,rotate=90, anchor=apex] {};
    \draw (1, -3.4) node {$B_2$};
    \draw (p3) -- (-1, -4.8)
    node [isosceles triangle,minimum width=.75cm,draw, rotate=90, anchor=apex] {};
    \draw (-1, -5.4) node {$A_4$};
    \draw (p3) -- (1, -4.8)
    node [isosceles triangle,minimum width=.75cm, draw,rotate=90, anchor=apex] {};
    \draw (1, -5.4) node {$B_3$};
    \draw (p4) -- (1, -7.8)
    node [isosceles triangle,minimum width=.75cm, draw,rotate=90, anchor=apex] {};
    \draw (1, -8.4) node {$B_i$};
\end{tikzpicture}\hspace{1cm}
        \begin{tikzpicture}
    \draw (-2, .75) node {\textbf{\Large{(b)}}};
    \draw (-1, .5) node {$T_1$};
    \draw (0, 0) node[circle,draw] (p1) {};
    \draw (0, -2) node[circle,draw] (p2) {};
    \draw (0, -4) node[circle, draw] (p3) {};
    \draw (0, -7) node[circle, draw] (p4) {};
    \draw (p1) -- (p2);
    \draw (p2) -- (p3);
    \draw[dashed] (p3) -- (p4);
    \draw (p1) -- (-1, -.8)
    node [isosceles triangle,minimum width=.75cm,draw, rotate=90, anchor=apex] {};
    \draw (-1, -1.4) node {A};
    \draw (p1) -- (1, -.8)
    node [isosceles triangle,minimum width=.75cm, draw,rotate=90, anchor=apex] {};
    \draw (1, -1.4) node {B};
    \draw (p2) -- (-1, -2.8)
    node [isosceles triangle,minimum width=.75cm,draw, rotate=90, anchor=apex] {};
    \draw (-1, -3.4) node {A};
    \draw (p2) -- (1, -2.8)
    node [isosceles triangle,minimum width=.75cm, draw,rotate=90, anchor=apex] {};
    \draw (1, -3.4) node {B};
    \draw (p3) -- (-1, -4.8)
    node [isosceles triangle,minimum width=.75cm,draw, rotate=90, anchor=apex] {};
    \draw (-1, -5.4) node {A};
    \draw (p3) -- (1, -4.8)
    node [isosceles triangle,minimum width=.75cm, draw,rotate=90, anchor=apex] {};
    \draw (1, -5.4) node {B};
    \draw (p4) -- (-1, -7.8)
    node [isosceles triangle,minimum width=.75cm,draw, rotate=90, anchor=apex] {};
    \draw (-1, -8.4) node {A};
    \draw (p4) -- (1, -7.8)
    node [isosceles triangle,minimum width=.75cm, draw,rotate=90, anchor=apex] {};
    \draw (1, -8.4) node {B};
\end{tikzpicture}\hspace{2cm}\begin{tikzpicture}
    \draw (-1, .5) node {$T_2$};
    \draw (0, 0) node[circle,draw] (p1) {};
    \draw (0, -2) node[circle,draw] (p2) {};
    \draw (0, -4) node[circle, draw] (p3) {};
    \draw (0, -7) node[circle, draw] (p4) {};
    \draw (p1) -- (p2);
    \draw (p2) -- (p3);
    \draw[dashed] (p3) -- (p4);
    \draw (p1) -- (-1, -.8)
    node [isosceles triangle,minimum width=.75cm,draw, rotate=90, anchor=apex] {};
    \draw (-1, -1.4) node {A};
    \draw (p1) -- (-2, -.8)
    node [isosceles triangle,minimum width=.75cm,draw, rotate=90, anchor=apex] {};
    \draw (-2, -1.4) node {A};
    \draw (p1) -- (1, -.8)
    node [isosceles triangle,minimum width=.75cm, draw,rotate=90, anchor=apex] {};
    \draw (1, -1.4) node {B};
    \draw (p2) -- (-1, -2.8)
    node [isosceles triangle,minimum width=.75cm,draw, rotate=90, anchor=apex] {};
    \draw (-1, -3.4) node {A};
    \draw (p2) -- (1, -2.8)
    node [isosceles triangle,minimum width=.75cm, draw,rotate=90, anchor=apex] {};
    \draw (1, -3.4) node {B};
    \draw (p3) -- (-1, -4.8)
    node [isosceles triangle,minimum width=.75cm,draw, rotate=90, anchor=apex] {};
    \draw (-1, -5.4) node {A};
    \draw (p3) -- (1, -4.8)
    node [isosceles triangle,minimum width=.75cm, draw,rotate=90, anchor=apex] {};
    \draw (1, -5.4) node {B};
    \draw (p4) -- (1, -7.8)
    node [isosceles triangle,minimum width=.75cm, draw,rotate=90, anchor=apex] {};
    \draw (1, -8.4) node {B};
\end{tikzpicture}}
    \caption{Two instances the tree edit distance problem. (a) Trees $T_1$ and $T_2$ with $A_i \ne A_j$, $B_i \ne B_j$ for $i \ne j$.  (b) Trees $T_1$ and $T_2$ with repeating subtrees $A$ and $B$ to the left and right, respectively, of the center path.}
    \label{fig:tree-intuition}
\end{figure}
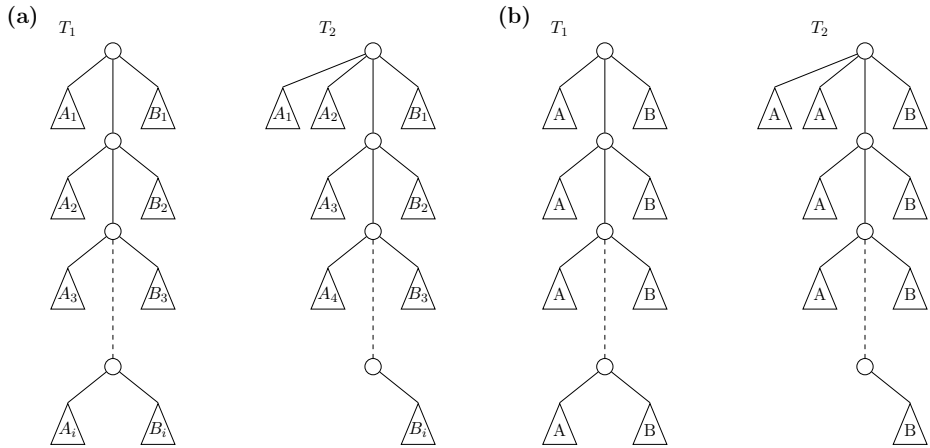

\textbf{Examples illustrating the Reduction.}
In Figure~\ref{fig:tree-intuition}a, both trees $T_1$ and $T_2$ have subtrees labeled $A_1, A_2, \ldots, A_i, B_1, B_2, \ldots, B_i$, while Figure~\ref{fig:tree-intuition}b presents subtrees simply labeled as $A$ and $B$ (without indexing). We refer to the central path through the tree as the \emph{spine}. Furthermore, we assume that for all $i \ne j$, $A_i \ne A_j$ and $B_i \ne B_j$. 

First, let's study the pair of trees depicted in Figure~\ref{fig:tree-intuition}a. Ignoring any special labeling, the conventional Euler tour string representations of $T_1$ and $T_2$ are given as follows: $T_1$ is represented as $(A_1 ( A_2 ( A_3 \ldots B_3 ) B_2 ) B_1 )$, and $T_2$ as $(A_1 A_2 ( A_3 ( A_4 ( \ldots ) B_3 ) B_2 ) B_1 )$, where each pair of parentheses ``('', ``)'' corresponds to a node along the spines of the trees. It is evident that only a few deletions are necessary to make these two strings similar. However, the only way to achieve similarity between $T_1$ and $T_2$ is either by deleting all the nodes in the spines or by removing all the subtrees attached to the spines. If the spine is quite long, for instance, if $i = \Oh(n)$, then the tree edit distance is $\Oh(n)$, whereas the string edit distance is $\Oh(1)$. Figure~\ref{fig:tree-intuition}a serves as a representative example of the challenges we tackle using our label embedding technique. By embedding subtree information in the labels of the nodes, the resulting Euler tour strings for $T_1$ and $T_2$ are respectively $ (_{A_1, B_1} A_1 (_{A_2, B_2} A_2 \ldots $ and $ (_{A_1, A_2, B_1} A_1 A_2 (_{A_3, B_2} A_3 \ldots $. Thus, for these strings, any optimal sequence of edits must remove all parentheses corresponding to the nodes on the spine, as each one carries a distinct label after the embedding process. Consequently, the gap between the string and tree edit distances reduces to a mere $\Oh(1)$.

However, we can still encounter a gap of $\Oh(\sqrt{n})$ in certain cases. For instance, if we modify the pair of trees to have more periodicity, as shown in Figure~\ref{fig:tree-intuition}b, the resulting strings are $ (_{A, B} A (_{A, B} A \ldots $ and $ (_{A, A, B} A A (_{A, B} A (_{A, B} \ldots $. In this scenario, only $\Oh(1)$ deletions are necessary to make these strings equal. 
 Conversely, to make $T_1$ and $T_2$ identical trees, we have two options: either delete all the nodes in the spine, as previously mentioned or remove a few copies of $A$ and $B$. It is important to note that in Figure~\ref{fig:tree-intuition}a, we were required to delete all $A_i$ and $B_i$, whereas, in the periodic trees, we may only need to delete $\Oh(1)$ copies of them.
If $|A|, |B| < \sqrt{n}$, then deleting $\Oh(1)$ copies of $A$ and $B$ will require only $\Oh(\sqrt{n})$ deletions. On the other hand, if $|A|, |B| \geq \sqrt{n}$, we know that the length of the spine can be at most $n/|A| = \Oh(\sqrt{n})$. Thus, deleting the spine nodes will also result in $\Oh(\sqrt{n})$ deletions. In both scenarios, the string edit distance remains $\Oh(1)$, while the tree edit distance is $\Oh(\sqrt{n})$. In Section~\ref{subsec:tree-reduction}, we formally establish that this is the maximum gap attainable through our reduction. We show that for any optimal string alignment between the Euler tour strings generated by our reduction, we can construct a corresponding tree alignment for the original input trees with at most $\tOh(\sqrt{n})$ additional deletions.

In the remainder of the cost analysis overview, we give more details that helps to establish the formal bound in \cref{subsec:tree-reduction}.

If a node $u$ is partially deleted, since the string alignment $\A$ already incurs a cost of 1 for deleting one of its parentheses, we can modify $\A$ to delete the other parenthesis as well, thereby only doubling the cost.

Next, we address the processing of misaligned nodes that appear in a chain. We make two key observations. First, by definition, any chain consists of either all light nodes or all heavy nodes. Second, for every maximal chain, since $\A$ aligns the opening and closing parentheses of each node with at least a one-position shift, it incurs at least one deletion. Thus, the cost of $\A$ is at least one for every maximal chain. We begin by considering chains of light nodes. Since the length of such a chain is at most $\Oh(\log n)$, deleting all light nodes may increase the cost by at most a factor of $\Oh(\log n)$.

We now analyze the cost of handling a chain that consists solely of heavy nodes. First, consider a heavy node $ u $ for which the subtree rooted at $ u $, excluding the subtree of its heavy child if one exists, is large (i.e., of size $ \ge \sqrt{n} $). The number of such heavy nodes is bounded by $ \sqrt{n} $. Therefore, removing all these nodes (i.e., their corresponding matches from $ \A $) incurs a cost of at most $ \Oh(\sqrt{n}) $. Since each chain contributes at least one deletion under $ \A $, the overall increase in cost is at most a factor of $ \Oh(\sqrt{n}) $.
Therefore, we now need to address the heavy nodes that have small subtrees, excluding the subtree of its heavy child (if it exists), rooted at them. If such a node $ u $ does not appear close to either end of the chain, we modify $ \A $ to align the subtree rooted at $ u $ (including $ u $) with the subtree rooted at $ v $ (including $ v $), where $ v \in \T' $ is the node that follows $ u $ in the chain sequence. This adjustment maintains the same cost.
Alternatively, for a constant number of nodes that root small subtrees and are located at either end of a chain, we delete the subtree that is rooted at those nodes. Since this is performed only a constant number of times for each chain, and the size of each subtree is at most $ \sqrt{n} $, the total cost incurred remains $ \Oh(\sqrt{n}) $. Furthermore, as each chain contributes at least one deletion under $ \A $, the overall increase in cost is again bounded by a factor of $ \Oh(\sqrt{n})$. 

We note that the threshold used to distinguish between large and small subtrees is crucial in determining the approximation guarantee. For instance, if we set a higher threshold, the number of nodes with large subtrees will decrease, but we will have to delete more nodes associated with small subtrees located close to either end of the chain, which will raise the overall cost. On the other hand, if we adopt a lower threshold, the number of nodes with large subtrees will increase, again contributing to a higher total cost. This analysis supports our decision to use a threshold of $\sqrt{n}$ and explains the rationale behind the approximation factor.

Finally, we analyze the cost of handling misaligned nodes that do not belong to a chain.
Note these are essentially the multi-branch misaligned nodes. We represent these nodes using a concept called an \emph{extended chain}. An extended chain is a sequence of nodes (possibly multi-branch) alternating between $\T$ and $\T'$, where, under $\A$, each node aligns only one of its parentheses with the previous node in the sequence and its other parenthesis with the next node.
We observe that an extended chain contains at most one multi-branch misaligned node and one partially deleted node. Since $\A$ incurs at least a cost of one for each partially deleted node, deleting the multi-branch misaligned nodes increases the overall cost by at most a constant factor. We remark that the formal analysis includes many additional details and must address more complex scenarios, such as cases where multiple chains or extended chains overlap. Further, instead of focusing on ordered trees, we consider a more general case where the input may consist of a pair of ordered forests, with each forest being an ordered collection of trees.

\paragraph{Reduction from Tree to String Edit Distance in the Dynamic Setting.} To implement our reduction dynamically, the primary challenge is updating the label of each node following an edit operation to keep the Euler tour string representation up-to-date. Since our label encodings rely on the heavy-light decomposition of the input trees, the first step is to maintain this decomposition dynamically. We achieve this using a strategy similar to that used for dynamic heavy-light decomposition in the Dyck edit distance algorithm (Theorem~\ref{thm:dyck-improved}), which enables a logarithmic update time.

Next, we observe that a dynamic edit (such as insertion, deletion, or relabeling of a node) may require adding, removing, or updating a pair of parentheses in the Euler tour for the affected node. The main challenge, however, is that any ancestor nodes containing the modified parentheses in their labels will also need corresponding label updates.
 In our string reduction, it is specifically the parents of light nodes who encode information about the subtrees rooted at these light nodes within their labels. Therefore, any update within these light subtrees must also be reflected in the label of their parent. Consequently, for any edit, the only affected nodes are those that serve as parents to light nodes along the path from the root to the edited node.
 Since any root-to-leaf path in the heavy-light decomposition contains at most $\Oh(\log n)$ light nodes, only $\Oh(\log n)$ nodes need their labels updated for each edit.
After identifying changes in the heavy-light decomposition, we apply split and concatenate operations from the dynamic strings data structure~\cite{DBLP:conf/soda/GawrychowskiKKL18} to update the labels of each affected node in $\Oh(\log n)$ time. Therefore, the total update time for maintaining the Euler tour representation is bounded by $\Oh(\text{polylog}(n))$.

In addition to the updates mentioned above, there are further updates required due to nodes outside the root-to-edited-node path that may switch from light to heavy or vice versa.
We note that only the nodes whose status has shifted between heavy and light, along with their parent nodes, need label modifications.
 Specifically, when a node changes from light to heavy, the label of its parent must substitute the child's subtree with a dummy character. Conversely, if a node transitions from heavy to light, we must replace the dummy character in the parent's label with the encoding of the light child's subtree.
In a conventional heavy-light decomposition, an arbitrary number of nodes can change their status between heavy and light with each update. However, as we discussed while analyzing the heavy-light decomposition for Dyck edit distance, our specialized decomposition has the property that only a logarithmic number of nodes can have their statuses switched.
Since each switch requires updating the labels of only two nodes, the total number of nodes needing label updates is bounded by $\Oh(\log n)$. Consequently, using the dynamic strings data structure allows us to perform all label updates, and thus update the Euler tour representation, in a total update time of $\Oh(\text{polylog}(n))$.

\vspace{2mm}
As in the final step of our approximate dynamic Dyck edit distance algorithm, we similarly pass the updated parenthesis representations of each tree to the dynamic string edit distance algorithm. This algorithm then manages dynamic insertions and deletions of characters within the parenthesis string, producing an updated alignment in sub-polynomial update time.

\subsection{Improved Static and Dynamic Tree Edit Distance}

We end with a technical overview for our $\Oh(k^2 \log n)$-approximate static and dynamic tree edit distance algorithm where $k$ is an upper bound on the distance and $n$ is the input size. Our algorithm actually uses a decomposition of the parenthesis string representation of the input forests to distinguish between the two cases of whether the distance of the input instance is at most $k$ or at least $\tOh(k^2)$. Our algorithm works straightforwardly in linear time up to logarithmic factors in the static setting. We show that using a clever combination of the dynamic string data structure of \cite{DBLP:conf/soda/GawrychowskiKKL18} and the dynamic tree query data structure of \cite{Navarro2014}, we can implement our novel decomposition algorithm in the dynamic setting with amortized $\tOh(1)$-update time.

\paragraph{Tree Decomposition in the Static Setting.}

We begin with a brief description of our novel tree edit distance approximation algorithm for distance bound $k$. Given forests $\F$ and $\G$, the algorithm maintains a decomposition $S$ of $\str{F}$, where recall $\str{\F}$ is the parenthesis representation of $\F$. Initially, $S$ is just the entire string $\str{\F}$. 

We use the following routine for a limited number of iterations to find our approximation. First, we remove the first part $\sigma$ in $S$. If $\sigma$ has size at most $\Oh(k)$, we do nothing and move on to the next part of $S$.  If instead the size of $\sigma$ is larger than our threshold, we check if there is a part in $\G$ that exactly matches $\sigma$ and has the same starting index up to a left or right shift of length $k$. If such a match is found, we remove $\sigma$ from $S$.  If a match is not found, we partition $\sigma$ into a constant number of pieces with size a constant fraction of the previous size of $\sigma$ and add these pieces to $S$. We repeat these steps for $\Oh(k \log n)$ iterations, and if $S$ is empty after the last iteration, we output $\ted(\F, \G) \leq k$, otherwise $\ted(\F, \G) \geq \Oh(k^2 \log n)$.

To analyze the correctness of our algorithm, we observe that we can view the parts contained in $S$ at any timestep of the algorithm as a tree $\T$. Each node of the tree corresponds to a single part contained in $S$. The root of $\T$ is the node corresponding to the entire forest $\str{\F}$. If a part $\sigma$ is broken into a constant number of smaller parts $\sigma_1, \sigma_2, \ldots, \sigma_M$ in the last case of our algorithm, then the node $v_\sigma \in \T$ corresponding to $\sigma$ will be the parent of the nodes $v_{\sigma_1}, v_{\sigma_2}, \ldots, v_{\sigma_M} \in \T$ corresponding to $\sigma_1, \sigma_2, \ldots, \sigma_M$.  Each level of the tree contains each parenthesis of $\str{\F}$ at most once since at any timestep, $S$ cannot contain the same parenthesis in two different parts. Since $\ted(\F, \G) \leq k$, there are at most $k$ edits made to $\str{\F}$ and furthermore, at most $k$ nodes on each level of the tree whose corresponding part will not find matches in $\G$.  Therefore, the total number of nodes in $\T$ cannot ever exceed $\Oh(k \cdot h)$ where $h$ is the height of the tree.  Since we either match or partition each $\sigma \in S$ into pieces of a constant fraction of the previous size, $h$ must be $\Oh(\log n)$. Thus, if $\ted(\F, \G) \leq k$, we are guaranteed to find $S$ empty after $\Oh(k \cdot h) = \Oh(k \log n)$ iterations, which matches our algorithm. Moreover, since our algorithm matches all parts of $S$ exactly except those with size at most $\Oh(k)$, the set of matched elements actually defines an alignment between $\F$ and $\G$ with $\Oh(k^2 \log n)$ deletions needed for non-matched elements. 

\paragraph{Tree Decomposition in the Dynamic Setting.}

Implementing the above algorithm in the dynamic setting introduces two main challenges. First, how we partition each unmatched part of set $S$ into a constant number of pieces with reduced size is straightforward in the static setting, i.e., using a tree centroid decomposition to guide our partitioning. In the dynamic setting, however, we are not aware of any such quickly computable decomposition for trees. Second, we need a fast way to find matches for each part in $S$. The latter challenge we discuss in more detail after explaining how we perform our partitioning, as the partitioning method determines the types of parts of $\str{\F}$ that will be contained in $S$.

Our partitioning method will take as input a \emph{subforest} $\str{F}[i \dd j)$, a balanced parenthesis substring of $\str{\F}$, or a \emph{context} $(\str{\F}[i_1 \dd j_1), \str{\F}[i_2 \dd j_2))$. A context is a subforest of $\F$ excluding the subtree of some node in this subforest. More formally, both $\str{\F}[i_1 \dd j_2)$ and $\str{\F}[j_2 \dd i_2)$ must be a subforest of $\F$.  See Figure~\ref{fig:context} for an example.

To partition a subforest piece $\sigma$, we find a node $u$ in $\sigma$ such that its subtree has size at most $|\sigma|/2$ and its parent $u'$ (if it exists) has subtree size at least $|\sigma|/2$. Furthermore, we choose $u$ such that it is centered in $\sigma$, i.e., $o(u) \leq \lceil|\sigma|/2 \rceil$, $c(u) \geq \lceil |\sigma|/2\rceil$.  We then divide the subforest into six subforest and context parts around this node.  See Figure~\ref{fig:forest-part} for a picture of the six parts.  To see that each of the six parts has size at most $|\sigma|/2$, we first observe that we chose $u$ to be a central node in $\sigma$, so any of the parts fully contained to the left or right of $u$ must have size at most $|\sigma|/2$. For the remaining two parts, we observe that $u$ and its parent $u'$ are chosen such that the subtree of $u$ has size at most $|\sigma|/2$ and the set of remaining nodes has size at most $|\sigma| - |\sigma|/2 = |\sigma|/2$, respectively.

Partitioning a context piece $\sigma$ is a bit more difficult and results in two cases. In the first case, if there is a node $u$ in $\sigma$ satisfying the same constraints as in the case for subforests, we may perform the same decomposition as we did before. However, it is possible all valid nodes $u$ satisfying these constraints are exactly in the gap of the context.  Therefore, in the second case, no such $u$ exists. Let $v$ be the node in $\sigma$ whose subtree is excluded from $\sigma$ as per the definition of contexts. The subtree of $v$ in $\sigma$ must already be size $|\sigma|/2$ or else we would fall into the first case. We set the first part of our partition to be all nodes of $\sigma$ excluding $v$ and its subtree.  Then, we note that at most one of the left half of the children of $v$ or the right half of the children of $v$ must have a combined size of greater than $|\sigma|/2$. For the smaller half, we set it to be its own part. For the remaining half of the subtree of $v$, this half is itself a subforest, and so we partition the remaining subforest into 6 more pieces as we did for subforest pieces before.  This results in at most 8 total pieces of size at most $|\sigma|/2$.

For both subforest and context partitioning, in order to check for a node $u$ in $\sigma$ satisfying our size constraints, we perform a binary search on a root-to-leaf path in $\sigma$ aided by the dynamic tree query data structure of \cite{Navarro2014} (see Theorem~\ref{thm:navarro} for the specific set of queries).  The tree query data structure supports $\Oh(\log n)$-time dynamic insertions and deletions and answers queries in time $\Oh(\log n)$. Thus, our partitioning can be done in time $\Oh(\log^2 n)$ per part in $S$.

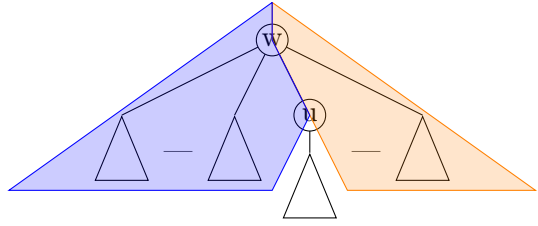
\begin{figure}
    \centering
    \begin{tikzpicture}
    \draw (0, 0) node[circle,draw,minimum size=12pt, inner sep=0pt, outer sep=0pt] (p1) {w};
    \draw (p1) -- (-2, -1)
    node [isosceles triangle,minimum size=20pt,draw, rotate=90, anchor=apex] {};
    \draw (p1) -- (-.5, -1)
    node [isosceles triangle,minimum size=20pt,draw, rotate=90, anchor=apex] {};
    \draw (p1) -- (2, -1)
    node [isosceles triangle,minimum size=20pt,draw, rotate=90, anchor=apex] {};
    \draw (.5, -1) node [circle,draw,minimum size=12pt, inner sep=0pt, outer sep=0pt] (p4) {u};
    \draw (p4) -- (.5, -1.5)
    node [isosceles triangle,minimum size=20pt,draw, rotate=90, anchor=apex] {};
    \draw (-1.25, -1.5) node[circle] (dash1) {\textemdash};
    \draw (1.25, -1.5) node[circle] (dash2) {\textemdash};
    \draw (p1) -- (p4);
    \draw (0, .5) -- (3.5, -2) -- (1, -2) -- (.5, -1) -- (0, 0) -- (0, .5)[color=orange,fill=orange,fill opacity=.2];
    \draw (0, .5) -- (-3.5, -2) -- (0, -2) -- (.5, -1) -- (0, 0) -- (0, .5) [color=blue,fill=blue,fill opacity=.2];
\end{tikzpicture}
    \caption{An example of a context $(\str{F}[o(w) \dd o(u)],\str{F}[c(u) \dd c(w)])$ for the subtree of node $w$ in a forest $\F$ formed by excluding the subtree of node $u$. The left part of the context, depicted in blue, is $\str{F}[o(w) \dd o(u)]$ and the right part of the context, depicted in orange, is $\str{F}[c(u) \dd c(w)]$.}
    \label{fig:context}
\end{figure}

    \begin{figure}
    \centering  \scalebox{.75}{\begin{tikzpicture}
    \draw (0, 0) node[circle,draw,minimum size=12pt, inner sep=0pt, outer sep=0pt] (p1) {$w$};
    \draw (p1) -- (-3, -1)
    node [isosceles triangle,minimum size=20pt,draw, rotate=90, anchor=apex] {};
    \draw (p1) -- (-2, -1)
    node [isosceles triangle,minimum size=20pt,draw, rotate=90, anchor=apex] {};
    \draw (p1) -- (3, -1)
    node [isosceles triangle,minimum size=20pt,draw, rotate=90, anchor=apex] {};
    \draw (.5, -1) node [circle,draw,minimum size=12pt, inner sep=0pt, outer sep=0pt] (p4) {$u'$};
    \draw (-.5, -2) node[circle,draw,minimum size=12pt, inner sep=0pt, outer sep=0pt] (p5) {$u$};
    \draw (p4) -- (-1.5, -2)
    node [isosceles triangle,minimum size=20pt,draw, rotate=90, anchor=apex] {};
    \draw (p4) -- (.5, -2)
    node [isosceles triangle,minimum size=20pt,draw, rotate=90, anchor=apex] {};
    \draw (p4) -- (1.5, -2)
    node [isosceles triangle,minimum size=20pt,draw, rotate=90, anchor=apex] {};
    \draw (p5) -- (-.5, -2.5)
    node [isosceles triangle,minimum size=20pt,draw, rotate=90, anchor=apex] {};
    \draw (-1, -2.5) node[circle] (dash1) {\textemdash};
    \draw (1, -2.5) node[circle] (dash2) {\textemdash};
    \draw (p4) -- (p5);
    \draw (-2.5, -1.5) node[circle] (dash1) {\textemdash};
    \draw (2.25, -1.5) node[circle] (dash2) {\textemdash};
    \draw (p1) -- (p4);
    \draw (-5, -2) node [isosceles triangle, minimum size=60pt,draw,rotate=90] {};
    \draw (5, -2) node [isosceles triangle, minimum size=60pt,draw,rotate=90] {};
    \draw (-6.25, 0.25) -- (-6.25, -3) -- (-3.75, -3) -- (-3.75, .25) -- (-6.25, .25) [color=red, fill=red, fill opacity=.2];
    \draw (3.75, .25) -- (3.75, -3) -- (6.25, -3) -- (6.25, .25) -- (3.75, .25) [color=green, fill=green, fill opacity=.2];
    \draw (-1, -1.75) -- (-1, -3.5) -- (0, -3.5) -- (0, -1.75) -- (-1, -1.75) [draw=blue, fill=blue, fill opacity=.2];
    \draw (-2, -2) -- (-2, -3) -- (-1, -3) -- (-1, -2) -- (-2, -2) [draw=orange, fill = orange, fill opacity=.2];
    \draw (0, -2) -- (0, -3) -- (2, -3) -- (2, -2) -- (0, -2) [draw=magenta, fill=magenta, fill opacity=.2];

    \draw (-6.25, .75) node {$\str{F}[i \dd j)$};
\end{tikzpicture}}
    \caption{An example of partitioning a forest piece $\str{F}[i \dd j)$.  The six parts as defined in the proof of Lemma~\ref{lem:partitionpiece} are colored 1 - blue, 2 - orange, 3 - magenta, 4 - uncolored, 5 - red, 6 - green.}
    \label{fig:forest-part}
    \end{figure}
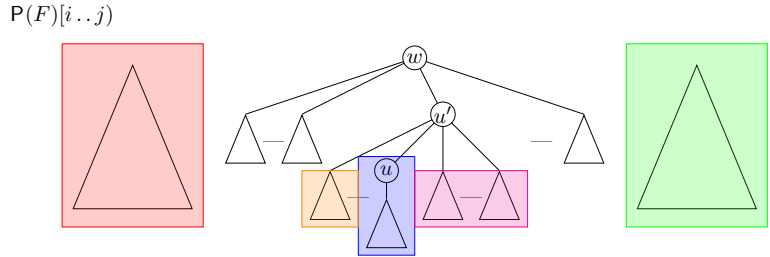

Lastly, we describe our efficient matching routine for subforests and contexts for the matching step of our algorithm. We utilize the dynamic strings data structure of \cite{DBLP:conf/soda/GawrychowskiKKL18} to answer $\ipm$, \emph{internal pattern matching}, queries in $\Oh(\log n)$-time each, formally given in \cite{KK22}. Given a pattern $P$ and text $T$, $\ipm(P, T)$ reports all occurrences of $P$ in $T$. For $|T| < 2|P|$, all occurrences are reported as a single arithmetic progression of starting positions of $P$.

By definition, subforests are subtrings of $\F$, so a single $\ipm$ query straightforwardly finds a match for a given subforest $\sigma$ of $\F$ if one exists. Contexts are defined as two substrings in $\F$ that correspond to a balanced set of parentheses when combined, and so, we can use two $\ipm$ queries to find the set of matches in $\G$ for each substring individually. The remaining difficulty is to find some pair of matches for each part that correspond to a single valid context in $\G$. Again using the data structure of \cite{Navarro2014}, we may rewrite the arithmetic progression of starting indices from each $\ipm$ query as instead an arithmetic progression of depths.  See Figure~\ref{fig:context-match} for an example.  The intersections of these arithmetic progressions corresponds to potential context matches. We prove that we can check precisely the smallest depth intersection of the two depth progressions, and if this intersection is a valid context, we output this match, otherwise there is no match. 

    \begin{figure*}
    \centering \begin{tikzpicture}
    \draw (-.5, .25) node {$\G$};
    \draw (0, 0) node[circle,draw,minimum size=12pt, inner sep=0pt, outer sep=0pt] (p1) {};
    \draw (p1) -- (.75, -.5) node [isosceles triangle,minimum size=10pt,draw,rotate=90,anchor=apex, fill=blue, fill opacity=.2] {};
    \draw(0, -1) node[circle,draw,minimum size=12pt, inner sep=0pt, outer sep=0pt, fill=green, fill opacity=.2] (p2) {};
    \draw (p2) -- (.75, -1.5) node [isosceles triangle,minimum size=10pt,draw,rotate=90,anchor=apex, fill=blue, fill opacity=.2] {};
    \draw (p2) -- (-.75, -1.5) node [isosceles triangle,minimum size=17pt,draw,rotate=90,anchor=apex, fill=orange, fill opacity=.2] {};
    \draw(0, -2) node[circle,draw,minimum size=12pt, inner sep=0pt, outer sep=0pt] (p3) {};
    \draw (p3) -- (.75, -2.5) node [isosceles triangle,minimum size=10pt,draw,rotate=90,anchor=apex, fill=blue, fill opacity=.2] {};
    \draw (p3) -- (-.75, -2.5) node [isosceles triangle,minimum size=10pt,draw,rotate=90,anchor=apex, fill=magenta, fill opacity=.2] {};
    \draw(0, -3) node[circle,draw,minimum size=12pt, inner sep=0pt, outer sep=0pt, fill=green, fill opacity=.2] (p4) {};
    \draw (p4) -- (.75, -3.5) node [isosceles triangle,minimum size=10pt,draw,rotate=90,anchor=apex, fill=blue, fill opacity=.2] {};
    \draw (p4) -- (-1.25, -3.5) node [isosceles triangle,minimum size=17pt,draw,rotate=90,anchor=apex, fill=orange, fill opacity=.2] {};
    \draw(0, -4) node[circle,draw,minimum size=12pt, inner sep=0pt, outer sep=0pt] (p5) {};
    \draw (p5) -- (1.5, -4.5) node [isosceles triangle,minimum size=10pt,draw,rotate=90,anchor=apex, fill=blue, fill opacity=.2] {};
    \draw (p5) -- (-1.5, -4.5) node [isosceles triangle,minimum size=10pt,draw,rotate=90,anchor=apex, fill=magenta, fill opacity=.2] {};
    \draw(-1, -5) node[circle,draw,minimum size=12pt, inner sep=0pt, outer sep=0pt] (p6) {};
    \draw (p6) -- (-1.75, -5.5) node [isosceles triangle,minimum size=17pt,draw,rotate=90,anchor=apex, fill=orange, fill opacity=.2] {};
    \draw(-1, -6) node[circle,draw,minimum size=12pt, inner sep=0pt, outer sep=0pt] (p7) {};
    \draw (p7) -- (-1.75, -6.5) node [isosceles triangle,minimum size=10pt,draw,rotate=90,anchor=apex, fill=magenta, fill opacity=.2] {};
    \draw (p1) -- (p2);
    \draw (p2) -- (p3);
    \draw (p3) -- (p4);
    \draw (p4) -- (p5);
    \draw (p5) -- (p6);
    \draw (p6) -- (p7);
    \draw (1, -5) node[circle,draw,minimum size=12pt, inner sep=0pt, outer sep=0pt] (p8) {};
    \draw (p8) -- (1.75, -5.5) node [isosceles triangle,minimum size=10pt,draw,rotate=90,anchor=apex, fill=blue, fill opacity=.2] {};
    \draw (1, -6) node[circle,draw,minimum size=12pt, inner sep=0pt, outer sep=0pt] (p9) {};
    \draw (p5) -- (p8);
    \draw (p8) -- (p9);

    \draw (-5.5, .25) node {$\piece$};
    \draw(-5, 0) node[circle,draw,minimum size=12pt, inner sep=0pt, outer sep=0pt] (f1) {};
    \draw (f1) -- (-4.25, -.5) node [isosceles triangle,minimum size=10pt,draw,rotate=90,anchor=apex, fill=blue, fill opacity=.2] {};
    \draw (f1) -- (-5.75, -.5) node [isosceles triangle,minimum size=17pt,draw,rotate=90,anchor=apex, fill=orange, fill opacity=.2] {};    
    \draw(-5, -1) node[circle,draw,minimum size=12pt, inner sep=0pt, outer sep=0pt] (f2) {};
    \draw (f2) -- (-4.25, -1.5) node [isosceles triangle,minimum size=10pt,draw,rotate=90,anchor=apex, fill=blue, fill opacity=.2] {};
    \draw (f2) -- (-5.75, -1.5) node [isosceles triangle,minimum size=10pt,draw,rotate=90,anchor=apex, fill=magenta, fill opacity=.2] {}; 
    \draw (f1) -- (f2);
    \draw (f2) -- (-5, -2) node [isosceles triangle,minimum size=20pt,draw,rotate=90,dotted,anchor=apex, fill=gray, fill opacity=.2] {};
\end{tikzpicture}
    \caption{ An example of a context $\piece$ and forest $\G$ for which  $\mathsf{HasMatch}(\piece) = \mathsf{true}$. The arithmetic progression of depths for the left and right part of $\sigma$ are $f_1(i) = 2i + 1$, $f_2(i) = i$, respectively.  Green colored nodes represent root nodes of valid matching contexts corresponding to intersections in $f_1$ and $f_2$.}
    \label{fig:context-match}
    \end{figure*}

Since both our partitioning and our matching routines can be done in poly-logarithmic time dynamically, our entire dynamic algorithm can be performed in $\tOh(k)$-time. We may amortize this time cost to obtain $\tOh(1)$-time  per update, see Section~\ref{sec:kTreeApprox} for details.

\noindent\textbf{Further Results.}
We provide further new results for Dynamic Dyck edit distance, specifically when we are allowed higher update times, and when we want to compute the Dyck edit distance exactly. 
Specifically,
our result is a dynamic algorithm with sub-linear update time that computes an $\Oh(\log n)$-approximation of Dyck edit distance. 
For a parenthesis string $X$, our algorithm uses two separate subroutines to efficiently handle the cases when the Dyck edit distance of $X$, $\ded(X)$, is large ($\Omega(\sqrt{n})$) and small ($\Oh(\sqrt n)$). For large distance, we may straightforwardly recompute the Dyck to string edit distance reduction of \cite{Saha2014, KSSODA2023} after a number of dynamic edits occur that is proportional to the distance. We can use a string edit distance algorithm on the resulting instances to obtain the following result.  See Subsection~\ref{subsec:largeed} for details.

\begin{restatable}{theorem}{largeDed}
\label{thm:largeded}
    There exists a randomized dynamic algorithm that, for a fixed parameter $\epsilon\in \mathbb{R}_+$, maintains an $\Oh(f(\epsilon)\log |X|)$-approximation of $\ded(X)$ (correctly with high probability against an adaptive adversary) for a string $X\in \Sigma^*$ undergoing edits. The algorithm has amortized update time $\tOh_\epsilon(|X|^{1+\epsilon}/\ded(X))$ per edit and pre-processing time $\tOh_\epsilon(|X|^{1+\epsilon})$. 
\end{restatable}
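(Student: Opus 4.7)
My plan is to amortize the static $\Oh(f(\epsilon)\log|X|)$-approximation for $\ded$ --- obtained by combining the LR-decomposition reduction of \cite{Saha2014, KSSODA2023} with an $\tOh(|X|^{1+\epsilon})$-time, $f(\epsilon)$-approximate string edit distance algorithm --- by rebuilding it only periodically. At preprocessing I would run this static algorithm on the initial string, obtaining an estimate $D$ with $\ded(X) \le D \le \Oh(f(\epsilon)\log|X|)\cdot\ded(X)$ in time $\tOh_\epsilon(|X|^{1+\epsilon})$. I would then enter a \emph{phase} that tracks a counter $c$ of updates since the last rebuild and publishes $D + c$ after each update. Once $c$ exceeds a threshold $k := \lfloor D/(C\,f(\epsilon)\log|X|)\rfloor$ for a suitably chosen constant $C$, I would discard $D$ and re-run the static algorithm on the current string to start a new phase.

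For correctness, the key observation is that a single edit changes $\ded(X)$ by at most $1$, so after $c \le k$ updates within a phase $|\ded(X) - \ded(X_0)|\le c \le k$, where $X_0$ denotes the string at the start of the phase. The lower bound $D + c \ge \ded(X_0) + c \ge \ded(X)$ is immediate, and choosing $C$ large enough forces $k \le \ded(X_0)/2$, hence $\ded(X) \ge \ded(X_0)/2$, giving $D + c \le \Oh(f(\epsilon)\log|X|)\cdot\ded(X_0)+k \le \Oh(f(\epsilon)\log|X|)\cdot\ded(X)$. For the adaptive adversary, each rebuild draws fresh, independent randomness for the underlying static algorithm, and the adversary's view up to a rebuild depends only on previously published estimates; thus the high-probability guarantee applies in each phase independently, and a union bound over at most $\mathrm{poly}(|X|)$ rebuilds yields overall correctness with high probability. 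For amortized time, a phase has length $\Theta(k) = \Theta(\ded(X)/(f(\epsilon)\log|X|))$ and costs $\tOh_\epsilon(|X|^{1+\epsilon})$ plus $\Oh(1)$ per update, giving amortized cost $\tOh_\epsilon(|X|^{1+\epsilon}/\ded(X))$ per edit.

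The main obstacle I anticipate is the small-distance regime, where $D$ (and hence $k$) may collapse, forcing a rebuild after essentially every update. The theorem is implicitly targeted at the large-$\ded$ regime (as flagged in the paragraph preceding the statement), and one can handle the degenerate case by clamping the threshold to $\max(1, k)$ and verifying that when $D$ is at most a constant the approximation $D + c$ remains within the claimed ratio whenever $\ded(X) \ge 1$; the bound $\tOh_\epsilon(|X|^{1+\epsilon}/\ded(X))$ is vacuous otherwise. A secondary subtlety is that $\ded(X)$ fluctuates within a phase, but the fact that it stays within a factor of $2$ of $\ded(X_0)$ means $k$ remains proportional (up to constants) to the current $\ded(X)$, so the stated amortized bound is not affected.
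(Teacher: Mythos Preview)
Your proposal is correct and follows essentially the same approach as the paper: both run the static $\Oh(f(\epsilon)\log|X|)$-approximation (the Kouck\'y--Saks LR reduction combined with the Andoni--Nosatzki $f(\epsilon)$-approximate edit distance) at the start of each epoch, choose the epoch length proportional to the computed estimate divided by $f(\epsilon)\log|X|$, and amortize the $\tOh_\epsilon(|X|^{1+\epsilon})$ rebuild cost over that many updates. The only cosmetic difference is that you report $D+c$ during an epoch while the paper reports the fixed value $a$ computed at the epoch's start; both are valid $\Oh(f(\epsilon)\log|X|)$-approximations throughout the epoch by the same $\pm 1$-per-edit stability argument.
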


When $ \ded(X) $ is small, maintaining the same approximation guarantee requires more frequent recomputations due to amortization constraints. We show how these recomputations can be done efficiently using the dynamic strings data structure of \cite{DBLP:conf/soda/GawrychowskiKKL18} that we equip with a few specific additional operations useful for our algorithm. For this case we obtain the following. See Subsection~\ref{subsec:smalled} for details.

\begin{restatable}{theorem}{smallDed}
\label{thm:smallded}
    There exists a randomized dynamic algorithm that maintains an $\Oh(\log (\ded(X)))$-approximation of $\ded(X)$ (correctly with high probability against an adaptive adversary) for a string $X\in \Sigma^*$ undergoing edits. This algorithm has amortized update time $\tOh(\ded(X) + 1)$ per edit and pre-processing time of $\tOh(|X|)$.
\end{restatable}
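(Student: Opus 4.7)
The plan is to build on the static $\Oh(\log n)$-approximation algorithm for Dyck edit distance via the LR-decomposition of Saha~\cite{Saha2014} and Kouck\'y--Saks~\cite{KSSODA2023}, combined with periodic rebuilding of the decomposition and a suitably augmented dynamic strings data structure. The key observation is that when $\ded(X) \le k$, the static algorithm can be implemented in $\tOh(k^2)$ time rather than $\tOh(|X|)$ time: at each of the $\Oh(\log k)$ levels, the LR-segments whose balanced halves are already Dyck contribute $0$ to the approximation and can be skipped via balance/LCE queries, leaving at most $\Oh(k)$ ``non-trivial'' LR-segments per level. For each non-trivial segment of local cost $k_i$, the string edit distance between its two halves is computed in $\tOh(k_i^2)$ time by Landau--Vishkin, and $\sum_i k_i \le \Oh(k)$ across the whole level, so the total is $\tOh(k^2)$.

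At pre-processing, I would build the dynamic strings data structure of~\cite{DBLP:conf/soda/GawrychowskiKKL18} over $X$ in $\tOh(|X|)$ time, augmented to support three additional queries in $\Oh(\log n)$ time each: the parenthesis balance $h_X(i)$ at an arbitrary index (by maintaining subtree sums of $+1/-1$ values in the underlying balanced tree), the next index after a given position where the balance falls to or below a specified level (via a guided tree walk using per-node minimum-balance augmentation), and LCE of two substrings. Each edit on $X$ is propagated through the augmented structure in $\Oh(\log n)$ time. Immediately after pre-processing I would run the static LR-decomposition approximation once, obtaining an initial value $\tilde{d}$ with $\ded(X) \le \tilde{d} \le \Oh(\log \ded(X)) \cdot \ded(X)$, together with the associated per-level segment bookkeeping stored as a dynamic string over the trimmed residues.

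For the dynamic maintenance, I would keep a counter $c$ of edits applied since the last rebuild and return $\tilde{d} + c$ as the current approximation. Since each edit changes $\ded$ by at most $1$, $\tilde{d} + c$ remains a valid $\Oh(\log \ded(X))$-approximation as long as $c = \Oh(\tilde{d})$ and $\ded(X)$ has not drifted away from its value at the last rebuild by more than a constant factor. When $c$ reaches $\Theta(\tilde{d}+1)$, I would trigger a full rebuild at cost $\tOh((\tilde{d}+1)^2)$, amortizing to $\tOh(\tilde{d}+1) = \tOh(\ded(X)+1)$ per edit. A doubling/halving guess on $\tilde{d}$ handles the case where $\ded(X)$ shifts by more than a constant factor between rebuilds: whenever the recomputed value drops below $\tilde{d}/4$ or exceeds $2\tilde{d}$, I update $\tilde{d}$ and reset the schedule.

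The main obstacle is designing the augmentation of the dynamic strings data structure so that the ``next position with balance $\le h$'' query runs in $\Oh(\log n)$ time while still supporting edits, splits, and concatenations in $\Oh(\log n)$ time; this is needed to ensure that a rebuild skips the balanced-Dyck portions in time proportional to the number of non-trivial LR-segments rather than to $|X|$. A secondary point is making the scheme work against an adaptive adversary: the hashing-based equality and LCE queries of~\cite{DBLP:conf/soda/GawrychowskiKKL18} must be implemented using an adversary-resilient hash family, or one must argue that the revealed value $\tilde{d}+c$ does not leak enough information to force collisions, analogously to the treatment in Theorem~\ref{thm:largeded}.
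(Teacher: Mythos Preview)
Your epoch-based scheme and the use of Landau--Vishkin with LCE for the per-segment edit distances are exactly right, and match the paper. The gap is in the sentence ``can be skipped via balance/LCE queries.'' The string $X$ itself may have $\Omega(|X|)$ maximal LR-segments even when $\ded(X)=\Oh(1)$; take $X=()\,()\,\cdots\,()\,(\,]$. Your augmentations let you test one segment (or collapse one peak) in $\Oh(\log n)$ time, but they do not let you jump over a run of $\Omega(|X|)$ trivial peaks in a single query: each peak of the height function is a separate local maximum, and neither a ``next position with balance $\le h$'' query nor an LCE against $T(X)$ merges consecutive peaks. So the rebuild you describe still costs $\tOh(|X|)$ in the worst case, not $\tOh(k^2)$.

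The missing idea, which the paper supplies, is to maintain the \emph{fully reduced} string $\hat{X}$ (iteratively delete adjacent matching pairs until none remain). By a counting argument, $\hat{X}$ has only $\Oh(k)$ maximal LR-segments, and $\ded(\hat{X})=\ded(X)$. The paper maintains $\hat{X}$ via a balanced binary tree over $X$ in which each node stores (a handle to) $\widehat{X[I]}$ for its interval $I$; merging two children is one LCE between the transpose of the left child's suffix and the right child's prefix, followed by a split and a concatenate. This gives $\tOh(|X|)$ preprocessing and $\Oh(\log^2 n)$ per edit. The Kouck\'y--Saks reduction is then run on $\hat{X}$ rather than on $X$, using $\lcpt$ queries to enumerate the $\Oh(k)$ segments in $\tOh(k)$ time; the edit distances are computed exactly as you propose. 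Once you replace ``skip via balance queries'' with ``run the reduction on $\hat{X}$,'' the rest of your plan goes through.
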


We combine the two above results to obtain the following theorem.

\begin{restatable}{theorem}{combinedDed}
\label{thm:combinedded}
    There exists a randomized algorithm that, for a fixed parameter $\epsilon \in \mathbb{R}_+$, maintains an $\Oh(f(\epsilon) \log \ded(X))$-approximation of $\ded(X)$ (correctly with high probability against an adaptive adversary) for a string $X\in \Sigma^*$ undergoing edits. This algorithm has amortized update time $\tOh(|X|^{.5 + \epsilon})$ per edit and pre-processing time of $\tOh_\epsilon(|X|^{1+\epsilon})$.
\end{restatable}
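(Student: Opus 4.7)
The plan is to combine the dynamic data structures of~\cref{thm:largeded} and~\cref{thm:smallded} through a mode-switching strategy. I would fix a threshold $T := n^{0.5+\epsilon}$, where $n$ is a power-of-two upper bound on $|X|$ (refreshed by a standard doubling trick whenever $|X|$ leaves $[n/4,n]$, which contributes only $\tilde{O}_\epsilon(n^{\epsilon})$ amortized per edit). At preprocessing time both data structures are built at a total cost of $\tilde{O}_\epsilon(|X|^{1+\epsilon})$, but at any moment only one of them is \emph{active} and receives updates: the small-distance structure in ``small mode,'' and the large-distance structure in ``large mode.''

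Let $\tilde d_s$ denote the small-distance estimate, so that $\ded(X) \le \tilde d_s \le O(\log \ded(X))\cdot \ded(X)$, and let $\tilde d_\ell$ denote the large-distance estimate, so that $\ded(X) \le \tilde d_\ell \le O(f(\epsilon)\log n)\cdot \ded(X)$. I would switch from small to large mode once $\tilde d_s$ exceeds $CT\log^2 n$ for a sufficiently large constant $C$, and switch back once $\tilde d_\ell$ drops below $T$. While in small mode the guard $\tilde d_s \le CT\log^2 n$ gives $\ded(X) \le \tilde{O}(T)$, so \cref{thm:smallded} runs at $\tilde{O}(\ded(X)+1) = \tilde{O}(|X|^{0.5+\epsilon})$ per edit; while in large mode the guard $\tilde d_\ell \ge T$ gives $\ded(X) \ge T/\tilde{O}(f(\epsilon))$, so \cref{thm:largeded} runs at $\tilde{O}(|X|^{1+\epsilon}/\ded(X)) = \tilde{O}_\epsilon(|X|^{0.5})$ per edit.

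Every mode switch would trigger a rebuild of the other data structure on the current string, costing $\tilde{O}_\epsilon(|X|^{1+\epsilon})$. Because the Dyck edit distance is $O(1)$-Lipschitz under single-character edits, a small-to-large switch certifies $\ded(X) = \tilde{\Omega}(T\log n)$ while the next large-to-small switch certifies $\ded(X) \le T$, and analogously in the reverse direction; hence at least $\tilde{\Omega}(T\log n)$ edits must separate consecutive switches. The amortized rebuild contribution is therefore $\tilde{O}_\epsilon(|X|^{1+\epsilon}/(T\log n)) = \tilde{O}_\epsilon(|X|^{0.5})$ per edit, which stays within the advertised $\tilde{O}(|X|^{0.5+\epsilon})$ budget.

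Finally, I would return $\tilde d_s$ in small mode (an $O(\log\ded(X))$-approximation by~\cref{thm:smallded}) and $\tilde d_\ell$ in large mode (an $O(f(\epsilon)\log n)$-approximation by~\cref{thm:largeded}); the large-mode invariant $\ded(X)=\tilde\Omega(|X|^{0.5+\epsilon})$ yields $\log|X| = O(\log\ded(X)/\epsilon)$, so the latter bound rewrites as $O(f(\epsilon)\log\ded(X))$ after absorbing $1/\epsilon$ into $f$. The step I expect to be most delicate is calibrating the two hysteresis thresholds: the small- and large-mode estimators have slack factors of $O(\log\ded)$ and $O(f(\epsilon)\log n)$ respectively, and the upper and lower thresholds must be separated by a polylogarithmic margin large enough to simultaneously dominate both slacks and guarantee $\tilde{\Omega}(T)$ edits between reversals---this gap is exactly what converts the $|X|^{1+\epsilon}$ rebuild cost into the $|X|^{0.5+\epsilon}$ per-edit bound, while preserving adaptivity since both underlying algorithms are adaptive.
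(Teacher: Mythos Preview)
Your proposal is correct and follows exactly the approach the paper sketches: the paper's entire argument is the single sentence ``We may combine the algorithms of Theorem~\ref{thm:largeded} and Theorem~\ref{thm:smallded} and transition between the two routines according to our current approximation of Dyck edit distance,'' and your hysteresis-based mode-switching scheme is a careful instantiation of that sentence. Your calibration of the thresholds, the Lipschitz argument for amortizing rebuilds, and the observation that $\log|X| = O_\epsilon(\log\ded(X))$ in large mode are precisely the details the paper leaves implicit.
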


In addition to approximation algorithms, we design an exact algorithm for the bounded-distance regime, where the Dyck edit distance is upper bounded by $k$. This algorithm achieves an update time of $ \tOh((1+\ded(X))^5) $ per edit. Consequently, when $ k $ is small, this approach beats the traditional cubic-time solution for Dyck edit distance in the dynamic setting. To obtain this result, we adapt the Dyck edit distance algorithm of \cite{DBLP:conf/soda/FriedGKKPS22}. The algorithm of \cite{DBLP:conf/soda/FriedGKKPS22} separates parentheses of the input string into trapezoids and clusters, and we show this decomposition can be maintained in the dynamic setting without much additional work. See Section~\ref{sec:exact} for details.

\begin{restatable}{theorem}{exactK}
\label{thm:exactk}
    There exists a dynamic algorithm that maintains $\ded(X)$ (correctly with high probability against an adaptive adversary) for a string $X \in \Sigma^*$ undergoing edits, with update time $\tOh((1+\ded(X))^5)$ per edit. The algorithm has a pre-processing time $\tOh(|X|+\ded(X)^5)$.
\end{restatable}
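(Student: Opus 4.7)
The plan is to adapt the static $\tOh(|X|+k^5)$-time algorithm of Fried, Golan, Kociumaka, Kopelowitz, Porat, Starikovskaya \cite{DBLP:conf/soda/FriedGKKPS22} (FGKKPS) to the dynamic setting, using the hint in the excerpt that their trapezoid/cluster decomposition of the parenthesis string can be locally repaired after each edit. The preprocessing step simply runs the FGKKPS algorithm on the initial string (which is empty, so this is trivial, but the bound $\tOh(|X|+k^5)$ accommodates non-empty initialization), storing the decomposition inside a dynamic strings data structure \cite{DBLP:conf/soda/GawrychowskiKKL18} so that longest common extension, height, and twin queries on $X$ can be answered in $\Oh(\log |X|)$ time per query.

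For an individual edit at position $i$, first I would update the underlying dynamic strings structure in $\Oh(\log |X|)$ time. The key observation is that an insertion or deletion at position $i$ only shifts the heights of the characters to the right of $i$ by $\pm 1$, so only those parentheses whose twin "crosses" position $i$ (equivalently, the ancestors in the twin-tree of $X[i]$, as introduced in the Dyck technical overview) can change their twin, and it is precisely these parentheses that can be re-assigned to different trapezoids or clusters. Since the Dyck edit distance $k$ bounds the number of unmatched parentheses, the number of "cluster-boundary" parentheses whose trapezoid/cluster membership changes because of a single edit can be controlled by $\tOh(\poly(k))$, and each such change can be located using height and twin queries on the dynamic strings structure in $\Oh(\log |X|)$ time.

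Once the decomposition is repaired, I would recompute the FGKKPS dynamic program on the updated clusters. The static algorithm's running time decomposes into an $\tOh(|X|)$ term spent computing the decomposition and an $\tOh(k^5)$ term spent on the DP over cluster interfaces; only the DP must be redone per update, since the decomposition is maintained incrementally above. This yields an update time of $\tOh(k^5)=\tOh((1+\ded(X))^5)$, matching the claim. A subtle issue is that $\ded(X)$ may fluctuate between updates, so I would use the standard doubling trick: guess $k'$ and run the algorithm parameterized by $k'$; if the returned value exceeds $k'$, double $k'$ and restart. The amortized cost is still $\tOh((1+\ded(X))^5)$ per update.

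The main obstacle will be the second step: formally arguing that the trapezoid/cluster decomposition from \cite{DBLP:conf/soda/FriedGKKPS22} admits only $\tOh(\poly(k))$ local changes per edit, and that all such changes can be detected and applied using a constant number of height/twin/LCE queries to the dynamic strings structure, without ever touching the long "interior" of a trapezoid. This requires inspecting the precise definitions of trapezoids and clusters in FGKKPS and showing that they are invariant under shifts of the height profile away from the affected zone — essentially a locality lemma for their decomposition. Once this locality is in place, everything else (query-based repair, rerunning the DP, and the doubling trick for unknown $k$) is routine.
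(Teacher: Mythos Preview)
Your plan diverges from the paper's in a crucial way, and the divergence is exactly at the point you flag as the main obstacle. The paper does \emph{not} maintain the trapezoid/cluster decomposition incrementally, and it proves no locality lemma for it. Instead, it recomputes the entire decomposition from scratch after every edit. The reason this is affordable is that the decomposition is taken not on $X$ but on $\hat{X}$, the string obtained by iteratively cancelling matching neighbouring pairs. The paper already maintains $\hat{X}$ dynamically (\cref{clm:hatX}), and by \cref{lem:kpeaks} $\hat{X}$ has only $\Oh(k)$ maximal LR-segments. From those segments, \cref{lem:trapconstr} rebuilds the set of tall maximal trapezoids in $\tOh(k^2)$ time using $\lcpt$ and height queries on the dynamic strings structure, and \cref{lem:clusterconstr} then rebuilds the clusters and the decomposition tree in $\tOh(k^2)$ time. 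Running the FGKKPS dynamic program (the $\Oh(k^3)$ trapezoid routine with $\lcp$-oracle access and the cubic DP on the $\Oh(k^2)$ cluster indices) then costs $\tOh(k^5)$, giving the stated update time.

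So the ``locality lemma'' you anticipate needing is bypassed entirely: there is no attempt to argue that only $\mathrm{poly}(k)$ parentheses change membership, because nothing is kept around between updates except $\hat{X}$ and the dynamic strings structure. Your sketch would require establishing that lemma, and it is not obvious that a single edit cannot restructure many trapezoids on $X$ itself (trapezoid boundaries depend on global height extrema, not just on twins crossing the edit position). The paper's route is both simpler and avoids this risk; your doubling trick for unknown $k$ is the one piece that matches the paper exactly.
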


\section{Related Works}\label{sec:related}

\paragraph{String Edit Distance:}
One of the most fundamental problems in computer science is \emph{string edit distance}, with roots dating back to the 1960s~\cite{L65,NW70,WF74}. Given two strings, the objective of string edit distance is to find the minimum number of edit operations (insertions, deletions, and substitutions) required to transform one string into the other. An algorithm by Landau and Vishkin~\cite{LandauV11988} builds upon insights by Ukkonen~\cite{Ukk85} and Myers~\cite{Myers1986} to achieve this task in $\Oh(n+k^2)$ time by leveraging suffix trees and an elegant approach combining greedy and dynamic programming component. This algorithm has linear running time in the length of the strings $n$ as long as the edit distance $k$ is at most $\Oh(\sqrt{n})$.
For larger values of $k$, the study of approximation algorithms for edit distance has been extensive~\cite{LMS98,10.5555/874063.875596,10.1109/FOCS.2004.14,10.5555/1109557.1109644,10.1145/1536414.1536444,10.1109/FOCS.2010.43}, particularly in recent years~\cite{10.1145/3313276.3316371,10.1145/3422823,GRS:20,KS20,BR19,DBLP:journals/jacm/BoroujeniEGHS21,K19}. Andoni and Nosatzki~\cite{ANFOCS20} obtained the current best bound, presenting a constant-factor approximation algorithm running in $\Oh(n^{1+\epsilon})$-time for any constant $\epsilon > 0$.
Recently, Das, Gilbert, Hajiaghayi, Kociumaka, and Saha~\cite{DGHKS23} proposed an $\Oh(n + k^5)$-time algorithm for weighted string edit distance in which a weight function determines the cost of edits, further improved to $\tOh(n + \sqrt{nk^3})$ by Cassis, Kociumaka, and Wellnitz~\cite{CKW23}. This latter paper also shows that any polynomial-factor improvement of this runtime (for any $\sqrt{n}\le k \le n$) would violate the All-Pairs Shortest Paths Hypothesis (i.e., yield an All-Pairs Shortest Paths algorithm with truly sub-cubic runtime).

\paragraph{Dyck Edit Distance:}
In this paper, we delve deeply into the variant of edit distance known as \emph{Dyck edit distance}, a special case of general language edit distance~\cite{aho1972minimum, M95,saha2017fast,bringmann2019}, with various practical applications such as fixing hierarchical data files, notably XML and JSON files~\cite{h:78,k:12}. For a parenthesis string of length $n$, the Dyck edit distance is the minimum number of edits (character insertions, deletions, and substitutions) required to make the string well-balanced. Various algorithms have been developed for both exact~\cite{bringmann2019,CDX22,Durr23,DBLP:conf/soda/FriedGKKPS22} and approximate~\cite{Saha2014,DasKS2022,KSSODA2023} versions of this problem.
Finding the exact Dyck edit distance is at least as hard as Boolean matrix multiplication~\cite{AbboudBW2018}. Recent studies have also explored the bounded-distance Dyck edit problem in which an upper bound $k$ is known for the Dyck edit distance of a string. Backurs and Onak~\cite{DBLP:conf/pods/BackursO16} initially presented an algorithm with a runtime of $\Oh(n+k^{16})$, later improved to $\Oh(n+k^5)$ \cite{DBLP:conf/soda/FriedGKKPS22}, and further improved to $\Oh(n+k^{4.5442})$ through the use of fast matrix multiplication \cite{DBLP:conf/soda/FriedGKKPS22,Durr23}. However, except for the $\Oh(n^3)$-time exact algorithm for language edit distance~\cite{M95}, these results do not extend to the weighted setting. 
When the weight function is anti-symmetric and satisfies the triangle inequality, an $\Oh(n + k^{12})$-time algorithm is known~\cite{DGHKS23} for the weighted setting.

\paragraph{Tree Edit Distance:}
 The \emph{tree edit distance} problem, initially introduced by Selkow~\cite{SELKOW1977184}, is another generalization of edit distance aiming to compute a measure of dissimilarity between two rooted ordered trees with node labels. Given two trees, the tree edit distance is the minimum number of insertions, deletions, and relabelings needed to transform one tree into the other. This problem finds applications in diverse fields such as compiler optimization~\cite{10.1145/1644015.1644017}, structured data analysis~\cite{DBLP:conf/vldb/Chawathe99,10.5555/1315451.1315465,10.1145/1613676.1613680}, image analysis~\cite{10.1016/S0167-8655(97)00179-7}, and computational biology~\cite{10.1137/0213024,DBLP:journals/bioinformatics/ShapiroZ90,10.5555/262228,LMS98,10.1016/j.tcs.2004.12.030}.
Currently the best algorithm for finding exact tree edit distance has runtime $\tOh(n^{(3+\omega)/2})$ due to Nogler et al.~\cite{NPSVXY25}, following a series of improvements from $\Oh(n^6)$ \cite{10.1145/322139.322143}, $\Oh(n^4)$ \cite{zhang1989simple}, $\Oh(n^3\log n)$ \cite{Klein98}, $\Oh(n^3)$ \cite{10.1145/1644015.1644017}, $\Oh(n^{2.9546})$ \cite{Xiao21}, and $\Oh(n^{2.9149})$~\cite{Durr23}. Furthermore, a $(1+\epsilon)$-approximation algorithm for tree edit distance with a running time of $\tOh(n^2)$ was presented by Boroujeni, Ghodsi, Hajiaghayi, and Seddighin \cite{DBLP:conf/stoc/BoroujeniGHS19}.
Recently, Seddighin and Seddighin \cite{DBLP:conf/innovations/SeddighinS22} introduced an $\Oh(n^{1.99})$-time $(3+\epsilon)$-approximation algorithm for tree edit distance. For the case when the tree edit distance doe not exceed a threshold $k$, Akmal and Jin~\cite{DBLP:conf/icalp/AkmalJ21} presented an $\tOh(nk^2)$-time algorithm (improving upon an $\Oh(nk^3)$-time solution~\cite{10.1007/11496656_29}), whereas Das, Gilbert, Hajiaghayi, Kociumaka, and Saha~\cite{DGHKS23} achieved a runtime of $\tOh(n+k^{7})$ algorithm (improving upon an $\Oh(n+k^{15})$-time solution~\cite{GHKSS22}). For weighted tree edit distance, the fastest algorithm, by Demaine, Mozes, Rossman, and Weimann~\cite{10.1145/1644015.1644017}, operates in $\Oh(n^3)$ time, matching the conditional lower bound of Bringmann, Gawrychowski, Mozes, and Weimann~\cite{10.1145/3381878}. However, when the weighted tree edit distance $k$ is small and the weight function satisfies the triangle inequality, the $\Oh(n + k^{15})$-time algorithm of \cite{DGHKS23} may outperform the $\Oh(n^3)$ upper bound.

\paragraph{Dynamic Edit Distance:}
Dynamic algorithms address real-world scenarios where data evolves rapidly, necessitating efficient updates to maintain solutions. Over recent decades, substantial progress has been made in understanding fundamental problems in graphs and sequences within the dynamic setting. Notable examples include maximum matchings in graphs \cite{Sankowski07, GP13, BKS23}, connectivity \cite{NS17,ChuzhoyGLNPS20}, maximum flow \cite{ChenGHPS20, BLS23}, minimum spanning trees \cite{NSW17,ChuzhoyGLNPS20}, clustering \cite{CharikarCFM04, HK20, BateniEFHJMW23}, diameter estimation \cite{BN19}, independent set \cite{AOSS18}, pattern matching \cite{ABR00, DBLP:conf/soda/GawrychowskiKKL18,DBLP:conf/focs/Charalampopoulos20, CGKMU22}, lossless compression \cite{NIIBT20}, string similarity \cite{ABR00, CGP20}, longest increasing subsequence \cite{MS20, KS21, GJ21}, suffix arrays \cite{KK22}, and others.

Various dynamic models with differing complexities have been explored, including incremental, decremental, and fully-dynamic models, supporting insertion, deletion, or both \cite{BhattacharyaK20, GutenbergWW20, KMS22, BLS23, Bernstein16, HKN16, BernsteinGW20, BernsteinGS20,DI04, GP13, RZ16, HHS22}. The evolution of dynamic edit distance algorithms initially tackled simpler variants of the problem, with early focus on updates limited to string endpoints \cite{LMS98, KP04, IIST05,Tis08}. Notably, Tiskin's linear-time algorithm \cite{Tis08} accommodates edit distance maintenance with character insertions or deletions at either endpoint of the strings. Recently, Charalampopoulos, Kociumaka, and Mozes \cite{CKM20} presented an $\tilde{O}(n)$-update-time algorithm for the general dynamic edit distance problem, allowing updates anywhere within the strings, with the caveat that SETH prohibits sublinear update time even in the most restrictive scenarios.
A very recent preprint of Gorbachev and Kociumaka~\cite{GK24} circumvents this lower bound and achieves $\tOh(k)$ time per update, where $k$ is the edit distance.
Kociumaka, Mukherjee, and Saha presented the first approximate dynamic edit distance algorithm that approximates edit distance in $n^{o(1)}$ time within a $n^{o(1)}$ approximation factor \cite{KMS2023}. Improving the approximation to constant in sub-polynomial update time remains a fascinating open question.

In addition to dynamic edit distance algorithms, dynamic data structures for string problems have been extensively studied. These structures maintain dynamic strings while accommodating queries such as equality testing, longest common prefix, pattern matching, and more \cite{m97,ABR00,DBLP:conf/soda/GawrychowskiKKL18,DBLP:conf/focs/Charalampopoulos20, CGKMU22,DBLP:conf/soda/GawrychowskiKKL18,NIIBT20,KK22,CGP20,chen2013dynamic,MS20,KS21,GJ21,ABCK19}. Some works also consider split and concatenate operations, enabling functionalities beyond character edits, such as cut-paste and copy-paste operations \cite{m97,ABR00,DBLP:conf/soda/GawrychowskiKKL18}. Meanwhile, significant strides have been made in fully dynamic graph algorithms, addressing central problems like spanning forests, transitive closure, shortest paths, maximum matching, maximal independent set, maximal matching, set cover, vertex cover and others \cite{NS17,ChuzhoyGLNPS20, NSW17,Sankowski04, BNS19,KMS22,DI04, BN19,Sankowski07, GP13,AOSS18,BaswanaGS15, Solomon16,BHN19, AAGPS19,BK19} (see also survey~\cite{HHS22} for recent advances in fully dynamic graph algorithms).

\section{Approximate Dynamic Dyck Edit Distance}\label{sec:fastapprox}
In this section, we prove our main result for dynamic Dyck edit distance, a sub-polynomial approximation with sub-polynomial update time for dynamic edits. Specifically, for parameters $2\le b \le n$ and a dynamic parenthesis string $X$ of length at most $n$, our algorithm achieves an $\Oh(b \log_b n \log n)$-approximation and supports updates in time $\Oh(b^2 (\log n)^{\Oh(\log_b n)})$.  We show that using a tree representation of parenthesis string $X$ determined by the heights of the parentheses in $X$, the heavy-light decomposition of the tree provides a novel natural reduction from Dyck to string edit distance. Each update to $X$ will cause at most $\Oh(\log n)$ updates to the heavy paths and the corresponding string edit distance instances.  We then use a dynamic approximation of~\cite{KMS2023} to maintain an approximate string edit distance for each heavy path.

\begin{definition}[Reduction tree]
\label{def:reduc}
    Given a parenthesis string $X$, the \emph{height tree} $\T_X$ is defined via pairs of twins in $X$.

    Each node in $\T_X$ corresponds to an opening parenthesis $X[i]$ and the corresponding closing parenthesis $X[\tw(i)]$. For any closing parentheses $X[i]$ with $X[\tw(i)]$ undefined (there is no earlier opening parenthesis twin in $X$), we add an opening parenthesis of an unused type to the start of $X$, which we refer to as a \emph{dummy} opening parenthesis. The parent of the node corresponding to $X[i]$ is the node corresponding to $X[j]$ where $j =\max\{j' \in [0\dd i): h(j') = h(i) - 1\}$ if  such a $j$ exists. We add an additional node to be the root of the tree and set the root to be the parent of all nodes with no parents according to our above construction.

    Given a node $u$, we let $i_u$ represent the index of the corresponding opening parenthesis of $u$. We say $u$ \emph{contains} $X[i_u]$. Note these indices are adjusted after inserting the dummy opening parentheses.
\end{definition}

We next state a useful fact regarding the height of a tree.  
Since the parent $u$ of a given node $v$ must have a smaller index and smaller height, it follows that any node $w$ is a descendant of $u$ if and only if the indices $i_w$ and $\tw(i_w)$ lie within the range $[i_u \dd \tw(i_u)]$.  
Using this fact, we will efficiently determine whether a node is heavy or light in $\Oh(\log n)$ time by maintaining a self-balancing binary tree over $X$ that returns the indices corresponding to any given node of $\T_X$.

\begin{fact}
\label{fct:descendants}
For every two nodes $u, v$ in the height tree $\T_X$, the node $v$ is a (proper) descendant of $u$ if and only if $i_u < i_v < \tw(i_v) < \tw(i_u)$.
\end{fact}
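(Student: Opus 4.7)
The plan is to prove the two implications separately, both hinging on the definition of the parent via the maximum-$j'$ rule together with the non-crossing property of twins (\cref{obs:noncrossingtwins}). The key monotone quantity is the height of the opening parenthesis associated with a node: traversing one parent edge decreases this height by exactly $1$, so ancestry reduces to a statement about nesting of twin intervals at consistent heights.

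For the forward direction, I would induct on the tree distance from $u$ to $v$; the only nontrivial case is when $v$ is a child of $u$. The parent definition directly gives $i_u < i_v$ and $h(i_u) = h(i_v)-1$. To show $i_v < \tw(i_u)$, I would argue by contradiction: if $\tw(i_u) < i_v$, then $\tw(i_u)+1$ also has height $h(i_u) = h(i_v)-1$ (by the twin definition) and exceeds $i_u$, violating the maximality of $i_u$ — unless $\tw(i_u)+1 = i_v$, which itself contradicts $h(i_v) = h(i_u)+1$. Once $i_u < i_v < \tw(i_u)$ is established, \cref{obs:noncrossingtwins} excludes $\tw(i_u) < \tw(i_v)$, so $\tw(i_v) < \tw(i_u)$. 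The inductive step then just chains these inequalities along the ancestor path from $u$ to $v$.

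For the reverse direction, I would walk up the parent chain starting at $v$; the base case above guarantees that the associated heights strictly decrease by $1$ at each step. Because $i_v \in (i_u, \tw(i_u)]$ forces $h(i_v) > h(i_u)$ by the twin definition, after exactly $h(i_v) - h(i_u)$ parent steps I arrive at some ancestor $w$ of $v$ with $h(i_w) = h(i_u)$; iterating the forward direction along this walk also gives $i_w < i_v < \tw(i_v) < \tw(i_w)$. So both $(i_u, \tw(i_u))$ and $(i_w, \tw(i_w))$ are twin intervals containing $i_v$: \cref{obs:noncrossingtwins} rules out their being disjoint, and the twin definition rules out strict nesting (since the inner opening parenthesis would have strictly greater height than the outer, contradicting $h(i_u)=h(i_w)$). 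Hence $i_u = i_w$ and thus $u = w$, making $u$ an ancestor of $v$; properness is immediate from $i_u \ne i_v$.

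The main obstacle I expect is the base case of the forward direction, specifically ruling out $\tw(i_u) < i_v$ from the $\max$ condition without circular reasoning. The ``$\tw(i_u)+1$ witness'' argument is short but couples the twin definition (heights in $(i_u, \tw(i_u)]$ strictly exceed $h(i_u)$, and in particular $h(\tw(i_u)+1) = h(i_u)$) with the parity fact $h(i_v) = h(i_u)+1$; making sure both subcases $\tw(i_u)+1 < i_v$ and $\tw(i_u)+1 = i_v$ are handled cleanly is the only delicate piece, and the rest of the proof is largely bookkeeping.
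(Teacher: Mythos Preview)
Your proposal is correct and follows essentially the same approach as the paper: both directions rest on the $\max$ condition in the parent definition, the fact that heights change by exactly $1$ along parent edges, and the non-crossing property of twins, and both handle the reverse direction by locating the ancestor $w$ of $v$ at height $h(i_u)$ and arguing $w=u$. The only cosmetic difference is that you package the forward direction as an explicit induction on tree distance (base case: parent--child), whereas the paper argues directly by contradiction against the two bad orderings $i_v < i_u$ and $\tw(i_u) < i_v$ using the same ``better witness than $i_u$'' idea.
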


\begin{proof}
    First, let $v$ be a proper descendant of $u$ and assume for contradiction we have $i_u < \tw(i_u) < i_v < \tw(i_v)$ or $i_v < \tw(i_v) < i_u < \tw(i_u)$ (other orderings contradict the non-crossing properties of twins).  We begin with the case when $i_v < \tw(i_v) < i_u < \tw(i_u)$. Consider the path from $u$ to $v$.  Observe that by Definition~\ref{def:reduc}, each parent-child pair in this path must have the opening parenthesis corresponding to the parent to the left of the opening parenthesis corresponding to the child.  Therefore, $i_u < i_v$, which contradicts our assumption.
    
    Now, assume $v$ is a proper descendant of $u$ with $i_u < \tw(i_u) < i_v < \tw(i_v)$. Consider the path $w_0, w_1, w_2, \ldots, w_\ell$ in $\T_X$ from $u$ to $v$ with $w_0 = u$ and $w_\ell = v$. By  Definition~\ref{def:reduc}, $h(i_{w_i}) < h(i_{w_{i+1}})$ for all $0 \leq i < \ell$ and so, $h(i_u) < h(i_v)$. Furthermore by our definition of parents, $u$ must be the closest opening parenthesis to $i_v$ with height $h(i_u)$ satisfying $i_u < i_v$. By the definition of twins, we know that $h(\tw(i_u) + 1) = h(i_u)$. Since height only increments by 1 per opening parenthesis and decrements by 1 per closing parenthesis in the parenthesis string and $h(i_v) > h(i_u)$, there must be some opening parenthesis at index $x > \tw(i_u)$ (which may be $\tw(i_u) + 1$) such that $h(x) = h(i_u)$.  This contradicts that $i_u$ is the closest opening parenthesis to $i_v$ satisfying $i_u < i_v$ with height $h(i_u)$.  Thus, if $v$ is a proper descendant of $u$, we must have that $i_u < i_v < \tw(i_v) < \tw(i_u)$.

    Now, assume we have $i_u < i_v < \tw(i_v) < \tw(i_u)$. Let $u'$ be the ancestor of $v$ with height $h(i_{u'}) = h(i_u)$; by Definition~\ref{def:reduc}, $u'$ is the node corresponding to the closest opening parenthesis to $i_v$ with height $h(i_u)$ and satisfying $i_{u'} < i_v$. If $u' = u$, we are done. Otherwise, by our proof above and the non-crossing property of twins, we know that $i_u < i_{u'} < i_v < \tw(i_v) < \tw(i_{u'}) < \tw(i_u)$.  However, this contradicts the definition of twins for $i_u$ and $\tw(i_u)$, since $\tw(i_{u'})$ satisfies $h(\tw(i_{u'}) + 1) = h(i_u)$ and is closer to $i_u$ than $\tw(i_u)$. Thus, $u$ must be an ancestor of $v$.
\end{proof}

We decompose $\T_X$ into a heavy-light decomposition. Recall that in our heavy-light decomposition, for a given node $u \in \T_X$, a child node $v$ of $u$ is \emph{heavy} if $\lfloor \lg|\T_X(v)| \rfloor = \lfloor \lg|\T_X(u)| \rfloor$, and \emph{light} otherwise. Note that $u$ has at most one heavy child by Observation~\ref{obs:heavy-child}.


A \emph{heavy} path of $\T_X$ is a path beginning with a light node $u$ and the longest path of descendants of $u$ which are all heavy nodes (some heavy paths may consist of a single light node). Note in Section~\ref{sec:prelim}, we defined heavy paths to exclude light nodes, but in the remainder of this section we assume that light nodes are included as defined above. Given a heavy path $P= (u_1, u_2, u_3, \ldots, u_\ell)$, the \emph{heavy string} corresponding to $P$ is the string $X[i_{u_1}] \cdot X[i_{u_2}] \cdot \ldots \cdot X[i_{u_\ell}] \cdot X[\tw(i_{u_\ell})] \cdot \ldots \cdot X[\tw(i_{u_1})]$. We denote the set of heavy paths of tree $\T_X$ as $\hvy_{\T_X}$, and we note that $\hvy_{\T_X}$ partitions $\T_X$. We denote $\hvys_X$ as the set of heavy strings of $\X$, which partitions $X$. We assume each heavy path in $\hvy_{\T_X}$ stores a pointer to its head node, tail node, and the corresponding heavy string in $\hvys_{X}$.  We also assume each heavy path knows its length (which may be calculated during pre-processing and maintained in the head nodes of each path for constant-time access).  We will maintain these dynamically, and we do not explicitly store $T_\X$.

We give an $\Oh(\log^4 n)$-time algorithm for maintaining $\hvys_X$ when $X$ undergoes dynamic edits. Furthermore, we show that there are at most $\Oh(\log n)$ changes to the set of heavy path strings, and therefore only $\Oh(\log n)$ changes to the string edit distance instances we obtain from our set of heavy paths. To show this, we partition the set of nodes in $\T_X$ into three types based on their location relative to the update: Type I nodes whose corresponding opening and closing parenthesis come before the edit, Type II nodes whose corresponding opening parenthesis comes before the edit and whose corresponding closing parenthesis comes after the edit, and Type III nodes whose corresponding parentheses come after the edit. It is easy to see that the subtrees of Type I and Type III nodes remain completely the same since the heights of all parentheses in these subtrees are shifted by the same amount by the dynamic edit. As for Type II nodes, it turns out that these are exactly the set of nodes on the path from the root of $\T_X$ to the node containing the edited parenthesis.  As such, we know there are at most $\Oh(\log n)$ light nodes on this path by Observation~\ref{obs:heavy-light}.

Through a careful case analysis of the interactions between Type II nodes and their parents and children, we show that any heavy node that becomes light must have a neighboring light Type II node that may be charged for updating the heavy path.  Since there are $\Oh(\log n)$ light Type II nodes and only one child of a Type II node may be heavy, there are at most $\Oh(\log n)$ edits to $\hvys_{X}$ per edit to $X$.  To identify these changes, we look at the light Type II nodes of $\T_X$ and check all the nearby nodes to see if any have changed from light to heavy or heavy to light.  When checking to see if a node is light or heavy, we use Fact~\ref{fct:descendants} as well as a segment tree to answer range queries of the form $(i, h)$ where we want to know the index of the next parenthesis after index $i$ whose height is $h$.  To maintain this segment tree efficiently subject to updates, we use a common technique called lazy propagation. With these range queries, we can find the size of a node's subtree quickly and check whether they are heavy or light. While the lemma is significant, we leave the proof to the end of this section, in Subsection~\ref{subsec:dynheavylight} since it is a long case-by-case analysis and introduces a few additional algorithmic tools, e.g. the range queries discussed above.

\begin{lemma}
\label{lem:heavylight}
    For a string $X$ with $n = |X|$, $\hvy_{\T_X}$ and $\hvys_{X}$ undergoes $\Oh(\log n)$ edits, which can be identified and maintained in $\Oh(\log^5 n)$ time, when $X$ undergoes a dynamic insertion or deletion.
\end{lemma}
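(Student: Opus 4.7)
The plan is to classify nodes of $\T_X$ by their location relative to the edit, pin down where heavy/light flips can occur, bound the number of flips via a power-of-two-crossing argument, and implement the update with a lazy-propagation segment tree augmented by twin queries. I would first partition the nodes of $\T_X$ based on an edit at position $p$ into Type~I (both twin indices $<p$), Type~II (opening index $<p\le$ closing index), and Type~III (both indices $\ge p$). By \cref{fct:descendants} the Type~II nodes are precisely the ancestors of the node containing position $p$. Inside any Type~I or Type~III subtree all heights shift by a single additive constant, so the subtree structure -- and in particular the size $|\T_X(u)|$ -- is unchanged; only Type~II subtree sizes change, and each by exactly $\pm 1$. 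Consequently a heavy/light flip at an edge $(u,\mathrm{parent}(u))$ can only occur when at least one endpoint is Type~II and has a subtree size that crosses a power of two.

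The counting step then goes as follows. Along the Type~II chain, subtree sizes are strictly increasing toward the root and each changes by $\pm 1$, so for each exponent $k$ at most one ancestor has size in $\{2^k-1, 2^k\}$; hence only $\Oh(\log n)$ Type~II ancestors experience a floor-log change. For each such ancestor $w$, \cref{obs:heavy-child} together with the observation that at most one of $w$'s children can have floor-log equal to either the old or the new value $\lfloor \lg |\T_X(w)|\rfloor$ shows that at most $\Oh(1)$ incident edges flip. Combined with Observation~\ref{obs1:heavy-light}, which bounds the number of light Type~II ancestors by $\Oh(\log n)$, this yields $\Oh(\log n)$ total edits to $\hvy_{\T_X}$ and $\hvys_X$, corresponding to $\Oh(\log n)$ heavy-path splits/merges.

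For the algorithm I would maintain (a) a segment tree over the height sequence $h_X$ with lazy propagation for range additions, supporting $h(i)$-lookup and twin queries ``smallest $j>i$ with $h(j+1)=h(i)$'' in $\Oh(\log n)$ each, and (b) the heavy-path structure with head/tail/length pointers and the concatenated heavy strings. After applying the range shift on heights induced by the edit, the algorithm walks from the edited position toward the root by jumping between heavy paths via head pointers, visiting the $\Oh(\log n)$ light Type~II ancestors; at each such node it re-queries $|\T_X(\cdot)| = \tfrac12(\tw(i_u) - i_u + 1)$ for the node, its parent, and its heavy child (and their immediate neighbors on the same heavy path) to detect flips, and then splits or concatenates the affected heavy strings. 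Each size query costs $\Oh(\log n)$, each heavy-path split/merge together with the associated update of the heavy string in $\hvys_X$ is poly-logarithmic via the dynamic-strings machinery, and the overall bookkeeping fits in $\Oh(\log^5 n)$.

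The main obstacle will be the case analysis underpinning step two: one must verify that flips cannot occur outside the $\Oh(\log n)$ power-of-two crossings, ruling out flips inside Type~I and Type~III subtrees, handling all combinations of Type (I/II/III) for parent and child, and confirming that a floor-log change at a Type~II node $w$ can only disturb $w$'s edge to its parent and at most one child edge (namely its current-or-previous unique heavy child). Once this case analysis is in place, the algorithmic realization is largely routine given the lazy-propagation segment tree for twin queries and the heavy-path data structure already in use.
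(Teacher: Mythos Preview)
Your Type I/II/III classification and the reduction to the Type II chain match the paper's approach, and your algorithmic outline (walk up via the $\Oh(\log n)$ light Type II ancestors and re-query subtree sizes) is essentially what the paper does. The gap is in the counting step: the claim that each Type II subtree size changes by exactly $\pm 1$ is false. When $X[j]$ is deleted, $\T_Y$ is not $\T_X$ with one node removed; the twin relation shifts along the entire Type II chain, and Type III children of Type II nodes get re-parented (the paper's \cref{clm:TypeIIIchild}). Concretely, if $X[j]$ is closing and $v$ is a Type II node with parent $w$, then $\tw_Y(f(i_v)) = f(\tw_X(i_w))$ (\cref{clm:update}), so $|\T_Y(f(v))| - |\T_X(v)|$ equals the total size of the Type III siblings of $v$, which can be arbitrarily large. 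Your power-of-two-crossing argument therefore does not bound the number of nodes whose heavy depth changes, and the edge-flip framing (which assumes edges persist under $f$) inherits the same error.

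The paper replaces this step with a sandwich inequality (\cref{clm:kkp}): for a Type II node $v$ with parent $u$, the new heavy depth satisfies $k_v \le k'_v \le k_u \le k'_u$ (or the reverse chain, depending on whether $X[j]$ is opening or closing). Hence $k'_v \ne k_v$ forces $k_v < k_u$, i.e., $v$ was already light in $\T_X$; symmetrically, in the other direction one looks at parents of light Type II nodes. Either way, the nodes that must move between heavy paths form a subset of $\Oh(\log n)$ nodes, and each move is $\Oh(1)$ character edits to two heavy strings. In particular, no heavy path is ever split or merged---only single nodes are transferred---which matters because the downstream dynamic string edit distance algorithm of \cref{thm:dyned} supports only character edits, not split/concatenate. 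Your ``$\Oh(\log n)$ heavy-path splits/merges'' framing would therefore not suffice for \cref{thm:dyck-improved} even if the count were right; you need the stronger statement that only $\Oh(\log n)$ individual nodes change paths, and the sandwich inequality is what delivers it.
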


\begin{proof}
    See Subsection~\ref{subsec:dynheavylight} for details.
\end{proof}

We now prove that the heavy string set provides a good approximation of $\ded(X)$ for a parenthesis string $X$.  Recall that $\dedd(X)$ denotes the deletion-only Dyck edit distance where substitutions and insertions are not allowed, and that $\ded(X) \leq \dedd(X) \leq 2 \cdot \ded(X)$.

\begin{lemma}
\label{lem:dycktostrnew}
    Given a parenthesis string $X$ and its heavy string set $\hvys_X$, 
    \[\dedd(X) \leq \sum_{s \in \hvys_X} \dedd(s) \leq 2(\lg(|X|) + 1) \dedd(X).\]
\end{lemma}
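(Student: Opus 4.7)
The two inequalities require different arguments. For the left inequality $\dedd(X)\le\sum_{s\in\hvys_X}\dedd(s)$, the plan is to induct on the tree $\T_X$. Let $P=u_1\to\cdots\to u_\ell$ be the top heavy path. For each light child $v$ of some $u_i\in P$, the substring of $X$ corresponding to the subtree $\T_X(v)$ is itself the parenthesis representation of a smaller forest; by induction, there is a deletion set $D_v$ for that substring making it Dyck with $|D_v|\le\sum_{s\subseteq\T_X(v)}\dedd(s)$. Pick also an optimal deletion $D_P$ for $s_P$ with $|D_P|=\dedd(s_P)$. The claim is that the union $\bigcup_v D_v\cup D_P$ (mapped back to $X$) deletes at most $\dedd(s_P)+\sum_v|D_v|\le\sum_{s\subseteq\T_X}\dedd(s)$ characters and leaves a Dyck string. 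The verification is structural: after applying the recursive deletions, each light subtree's contribution to $X$ becomes a self-contained Dyck string, so the surviving string has the shape $\texttt{(}^{u_1}\cdot Y_1\cdot\texttt{(}^{u_2}\cdot Y_2\cdots\texttt{(}^{u_\ell}\cdot Y_\ell\cdot\texttt{)}^{u_\ell}\cdots\texttt{)}^{u_1}$, where the $Y_i$'s are Dyck. Deleting parens on $P$ according to $D_P$ reduces the skeleton exactly to $s_P\setminus D_P$, which is Dyck, and the Dyck interiors $Y_i$ remain harmless.

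For the right inequality $\sum_s\dedd(s)\le 2(\lg|X|+1)\dedd(X)$, fix an optimal $D^*$ with $|D^*|=\dedd(X)$ and $X'\coloneqq X\setminus D^*\in\Dyck$. For each heavy path $P$ with top node $u_1^P$, define
\[
  k_P\coloneqq\bigl|D^*\cap X[i_{u_1^P}\dd\tw_X(i_{u_1^P})]\bigr|.
\]
The key technical claim is $\dedd(s_P)\le 2k_P$. Granting this, summing gives
\[
  \sum_P\dedd(s_P)\le 2\sum_P k_P = 2\sum_{w\in D^*}\bigl|\{P: w\in X[i_{u_1^P}\dd\tw_X(i_{u_1^P})]\}\bigr|,
\]
and the set $\{P: w\text{ lies in the subtree of }u_1^P\}$ is exactly the chain of heavy paths on the root-to-$v_w$ path in the heavy-path tree, which by \Cref{obs1:heavy-light} has size at most $\lg|X|+1$. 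This yields the desired bound.

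The main obstacle is establishing $\dedd(s_P)\le 2k_P$. The plan is to exhibit an explicit deletion of $s_P$ of size $\le 2k_P$ yielding a Dyck string, guided by $X'$'s matching. Classify each $u_i\in P$ as \emph{good} (both parens kept and $X'$-matched to each other), \emph{orphan} (both kept but $X'$-matched elsewhere), \emph{single-dead}, or \emph{both-dead} under $D^*$. For good $u_i$, the opening and closing types agree (they form an $X'$-matched pair), so keeping them yields an LCS pair $(i,i)$. The construction deletes from $s_P$ only the parens of non-good $u_i$'s. The resulting substring of $s_P$ is then the nested sequence over good indices with matching types, hence Dyck. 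It then remains to bound the number of deleted parens by $2k_P$. Deletions of parens of $P$'s nodes contribute at most one paren per $D^*$-deletion that falls on $P$; each deletion $w\in D^*$ lying in a light subtree of some $u_i\in P$ can be charged to at most two parens in $s_P$ via the observation that such a $w$ creates an imbalance (in the count of opens vs.\ closes in the strict subtree) that propagates only to those ancestors of $v_w$ on $P$ whose parens remain unmatched in $X'$, and the charging argument uses that each orphan along $P$ is witnessed by such an imbalance in exactly one light subtree. A careful accounting (using that $X'$ is globally Dyck and hence $D^*$ is opens/closes-balanced inside $u_1^P$'s subtree whenever $u_1^P$ is itself $X'$-matched, together with a boundary correction when it is not) yields the promised bound of $2k_P$.

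The trickiest step is precisely the charging of orphan nodes to light-subtree deletions; the naive bound $\dedd(s_P)\le 2\cdot\#\text{bad on }P$ together with $\#\text{bad}\le(\lg|X|+1)\dedd(X)$ also suffices, and I expect the cleanest write-up to use the heavy-path-depth argument above to charge each of the at most $\lg|X|+1$ ancestor heavy paths of a deletion by a cost of $2$ per deletion.
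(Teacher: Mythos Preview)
Your plan for the \emph{left} inequality is correct and amounts to the same idea as the paper's, just organized as a tree induction rather than a level-by-level telescoping: once the light subtrees hanging off the top heavy path are (inductively) made Dyck, the skeleton is exactly $s_P$ with Dyck blocks inserted, and any deletion that places $s_P$ in $\Dyck$ still works after such insertions.

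For the \emph{right} inequality, your approach diverges from the paper and has a real gap. Your central claim $\dedd(s_P)\le 2k_P$ is in fact true, but neither of the two arguments you sketch establishes it. The ``orphan charging'' is left at the level of ``a careful accounting \ldots\ yields the promised bound,'' which is exactly the step that needs work; you never explain why an orphan $u_j$ can be charged to a deletion inside $u_1^P$'s subtree rather than to structure outside it. More importantly, the backup claim you offer, namely that the total number of bad nodes is at most $(\lg|X|+1)\dedd(X)$, is \emph{false}. Take $X=[\,(^{15}\,]\,)^{15}$ of length $32$: the twins are $(j,31-j)$ for $j=0,\dots,15$, and deleting $X[0]=[$ and $X[16]=]$ gives $X'=(^{15})^{15}\in\Dyck$, so $\dedd(X)=2$. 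Under this $D^*$, the nodes $u_0$ and $u_{15}$ are single-dead, while every $u_j$ with $1\le j\le 14$ survives but becomes an orphan (its opening paren matches $X[32-j]$ in $X'$, not its twin $X[31-j]$). Thus all $16$ nodes are bad, yet $(\lg 32+1)\cdot\dedd(X)=12<16$. The point is that orphans can proliferate without contributing to $\dedd(s_P)$, because an orphan $u_j$ can still have matching paren \emph{types}; counting bad nodes is simply too coarse.

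The paper avoids this pitfall by arguing per \emph{level} rather than per path. It stratifies the heavy strings by their depth $i$ and defines $X_i$ to be $X$ with all depth-$<i$ heavy strings removed. The key structural fact is that each depth-$i$ heavy string $M^i_j$ is a \emph{balanced LR-substring} of $X_i$. For any optimal deletion of $X_i$, the number $N^i_j$ of characters in $M^i_j$ matched to characters outside $M^i_j$ is at most the number $D^i_j$ of deletions inside $M^i_j$ (a short monotonicity argument using that $M^i_j$ has equal-length opening and closing halves). This yields both $\dedd(X_{i+1})\le\dedd(X_i)$ and $\sum_j\dedd(M^i_j)\le 2\dedd(X_i)\le 2\dedd(X)$ at every level; summing over the at most $\lg|X|+1$ levels gives the bound. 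If you want to rescue your per-path route, one valid way is to first prove $\dedd(s_P)\le\dedd(X^P)$ for $X^P=X[i_{u_1^P}\dd\tw(i_{u_1^P})]$ by applying the paper's level argument \emph{inside} $X^P$, and then show $\dedd(X^P)\le 2k_P$ by bounding the number of $X^P$-survivors that the global optimum matches outside $X^P$; the latter follows from a height argument (the number of left-outside matches is at most the number of opens deleted in $X^P$, and symmetrically). Either way, the $N\le D$ observation for balanced LR-segments is the engine, and your current write-up does not invoke it.
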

\begin{proof}
    Given a string $s \in \hvys_X$, let $P_s \in \hvy_{\T_X}$ be the corresponding heavy path in $\T_X$ of $s$ and let $r_s$ represent the head node of $P_s$ (which is a light node). We define the \emph{depth} of $s$ to be $\lfloor \lg|\T_X(r_s)| \rfloor$. We partition the strings of $\hvys_X$ according to their depths. We let $M_i = \{M^i_1, M^i_2, \ldots, M^i_{\ell_i}\}$ represent the set of heavy path strings whose depth is $i$ for $0 \leq i < h$, where $h \le \lfloor\lg |X|\rfloor+1$.
    For example, $M_0$ is the set of single-node heavy paths at the leaves of $T$. For $0\le i \le h$, we define $X_i$ to be the string of $X$ with $\bigcup_{i' =0}^{i-1} \bigcup_{j=0}^{\ell_{i'}} M^{i'}_j$ removed.
    We will prove that the following holds for every $0 \le i < h$:
    \[\dedd(X_{i+1}) \le \dedd(X_{i})\le \dedd(X_{i+1}) + \sum_{j=1}^{\ell_i} \dedd(M^i_j) \qquad\text{and}\qquad  \sum_{j=1}^{\ell_i} \dedd(M^i_j) \le 2\dedd(X_i).\]

    Let us first consider an optimal deletion-only edit sequence $\textsf{OPT}_i$ for $X_i$. For each $1\le j \le \ell_i$, let $D_j^i$ be the number of deletions in $M^i_j$ made by $\textsf{OPT}_i$ and $N_j^i$ be the number of parentheses in $M^i_j$ that are matched to parentheses outside of $M^i_j$. We note that, since $M^i_j$ is a balanced LR-segment, whenever a parenthesis of $M^i_j$ is matched to a parenthesis outside of this segment, we must delete a parenthesis in the other half of $M^i_j$ since a matching must be monotone. Therefore, \begin{align} N_j^i \leq D_j^i\label{ineq:NvsD}.\end{align}  Now, we define $\hat{M}^i_0, \hat{M}^i_1, \ldots, \hat{M}^i_{\ell_i}$ as the remaining substrings of $X_i$ between each $M^i_j$, so that $X_i = \hat{M}^i_0 \cdot M^i_1 \cdot \hat{M}^i_1 \cdot M^i_2 \cdot \ldots \cdot M^i_{\ell_i}\cdot \hat{M}^i_{\ell_i}$.  We define $\hat{D}^i_j$ as the number of deletions in $\hat{M}^i_j$ made by $\textsf{OPT}_i$ and $\hat{N}^i_j$ as the number of parentheses in $\hat{M}^i_j$ that are matched to a parenthesis in some $M^i_{j'}$.  We note that by definition, $\sum_{j = 0}^{\ell_i} \hat{N}^i_j = \sum_{j = 1}^{\ell_i} N^i_j$, and furthermore, \[\dedd(X_i) = \sum_{j = 0}^{\ell_i} \hat{D}^i_j + \sum_{j = 1}^{\ell_i} D^i_j.\]
    Now, we consider $\dedd(X_{i+1})$, which does not need to delete any parentheses from $M_i$ since all of these segments are removed but may need to delete parentheses of $X_{i+1}$ which were matched to parentheses in $M_i$. Therefore,
    \begin{align*}
        \dedd(X_{i+1}) &\le \dedd(X_i) + \sum_{j=0}^{\ell_i} \hat{N}^i_j - \sum_{j = 1}^{\ell_i} D^i_j \\
            &= \dedd(X_i) + \sum_{j = 1}^{\ell_i} \left(N^i_j - D^i_j\right) \\
            &\leq \dedd(X_i) + \sum_{j=1}^{\ell_i} \left(D^i_j - D^i_j\right) = \dedd(X_i)
    \end{align*}
    where the second inequality follows from \eqref{ineq:NvsD}.
    Furthermore, \eqref{ineq:NvsD} yields \begin{align*}\sum_{j = 1}^{\ell_i}\dedd(M^i_j) &\le \sum_{j=1}^{\ell_i} \left(D_j^i + N_j^i\right) \leq \sum_{j=1}^{\ell_i} 2D_j^i 
    \leq 2\dedd(X_{i}). \end{align*}
    It remains to prove that $\dedd(X_{i})\le \dedd(X_{i+1}) + \sum_{j=1}^{\ell_i} \dedd(M^i_j)$.
    For this, note that each string $M^i_j$ is a substring of $X_i$ and $X_{i}$ is obtained from $X_{i+1}$ by removing all these substrings $M^i_j$. 
    If $Y,Z\in \Dyck$, then also $Y[0\dd y)\cdot Z\cdot Y[y\dd |Y|)\in \Dyck$ for every $y\in [0\dd |Y|]$; hence, in order to place $X_i$ in $\Dyck$, it suffices to independently perform the deletions that place $X_{i+1}$ and all strings $M^i_j$ in $\Dyck$.

    Finally, observe that 
    \begin{align*}
        \sum_{s \in \hvys_X} \dedd(s) = \sum_{i = 0}^{h-1} \sum_{j = 1}^{\ell_i} \dedd(M_j^i) &\leq \sum_{i = 0}^{h-1} 2\dedd(X_{i}) \\ &\leq \sum_{i =0}^{h-1} 2\dedd(X_0) \\ &= 2h \cdot \dedd(X) \\&\leq 2 (\lg(|X|) + 1) \dedd(X).
    \end{align*}
    and 
    \begin{align*}
        \sum_{s \in \hvys_X} \dedd(s) = \sum_{i = 0}^{h-1} \sum_{j = 1}^{\ell_i} \dedd(M_j^i) &\ge \sum_{i = 0}^{h-1} \left(\dedd(X_{i})-\dedd(X_{i+1})\right) \\ &= \dedd(X_0)-\dedd(X_h)= \dedd(X)-\dedd(\varepsilon)= \dedd(X).
    \end{align*}
    This completes the proof.    
\end{proof}

Combining Lemma~\ref{lem:dycktostrnew}, Observation~\ref{obs:LRed}, and the facts that $\ded(X)\le \dedd(X)\le2\ded(X)$ and $\ed(Y,Z)\le \edd(Y,Z)\le 2\ed(Y,Z)$, we immediately obtain the following result.
\begin{corollary}\label{cor:dycktostrnew}
        Given a parenthesis string $X$ and its heavy string set $\hvys_X$, 
        \[\ded(X) \leq 2\sum_{s \in \hvys_X} \ed(\Lp(s),T(\Rp(s))) \leq 8(\lg(|X|) + 1) \ded(X).\]
\end{corollary}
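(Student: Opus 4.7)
The plan is to chain together Lemma~\ref{lem:dycktostrnew} with the two auxiliary facts cited in the statement. First I will invoke Observation~\ref{obs:LRed}, which (based on its usage and the surrounding setup) identifies $\dedd(s)$ with $\edd(\Lp(s), T(\Rp(s)))$ for each heavy string $s \in \hvys_X$: indeed, a heavy string has the form of an LR-segment, a prefix of openers $\Lp(s)$ followed by a suffix of closers $\Rp(s)$ of equal length, and transposing the suffix (reversing and swapping opener/closer parities) converts the non-crossing matching between the two halves into a monotone alignment, turning the deletion-only Dyck problem into deletion-only string edit distance. I will take this identification as given.

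For the lower bound on the middle quantity, I will compute
\begin{align*}
\ded(X) \;\le\; \dedd(X) \;\le\; \sum_{s\in\hvys_X} \dedd(s) \;=\; \sum_{s\in\hvys_X} \edd(\Lp(s), T(\Rp(s))) \;\le\; 2\sum_{s\in\hvys_X} \ed(\Lp(s), T(\Rp(s))),
\end{align*}
where the first inequality uses $\ded \le \dedd$, the second is Lemma~\ref{lem:dycktostrnew}, the equality is Observation~\ref{obs:LRed}, and the last uses $\edd \le 2\ed$. Dividing through is not needed; this directly yields $\ded(X) \le 2\sum_s \ed(\Lp(s), T(\Rp(s)))$.

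For the upper bound, I will run the chain in the opposite direction:
\begin{align*}
2\sum_{s\in\hvys_X} \ed(\Lp(s), T(\Rp(s))) \;\le\; 2\sum_{s\in\hvys_X} \edd(\Lp(s), T(\Rp(s))) \;=\; 2\sum_{s\in\hvys_X} \dedd(s) \;\le\; 4(\lg|X|+1)\,\dedd(X) \;\le\; 8(\lg|X|+1)\,\ded(X),
\end{align*}
using $\ed\le \edd$, Observation~\ref{obs:LRed}, the upper part of Lemma~\ref{lem:dycktostrnew}, and $\dedd \le 2\ded$, in that order.

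Both bounds together give the claimed inequality. There is no real obstacle here beyond correctly invoking Observation~\ref{obs:LRed} and tracking the constants; the only subtle point is that the factor $2$ on the middle quantity in the statement is exactly what one needs so that the lower bound goes through using $\edd \le 2\ed$ without needing any extra slack, and on the upper side it combines cleanly with the factor $2$ from $\dedd\le 2\ded$ and with the $2(\lg|X|+1)$ from Lemma~\ref{lem:dycktostrnew} to yield the overall $8(\lg|X|+1)$ factor.
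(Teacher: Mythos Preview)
Your proposal is correct and matches the paper's approach exactly: the paper states that the corollary follows immediately from combining Lemma~\ref{lem:dycktostrnew}, Observation~\ref{obs:LRed}, and the inequalities $\ded(X)\le \dedd(X)\le 2\ded(X)$ and $\ed(Y,Z)\le \edd(Y,Z)\le 2\ed(Y,Z)$, which is precisely the chain you wrote out.
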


We now state the dynamic string edit distance result of \cite{KMS2023}, which we utilize to quickly solve each string edit distance instance we obtain from our reduction.

\begin{theorem}[\cite{KMS2023} Theorem 7.10]
\label{thm:dyned}
    Given integers $2 \leq b \leq n$, there exists a randomized dynamic algorithm that maintains an $\Oh(b \log_b n)$-approximation of $\ed(X, Y)$ (correctly with high probability against an oblivious adversary) for (initially empty) strings $X, Y$ of length at most $n$  undergoing edits. The expected update time of the algorithm is $\Oh(b^2 (\log n)^{\Oh(\log_b n)})$.
\end{theorem}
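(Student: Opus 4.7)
\textbf{Proof plan for Theorem \ref{thm:dyned}.} The plan is to build a hierarchical decomposition of both $X$ and $Y$ into nested blocks across $\Theta(\log_b n)$ levels, where level $i$ partitions each string into blocks of size $\Theta(b^i)$ with randomized shifts chosen to avoid systematic block-boundary artifacts. At each level and for each block $B$ of $X$, I would maintain an approximate edit distance to each block $B'$ of $Y$ whose position is within a window compatible with $B$, so that only $O(b)$ candidate pairings per block need to be tracked. These distances are computed bottom-up: at the base level we compute exact edit distances between constant-size blocks, and at level $i$ we compute the distance between a block $B$ and a candidate $B'$ by running a block-level dynamic program that concatenates $O(b)$ sub-block matches from level $i-1$, using the stored sub-block distances as edge weights.

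The approximation factor $O(b \log_b n)$ follows from a charging argument. For any optimal alignment of $X$ with $Y$, each edit would be charged to a unique level of the hierarchy, namely the coarsest level at which it is not absorbed into the surrounding matched block. The approximate distance at level $i$ exceeds the optimum restricted to that level's block partition by at most a factor of $b$, because the block-level DP has to round fractional matches to block granularity, losing at most one block per aligned pair. Summing over the $\Theta(\log_b n)$ levels gives an additive $O(b \log_b n)$ overhead. The randomness in the shifts, which is exactly what drives the ``oblivious adversary'' qualifier, ensures that with high probability the optimal alignment crosses block boundaries in a balanced way so that the charging is valid.

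For the update bound, a single character edit to $X$ or $Y$ shifts the contents of exactly one block at each level, so at most $\Theta(\log_b n)$ blocks need their distances recomputed. Recomputing a level-$i$ block involves an $O(b) \times O(b)$ block-level DP whose entries are queries to the $O(b)$ candidate neighbors at level $i-1$, giving $\Theta(b^2)$ base operations per affected block; propagating recomputations upward yields the factor $(\log n)^{\Theta(\log_b n)}$ that accounts for the cost of locating blocks and updating the hash-based identifiers at each level, producing the stated $O(b^2 (\log n)^{O(\log_b n)})$ time. The main obstacle I expect is the composition of approximation guarantees: a naive multiplicative analysis blows up to $b^{\log_b n} = n$, so an additive per-level charging of edits (rather than multiplicative propagation) is essential, and getting the randomized shifts to support this charging uniformly over any oblivious edit sequence, rather than just a single query, is the most delicate point of the argument.
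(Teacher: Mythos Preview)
This theorem is not proved in the present paper; it is quoted verbatim as Theorem~7.10 of~\cite{KMS2023} and used as a black box. There is therefore no ``paper's own proof'' to compare against, and your proposal is really an attempt to reconstruct the argument of the cited reference rather than anything in this manuscript.

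As a sketch of the~\cite{KMS2023} construction, your outline captures the right high-level shape (a $\Theta(\log_b n)$-level hierarchy with $O(b)$ candidate alignments per block and bottom-up recomputation on updates), and you correctly flag the central difficulty: a naive composition of per-level guarantees multiplies to $b^{\log_b n}=n$, so one needs an additive-per-level charging. However, two points remain genuinely underspecified. First, your approximation argument (``each edit charged to a unique level'') does not by itself yield a \emph{multiplicative} $O(b\log_b n)$ factor; the actual mechanism in~\cite{KMS2023} is more subtle and relies on precision sampling ideas to control how errors accumulate across levels, not merely randomized shifts. Second, your update-time accounting is essentially a placeholder: attributing the $(\log n)^{O(\log_b n)}$ factor to ``locating blocks and updating hash-based identifiers'' does not explain why the exponent is $\log_b n$ rather than, say, constant; in the source this factor arises from the recursive structure of the estimator, where each of the $\Theta(\log_b n)$ levels contributes a polylogarithmic multiplicative overhead. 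If you intend to include a self-contained proof, these two gaps are where the real work lies.
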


Our dynamic approximation with a trade-off between approximation factor and update time follows from Lemma~\ref{lem:heavylight}, Corollary~\ref{cor:dycktostrnew}, and Theorems~\ref{thm:dyned}.

\begin{restatable}{theorem}{Dyckimproved}
\label{thm:dyck-improved}
   Given integers $2\le b \le n$, there exists a randomized dynamic algorithm, that maintains an $\Oh(b\log_b n \log n)$-approximation of $\ded(X)$ (correctly with high probability against an oblivious adversary) for an (initially empty) string $X$ of length at most $n$ undergoing edits. The expected update time of the algorithm is $\Oh(b^2 (\log n)^{\Oh(\log_b n)})$ per edit.
\end{restatable}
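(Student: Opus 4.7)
The plan is to combine three ingredients already established in the excerpt: the dynamic maintenance of the heavy-path decomposition (\cref{lem:heavylight}), the static approximation guarantee of the heavy-string decomposition (\cref{cor:dycktostrnew}), and the dynamic string edit distance approximation of \cite{KMS2023} (\cref{thm:dyned}). Concretely, I would maintain the dynamic heavy-path decomposition $\hvys_X$ of the height tree $\T_X$, and alongside it, one instance of the $\Oh(b\log_b n)$-approximate dynamic edit distance data structure of \cref{thm:dyned} for each heavy string $s\in\hvys_X$, operating on the pair $(\Lp(s), T(\Rp(s)))$. I would also keep a running sum $\widetilde{\Sigma}:=\sum_{s\in\hvys_X}\widetilde{\ed}(\Lp(s),T(\Rp(s)))$ of the reported approximations, where $\widetilde{\ed}$ denotes the maintained estimate. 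The algorithm outputs $2\widetilde{\Sigma}$ as its approximation of $\ded(X)$.

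For correctness, \cref{cor:dycktostrnew} gives the sandwich $\ded(X)\le 2\sum_{s\in\hvys_X}\ed(\Lp(s),T(\Rp(s)))\le 8(\lg|X|+1)\,\ded(X)$. Since \cref{thm:dyned} guarantees $\ed\le\widetilde{\ed}\le \Oh(b\log_b n)\cdot\ed$ with high probability for each instance, taking a union bound over the at most $|X|\le n$ simultaneously maintained instances and combining with the above sandwich yields
\[\ded(X)\;\le\;2\widetilde{\Sigma}\;\le\; \Oh(b\log_b n\cdot \log n)\cdot\ded(X),\]
which matches the claimed approximation ratio. The random bits are used only inside the [KMS2023] subroutines, so oblivious-adversary correctness is inherited directly.

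For the update time, a single insertion/deletion in $X$ induces, by \cref{lem:heavylight}, at most $\Oh(\log n)$ changes to $\hvys_X$ identifiable in $\Oh(\log^5 n)$ time. Because the custom heavy-light decomposition never splits or merges heavy paths (only the head/tail of a heavy path may migrate by one node), each such change translates into either $\Oh(1)$ character edits at the endpoints of some $\Lp(s)$ and $T(\Rp(s))$, or the creation/destruction of one entire edit-distance instance involving a single-node heavy path. Each elementary operation is dispatched to the corresponding [KMS2023] data structure at an expected cost of $\Oh(b^2(\log n)^{\Oh(\log_b n)})$, and the incremental update to $\widetilde{\Sigma}$ is a constant number of additions of the ``before'' and ``after'' values reported by the affected instances. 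Summed over the $\Oh(\log n)$ affected heavy strings, the total expected update time is $\Oh(\log n)\cdot \Oh(b^2(\log n)^{\Oh(\log_b n)})=\Oh(b^2(\log n)^{\Oh(\log_b n)})$ (the extra $\log n$ factor is absorbed into the hidden exponent), dominating the $\Oh(\log^5 n)$ decomposition-maintenance overhead. The main (and only) subtle point is to verify that each heavy-string change really is a constant-size endpoint edit rather than an arbitrary split or concatenation; this is exactly the property that \cref{lem:heavylight} was designed to provide, so the remaining work is routine bookkeeping via a dictionary mapping each heavy path to its edit-distance instance and its last reported approximation.
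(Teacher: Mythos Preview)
Your proposal is correct and follows essentially the same approach as the paper: maintain the dynamic heavy-light decomposition via \cref{lem:heavylight}, run one instance of the \cref{thm:dyned} data structure per heavy string, keep a running sum of the reported approximations, and combine the $\Oh(b\log_b n)$ factor from \cref{thm:dyned} with the $\Oh(\log n)$ factor from \cref{cor:dycktostrnew}. Your write-up is, if anything, slightly more explicit than the paper's about the union bound over instances and about why each heavy-string change is an $\Oh(1)$-size endpoint edit rather than a split/merge.
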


\begin{proof}
    Note, by Lemma~\ref{lem:heavylight}, $\hvys(X)$ requires $\Oh(\log |X|)$ edits per dynamic edit to $X$ and can be maintained in $\Oh(\log^4 n)$-time. By Theorem~\ref{thm:dyned}, these edits may be handled in time $\Oh(b^2 (\log n)^{\Oh(\log_b n)})$ each, and so in total take expected time $\Oh(\log^4 n + \log n \cdot b^2 (\log n)^{\Oh(\log_b n)}) = \Oh(b^2 (\log n)^{\Oh(\log_b n)})$ to maintain.  We use the sum of the approximations for each heavy path string to obtain our approximation of $\ded(X)$, which may be updated by subtracting the $\Oh(\log n)$ affected paths' old approximations and adding their new approximations after recomputing them. By Theorem~\ref{thm:dyned} and Corollary~\ref{cor:dycktostrnew}, the approximation factor will be $\Oh(b \log_b n) \cdot 8 (\lg(n) + 1) = \Oh(b \log_b n \log n)$.
\end{proof}

Observe that when $b$ is set to $2^{\Theta(\sqrt{\log n \log \log n})}$, the approximation ratio becomes  $\Oh(\log n \cdot 2^{\Oh(\sqrt{\log n \log \log n})})$ and the update time becomes $2^{\Oh(\sqrt{\log n \log \log n})}$. Setting $b=\log^{O(\frac{1}{\epsilon})}{n}$, we get
$(\log{n})^{O(\frac{1}{\epsilon})}$-factor approximation with $O(n^{\epsilon})$ update time for any constant $\epsilon >0$.

Although the dependence on parameter $b$ in the approximation factor and update time may be somewhat high in the above theorem, both quantities match the state-of-the-art for dynamic string edit distance (up to an extra $\Oh(\log n)$ factor present in the approximation only). The Dyck edit distance is at least as hard to approximate as the string edit distance, so this relationship is expected: any significant improvements to the approximation or update time for dynamic Dyck edit distance would therefore automatically improve the state-of-the-art for the dynamic string edit distance, and vice versa.

\subsection{Dynamic Heavy-Light Decomposition}
\label{subsec:dynheavylight}

In the remainder of this section, we describe how the set of heavy paths of a height tree for a parenthesis string $X$ can be maintained in $\Oh(\log^5 n)$ time. Before we show the proof of how the heavy paths of $\T_X$ are maintained, we develop a few useful tools.

First, we discuss \emph{minimum height range queries} $\rg(i, h)$ that return the maximum range starting at $i$ in $X$ such that all parentheses in this range have heights greater than $h$. To answer them in $\Oh(\log^2 n)$ time, we use a well-known technique called lazy propagation to maintain a balanced binary tree with height information on top of $X$ that we may use to quickly determine the answer to such queries.

\begin{lemma}
\label{lem:segtree2}
    A parenthesis string $X$ can be dynamically maintained in $\Oh(\log |X|)$ time per update so that, given $i, h \in [0 \dd |X|)$, the range $\rg(i, h)$ can be computed in $\Oh(\log^2 |X|)$ time.
\end{lemma}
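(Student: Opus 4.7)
My plan is to maintain a balanced binary search tree (for instance, a weight-balanced tree or a treap) whose leaves correspond, in left-to-right order, to the positions of $X$. Each internal node will be augmented with two values: (i) the minimum of $h_X(\cdot)$ over all positions in its subtree, and (ii) a pending additive ``delta'' for lazy propagation, representing a shift that must still be pushed down to the minima stored in its descendants. The key observation that makes lazy propagation applicable is that an insertion or deletion of a parenthesis at position $i$ changes $h_X(j)$ by exactly $\pm 1$ simultaneously for \emph{every} $j>i$ (the sign depends on whether the edit concerns an opening or a closing parenthesis), so the only effect on the stored heights, beyond the single leaf being added or removed, is a uniform shift applied to a suffix of the leaf sequence.

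To process an update, I would first perform the ordinary BST insertion or deletion of a single leaf, rebalancing as usual in $O(\log |X|)$ time and recomputing the augmented minima along the touched path (pushing down any pending deltas before recomputation). Next, I would apply the $\pm 1$ shift to the suffix starting at the new position: walk from the root downward along the path to that position, and for each subtree that lies entirely in the affected suffix, update only its stored minimum and add $\pm 1$ to its pending delta. This covers the suffix with $O(\log |X|)$ canonical subtrees, so the total cost of a single update is $O(\log |X|)$.

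For a query $\rg(i,h)$, I proceed in two stages. First, I locate the leaf of position $i$ in $O(\log |X|)$ time, and then I walk up the tree to enumerate, in left-to-right order, the $O(\log |X|)$ canonical subtrees that together cover the suffix $[i\dd |X|)$. In this enumeration I find the first subtree whose stored minimum (after taking into account inherited lazy deltas along the ancestor path) is at most $h$; every subtree visited before it contributes entirely to the returned range. Having found this subtree, I descend into it, at each node pushing the lazy delta to its children and then branching into the leftmost child whose minimum is $\le h$. This gives the first position $j\ge i$ with $h_X(j)\le h$, and hence $\rg(i,h)=[i\dd j)$. Each of the $O(\log |X|)$ visited nodes requires $O(\log |X|)$ work to reconcile the inherited lazy values along its ancestor path, yielding the claimed $O(\log^2 |X|)$ query time.

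The main technical subtlety I expect is bookkeeping for the lazy deltas whenever a node is visited either during rebalancing rotations or during a query's downward descent: it is crucial that, before using the stored minimum of any node, all pending deltas along its root-to-node path have been accounted for, and that rotations never move a node with a nonzero pending delta without first pushing it to its children. This is standard for segment trees over static arrays, but must be handled carefully here because the tree topology changes with insertions and deletions; provided the push-down is invoked as the first step of any visit, correctness is preserved and the stated time bounds follow.
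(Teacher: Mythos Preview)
Your proposal is correct and follows essentially the same approach as the paper: a balanced binary tree over the positions of $X$, augmented with subtree minimum heights and lazy additive deltas to handle the uniform $\pm 1$ suffix shift caused by each edit. The paper presents the query as a top-down recursion from the root (with at most $j$ active nodes on level $j$, summing to $O(\log^2 |X|)$), whereas you phrase it as walk-up-then-descend; these are stylistic variants of the same idea, and both meet the stated bounds.
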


\begin{proof}
    See Appendix~\ref{sec:lazy} for details.
\end{proof}

Additionally, we also wish to know, for a given index $i_v$, its twin index $\tw(i_v)$.  We use a self-balancing binary tree structure to do so, similar to the construction of $\hat{X}$ in the proof of Lemma~\ref{clm:hatX}. 

\begin{lemma}
\label{lem:parenIdx}
    A parenthesis string $X$ can be dynamically maintained in $\Oh(\log |X|)$ time per update so that,
    given any parenthesis $X[i_v]$ corresponding to node $v \in \T_X$, the indices $i_v$ and $\tw(i_v)$ can be computed in $\Oh(\log |X|)$ and $\Oh(\log^2 |X|)$ time, respectively.     Furthermore, $|\T_X(v)|$ may be computed in $\Oh(\log^2 |X|)$ time. 
\end{lemma}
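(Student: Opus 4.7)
The plan is to augment the self-balancing BST of Lemma~\ref{lem:segtree2} — already maintained in $\Oh(\log|X|)$ time per update and already storing in each internal node the subtree size, the minimum height, and the net sum of $\pm 1$ contributions — with one additional piece of data: for every node $v\in\T_X$, a pointer to the leaf of the BST representing its opening parenthesis $X[i_v]$. Because a self-balancing BST rebalances via rotations that leave leaf identities intact, these pointers remain valid throughout the lifetime of the structure; an insertion or deletion in $X$ creates or destroys at most one leaf (together with the corresponding $\T_X$-node), adding only $\Oh(\log|X|)$ bookkeeping overhead. Dummy opening parentheses from Definition~\ref{def:reduc} are stored as ordinary leaves, so no query needs to special-case them.

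To compute $i_v$, I follow the handle to the associated leaf and walk up to the root, accumulating the subtree sizes of the left-sibling subtrees encountered. This returns the rank of the leaf, which equals $i_v$ by definition, in $\Oh(\log|X|)$ time.

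For $\tw(i_v)$, I first recover $h(i_v)$ via a prefix-sum query on the BST in $\Oh(\log|X|)$ time — each leaf contributes $+1$ or $-1$ according to whether it holds an opening or closing parenthesis, and the stored aggregates allow the sum over $X[0\dd i_v)$ to be assembled along the root-to-leaf path. By Definition~\ref{def:twins}, $\tw(i_v)$ is the largest $j$ for which $h(j') > h(i_v)$ holds for every $j'\in(i_v\dd j]$, so it coincides with the right endpoint of the range $\rg(i_v+1,\, h(i_v))$; Lemma~\ref{lem:segtree2} supplies this in $\Oh(\log^2|X|)$ time.

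Finally, by Fact~\ref{fct:descendants} the nodes of $\T_X(v)$ are in bijection with the opening parentheses appearing in $X[i_v\dd\tw(i_v)]$. Since this substring is a balanced parenthesis string of length $\tw(i_v)-i_v+1$, exactly half of its characters are opening parentheses, so $|\T_X(v)| = (\tw(i_v)-i_v+1)/2$, which is computable in $\Oh(\log^2|X|)$ time, dominated by the twin query. The only mildly delicate step I anticipate is verifying that the boundary conventions of $\rg$ from Lemma~\ref{lem:segtree2} align precisely with the strict inequality $h(j') > h(i_v)$ on $(i_v\dd j]$ in Definition~\ref{def:twins} — a routine unwinding of definitions — and confirming that the leaf-to-$\T_X$-node pointers stay coherent through the lazy-propagation rotations used inside Lemma~\ref{lem:segtree2}, which is automatic since those rotations never rekey the leaves themselves.
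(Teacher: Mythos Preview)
Your proposal is correct and takes essentially the same approach as the paper: both maintain a balanced binary tree over $X$ storing subtree sizes, recover $i_v$ by a leaf-to-root rank walk, obtain $\tw(i_v)$ via the range query $\rg(\cdot,\cdot)$ of Lemma~\ref{lem:segtree2}, and compute $|\T_X(v)|=(\tw(i_v)-i_v+1)/2$. The only cosmetic difference is that you reuse the tree from Lemma~\ref{lem:segtree2} and explicitly spell out how $h(i_v)$ is obtained via a prefix-sum query, whereas the paper builds a separate size-augmented tree and leaves the height computation implicit.
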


\begin{proof}
        To answer index queries of a node $v \in \T_X$, we utilize a binary tree $B$ built on top of $X$. Each node $x \in B$ corresponds to a substring of $X$, denoted $S_x$.  Each leaf node of $B$ corresponds to a single character of $X$ and leaves are assigned characters in sorted order according to $X$ such that the leftmost leaf corresponds to $X[0]$.  Each internal node $x$ of $B$ with children $x_L, x_R$ corresponds to the concatenation $S_{x_L} \cdot S_{x_R}$.  Therefore, each level of $B$ partitions $X$. In addition to this partition, each node $x \in B$ stores the size of its substring $|S_{x}|$ (note that $x$ does not store the substring $S_x$ itself, as this would require too much time to maintain). The tree $B$ can be constructed straightforwardly in time $\Oh(|X|)$. When $X$ undergoes a dynamic edit, we only need to update the nodes whose corresponding substrings are affected.  When the height of $B$ is $\Oh(\log |X|)$, there are only $\Oh(\log |X|)$ such nodes. Leaf nodes do not require any work to maintain, and internal nodes may be reconstructed in constant time from the information in their children (the length of the substring $S_x$ at node $x$ is the sum of the lengths of the substrings $S_{x_L}$ and $S_{x_R}$ at the two children of $x$). We may use a self-balancing tree such as an AVL tree to maintain the height of $\Oh(\log |X|)$ while still only affecting $\Oh(\log |X|)$ nodes per update.

    Now, we show how to use $B$ to answer index queries. We assume that each node $v\in \T_X$ stores a pointer to the leaf of $B$ corresponding to the character $X[i_v]$. To find the index $i_v$, we begin at the leaf corresponding to $X[i_v]$ and we traverse $B$ upwards while maintaining a count. Whenever the node we are at is the left child of its parent we do not increase the count.  Whenever the node we are at is the right child of its parent, this means that the left child corresponds to a substring of $X$ lying to the left of $X[i_v]$ and so we add the size of this substring, stored in that left child, to the count.  We do this until we reach the root node so that the count becomes the number of characters to the left of $X[i_v]$, which is $i_v$.  Since the height of $B$ is $\Oh(\log |X|)$ and each traversal to a parent node takes constant time, these queries only take $\Oh(\log |X|)$ time.

    To get the index of $\tw(i_v)$, we perform a minimum height range query $\rg(i_v +1, h + 1)$ which takes $\Oh(\log^2 |X|)$ time by Lemma~\ref{lem:segtree2}. Then, we may compute $|\T_X(v)| =  (1+\tw(i_v)-i_v)/2$ 
\end{proof}

With the two tools above, we can efficiently determine if a given node $v \in \T_X$ is heavy or not.  The subtree of the heavy child of $u$ (if any) is at least half the size of the subtree of $u$.  Therefore, we may make a binary search using minimum range height queries to identify the heavy child.

\begin{lemma}
\label{lem:heavychild}
    Given a parenthesis string $X$ and the data structure of Lemma~\ref{lem:segtree2}, for a parenthesis $X[i_v]$ corresponding to node $v \in \T_X$, we may find $i_u, \tw(i_u)$ where $u$ is the heavy child of $v \in \T_X$, if it exists, in $\Oh(\log^3 n)$ time.
\end{lemma}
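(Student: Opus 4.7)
My plan is to reduce the task to a single binary search anchored at the midpoint of $v$'s subtree interval, using the range queries supplied by Lemma~\ref{lem:segtree2}. Let $d := h_X(i_v)$, $L := |\T_X(v)|$, and $k := \lfloor \lg L \rfloor$; then $\tw(i_v) = i_v + 2L - 1$ and a child $u$ of $v$ qualifies as heavy iff $|\T_X(u)| \geq 2^k$. Lemma~\ref{lem:parenIdx} supplies $\tw(i_v)$ and $L$ in $\Oh(\log^2 n)$ time; if $L = 1$ the node is a leaf and I report that no heavy child exists.

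The central structural claim is the following midpoint argument: if $u$ is a heavy child of $v$ with opening at $j$ and closing at $\tw(j)$, then the midpoint $m := i_v + L$ satisfies $j \leq m \leq \tw(j)$. The reason is that $u$'s interval $[j, \tw(j)]$ has length $\geq 2^{k+1} > L$, while each half of $v$'s interval excluding $v$'s own two parentheses has length at most $L-1$; so $u$'s interval cannot fit in either half and must straddle $m$. Consequently, it suffices to identify the single child of $v$ whose interval contains $m$ and then test whether its subtree is large enough.

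To locate that child, I exploit the fact that inside $(i_v, \tw(i_v))$ every index has height at least $d+1$, and any index of height exactly $d+1$ is necessarily an opening parenthesis (otherwise the next index would drop to height $d$, contradicting the twin property of $i_v$ and $\tw(i_v)$). So the sought opening is $j^\ast := \max\{j \in [i_v+1 \dd m] : h_X(j) \leq d+1\}$, which I compute by binary search over $[i_v+1, m]$: at each step I check whether a candidate suffix $[p, m]$ contains such an index by invoking $\rg(p, d+1)$ and seeing whether the returned range extends past $m$. Each test costs $\Oh(\log^2 n)$, and the binary search performs $\Oh(\log n)$ tests, for a total of $\Oh(\log^3 n)$.

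Given $j^\ast$, Lemma~\ref{lem:parenIdx} delivers $\tw(j^\ast)$ and $|\T_X(u)| = (\tw(j^\ast) - j^\ast + 1)/2$ in $\Oh(\log^2 n)$ time; I return $(j^\ast, \tw(j^\ast))$ exactly when $|\T_X(u)| \geq 2^k$, and otherwise declare that $v$ has no heavy child. The main obstacle is establishing the midpoint argument—without it one seems forced to walk through all children of $v$, which could be linear—but once it is in place the remaining steps are a routine binary search, and the total running time is the claimed $\Oh(\log^3 n)$.
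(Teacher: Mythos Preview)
Your proof is correct and follows essentially the same approach as the paper: both use that a heavy child's interval occupies more than half of $v$'s span, and both locate it via a binary search driven by $\rg(\cdot,\cdot)$ queries, finishing with a size check through Lemma~\ref{lem:parenIdx}. Your midpoint formulation (the heavy child must straddle $m=i_v+L$) is a clean variant of the paper's observation that $i_u$ must lie in the left half of $v$'s interval.
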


\begin{proof}
    We first obtain $i_v$ and $\tw(i_v)$ using Lemma~\ref{lem:parenIdx}, and let $d = \tw(i_v) - i_v$. Additionally, we find $h(i_v)$ by computing the total length of all heavy paths from $v$ to the root of $\X_T$ (recall each heavy path stores its length).  We observe that any heavy child $u$ of $v$ must have $|\T_X(u)| \geq |\T_X(v)| /2$. Therefore, we perform $\rg(i_v + d/4, h(i_v))$ since $h(i_u)$ must be $h(i_v) + 1$.  If the range query returns an index after $i_v + d/2$, we may determine that $i_u \in (i_v\dd i_v + d/4)$, otherwise $i_u \in [i_v + d/4 \dd i_v + d/2]$ if it exists. We set $d' = d /2$ and search the respective half by setting our starting point to either $i_v + d'/4$ again or $i_v + d/4 + d'/4$. We then continue until we find the exact starting index of $i_u$ if it exists, or find that no such index exists. 

    Once we determine $i_u$, we may use Lemma~\ref{lem:parenIdx} to determine if $u$ is indeed a heavy child of $v$.
\end{proof}

Now we prove that the heavy-light decomposition of $\T_X$ can be maintained in poly-logarithmic update time when $X$ undergoes dynamic edits.

\begin{proof}[Proof of Lemma~\ref{lem:heavylight}] 
     We discuss deletions and insertions separately.  
    First, assume the dynamic update is a deletion of $X[j]$, resulting in the string $Y = X[0 \dd j) \cdot X(j \dd |X|)$.  
    Let $\T_X$ and $\T_Y$ be the respective height trees of $X$ and $Y$.  
    Define a function $f : [0 \dd |X|] \to [0 \dd |Y|]$ mapping positions in $X$ to positions in $Y$ as
    \[f(i) = 
    \begin{cases}
    i & \text{if } i \in [0 \dd j),\\
    i-1 & \text{if } i \in [j \dd |X|].
    \end{cases}\]
    Observe that $Y[f(i)] = X[i]$ for every $i \in [0 \dd |X|) \setminus \{j\}$.  
    We also lift $f$ to a mapping between the nodes of $\T_X$ and $\T_Y$: if a node $u \in \T_X$ contains an opening parenthesis $X[i]$ with $i \neq j$, its image in $\T_Y$ is the node $f(u)$ containing the corresponding opening parenthesis $Y[f(i)]$.  
    If $X[j]$ is an opening parenthesis, then $f$ is undefined (denoted $\bot$) for the node containing $X[j]$.

    Furthermore, we classify nodes $u \in \T_X$ according to their relationship with the index $j$.  
    A node $u \in \T_X$ is of \emph{Type I} if $i_u < \tw_X(i_u) < j$, of \emph{Type II} if $i_u < j \le \tw_X(i_u)$, and of \emph{Type III} if $j < i_u < \tw_X(i_u)$.  
    Note that if $X[j]$ is an opening parenthesis, the node $u$ with $i_u = j$ is not assigned any type, since this node will be deleted and is not useful to group with other nodes.  
    We next discuss some relevant properties of each type of node.

    \begin{claim}
    \label{clm:TypeIandIII}
        For all Type I and Type III nodes $u \in \T_X$, we have $\T_X(u) = \T_Y(f(u))$.
    \end{claim}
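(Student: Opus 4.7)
The plan is to show that $f$ restricts to a bijection $\T_X(u) \to \T_Y(f(u))$ that is an isomorphism of labeled trees, treating the Type I and Type III cases in parallel. A preliminary observation is that by Fact~\ref{fct:descendants}, every descendant $v$ of a Type I node $u$ satisfies $i_v, \tw_X(i_v) \in (i_u \dd \tw_X(i_u)) \subset [0 \dd j)$, hence is itself Type I; symmetrically, the parenthesis indices of every descendant of a Type III node lie in $(i_u \dd \tw_X(i_u)) \subset (j\dd |X|)$, so such a descendant is itself Type III. Consequently $f$ is defined on all of $\T_X(u)$ (the node containing $X[j]$, if $X[j]$ opens, is by definition neither Type I nor Type III and cannot be a descendant of such a node).

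The technical core is height preservation up to a uniform shift on each side of $j$. A direct calculation shows that for $i \in [0\dd j]$, the equality $Y[0\dd f(i)) = X[0\dd i)$ gives $h_Y(f(i)) = h_X(i)$, while for $i \in (j\dd |X|]$, the equality $Y[0\dd f(i)) = X[0\dd j)\cdot X(j\dd i)$ gives $h_Y(f(i)) = h_X(i) - \delta$, where $\delta = +1$ if $X[j]$ is an opening parenthesis and $\delta = -1$ otherwise. This is just the observation that removing one character shifts the running prefix-height of all later indices by a constant. Thus, within each of the intervals $[0\dd j)$ and $(j\dd |X|)$, the map $f$ is order preserving and preserves all height equalities and strict inequalities among the positions involved.

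Since the twin pairing of \cref{def:twins} and the parent rule of \cref{def:reduc} are defined purely in terms of such height (in)equalities, and by the first paragraph all relevant positions (opening and closing parentheses of descendants of $u$ and any indices strictly between them) lie on a single side of $j$, both relations are preserved by $f$ on $\T_X(u)$; node labels are also preserved because $Y[f(i)] = X[i]$ for $i \neq j$. For surjectivity, any $v' \in \T_Y(f(u))$ has $i_{v'}$ and $\tw_Y(i_{v'})$ inside $(f(i_u)\dd f(\tw_X(i_u)))$ by Fact~\ref{fct:descendants}, which is contained in $[0\dd j)$ in the Type I case and in $[j\dd |Y|)$ in the Type III case, so $f^{-1}$ is well defined on these positions and produces a node of $\T_X(u)$ mapping to $v'$. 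The only real obstacle I anticipate is careful sign and index bookkeeping for Type III; there are no boundary subtleties because a Type I or Type III node never has an endpoint at $j$.
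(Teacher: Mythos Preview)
Your proposal is correct and follows essentially the same approach as the paper: both arguments observe that descendants of a Type~I (resp.\ Type~III) node are themselves Type~I (resp.\ Type~III) via Fact~\ref{fct:descendants}, note that heights shift by a uniform constant on each side of $j$, and conclude that the twin and parent relations---being defined purely through height (in)equalities among positions on one side of $j$---are preserved by $f$. Your version is in fact more thorough than the paper's, since you explicitly address surjectivity and label preservation, which the paper leaves implicit.
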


    \begin{proof}
        First, assume $u$ is a Type I node. By definition, $i_u < \tw_X(i_u) < j$. Then by Fact~\ref{fct:descendants}, every node $v\in\T_X(u)$ satisfies $i_v < \tw_X(i_v) < \tw_X(i_u) < j$, and thus it is also a Type I node.
        Furthermore, the heights within $[0\dd j)$ do not change, i.e., $h_Y(f(i))=h_X(i)$ holds for all $i\in [0\dd j)$.
        Therefore, for any two type I nodes $v,w \in \T_X$, we have that $w$ is the parent of $v$ if and only if $f(w)$ is the parent of $f(v)$. Thus, $\T_X(u) = \T_X(f(u))$.

        The proof of when $u$ is a Type III node follows similarly.  By definition, $j < i_u < \tw_X(i_u)$, and every node $v \in \T_X(u)$ satisfies $j < i_u < i_v < \tw_X(i_u)$, and thus it is also a Type III node.
        Furthermore, the heights within $[j\dd |X|]$ are all incremented, that is, $h_Y(f(i))=h_X(i)+1$ for all $i\in [j\dd |X|]$ if $X[j]$ is closing,
        or all decremented, that is, $h_Y(f(i))=h_X(i)-1$ for all $i\in [j\dd |X|]$ if $X[j]$ is opening.
        Therefore, for any two type III nodes $v,w\in \T_X$, we have that $w$ is the parent of $v$ if and only if $f(w)$ is the parent of $f(v)$. Thus, $\T_X(u) = \T_X(f(u))$.
    \end{proof}

     By Claim~\ref{clm:TypeIandIII}, the structure of Type I nodes and Type III is essentially intact by the mapping~$f$.
     We therefore turn our attention to Type II nodes.

    \begin{claim}
    \label{clm:TypeIIchild}
    Every Type II node in $\T_X$ has exactly one Type II child, except for the following node with no Type II children:
    \begin{itemize}
        \item if $X[j]$ is closing, the node $u\in \T_X$ with $\tw_X(i_u)=j$;
        \item if $X[j]$ is opening, the parent of the node $u$ with $i_u=j$.
    \end{itemize}
    Furthermore, for any two Type II nodes $v,w\in \T_X$, we have that $w$ is the parent of $v$ if and only if $f(w)$ is the parent of $f(v)$.
    \end{claim}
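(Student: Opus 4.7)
The plan is to fix a Type II node $u \in \T_X$ (so $i_u < j \le \tw_X(i_u)$) and separately argue uniqueness and existence of a Type II child. For uniqueness, I would invoke \cref{obs:noncrossingtwins}: the intervals $[i_v, \tw_X(i_v)]$ over all children $v$ of $u$ are pairwise disjoint and contained in $(i_u, \tw_X(i_u))$, and any Type II child must have $j$ inside its interval, so at most one qualifies.

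For existence I would split on the parity of $X[j]$ and introduce an auxiliary node $w$ tied to position $j$: if $X[j]$ is closing, let $w$ be the node with $\tw_X(i_w) = j$; if $X[j]$ is opening, let $w$ be the node with $i_w = j$. In either case I would first show that $w$ is a proper descendant of $u$: combining the Type II bounds on $i_u$ and $\tw_X(i_u)$ with the non-crossing property of \cref{obs:noncrossingtwins} forces the twin pair of $w$ to be nested inside that of $u$, which by \cref{fct:descendants} is exactly the descendant condition. Letting $v$ be the child of $u$ on the ancestor chain from $u$ down to $w$, I would then obtain $i_v \le i_w$ and $\tw_X(i_w) \le \tw_X(i_v)$, hence $j \le \tw_X(i_v)$; it remains to decide whether $i_v < j$ is strict.

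In the closing case, $i_v \le i_w < \tw_X(i_w) = j$, so strictness holds and $v$ is a Type II child, unless we are in the exception $\tw_X(i_u) = j$; in that exception every child $v$ of $u$ has $\tw_X(i_v) < \tw_X(i_u) = j$ by \cref{fct:descendants}, so no Type II child exists. In the opening case, $i_v \le i_w = j$ with equality exactly when $v = w$, i.e., when $u$ is the parent of $w$; outside this exception the ancestor chain descends strictly through $v$ before reaching $w$, giving $i_v < i_w = j$ and making $v$ a Type II child. Inside this opening exception, the children of $u$ other than $w$ have intervals disjoint from $[j, \tw_X(j)]$ by non-crossing and so do not contain $j$, while $w$ itself fails the strict inequality $i_w < j$, so again no Type II child exists.

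For the furthermore assertion, any two Type II nodes $v, w$ have $i_v, i_w < j$, so $f$ fixes both positions and preserves all heights of the prefix up to $\max(i_v, i_w)$; since the parent of a node in a height tree (\cref{def:reduc}) is determined solely by these positions and heights, the parent-child relation transfers verbatim between $\T_X$ and $\T_Y$ in both directions. The main obstacle I anticipate is keeping the existence case analysis disciplined enough to pinpoint precisely when $i_v < j$ fails, so that the two announced exceptions emerge naturally without spurious edge cases.
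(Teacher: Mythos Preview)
Your proposal is correct and follows essentially the same approach as the paper: uniqueness via the non-crossing property of twin intervals among siblings, existence by tracking the node $w$ associated with position $j$ as a descendant of $u$ and inspecting the child of $u$ on the path toward $w$, and the furthermore clause via preservation of heights on $[0\dd j)$. The only cosmetic difference is that the paper phrases the existence step as a short contradiction (a Type II descendant cannot sit below a Type I or Type III child) whereas you argue constructively along the ancestor chain; both routes rely on the same facts (\cref{obs:noncrossingtwins}, \cref{fct:descendants}, \cref{def:reduc}).
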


    \begin{proof}
        First, consider a node $u\in \T_X$ and its two children $w,w'$ with $w$ located to the left of $w'$.
        Then, $i_w < \tw_X(i_{w}) < i_{w'} < \tw_X(i_{w'})$. If both $w,w'$ were Type II nodes, we would have $j \le \tw_X(i_{w}) < i_{w'} < j$, a contradiction. Thus, every node $u\in \T_X$ has at most one Type II child. 
        
        Now, we consider the case when $X[j]$ is closing.  First, if $\tw_X(i_u) = j$, all descendant nodes $v$ of $u$ satisfy $i_v < \tw_X(i_v) < \tw_X(i_u) = j$ and are Type I nodes.  So, in this case, $u$ has no Type II children as claimed. Now when $i_u < j < \tw_X(i_u)$, let us assume $u$ has no Type II children.  Therefore, all children $v$ satisfy either $\tw_X(i_v) < j$ or $j < i_v$. Since, twin pairs are non-crossing by Observation~\ref{obs:noncrossingtwins}, the Type II node $w$ with $i_w < \tw_X(i_w) = j$ must be a descendant of $v$ since $i_u < j < \tw_X(i_u)$.  However, $w$ cannot be a child of any Type I and Type III node, again due to twins' non-crossing property. Therefore, $u$ must have at least one Type II child. 

        Now, let us assume that $u \in \T_X$ is a Type II node and $X[j]$ is opening. If $u$ is the parent of node $v$ with $i_v = j$, note that any other Type II node $w$ must satisfy $i_w < j < \tw_X(i_w)$ and cannot be a sibling of $v$. Thus, $u$ has no Type II children in this case.  If $u$ is not the parent of $v$, it must still be an ancestor of $v$ and, by the same argument as above, $u$ must have at least one Type II child.

        To prove the final property of the statement, take Type II nodes $v,w$. By the definition of the mapping function, $f(v)\in \T_Y$ is the node containing $Y[f(i_v)]$ and $f(w)\in \T_Y$ is the node containing $Y[f(i_w)]$. Since $i_v, i_w < j$ and the heights within $[0\dd j)$ do not change, i.e., $h_Y(f(i))=h_X(i)$ holds for all $i\in [0\dd j)$, we conclude that $w$ is the parent of $v$ if and only if $f(w)$ is the parent of $f(v)$.
    \end{proof}

    \begin{claim}
    \label{clm:update}
    Consider a Type II node $u \in \T_X$ and its image $f(u) \in \T_Y$.  
    If $X[j]$ is a closing parenthesis, then $\tw_Y(f(i_u)) = f(\tw_X(i_w))$, where $w$ is the parent of $u$, or undefined if $u$ is the root of $\T_X$.  
    If $X[j]$ is an opening parenthesis, then $\tw_Y(f(i_u)) = f(\tw_X(i_v))$, where $v$ is the Type II child of $u$, or $\tw_Y(f(i_u)) = f(\tw_X(j))$ if no such child exists.
    \end{claim}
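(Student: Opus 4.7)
The statement describes how, after deleting $X[j]$, a Type II node's twin shifts in the height tree. My plan is a direct height calculation against the twin definition (\cref{def:twins}), using the explicit map between heights of $X$ and $Y$ already implicit in the proof of \cref{clm:TypeIandIII}: $h_Y(k) = h_X(k)$ for $k \le j$, and for $k > j$, $h_Y(k) = h_X(k+1) \pm 1$ with the sign matching the parity of $X[j]$. Since $i_u < j$, we have $h_Y(f(i_u)) = h_X(i_u)$, and by the uniqueness of twins it suffices to pick a candidate $j'$ and check (i) $h_Y(j'+1) = h_X(i_u)$ and (ii) $h_Y(k) > h_X(i_u)$ for all $k \in (i_u\dd j']$.

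In the closing case, the natural candidate is $j' := f(\tw_X(i_w)) = \tw_X(i_w) - 1$, where $w$ is the parent of $u$. The parent satisfies $h_X(i_w) = h_X(i_u) - 1$ and, by \cref{fct:descendants}, $\tw_X(i_w) > \tw_X(i_u) \ge j$, so $j' \ge j$ and $h_Y(j'+1) = h_X(\tw_X(i_w)+1) + 1 = h_X(i_w)+1 = h_X(i_u)$ yields (i). Property (ii) splits at $j$: on $(i_u\dd j]$ it is immediate from the Type II condition $h_X(k) > h_X(i_u)$; on $(j\dd j']$ it follows from $k+1 \in (i_w\dd \tw_X(i_w)]$, which gives $h_X(k+1) > h_X(i_w)$ and hence $h_Y(k) = h_X(k+1)+1 > h_X(i_u)$. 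When $u$ is a child of the artificial root, $h_X(i_u) = 0$ and, since heights stay nonnegative once dummy openings have been prepended, $h_Y(k) = h_X(k+1)+1 \ge 1$ for $k > j$, while the Type II property forces $h_Y(k) > 0$ on $(i_u\dd j]$; thus $h_Y$ never returns to $h_X(i_u)$ beyond $i_u$, and $\tw_Y(f(i_u))$ is undefined.

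In the opening case, \cref{clm:TypeIIchild} tells us that $u$ either has a unique Type II child $v$ or is the parent of the node at index $j$. Letting $v^\star$ denote that child, $v^\star$ is a child of $u$ in $\T_X$, so $h_X(i_{v^\star}) = h_X(i_u)+1$ and $i_u < i_{v^\star} \le j < \tw_X(i_{v^\star})$ (using that the twin of the opening $X[j]$ cannot be $j$ itself). Take $j' := f(\tw_X(i_{v^\star})) = \tw_X(i_{v^\star}) - 1$. Property (i) follows from $h_Y(j'+1) = h_X(\tw_X(i_{v^\star})+1) - 1 = h_X(i_{v^\star}) - 1 = h_X(i_u)$, and (ii) is obtained by the same split-at-$j$ argument, using $k+1 \in (i_{v^\star}\dd \tw_X(i_{v^\star})]$ on the right part. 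The main bookkeeping hurdle is cleanly separating the three height regimes ($k<j$, $k=j$, $k>j$) and identifying the correct parent/child witness of the new twin; once the candidate is pinned down, the verification collapses to the identity $h_X(\tw_X(\cdot)+1) = h_X(\cdot)$ combined with range membership in existing twin intervals. The non-crossing property (\cref{obs:noncrossingtwins}) together with \cref{clm:TypeIIchild} ensures that $j'$ is the first index beyond $i_u$ at which $h_Y$ returns to $h_X(i_u)$, so $\tw_Y(f(i_u)) = j'$.
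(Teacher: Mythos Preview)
Your proof is correct and follows essentially the same approach as the paper: both identify the candidate twin $j'$ in $Y$ and verify the two conditions of \cref{def:twins} directly via the height transformation $h_Y(k)=h_X(k)$ for $k\le j$ and $h_Y(k)=h_X(k+1)\pm 1$ for $k>j$. The paper parametrizes by $r\in X$ and writes $h_Y(f(r))$, whereas you work with positions in $Y$ directly, but the computations are identical; your unified treatment of the opening subcases via a single witness $v^\star$ mirrors the paper's ``$w$ is either a Type~II child of $u$ or the deleted node''.
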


    \begin{proof}
        First, suppose that $X[j]$ is closing and $w$ is the parent of $u$ where $w$ is not the root of $\T_X$.
        Note that $\tw_X(i_w)=\min\{r\ge i_w : h_X(r+1)=h_X(i_w)\}$ and $\tw_X(i_u)=\min\{r\ge i_u : h_X(r+1)=h_X(i_u)\}$. Moreover, $h_X(i_u)=h_X(i_w)+1$.
        Consequently, $h_X(r)>h_X(i_u)$ for $r\in (i_u\dd \tw_X(i_u)]$, $h_X(r)\ge h_X(i_u)$ for $r\in (i_w\dd \tw_X(i_w)]$, and $h_X(\tw_X(i_w)+1)=h_X(i_u)-1$.  Informally, we know that all parentheses between $i_u$ and $\tw_\X(i_u)$ have height greater than $h(i_u)$, and moreover, the height of $i_w$ and its twin $\tw(i_w) + 1$ is $h(i_u) - 1$.  After deleting $X[j]$, the height of all parentheses after index $j$ are increased by 1, and so the height of $\tw_X(i_w)$ matches $h(i_u)$ in $\T_Y$ and is the nearest closing parenthesis with matching height to the right of $i_u$, i.e., it is the twin of $i_u$.  Formally put, since $i_w \le i_u < j \le \tw_X(i_u) < \tw_X(i_w)$, as well as $h_Y(f(r))=h_X(r)$ for $r\in [0\dd j]$ and $h_Y(f(r))=h_X(r)+1$ for $r\in (j\dd |X|]$,
        we have $h_Y(q) > h_X(i_u)=h_Y(f(i_u))$ for $q\in (f(i_u)\dd f(\tw_X(i_w))]$ and $h_Y(f(\tw_X(i_w))+1)=h_Y(f(\tw_X(i_w)+1))=h_X(\tw_X(i_w))+1 = h_X(i_u)=h_Y(f(i_u))$.
        Therefore, $\tw_Y(f(i_u)) = f(\tw_X(i_w))$.

        If the parent of $u$ is the virtual root added to $\T_X$ that does not correspond to any parenthesis of $X$, then note that $h_X(i_u) > h_X(r)$ for all $r \in (i_u \dd |X|)$ and so, $h_Y(f(i_u)) > h_Y(q)$ for all $q \in (f(i_u) \dd |X|)$.  Thus, $\tw_Y(f(i_u))$ is undefined in this case.

        Now, suppose that $X[j]$ is opening and $w$ is either a Type II child of $u$ or the deleted node. 
        Note that $\tw_X(i_w)=\min\{r\ge i_w : h_X(r+1)=h_X(i_w)\}$ and $\tw_X(i_u)=\min\{r\ge i_u : h_X(r+1)=h_X(i_u)\}$. Moreover, $h_X(i_u)=h_X(i_w)-1$.
        Consequently, $h_X(r)>h_X(i_u)$ for $r\in (i_u\dd \tw_X(i_u)]$, $h_X(r)> h_X(i_u)+1$ for $r\in (i_w\dd \tw_X(i_w)]$, and $h_X(\tw_X(i_w)+1)=h_X(i_u)+1$.
        Since $i_u \le i_w < j \le \tw_X(i_w) < \tw_X(i_u)$, as well as $h_Y(f(r))=h_X(r)$ for $r\in [0\dd j]$ and $h_Y(f(r))=h_X(r)+1$ for $r\in (j\dd |X|]$,
        we have $h_Y(q) > h_X(i_u)=h_Y(f(i_u))$ for $q\in (f(i_u)\dd f(\tw_X(i_w))]$ and $h_Y(f(\tw_X(i_w))+1)=h_Y(f(\tw_X(i_w)+1))=h_X(\tw_X(i_w))-1 = h_X(i_u)=h_Y(f(i_u))$.
        Therefore, $\tw_Y(f(i_u)) = f(\tw_X(i_w))$.
    \end{proof}

    The following two claims specify how Type I and Type III nodes with Type II parents may move from $\T_X$ to $\T_Y$. 
    We also consider the children of the deleted node (if $X[j]$ is opening); all these children are of Type III.
    \begin{claim}
    \label{clm:TypeIchild}
        If $u \in \T_X$ is a Type I node with a Type II parent $w$, then $f(w)$ is the parent of $f(u)$.
    \end{claim}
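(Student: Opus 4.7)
The plan is to exploit that both $u$ and its Type~II parent $w$ sit strictly to the left of the deletion position, so the deletion does not perturb the structure relevant to their parent-child relationship. Concretely, since $u$ is Type~I we have $i_u < \tw_X(i_u) < j$, and since $w$ is the parent of $u$ with $i_w < i_u$, we also have $i_w < j$. Hence $i_u, i_w \in [0\dd j)$, so $f(i_u) = i_u$ and $f(i_w) = i_w$, and the mapping $f$ preserves heights on this prefix: $h_Y(i) = h_X(i)$ for all $i \in [0\dd j)$. This reduces the claim to a purely local statement about the unchanged prefix of $X$.

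Next, I would invoke Definition~\ref{def:reduc}: to show that $f(w)$ is the parent of $f(u)$ in $\T_Y$, it suffices to check that
\[
  f(i_w) = \max\{j' \in [0\dd f(i_u)) : h_Y(j') = h_Y(f(i_u)) - 1\}.
\]
Because heights on $[0\dd j)$ are preserved, and because the deleted position $j$ lies outside of $[0\dd i_u)$ (regardless of whether $X[j]$ is opening or closing), this maximum is inherited verbatim from the corresponding maximum in $X$. Since $w$ is the parent of $u$ in $\T_X$, that $X$-maximum equals $i_w = f(i_w)$, which closes the argument.

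I expect no serious obstacle here; the only thing to be careful about is the case distinction on the parity of $X[j]$, and in both cases the key fact is simply that the interval $[0\dd i_u)$ used by Definition~\ref{def:reduc} is disjoint from the single edited position~$j$. This also means that no spurious new candidate for parenthood can arise in $Y$, since deletion strictly removes (rather than introduces) indices, so the argument is a one-paragraph check using height preservation on the unchanged prefix.
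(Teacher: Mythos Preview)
Your proposal is correct and follows essentially the same approach as the paper: both arguments rest on the observation that the opening parentheses of Type~I and Type~II nodes lie strictly left of the edited position~$j$, so heights on $[0\dd i_u)$ are unchanged and the defining maximum in Definition~\ref{def:reduc} transfers verbatim from $X$ to $Y$. The paper compresses this to a single sentence, while you spell out explicitly that the interval $[0\dd i_u)$ is disjoint from~$j$ and that $f$ acts as the identity there; your version is more detailed but not different in substance.
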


    \begin{proof}
        Note that since the heights of opening parentheses corresponding to Type I and Type II nodes do not change, the parent-child relationships between these nodes also do not change.
    \end{proof}

    \begin{claim}
    \label{clm:TypeIIIchild}
        Consider a Type III node $u \in \T_X$ with a Type II or deleted parent $w$.
        If $X[j]$ is closing, then the parent of $f(u)$ is $f(w')$, where $w'$ is the Type II child of $w$.\footnote{If $w$ does not have a Type II child, then $\tw_X(i_w)=j$ and all children of $w$ are of Type I.}
        If $X[j]$ is opening, then the parent of $f(u)$ is $f(w')$, where $w'$ is the parent of $w$.
    \end{claim}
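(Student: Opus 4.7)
The plan is to show that, in $\T_Y$, the position $f(i_{w'})$ is exactly the closest index strictly less than $f(i_u)$ whose $h_Y$-value equals $h_Y(f(i_u))-1$. By \cref{def:reduc}, this identifies $f(w')$ as the parent of $f(u)$ in $\T_Y$. I would handle the two cases (whether $X[j]$ is closing or opening) in parallel, using the height-shift identities that were already invoked in the proof of Claim~\ref{clm:update}: $h_Y(f(i))=h_X(i)$ for $i\le j$, while for $i>j$ we have $h_Y(f(i))=h_X(i)+1$ if $X[j]$ is closing and $h_Y(f(i))=h_X(i)-1$ if $X[j]$ is opening.

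First I would verify that $w'$ is well-defined. When $X[j]$ is closing, the fact that $u$ is a Type III descendant of $w$ forces $\tw_X(i_w)>j$; Claim~\ref{clm:TypeIIchild} then supplies a (unique) Type II child $w'$ of $w$. When $X[j]$ is opening, I take $w'$ to be the parent of $w$; by Claim~\ref{clm:TypeIIchild} applied to $w$ (either the deleted node or a Type II node), $w'$ is Type II, with the degenerate sub-case that $w'$ is the virtual root handled by noting that then no index in $Y$ to the left of $f(i_u)$ can attain height $h_Y(f(i_u))-1$, so $f(u)$ is a root of $\T_Y$, consistent with the statement. Next I would confirm the height equality $h_Y(f(i_{w'}))=h_Y(f(i_u))-1$: this follows by combining the parent-child height relations $h_X(i_u)=h_X(i_w)+1$ and $h_X(i_{w'})=h_X(i_w)\pm 1$ (sign depending on the case) with the relevant $\pm 1$ shift above position $j$.

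The main step is showing that no index $p\in (f(i_{w'})\dd f(i_u))$ in $Y$ achieves $h_Y(p)=h_Y(f(i_{w'}))$. I would split this interval at $j$. On the portion $(f(i_{w'})\dd j)$, heights in $Y$ agree with those in $X$; since either $i_{w'}<j\le \tw_X(i_{w'})$ (in the closing case, using Type II-ness of $w'$) or $i_{w'}$ is the maximal index below $j$ with the prescribed height (in the opening case, by definition of the parent pointer), the standard subtree-span argument from the proof of Claim~\ref{clm:update} yields $h_X(p)>h_X(i_{w'})$. On the portion $[j\dd f(i_u))$, each corresponding index of $X$ lies strictly inside the subtree of $w$ (or, when $w$ is the deleted node, inside the subtree of the deleted node), so $h_X(p')>h_X(i_w)$ for the relevant source index $p'$, and after applying the $\pm 1$ height shift this translates to $h_Y(p)>h_Y(f(i_{w'}))$. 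Together with the equality at $p=f(i_{w'})$, this locates $f(w')$ as the immediate ancestor of $f(u)$ in $\T_Y$.

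The main obstacle will be bookkeeping: reconciling the two possible choices of $w'$ (Type II child vs.\ parent of $w$) uniformly with the two possible shift signs induced by the parity of $X[j]$, and cleanly dispatching the boundary sub-cases where $w$ is the deleted node or where $w'$ is the virtual root. Once the casework is organized around the common identity $h_Y(f(i_u))-h_Y(f(i_{w'}))=1$ and the shared ``subtree span bounds heights from below'' argument, the remaining verification is routine index arithmetic analogous to what appears in Claim~\ref{clm:update}.
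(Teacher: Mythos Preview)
Your proposal is correct and follows essentially the same approach as the paper: both arguments use the height-based characterization of the parent relation from \cref{def:reduc}, verify the equality $h_Y(f(i_{w'}))=h_Y(f(i_u))-1$, and then show no intermediate index attains that height. Your version is in fact more carefully organized than the paper's (which does not explicitly split the interval at $j$ and contains a minor arithmetic slip), but the underlying idea is identical.
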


    \begin{proof}
        Now, suppose that $X[j]$ is closing.
        Observe that the heights of opening parentheses corresponding to Type III nodes increase by one whereas the heights of opening parentheses corresponding to Type II nodes stay the same. If a Type III node $u$ has a Type II parent $w$ and $w'$ is the type II child of $w$, 
        then $h_X(i_u) = h_X(i_w) + 1 = h_X(i_{w'})$ and $h_Y(f(i_u))=h_X(i_u)+1=h_X(i_{w'}) + 1=h_Y(f(i_{w'})) + 1$. 
        Furthermore, $\tw_X(i_{w'})=\min\{r\ge i_{w'} : h_X(r+1)=h_X(i_{w'})\}$ and $\tw_X(i_w)=\min\{r\ge i_w : h_X(r+1)=h_X(i_w)\}$. For all $q \in (i_{w'}, i_u)$, we have that $h_Y(q) = h_X(q) + 1 > h_X(i_w) = h_X(i_u) - 2 = h_Y(f(i_u)) - 1.$ Thus, $i_{w'}$ is the maximum index of a parenthesis with $i_{w'} < i_j$ and $h_Y(f(i_{w'})) = h_Y(f(i_u)) - 1$. 
        
        The analysis is analogous if $X[j]$ is opening: then, the heights of opening parentheses corresponding to Type III nodes decrease instead. If $w$ is a child of the root of $\T_X$ that does not correspond to any parenthesis in $X$, then there would have no opening parenthesis that satisfies the parent condition for $f(u)$ and so, in this case, $f(u)$ becomes a child of the root.
    \end{proof}
    
    Next, we discuss how an algorithm can identify the changes to the heavy paths in $\T_X$. For the following analysis, define $k_x = \heavy{\T_X(x)}$ and $k_x' = \heavy{\T_Y(f(x))}$ for a node $x \in \T_X$.
    The following claim implies that a node $x\in \T_X$ needs to change its heavy path only if $k'_x \ne k_x$.
    \begin{claim}
    Let $u,v\in \T_X$ be nodes such that $k_u = k'_u = k'_v = k_v$. Then, $u$ is an ancestor of $v$ if and only if $f(u)$ is an ancestor of $f(v)$.
    \end{claim}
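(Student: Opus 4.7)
The plan is to prove the biconditional by an explicit case analysis on the Type (I, II, or III) of $v$, comparing the ancestor chain of $v$ in $\T_X$ with that of $f(v)$ in $\T_Y$ using the preceding claims. The heavy-depth hypothesis will only be needed to exclude a single problematic configuration arising when $X[j]$ is opening.

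If $v$ is of Type I or Type II, then combining \cref{clm:TypeIandIII,clm:TypeIchild,clm:TypeIIchild}, the map $f$ restricts to a bijection between the ancestors of $v$ in $\T_X$ and the ancestors of $f(v)$ in $\T_Y$. Hence $u$ is an ancestor of $v$ in $\T_X$ iff $f(u)$ is an ancestor of $f(v)$ in $\T_Y$ for every such $u$, with no heavy-depth assumption needed.

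Suppose now $v$ is of Type III. Let $c$ be the topmost Type III ancestor of $v$ in $\T_X$, and let $w$ be its parent, which by the classification is either Type II or the deleted node. Iterating \cref{clm:TypeIandIII} down to $c$, applying \cref{clm:TypeIIIchild} at the Type III-Type II boundary, and iterating \cref{clm:TypeIIchild} above it, the ancestors of $f(v)$ in $\T_Y$ are precisely the $f$-images of the Type III ancestors of $v$ together with $f(w')$ and the $f$-images of the Type II ancestors of $w'$, where $w'$ is the Type II child of $w$ (if $X[j]$ is closing) or the parent of $w$ (if $X[j]$ is opening or $w$ is the deleted node). In the closing case and in the deleted-parent case, $w'$ is either a proper descendant of $w$ or the lowest Type II ancestor of $v$ in $\T_X$, respectively; so this set coincides with the $f$-images of the non-deleted ancestors of $v$ in $\T_X$, and the biconditional holds. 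The only scenario where the correspondence is broken is when $X[j]$ is opening and $w$ is Type II, in which case the ancestors of $f(v)$ in $\T_Y$ equal the $f$-images of the ancestors of $v$ in $\T_X$ minus $f(w)$. Thus the biconditional can fail only when $u=w$ in this subcase.

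It remains to exclude $u=w$ in this subcase using the heavy-depth hypothesis. A symmetric ancestor analysis (arguing that any node in $\T_Y(f(u))$ must be $f$ of a descendant of $u$ in $\T_X$) shows that $\T_Y(f(u))$ equals $f$ applied to $\T_X(u)$ with every Type III subtree rooted at a child of $u$ removed; in particular, the entire subtree $\T_X(c)$ exits $\T_Y(f(u))$, yielding
\[
|\T_Y(f(u))| \;\le\; |\T_X(u)| - |\T_X(c)| \;\le\; |\T_X(u)| - |\T_X(v)|.
\]
The hypothesis $k_u = k_v$ gives $|\T_X(v)| \ge 2^{k_u}$ and $|\T_X(u)| \le 2^{k_u+1}-1$, so $|\T_Y(f(u))| \le 2^{k_u}-1 < 2^{k_u}$, hence $k'_u < k_u$, contradicting $k'_u=k_u$. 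Thus the subcase cannot arise under the stated hypothesis, and the biconditional holds in all cases. The main obstacle is the careful bookkeeping in the Type III case: identifying precisely which ancestor can be skipped by $f$ at the Type III-Type II boundary, keeping the opening/closing and deleted-parent sub-cases straight, and verifying that no node from outside $\T_X(u)$ enters $\T_Y(f(u))$ so that the size bound above is valid.
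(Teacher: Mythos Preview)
Your overall case analysis on the Type of $v$ is a reasonable route, and your treatment of the opening case (where an ancestor is \emph{lost}) is correct. However, there is a genuine gap in the closing case.

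When $X[j]$ is closing and $v$ is of Type~III, let $c$ be the topmost Type~III ancestor of $v$ and $w$ its Type~II parent. You correctly compute that the ancestors of $f(v)$ in $\T_Y$ are the $f$-images of the Type~III ancestors of $v$, then $f(w')$, then the $f$-images of the Type~II chain above $w'$, where $w'$ is the Type~II child of $w$. But $w'$ is a \emph{sibling} of $c$ in $\T_X$, not an ancestor of $v$. Hence your assertion that ``this set coincides with the $f$-images of the non-deleted ancestors of $v$'' is false: $f(w')$ is an extra ancestor of $f(v)$ that has no counterpart among the ancestors of $v$. Taking $u=w'$ then yields a potential counterexample to the biconditional in the direction you never test: $u$ is not an ancestor of $v$ in $\T_X$, yet $f(u)$ is an ancestor of $f(v)$ in $\T_Y$.

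This case must also be excluded via the heavy-depth hypothesis, symmetrically to your opening-case argument (and this is precisely what the paper does). In the closing case, $\T_Y(f(w'))$ contains the $f$-images of both $\T_X(w')$ and the Type~III sibling subtrees that move under $f(w')$; in particular it contains $f(\T_X(v))$. Thus
\[
|\T_Y(f(w'))|\;\ge\;|\T_X(w')|+|\T_X(v)|\;\ge\;2^{k_u}+2^{k_v}\;=\;2^{k_u+1},
\]
forcing $k'_u\ge k_u+1$, which contradicts $k'_u=k_u$. With this symmetric step added, your argument becomes complete and matches the paper's.
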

    \begin{proof}
    The characterization of the parent-child relation specified within the claims above leaves very few cases where the ancestor-descendant relation does not need to be preserved by $f$.
    If $X[j]$ is closing, this is only possible when $u$ is a Type II node and $v$ is a descendant of its Type III sibling, in which case $f(v)$ is a descendant of $f(u)$.
    In that case, however, $\T_Y(f(u))$ contains the image under $f$ of all the nodes in $\T_X(u)$ and $\T_X(v)$,
    so $|\T_Y(f(u))|\ge |\T_X(u)|+|\T_X(v)|\ge 2^{k_u}+2^{k_v} \ge 2^{k_u+1}$, so $k'_u \ge k_u+1$ contradicts our assumption.
    If $X[j]$ is closing, this is only possible when $u$ is a Type II node and $v$ is a descendant of its Type III child, in which case $f(v)$ is a descendant of a sibling of $f(u)$.
    In that case, however, $\T_X(u)$ contains the pre-image under $f$ of all the nodes in $\T_Y(f(u))$ and $\T_Y(f(v))$,
    so $|\T_X(u)|\ge |\T_Y(f(u))|+|\T_Y(f(v))|\ge 2^{k'_u}+2^{k'_v} \ge 2^{k'_u+1}$, so $k_u > k'_u+1$ contradicts our assumption.
    \end{proof}
    Thus, it suffices to iterate over all nodes $x\in \T_X$ with $k'_x \ne k_x$ and move them from one heavy path to another.
    By \cref{clm:TypeIandIII}, we only have to concern ourselves with  Type II nodes (and the deleted node).
    We use the following claim to characterize them.

    \begin{claim}\label{clm:kkp}
    Consider a Type II node $v$ with parent $u$.
    If $X[j]$ is closing, then $k_v \le k'_v \le k_u \le k'_u$.
    If $X[j]$ is opening, then $k'_v \le k_v \le k'_u \le k_u$.
    \end{claim}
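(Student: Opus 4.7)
The plan is to reduce the claim to position arithmetic via the identity $|\T_X(x)| = \tfrac{1+\tw_X(i_x)-i_x}{2}$ (and its $Y$-counterpart). By Claim~\ref{clm:TypeIIchild}, $v$ is the unique Type II child of $u$, and since $v$ exists, $u$ is not one of the exception nodes of Claim~\ref{clm:update}; that claim therefore pins down both $\tw_Y(f(i_v))$ and $\tw_Y(f(i_u))$. Substituting into the size formula yields explicit expressions for $|\T_Y(f(v))|$ and $|\T_Y(f(u))|$, and the three inequalities then follow by comparing these with $|\T_X(v)|$ and $|\T_X(u)|$ and applying the monotonicity of $\lfloor \lg \cdot \rfloor$.

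For Case 1 ($X[j]$ closing), writing $u'$ for the parent of $u$, the formulas come out to
\[|\T_Y(f(v))| = \tfrac{\tw_X(i_u) - i_v}{2}, \qquad |\T_Y(f(u))| = \tfrac{\tw_X(i_{u'}) - i_u}{2},\]
since both $\tw_X(i_u)$ and $\tw_X(i_{u'})$ exceed $j$, so $f$ simply subtracts one. Three elementary bounds then finish the case: (i) $\tw_X(i_u) \ge \tw_X(i_v)+1$ gives $|\T_Y(f(v))| \ge |\T_X(v)|$, whence $k_v \le k'_v$; (ii) analogously $\tw_X(i_{u'}) \ge \tw_X(i_u)+1$ gives $k_u \le k'_u$; (iii) $i_v \ge i_u+1$ gives $|\T_X(u)|-|\T_Y(f(v))| = \tfrac{1+i_v-i_u}{2} \ge 1$, whence $k'_v \le k_u$.

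Case 2 ($X[j]$ opening) is the mirror image: Claim~\ref{clm:update} now uses the Type II child rather than the parent, so $\tw_Y(f(i_u)) = f(\tw_X(i_v))$ and $\tw_Y(f(i_v)) = f(\tw_X(i_w))$ where $w$ is $v$'s Type II child (or the deleted position, a sub-case already covered by Claim~\ref{clm:update}). The same position arithmetic now yields $k'_u \le k_u$ and $k'_v \le k_v$, while $i_v \ge i_u+1$ gives $|\T_Y(f(u))| \ge |\T_X(v)|$ and hence $k_v \le k'_u$.

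The main obstacle is the degenerate case in Case 1 where $u$ is a root of $\T_X$, in which Claim~\ref{clm:update} declares $\tw_Y(f(i_u))$ undefined (the deletion unbalances a prefix). I would handle this by the convention that the virtual root's ``twin'' sits at position $|X|$ (and $|Y|$ after the edit), so that the subtree of $f(u)$ extends naturally to the end of $Y$; the required inequality $|X| \ge \tw_X(i_u)+1$ is then immediate, and no other part of the argument changes.
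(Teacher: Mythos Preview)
Your proof is correct. It differs from the paper's in technique: the paper argues via subtree containment, using Claims~\ref{clm:TypeIchild} and~\ref{clm:TypeIIIchild} to track which nodes enter or leave a Type~II node's subtree after the deletion (for instance, in the closing case the Type~III siblings of $v$ become descendants of $f(v)$, yielding $|\T_X(v)| \le |\T_Y(f(v))|$ directly, and $|\T_Y(f(v))| < |\T_X(u)|$ because $\T_X(u)$ additionally contains $u$). You instead compute the subtree sizes exactly from the twin-position formula via Claim~\ref{clm:update} and compare them arithmetically. Your route is more explicit and mechanical; the paper's is more structural but leaves the sizes implicit. Both approaches stumble equally on the root edge case for $k_u \le k'_u$ in the closing case---the paper simply does not mention it---and your proposed convention (extending the virtual root's ``twin'' to $|X|$) handles it adequately. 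As an aside, the two paragraphs of the paper's proof have their ``opening'' and ``closing'' labels interchanged; the arguments are correct once the labels are swapped.
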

    \begin{proof}
    Suppose that $X[j]$ is opening.
    By Claims~\ref{clm:TypeIchild} and~\ref{clm:TypeIIIchild}, we have $\T_Y(f(v))=f(\T_X(v)\cup\{w\in \T_X : w\text{ is a descendant of a Type III sibling of }v\})$. Thus, $|\T_X(v)|\le |\T_Y(f(v))|$. The subtree $\T_X(u)$ additionally contains $u$ and the descendants of Type III siblings of $v$, so $|\T_Y(f(v))|< |\T_X(u)|$. The inequality $|\T_X(u)|\le |\T_Y(f(u))|$ is analogous to its counterpart for $v$.

    Suppose that $X[j]$ is closing.
    By Claims~\ref{clm:TypeIchild} and~\ref{clm:TypeIIIchild}, we have $\T_Y(f(u))=f(\T_X(u)\setminus \{w\in \T_X : w\text{ is a descendant of a Type III sibling of }v\text{ or represents } X[j]\})$. Thus, $|\T_Y(f(u))| < |\T_X(u)|$.
    The subtree $\T_X(v)$ additionally contains the node $w$ representing $X[j]$ but misses $u$ and the descendants  of Type III siblings of $v$,
    so $|\T_X(v)|\le |\T_Y(f(u))|$. The inequality $|\T_Y(f(v))|<|\T_X(v)|$ is analogous to its counterpart for $u$.
    \end{proof}

    We are ready to characterize the affected nodes.
    First, suppose that $X[j]$ is closing. By \cref{clm:kkp}, $k_v \le k'_v \le k_u \le k'_u$ holds for every Type II node $v$ with parent $u$.
    If $k_v \ne k'_v$, then $k_v < k'_v \le k_u$, so $v$ is a light node.
    Consequently, our algorithm inspects the light Type II nodes $v$ (in the bottom-up order) and filters out those for which $k_v = k'_v$.
    We remove every remaining node $v$ from its original heavy path and insert its image $f(v)$ into the updated heavy path.
    In order to identify the latter path, we use \cref{lem:heavychild} to find the heavy child of $f(v)$.
    If there is none but the parent $u$ of $v$ satisfies $k'_v=k_u=k'_u$, then we insert $f(v)$ to the heavy path of $u$.
    For all other nodes~$v$, we have $k'_v < k'_u$, so $f(v)$ is light and belongs to a singleton heavy path.

    Next, suppose that $X[j]$ is opening. In this case, the deleted node containing $X[j]$ needs to be removed from its heavy path.
    Let us focus on the other nodes.  
    By \cref{clm:kkp}, we have that $k'_v \leq k_v \leq k'_u \le k_u$ for every Type II node $v$ with parent $u$.
    If $k_u \ne k'_u$, then $k'_v \le k_v \le k'_u < k_u$, so $v$ is a light node.
    Consequently, our algorithm inspects the parents $u$ of the light Type II nodes (as well as the parent of the deleted node, all in the bottom-up order) and filters out nodes $u$ with $k_u = k'_u$.
    We remove every remaining node $u$ from its original heavy path and insert its image $f(u)$ into the updated heavy path.
    In order to identify the latter path, we use \cref{lem:heavychild} to find the heavy child of $f(u)$.
    If there is none, then the parent $w$ of $u$ satisfies $k'_u < k_u \le k'_w$, so $f(u)$ is light and belongs to a singleton heavy path.

    In all cases, by \cref{fct:descendants}, there are $\Oh(\log n)$ Type II nodes to inspect, and they can be found by iterating over the heavy paths containing the node corresponding to $X[j]$ and its ancestors in $\T_X$.
    All the changes described above can be implemented in $\Oh(\log^4 n)$ time since we utilize Lemma~\ref{lem:heavychild} $\Oh(\log n)$ times.
    Thus, in all cases, we may perform the changes to the heavy path set in $\Oh(\log^4 n)$ time and there are at most $\Oh(\log n)$ such changes. This gives us our total time of $\Oh(\log^5 n)$.
    
    Note that when deleting an opening parenthesis, we have to delete the node corresponding to that parenthesis.  We may straightforwardly remove this node's parenthesis from its heavy path in $\T_X$ to fix the heavy path set. Additionally, a closing parenthesis previously matched at the root node may now no longer have a match. To fix this, we add another dummy opening parenthesis and a new node to the root of $\T_Y$.

    We note that a dynamic insertion of an opening parenthesis is symmetric in the case of deleting a closing parenthesis with one difference. The algorithm must create a new node for the new opening parenthesis, which we may add to the set of heavy paths using Lemma~\ref{lem:heavychild} in $\Oh(\log^3 n)$ time. We also may perform an additional clean-up step for both a dynamic insertion of an opening parenthesis and dynamic deletion of a closing parenthesis, since it is possible that a dummy opening parenthesis no longer has a matching closing parenthesis.  This must be the root node of $\T_Y$, and so we may directly check if this is the case in the root node and delete it if so.
    Similarly, a dynamic insertion of a closing parenthesis is symmetric to the deletion of an opening parenthesis except we do not have to add any node for the inserted parenthesis. As before, we may need to add a dummy node to the root of the tree which can be done in $\Oh(\log^3 n)$ time using Lemma~\ref{lem:heavychild}.
\end{proof}

\section{Static and Dynamic Tree to String Edit Distance}
\label{sec:treetostring}
In this section, we discuss the related problem of \emph{tree edit distance} and provide the first approximation algorithm for tree edit distance in the dynamic setting. Given two labeled trees as input, the goal of tree edit distance is to find the minimum number of edits needed to transform one tree into the other. To do so, we utilize a novel reduction technique, which converts an instance of tree edit distance to an instance of string edit distance.  Via a Euler tour of the trees and a heavy-light decomposition, the algorithm builds a string with special labeling where subtrees of a node are embedded into its label. In total, we are able to give an $\tOh(n^{\frac{1}{2}})$-approximation algorithm in the dynamic setting for tree edit distance with sub-polynomial update time. We begin by defining the reduction and discussing structural properties that arise in the reduction, and then in Subsection~\ref{subsec:constr}, we show how to find a tree alignment using a string alignment obtained via our reduction. We analyze the obtained tree alignment in Subsections~\ref{subsec:treeconstraints} and \ref{subsec:costanalysis}, and finally, in Subsection~\ref{subsec:dynTreeSec}, we show how to maintain our reduction in the dynamic setting.

\subsection{$\tOh(\sqrt{n})$-approximate Reduction to String Edit Distance}
\label{subsec:tree-reduction}

To approximate the tree edit distance between two forests using string edit distance, we employ their parenthesis representations.  
Before doing so, we refine the node labels to encode additional structural information.

\begin{definition}[Modified labeling and parenthesis representation]
\label{def:parenF}
Let $\F$ be a forest equipped with a heavy--light decomposition as defined in \cref{def:heavy-light}.  
If a node $v \in \F$ has a heavy child $w$, we define its \emph{light subtree} $\F'(v)$ as the subtree $\F(v)$ rooted at $v$, except that the subtree $\F(w)$ is replaced by a single node labeled with a special symbol~$\#$.  
If $v$ has no heavy child, we set $\F'(v) \coloneqq \F(v)$.  

We then define the \emph{modified node labeling} $\lambda_{\hld}$ so that each label $\lambda_{\hld}(v) \coloneqq (\F'(v),\, \heavy{\F(v)})$ encodes both the light subtree and the heavy depth of $v$.  
The forest $\F$ equipped with this labeling is denoted by $\F_{\hld}$, and its \emph{modified parenthesis representation} is given by $\paren(\F) \coloneqq \str{\F_{\hld}}$.
\end{definition}

By standard properties of heavy--light decompositions, every node has at most $\Oh(\log n)$ light ancestors in a forest~$\F$.  
Consequently, any node $u$ appears in the light subtrees of $\Oh(\log n)$ of its ancestors.  
In particular, a single node relabeling in $\F$ corresponds to $\Oh(\log n)$ relabelings in $\F_{\hld}$.  
A similar argument applies to insertions and deletions: editing a node $u$ in $\F$ may change the heavy depths of up to $\Oh(\log n)$ ancestors, and hence a single node edit in $\F$ can be modeled as $\Oh(\log n)$ node edits in $\F_{\hld}$.  
We formalize this below and conclude that
\[
\ed(\paren(\F), \paren(\G)) \le 2\,\ted(\F_{\hld}, \G_{\hld}) = \Oh\left(\ted(\F, \G) \log \max(|\F|, |\G|)\right).
\]

\begin{observation}\label{obs:node_relabel}
Let $\F$ be a forest of size $n$, and let $\G$ be obtained from $\F$ by relabeling a single node $u \in \F$.  
Then $\ted(\F_{\hld}, \G_{\hld}) \le \Oh(\log n) \cdot \ted(\F, \G)$.
\end{observation}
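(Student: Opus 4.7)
The plan is to exploit the fact that relabeling a single node does not change the underlying tree structure of $\F$. Since $\F$ and $\G$ are isomorphic as unlabeled trees, the heavy--light decomposition from \cref{def:heavy-light} is identical in both forests, and in particular $\heavy{\F(v)} = \heavy{\G(v)}$ holds for every node $v$. Consequently, the modified label $\lambda_{\hld}(v) = (\F'(v), \heavy{\F(v)})$ agrees between $\F_{\hld}$ and $\G_{\hld}$ whenever the light subtree $\F'(v)$ agrees with $\G'(v)$, and this holds unless $u$ actually appears as a node of $\F'(v)$.

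The key step is to bound the number of nodes $v$ for which $u \in \F'(v)$. Let $v_0, v_1, \ldots, v_\ell = u$ denote the path from the root of $u$'s tree down to $u$ itself. I would argue that for $i < \ell$, the node $u$ belongs to $\F'(v_i)$ if and only if $v_{i+1}$ is a \emph{light} child of $v_i$: indeed, if $v_{i+1}$ were the heavy child of $v_i$, then by \cref{def:parenF} the entire subtree $\F(v_{i+1})$ (which contains $u$) is collapsed into the single marker $\#$ inside $\F'(v_i)$, so $u$ would not appear. The case $v_i = u$ contributes one additional node. By \cref{obs:heavy-light}, the path contains at most $\Oh(\log n)$ light nodes, so the set of affected ancestors has cardinality $\Oh(\log n)$.

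To conclude, I would exhibit an alignment from $\F_{\hld}$ to $\G_{\hld}$ that leaves every node outside this set untouched and performs a single label substitution at each of the $\Oh(\log n)$ affected ancestors. That the substituted label is exactly $\lambda_{\hld}(v)$ in $\G_{\hld}$ follows immediately from the definitions of $\F'$ and $\G'$, since both light subtrees have identical structure and differ only at the occurrence of $u$ (whose label is updated to match $\G$). This alignment has cost $\Oh(\log n)$, and since $\ted(\F, \G) = 1$ by hypothesis, we obtain $\ted(\F_{\hld}, \G_{\hld}) \le \Oh(\log n) = \Oh(\log n)\cdot \ted(\F, \G)$. There is no substantive obstacle here: the whole argument reduces to the standard root-to-leaf light-node count, with the definition of $\F'$ doing the work of localizing the effect of a relabeling.
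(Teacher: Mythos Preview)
Your proposal is correct and follows essentially the same approach as the paper: both identify that the heavy--light decomposition is unchanged under relabeling, characterize the nodes $v$ with $u\in\F'(v)$ as $u$ itself together with the parents of light ancestors of $u$ (your path-based phrasing ``$v_{i+1}$ is a light child of $v_i$'' is equivalent), and bound their number by $\Oh(\log n)$ via \cref{obs:heavy-light}.
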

\begin{proof}
To transform $\F_{\hld}$ into $\G_{\hld}$, it suffices to relabel all nodes whose modified labels change.  
Since heavy depths are unaffected, the modified label of a node $v$ changes only if $u$ belongs to its light subtree $\F'(v)$.  
This occurs precisely when $v = u$ or when $v$ has a light child $w$ that is an ancestor of $u$; equivalently, when $v = u$ or $v$ is the parent of a light ancestor of $u$.  
By \cref{obs:heavy-light}, any node has at most $\Oh(\log n)$ light ancestors, so at most $\Oh(\log n)$ nodes of $\F_{\hld}$ need to be relabeled.  
\end{proof}

\begin{observation}\label{obs:node_delete}
Let $\F$ be a forest of size $n$, and let $\G$ be obtained from $\F$ by deleting a single node $u \in \F$.  
Then $\ted(\F_{\hld}, \G_{\hld}) \le \Oh(\log n) \cdot \ted(\F, \G)$.
\end{observation}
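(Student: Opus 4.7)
The plan is to transform $\F_{\hld}$ into $\G_{\hld}$ using one node deletion (of $u$) together with $\Oh(\log n)$ node relabelings. Since $\ted(\F,\G) = 1$, this will directly yield the desired bound.

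First, I would observe that for any node $v$ that is not an ancestor of $u$ in $\F$, the subtree $\F(v)$ is identical to $\G(v)$, so $\lambda_{\hld}(v)$ is unchanged. Hence only ancestors of $u$ can have altered modified labels, and it suffices to bound how many of them actually do. I would partition these ancestors into two groups: \emph{Type A}, where the unique child on the path from $v$ to $u$ is light; and \emph{Type B}, where this child is heavy. Every Type A ancestor is the parent of some light node on the root-to-$u$ path, so by \cref{obs:heavy-light} there are at most $\Oh(\log n)$ of them. Each such $v$ has $u$ appearing inside $\F'(v)$, so its modified label changes, but there are only $\Oh(\log n)$ such nodes to relabel.

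For a Type B ancestor $v$, the node $u$ lies inside the heavy subtree of $v$, so $u$ is hidden behind the symbol $\#$ in $\F'(v)$, and the non-heavy children's subtrees are untouched by the deletion. Therefore $\lambda_{\hld}(v)$ can change only when either $\heavy{\F(v)}$ or the identity of the heavy child of $v$ changes. Both events require a size along the ancestor path $u = v_0, v_1, \ldots, v_k$ to cross a power-of-two threshold, i.e., $|\F(v_i)| = 2^j$ or $|\F(v_{i-1})| = 2^j$ for some $j$. Since the sequence $|\F(v_0)|, |\F(v_1)|, \ldots, |\F(v_k)|$ is strictly increasing and bounded by $n$, only $\Oh(\log n)$ indices witness such a threshold crossing, giving $\Oh(\log n)$ Type B ancestors with changed labels. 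Summing the two groups, only $\Oh(\log n)$ relabelings are required, so $\ted(\F_{\hld}, \G_{\hld}) \le 1 + \Oh(\log n) = \Oh(\log n)$.

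The main obstacle I anticipate is the Type B case analysis: rigorously verifying that a change in the heavy child of $v$ is forced by a power-of-two crossing of $|\F(v_{i-1})|$ (using the facts that $|\F(v_{i-1})| < |\F(v_i)|$ and that both sizes decrease by exactly one), and cleanly handling the corner case where $u$ is itself the heavy child of its parent $p$. In that corner case the children of $u$ are promoted to be children of $p$, which structurally changes $\G'(p)$ regardless of any threshold crossing; however, this affects only the single node $p$ and therefore contributes just $\Oh(1)$ additional relabelings, preserving the overall bound.
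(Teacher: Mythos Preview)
Your proposal is correct and follows essentially the same approach as the paper. Both arguments transform $\F_{\hld}$ into $\G_{\hld}$ by deleting $u$ and relabeling $\Oh(\log n)$ ancestors, and both bound the affected ancestors via light nodes on the root-to-$u$ path together with power-of-two size crossings. The only difference is packaging: where you split into Type~A (child on the path to $u$ is light) and Type~B (child is heavy, so a power-of-two crossing of $|\F(v_{i-1})|$ is needed), the paper phrases the same three possibilities as ``$v$ is the parent of $u$, the parent of a light ancestor of $u$, or the grandparent of a light ancestor of $u$''; your Type~B condition $|\F(v_{i-1})|=2^j$ is exactly what forces $v_{i-1}$'s child on the path to be light, making $v_i$ such a grandparent. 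Your anticipated corner case (the parent of $u$ when $u$ is heavy) is likewise singled out explicitly in the paper.
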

\begin{proof}
To transform $\F_{\hld}$ into $\G_{\hld}$, it suffices to delete $u$ and relabel all nodes whose modified labels change.  
We claim that the modified label of a node $v$ changes only if $v$ is the parent of $u$, the parent of a light ancestor of $u$, or the grandparent of a light ancestor of $u$.

To prove the claim, fix an arbitrary node $v \ne u$.  
If $v$ is not an ancestor of $u$, then $\G(v) = \F(v)$.  
In particular, the heavy depths of $v$ and its descendants remain unchanged, implying $\heavy{\G(v)} = \heavy{\F(v)}$ and $\G'(v) = \F'(v)$.  
Thus, the modified label of $v$ is unchanged.

Now suppose that $v$ is a proper ancestor of $u$.  
Then $\G(v)$ is obtained from $\F(v)$ by deleting $u$, so $|\G(v)| = |\F(v)| - 1$.  
Consequently,
\[
\heavy{\G(v)} =
\begin{cases}
\heavy{\F(v)} - 1 & \text{if } \log |\F(v)| \in \mathbb{Z},\\[4pt]
\heavy{\F(v)} & \text{otherwise.}
\end{cases}
\]
If $\log |\F(v)| \in \mathbb{Z}$, then every child $w$ of $v$ is light, since $|\F(w)| < |\F(v)|$ implies $\heavy{\F(w)} < \heavy{\F(v)}$.  
In particular, this includes the child of $v$ lying on the path to $u$, so $v$ is the parent of a light ancestor of $u$, which is one of the cases we accounted for.

Otherwise, $\heavy{\G(v)} = \heavy{\F(v)}$, and we only need to analyze how the light subtree $\G'(v)$ differs from $\F'(v)$.  
Such a difference arises only in one of the following cases:
\begin{enumerate*}[label=(\roman*)]
\item $u \in \F'(v)$ (and, as in the proof of \cref{obs:node_delete}, $v$ is then the parent of a light ancestor of $u$),
\item $v$ is the parent of $u$, or
\item a child of $v$ changes its heavy/light status under the decomposition.
\end{enumerate*}
The first two cases have already been covered.  
For case~(iii), recall that the status of a child $w$ of $v$ depends on whether $\heavy{\F(w)} < \heavy{\F(v)}$ and $\heavy{\G(w)} < \heavy{\G(v)}$.  
Since $\heavy{\G(v)} = \heavy{\F(v)}$, the status can change only if $\heavy{\G(w)} \ne \heavy{\F(w)}$.  
But as shown earlier, $\heavy{\G(w)} \ne \heavy{\F(w)}$ only when $w$ is the parent of a light ancestor of $u$, i.e., when $v$ is the grandparent of a light ancestor of~$u$.

This completes the case analysis, establishing that the modified label of $v$ changes only if $v$ is the parent of $u$, the parent of a light ancestor of $u$, or the grandparent of a light ancestor of $u$.  
By \cref{obs:heavy-light}, there are at most $\Oh(\log n)$ such nodes.  
Hence, $\G_{\hld}$ can be obtained from $\F_{\hld}$ by one node deletion and $\Oh(\log n)$ node relabelings.
\end{proof}

The analogous statement for node insertions follows immediately by swapping the roles of $\F$ and $\G$ in \cref{obs:node_delete}.

\begin{lemma}
\label{lem:overapprox}
For all forests $\F$ and $\G$ of size at most $n$, we have
\[
\ed(\paren(\F), \paren(\G)) \le 2 \, \ted(\F_\hld,\G_\hld) \le \Oh(\log n) \cdot \ted(\F, \G).
\]
\end{lemma}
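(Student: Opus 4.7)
The statement packages two inequalities, and I would prove them separately. The first one, $\ed(\paren(\F),\paren(\G)) \le 2\,\ted(\F_\hld,\G_\hld)$, is essentially definitional: every tree alignment of $\F_\hld$ and $\G_\hld$ is, by \cref{def:ta}, a valid string alignment of $\str{\F_\hld}=\paren(\F)$ and $\str{\G_\hld}=\paren(\G)$, and its cost (counted in parenthesis-level edits) equals its cost as a string alignment. The remark just after \cref{def:ta} says that the minimum cost tree alignment equals $2\,\ted(\F_\hld,\G_\hld)$, so the minimum cost string alignment is at most this quantity. This gives the left inequality directly.

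For the second inequality, $2\,\ted(\F_\hld,\G_\hld) \le \Oh(\log n)\cdot \ted(\F,\G)$, the plan is to chain together the three per-edit observations and apply the triangle inequality for $\ted$. Let $k=\ted(\F,\G)$ and fix an optimal edit sequence $\F=\F_0,\F_1,\ldots,\F_k=\G$, where consecutive forests differ by a single node insertion, deletion, or relabeling. Since each edit changes the total size by at most one and $k\le \max(|\F|,|\G|)\le n$, every intermediate $\F_i$ has size at most $2n$. \cref{obs:node_relabel} handles relabelings and \cref{obs:node_delete} handles deletions; insertions follow by applying \cref{obs:node_delete} after swapping the roles of $\F_i$ and $\F_{i+1}$. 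In each case we get $\ted((\F_i)_\hld,(\F_{i+1})_\hld)\le \Oh(\log n)$, and by the triangle inequality
\[
\ted(\F_\hld,\G_\hld)\le \sum_{i=0}^{k-1}\ted\bigl((\F_i)_\hld,(\F_{i+1})_\hld\bigr)\le \Oh(k\log n)=\Oh(\log n)\cdot \ted(\F,\G).
\]
Multiplying by $2$ yields the desired bound.

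The only slightly delicate point is verifying the size bound on the intermediate forests so that the $\Oh(\log n)$ factor from the three observations is uniform in $i$; once that is in place the argument is essentially a one-line computation. The rest is purely assembly: the first inequality follows by specializing a tree alignment to a string alignment, and the second follows from the three per-edit observations combined via a triangle inequality over the canonical edit sequence.
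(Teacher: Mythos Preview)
Your proposal is correct and follows essentially the same approach as the paper: the first inequality is handled by interpreting a tree alignment as a string alignment, and the second by chaining the per-edit observations along an optimal edit sequence via the triangle inequality (the paper phrases this as an induction on $\ted(\F,\G)$, which is just the rolled-up version of your sum). Your explicit check that intermediate forests have size $\Oh(n)$ is a detail the paper glosses over but which is indeed needed for the $\Oh(\log n)$ bound to be uniform across the sequence.
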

\begin{proof}
For the first inequality, recall that $\ed(\paren(\F), \paren(\G)) = \ed(\str{\F_{\hld}}, \str{\G_{\hld}})$ by \cref{def:parenF}.  
Every tree alignment can be interpreted as a string alignment on the underlying parenthesis representations (by \cref{def:ta}), with each node edit corresponding to two character edits.  

For the second inequality, we show that $\ted(\F_{\hld}, \G_{\hld}) \le c (1+\log n) \cdot \ted(\F, \G)$ for a sufficiently large constant $c$.  
We proceed by induction on $\ted(\F, \G)$.

If $\ted(\F, \G) = 0$, then $\F_{\hld}=\G_{\hld}$ and the inequality holds trivially.  
Otherwise, let $\Hcal$ be an intermediate forest such that $\ted(\F, \Hcal) = 1$ and $\ted(\Hcal, \G) = \ted(\F, \G) - 1$.  
By the inductive hypothesis, $\ted(\Hcal_\hld, \G_\hld) \le c (1+\log n) \cdot \ted(\Hcal, \G)$.  

The edit from $\F$ to $\Hcal$ is a single node operation.  
If it is a relabeling, then \cref{obs:node_relabel} gives $\ted(\F_\hld, \Hcal_\hld) \le \Oh(\log n)$.  
If it is a deletion, then \cref{obs:node_delete} gives $\ted(\F_\hld, \Hcal_\hld) \le \Oh(\log n)$.  
If it is an insertion, the argument is symmetric to deletion, giving $\ted(\F_\hld, \Hcal_\hld) = \ted(\Hcal_\hld, \F_\hld) \le \Oh(\log n)$.  
In all cases, $\ted(\F_\hld, \Hcal_\hld) \le c (1+\log n)$ holds for a sufficiently large $c$.

By the triangle inequality,
\begin{multline*}
\ted(\F_\hld, \G_\hld) \le \ted(\F_\hld, \Hcal_\hld) + \ted(\Hcal_\hld, \G_\hld) 
\le c(1+\log n) + c(1+\log n) \cdot \ted(\Hcal, \G) \\ 
= c(1+\log n) \cdot \bigl(1 + \ted(\Hcal, \G)\bigr)
= c(1+\log n) \cdot \ted(\F, \G).
\end{multline*}
This completes the inductive argument.
\end{proof}

\subsection{String to Tree Alignment}
\label{subsec:approx proof}
Given forests $\F$ and $\G$ with their modified parenthesis representations $\paren(\F)$ and $\paren(\G)$, the main challenge in using a string alignment on these representations to approximate $\ted(\F, \G)$ is that an optimal string alignment may not respect the tree alignment constraints from Definition~\ref{def:ta}.

For an alignment $\A : \paren(\F) \onto \paren(\G)$, we say that a node $u$ in $\F$ (or $\G$) is \emph{tree-aligned} if both $o(u)$ and $c(u)$ are aligned under $\A$ to $o(v)$ and $c(v)$, respectively, for some node $v \in \G$. Otherwise, $u$ is \emph{misaligned}, meaning that it violates the tree alignment constraints.  

We distinguish three types of misaligned nodes. A node $u \in \F$ is \emph{partially deleted} if only one of $o(u)$ or $c(u)$ is aligned under $\A$, while the other twin is deleted. A node is a \emph{single-branch} misaligned node if $o(u) \sim_\A o(v_1)$ and $c(u) \sim_\A c(v_2)$ for distinct nodes $v_1, v_2 \in \G$ that lie on the same branch, i.e., $o(v_1) < o(v_2) < c(v_2) < c(v_1)$ or $o(v_2) < o(v_1) < c(v_1) < c(v_2)$. Finally, $u$ is a \emph{multi-branch} misaligned node if $o(u) \sim_\A o(v_1)$ and $c(u) \sim_\A c(v_2)$ for $v_1, v_2 \in \G$ that are on different branches, i.e., $o(v_1) < c(v_1) < o(v_2) < c(v_2)$. See Figure~\ref{fig:misaligned} for examples of each type.

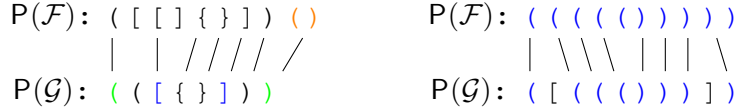
\begin{figure}
    \centering
\begin{tikzpicture}{\fontfamily{pcr}\selectfont 
    \node (F1) at (0, 0) {(};
    \node (F2) at (.3, 0) {[};
    \node (F3) at (.6, 0) {[};
    \node (F4) at (.9, 0) {]};
    \node (F5) at (1.2, 0) {\{};
    \node (F6) at (1.5, 0) {\}};
    \node (F7) at (1.8, 0) {]};
    \node (F8) at (2.1, 0) {)};
    \node (F9) at (2.4, 0) {\textcolor{orange}{(}};
    \node (F10) at (2.7, 0) {\textcolor{orange}{)}};
    \node (G1) at (0, -1) {\textcolor{green}{(}};
    \node (G2) at (.3, -1) {(};
    \node (G3) at (.6, -1) {\textcolor{blue}{[}};
    \node (G4) at (.9, -1) {\{};
    \node (G5) at (1.2, -1) {\}};
    \node (G6) at (1.5, -1) {\textcolor{blue}{]}};
    \node (G7) at (1.8, -1) {)};
    \node (G8) at (2.1, -1) {\textcolor{green}{)}};
    \node (F) at (-.8, 0) {$\str{\F}$:};
    \node (G) at (-.8, -1) {$\str{\G}$:};
    }
    \draw (F1) -- (G1);
    \draw (F3) -- (G3);
    \draw (F5) -- (G4);
    \draw (F6) -- (G5);
    \draw (F7) -- (G6);
    \draw (F8) -- (G7);
    \draw (F10) -- (G8);
\end{tikzpicture}\hspace{1cm}
\begin{tikzpicture}{[text color=red]\fontfamily{pcr}\selectfont\textcolor{blue}{
    \node (F1) at (0, 0) {(};
    \node (F2) at (.3, 0) {(};
    \node (F3) at (.6, 0) {(};
    \node (F4) at (.9, 0) {(};
    \node (F5) at (1.2, 0) {(};
    \node (F6) at (1.5, 0) {)};
    \node (F7) at (1.8, 0) {)};
    \node (F8) at (2.1, 0) {)};
    \node (F9) at (2.4, 0) {)};
    \node (F10) at (2.7, 0) {)};
    \node (G1) at (0, -1) {(};
    \node (G1b) at (.3, -1) {\textcolor{black}{[}};
    \node (G2) at (.6, -1) {(};
    \node (G3) at (.9, -1) {(};
    \node (G4) at (1.2, -1) {(};
    \node (G5) at (1.5, -1) {)};
    \node (G6) at (1.8, -1) {)};
    \node (G7) at (2.1, -1) {)};
    \node (G7b) at (2.4, -1) {\textcolor{black}{]}};
    \node (G8) at (2.7, -1) {)};}
    \node (F) at (-.8, 0) {$\str{\F}$:};
    \node (G) at (-.8, -1) {$\str{\G}$:};
    }
    \draw (F1) -- (G1);
    \draw (F2) -- (G2);
    \draw (F3) -- (G3);
    \draw (F4) -- (G4);
    \draw (F6) -- (G5);
    \draw (F7) -- (G6);
    \draw (F8) -- (G7);
    \draw (F9) -- (G8);
\end{tikzpicture}
    \caption{Example alignments between two parenthesis representations of forests $\F$ and $\G$. Parentheses with lines connecting them are aligned, and parentheses with no lines connecting them to another are deleted. (left) The orange pair of parentheses correspond to a partially deleted node. The blue pair of parentheses correspond to a single-branch misaligned node. The green pair of parentheses correspond to a multi-branch misaligned node.
    (right) The shown alignment contains a chain of nodes in $\F$ and $\G$ corresponding to the blue parentheses.}
    \label{fig:misaligned}
\end{figure}

A common structure in string alignments that violates the tree alignment constraints is a sequence of single-branch misaligned nodes where each node has one of its parentheses aligned to the previous node in the sequence and the other parenthesis aligned to the next node in the sequence. We formalize this notion as follows.

\begin{definition}[Chains]
\label{def:chain}
Let $\A: \paren(\F) \to \paren(\G)$ be an alignment. A \emph{chain} $C$ is a maximal sequence of at least three nodes alternating between $\F$ and $\G$, such that for each $i \in [0 \dd |C|-3]$, the node $C[i]$ is a proper ancestor of $C[i+2]$, and one of the following two conditions holds:
\begin{enumerate}
    \item For every $i \in [0 \dd |C|)$ with $C[i] \in \F$, we have $o(C[i]) \simeq_\A o(C[i+1])$ if $i < |C|-1$ and $c(C[i]) \simeq_\A c(C[i-1])$ if $i > 0$; in this case, $C$ is called an \emph{opening chain}.
    \item For every $i \in [0 \dd |C|)$ with $C[i] \in \F$, we have $c(C[i]) \simeq_\A c(C[i+1])$ if $i < |C|-1$ and $o(C[i]) \simeq_\A o(C[i-1])$ if $i > 0$; in this case, $C$ is called a \emph{closing chain}.
\end{enumerate}
\end{definition}

Note that in the above definition, we require that the aligned parentheses in each case are not only aligned under $\A$ but also match exactly, i.e., their labels are equal. Figure~\ref{fig:misaligned} illustrates a simple example of a chain. As a consequence of requiring exact matches, an important property of a chain is that the light subtree of every node in the chain is identical.

\begin{observation}
\label{obs:chain-rep}
Consider a chain $C$ with respect to an alignment $\A: \paren(\F) \to \paren(\G)$. 
All nodes in $C$ have identical heavy depth and light subtree. In particular, for any two nodes $C[i]$ and $C[j]$ in $C$:
\begin{enumerate*}[label=(\roman*)]
    \item if $C[i], C[j] \in \F$, then $\F'(C[i]) = \F'(C[j])$,
    \item if $C[i], C[j] \in \G$, then $\G'(C[i]) = \G'(C[j])$,
    \item if $C[i] \in \F$ and $C[j] \in \G$, then $\F'(C[i]) = \G'(C[j])$.
\end{enumerate*}
\end{observation}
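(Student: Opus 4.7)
The plan is to leverage the fact that in the modified parenthesis representation $\paren(\cdot)$ of \cref{def:parenF}, each opening and closing parenthesis corresponding to a node $v$ carries the full modified label $\lambda_{\hld}(v)=(\F'(v),\heavy{\F(v)})$ as its type. Consequently, whenever two parentheses are matched by $\A$ (i.e., linked via $\simeq_\A$, which demands equal characters), the modified labels of the two underlying nodes must coincide. \cref{obs:chain-rep} then follows by a transitivity argument along the chain.

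Concretely, I would first show that for every consecutive pair $C[i]$ and $C[i+1]$ in the chain, their modified labels are equal. By \cref{def:chain}, the two nodes lie on opposite sides (one in $\F$, the other in $\G$), and depending on whether $C$ is an opening or a closing chain, either the opening parentheses $o(C[i]),o(C[i+1])$ or the closing parentheses $c(C[i]),c(C[i+1])$ are matched under $\A$. Since the subscript of each such parenthesis encodes the entire modified label of the corresponding node, the equality of characters forced by $\simeq_\A$ directly gives $\lambda_{\hld}(C[i])=\lambda_{\hld}(C[i+1])$ in either case.

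Next, I would extend this to arbitrary pairs of nodes in $C$ by a straightforward transitivity argument along the sequence, concluding that all nodes in $C$ share a single modified label. Reading off the two coordinates of this common label yields simultaneously that all nodes have the same heavy depth and the same light subtree (interpreted within the appropriate forest), which establishes all three cases (i)--(iii) of the observation at once.

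I do not expect serious obstacles: the statement is essentially a bookkeeping consequence of \cref{def:parenF} and \cref{def:chain}. The only minor subtlety is to confirm that the endpoints $C[0]$ and $C[|C|-1]$ are covered by the inductive chain of equalities, but each endpoint has at least one matched parenthesis tying it to its unique neighbor in $C$, so it inherits the common label in the same way as the interior nodes.
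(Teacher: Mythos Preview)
Your proposal is correct and follows essentially the same argument as the paper: show $\lambda_{\hld}(C[i])=\lambda_{\hld}(C[i+1])$ for each consecutive pair using the $\simeq_\A$ match guaranteed by \cref{def:chain}, then extend by transitivity and read off the light subtree and heavy depth from the common label. The only minor imprecision is that which pair of parentheses is matched between $C[i]$ and $C[i+1]$ depends not just on whether $C$ is opening or closing but also on which of the two nodes lies in $\F$; the paper handles this with a small case split, and you would presumably do the same.
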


\begin{proof}
We first show that $\lambda_{\hld}(C[i]) = \lambda_{\hld}(C[i+1])$ for each $i \in [0 \dd |C|-2]$.  
Since the chain alternates between nodes of $\F$ and $\G$, either $C[i] \in \F$ or $C[i+1] \in \F$.  
If $C[i] \in \F$, then by the definition of a chain we have $o(C[i]) \simeq_\A o(C[i+1])$ (for an opening chain) or $c(C[i]) \simeq_\A c(C[i+1])$ (for a closing chain). In either case, this implies $\lambda_{\hld}(C[i]) = \lambda_{\hld}(C[i+1])$.  
If $C[i+1] \in \F$, then similarly, $c(C[i+1]) \simeq_\A c(C[i])$ (opening chain) or $o(C[i+1]) \simeq_\A o(C[i])$ (closing chain), which also implies $\lambda_{\hld}(C[i]) = \lambda_{\hld}(C[i+1])$.  
By induction along the chain, this equality extends to any pair of nodes $C[i]$ and $C[j]$ in $C$.  
Finally, by Definition~\ref{def:parenF}, equality of modified labels implies that the light subtrees of $C[i]$ and $C[j]$ are identical.
\end{proof}

A simple consequence of \cref{obs:chain-rep} is that every chain includes nodes from a single heavy path in $\F$ and a single heavy path in $\G$.

\begin{observation}
\label{obs:chain-heavy}
Consider a chain $C$ with respect to an alignment $\A: \paren(\F) \to \paren(\G)$. 
The nodes of $C$ belong to exactly two heavy paths (one in $\F$ and one in $\G$), and each node in $C$ has a heavy child.
\end{observation}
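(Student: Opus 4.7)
The plan is to exploit the common heavy depth of all nodes in the chain, which we get for free from Observation~\ref{obs:chain-rep}: every node of $C$ carries the same modified label, and in particular the same heavy depth $d$.

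For every index $i \in [0 \dd |C|-3]$, the nodes $C[i]$ and $C[i+2]$ lie in the same forest and, by Definition~\ref{def:chain}, $C[i]$ is a proper ancestor of $C[i+2]$. Since heavy depth is non-increasing along any root-to-leaf path (as shown in the proof of Observation~\ref{obs:heavy-light}) and both endpoints have heavy depth $d$, every node on the path from $C[i]$ down to $C[i+2]$ also has heavy depth $d$. Consequently, the child of $C[i]$ on this path has the same heavy depth as $C[i]$ itself and is therefore heavy; in particular, $C[i]$ has a heavy child, and every node strictly below $C[i]$ on the path (including $C[i+2]$) is heavy as well. This shows that $C[i]$ and $C[i+2]$ sit on the same heavy path, and by induction all $\F$-nodes of $C$ lie on a single heavy path of $\F$. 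The argument for $\G$ is symmetric.

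To extend the ``has a heavy child'' property to the at most two trailing nodes of $C$ (one per forest) that are not covered by the previous paragraph, I invoke Observation~\ref{obs:chain-rep} once more: all nodes of $C$ share an identical light subtree. Since $|C| \ge 3$, the previous paragraph already exhibits at least one node of $C$ with a heavy child, so by Definition~\ref{def:parenF} its light subtree contains a node labeled $\#$. Assuming, as is standard, that $\#$ does not occur among the original labels of $\F$ or $\G$, the $\#$ marker in the common light subtree forces every other node of $C$ to also have a heavy child. Finally, since $|C| \ge 3$ and the chain alternates between $\F$ and $\G$, both forests contribute at least one node to $C$, so $C$ meets exactly two heavy paths, one in each forest. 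The main subtlety is precisely the treatment of the trailing nodes of $C$, for which no $C[i+2]$ in the same forest exists; the ``identical light subtree'' consequence of Observation~\ref{obs:chain-rep} is what closes this gap.
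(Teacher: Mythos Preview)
Your proof is correct and follows essentially the same approach as the paper's: use Observation~\ref{obs:chain-rep} to get a common heavy depth, combine it with the ancestor relation $C[i]\prec C[i+2]$ to force all intermediate edges to be heavy (hence same heavy path), and then close the gap for the last one or two chain nodes via the shared light subtree containing the $\#$ sentinel. The paper's argument is structured identically, only slightly terser.
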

\begin{proof}
By \cref{def:chain}, for each $i\in [0\dd |C|-3]$, the node $C[i]$ is an ancestor of $C[i+2]$.  
By \cref{obs:chain-rep}, these nodes share the same modified label and hence have equal heavy depths.  
Therefore, all nodes on the path from $C[i]$ to $C[i+2]$ have the same heavy depth, which implies that all intermediate edges are heavy and that $C[i]$ and $C[i+2]$ lie on the same heavy path.  

Since the nodes of $C$ alternate between $\F$ and $\G$, a simple induction along the chain shows that the nodes from $\F$ lie on a single heavy path and the nodes from $\G$ lie on a single heavy path.  

In particular, because $|C|\ge 3$, the path from $C[0]$ to $C[2]$ consists of heavy edges.  
Thus, $C[0]$ has a heavy child, which is represented by a sentinel node with label $\#$ in the light subtree of $C[0]$.  
The light subtrees of all nodes in $C$ are identical, so every node in $C$ must have a heavy child.
\end{proof}

Misaligned nodes can create a substantial gap between the string edit distance and the tree edit distance of $\paren(\F)$ and $\paren(\G)$. Chains introduce a certain flexibility in string alignments that can reduce the number of edits required, but this often comes at the cost of violating tree alignment constraints. Figure~\ref{fig:chaingap} illustrates a pair of forests whose parenthesis representations have string edit distance $\Oh(1)$ but tree edit distance $\Oh(\sqrt{n})$. By identifying chains and correcting the misaligned nodes within them, we can transform any string alignment $\A: \paren(\F) \to \paren(\G)$ into a valid tree alignment, incurring at most an additional multiplicative factor of $\Oh(\sqrt{n})$ in the number of edits.
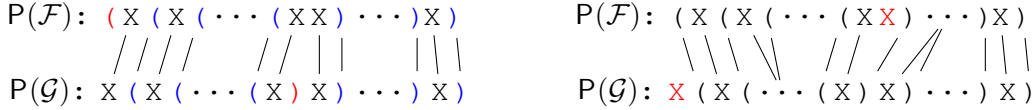
\begin{figure}
    \centering
\begin{tikzpicture}
{\fontfamily{pcr}\selectfont 
    \node(F1) at (0, 0) {\textcolor{red}{(}};
    \node (F2) at (.3, 0) {X};
    \node (F3) at (.6, 0) {\textcolor{blue}{(}};
    \node (F4) at (.9, 0) {X};
    \node (F5) at (1.2, 0) {\textcolor{blue}{(}};
    \node (F6) at (1.7, 0) {...};
    \node (F7) at (2.2, 0) {\textcolor{blue}{(}};
    \node (F8) at (2.5, 0) {X};
    \node (F9) at (2.8, 0) {X};
    \node (F10) at (3.1, 0) {\textcolor{blue}{)}};
    \node (F11) at (3.6, 0) {...};
    \node (F12) at (4.1, 0) {\textcolor{blue}{)}};
    \node (F13) at (4.3, 0) {X};
    \node (F14) at (4.6, 0) {\textcolor{blue}{)}};
    \node (G1) at (0, -1) {X};
    \node (G2) at (.3, -1) {\textcolor{blue}{(}};
    \node (G3) at (.6, -1) {X};
    \node (G4) at (.9, -1) {\textcolor{blue}{(}};
    \node (G5) at (1.4, -1) {...};
    \node (G6) at (1.9, -1) {\textcolor{blue}{(}};
    \node (G7) at (2.2, -1) {X};
    \node (G8) at (2.5, -1) {\textcolor{red}{)}};
    \node (G9) at (2.8, -1) {X};
    \node (G10) at (3.1, -1) {\textcolor{blue}{)}};
    \node (G11) at (3.6, -1) {...};
    \node (G12) at (4.1, -1) {\textcolor{blue}{)}};
    \node (G13) at (4.4, -1) {X};
    \node (G14) at (4.7, -1) {\textcolor{blue}{)}};
    \node (F) at (-.8, 0) {$\str{\F}$:};
    \node (G) at (-.8, -1) {$\str{\G}$:};
    }
    \draw (F2) -- (G1);
    \draw (F3) -- (G2);
    \draw (F4) -- (G3);
    \draw (F5) -- (G4);
    \draw (F7) -- (G6);
    \draw (F8) -- (G7);
    \draw (F9) -- (G9);
    \draw (F10) -- (G10);
    \draw (F12) -- (G12);
    \draw (F13) -- (G13);
    \draw (F14) -- (G14);
\end{tikzpicture}\hspace{1cm}
\begin{tikzpicture}
{\fontfamily{pcr}\selectfont 
    \node(F1) at (0, 0) {(};
    \node (F2) at (.3, 0) {X};
    \node (F3) at (.6, 0) {(};
    \node (F4) at (.9, 0) {X};
    \node (F5) at (1.2, 0) {(};
    \node (F6) at (1.7, 0) {...};
    \node (F7) at (2.2, 0) {(};
    \node (F8) at (2.5, 0) {X};
    \node (F9) at (2.8, 0) {\textcolor{red}{X}};
    \node (F10) at (3.1, 0) {)};
    \node (F11) at (3.6, 0) {...};
    \node (F12) at (4.1, 0) {)};
    \node (F13) at (4.3, 0) {X};
    \node (F14) at (4.6, 0) {)};
    \node (G1) at (0, -1) {\textcolor{red}{X}};
    \node (G2) at (.3, -1) {(};
    \node (G3) at (.6, -1) {X};
    \node (G4) at (.9, -1) {(};
    \node (G5) at (1.4, -1) {...};
    \node (G6) at (1.9, -1) {(};
    \node (G7) at (2.2, -1) {X};
    \node (G8) at (2.5, -1) {)};
    \node (G9) at (2.8, -1) {X};
    \node (G10) at (3.1, -1) {)};
    \node (G11) at (3.6, -1) {...};
    \node (G12) at (4.1, -1) {)};
    \node (G13) at (4.4, -1) {X};
    \node (G14) at (4.7, -1) {)};
    \node (F) at (-.8, 0) {$\str{\F}$:};
    \node (G) at (-.8, -1) {$\str{\G}$:};
    }
    \draw (F1) -- (G2);
    \draw (F2) -- (G3);
    \draw (F3) -- (G4);
    \draw (F4) -- (G5);
    \draw (F5) -- (G5);
    \draw (F7) -- (G6);
    \draw (F8) -- (G7);
    \draw (F10) -- (G8);
    \draw (F11) -- (G9);
    \draw (F11) -- (G10);
    \draw (F12) -- (G12);
    \draw (F13) -- (G13);
    \draw (F14) -- (G14);
\end{tikzpicture}
    \caption{Pictured above are a string alignment (left) and a tree alignment (right) for the parentheses representations of a forest $\F$ and $\G$. Parentheses with lines connecting them are aligned, and parentheses with no lines connecting them to another are deleted. $X$ represents a substring of length $\sqrt{n}/4$.  Using a chain, an optimal string alignment only requires two deletions of parentheses highlighted in red. On the other hand, an optimal tree alignment must delete two copies of substring $X$ to satisfy the constraints of Definition~\ref{def:ta}. The difference in the optimal string edit distance and tree edit distance is $\Oh(|X| - 2) = \Oh(\sqrt{n})$.}
    \label{fig:chaingap}
\end{figure}

\subsection{String to Tree Alignment Construction}\label{subsec:constr}
\newcommand{\M}{\mathcal{M}_{\A}}
In this section, we describe how to transform a string alignment $\A: \paren(\F) \to \paren(\G)$ into a valid tree alignment, at the cost of increasing the number of edits by a factor of $\Oh(\sqrt{n})$.  
As a first step, we replace all substitutions in $\A$ with deletions, which can at most double the cost of $\A$. This converts $\A$ into a deletion-only alignment, which we interpret via the underlying \emph{monotone matching}  
\[
\M = \{(x, y) \in [0\dd |\paren(\F)|) \times [0\dd |\paren(\G)|) : \paren(\F) \simeq_\A \paren(\G)\}.
\]

\subsubsection{Tree-Aligning a Misaligned Node}  
We begin by describing how to fix a single misaligned node $u$ in $\F$. 
In most cases, we simply remove from $\M$ all pairs containing $o(u)$ or $c(u)$; as a result, $\A$ is modified to delete $u$.
However, if $u$ belongs to a chain that continues with a node $v \in \G$, we sometimes match $u$ with $v$, and more generally, match the light subtrees $\F'(u)$ and $\G'(v)$. 
This requires adding $(o(u'), o(v'))$ and $(c(u'), c(v'))$ to $\M$ for every pair of corresponding nodes $u' \in \F'(u)$ and $v' \in \G'(v)$ (except for the special $\#$-labeled node). 
To maintain $\M$ as a monotone matching, we also remove existing pairs that conflict with newly added pairs.

We now formalize the two cases. For two nodes $v, v'$ in a forest $\F$, we define $w(v, v')\coloneqq |o(v) - o(v')| + |c(v) - c(v')|$. 
Note that $w(v, v') = 2|\F(v) \setminus \F(v')|$ holds if $v$ is an ancestor of $v'$.  

We refer to the following subroutine as $\mathsf{TreeAlign}(\A, u)$, which is defined for a single node $u$ and applied iteratively, as discussed in Subsection~\ref{subsubsec:allheavy}.

\medskip
\noindent$\mathsf{TreeAlign}(\A, u)$:
\begin{enumerate}
    \item If there exists a chain $C$ such that
    \begin{itemize}
        \item $C[k] = u$ for some $k \in [0 \dd |C| - 5]$ and
        \item $w(C[k], C[k+4]) < \sqrt{n}$,
    \end{itemize}
    then modify $\A$ to match the light subtree of $u$ with the light subtree of $v\coloneqq C[k+1]$.

    \item Otherwise, modify $\A$ to delete node $u$.
    
\end{enumerate}

In the above routine, we use two short-hand phrases ''match the light subtree of $u$ with the light subtree of $C[k+1]$'' and ``delete node $u$'' when discussing modifications to $\A$. We now rigorously define these two steps. Recall an alignment matches parentheses of strings $X$ and $y$ at indices $i, j$, respectively, when it contains pairs $(i, j), (i + 1, j + 1)$ and deletes a parenthesis at index $i$ in $X$ when it contains pairs $(i, j), (i+1, j)$ with a symmetric case for deletion in $Y$.

\textbf{Delete node $u$:} To ``\emph{delete a node $u$}'' for a node $u \in \F$, any matches of $o(u)$ and $c(u)$ are replaced with deletions in alignment $\A$. For example, if $o(u) \simeq_\A o(v)$ for $u \in \F, v \in G$ in a deletion-only alignment $\A: \paren(\F) \rightarrow \paren(\G)$, then we replace pairs $(o(u), o(v)), (o(u) + 1, o(v) + 1)$ with pairs $(o(u), o(v)), (o(u) + 1, o(v)), (o(u) + 1, o(v) + 1)$.  Note that this deletes both $o(u)$ and $o(v)$.  If instead $o(u)$ is not matched, we make no changes to $\A$. We do the analogous modifications of $\A$ for $c(u)$. 

\textbf{Match light subtrees of $u$ and $v$:} To ``\emph{match the light subtree of $u$ with the light subtree of $v$}'' for nodes $u \in \F, v \in G$ such that $u$ and $v$ are consecutive nodes in a chain $C$ of $\A$, $\A$ is modified to match all parentheses of $\F'(u)$ to those in $\G'(v)$ and delete any parentheses of $\F$ or $\G$ that no longer have a valid match in $\A$ according to the alignment definition, Definition~\ref{def:alignment}. This modification is much more involved than deleting a node. By \cref{obs:chain-heavy}, the node $u=C[k]$ an $v=C[k+1]$ have heavy children, which we denote $u'$ and $v'$, respectively. Every chain must either be an opening chain or a closing chain such that either $o(u) \simeq_\A o(v)$ or $c(u) \simeq_\A c(v)$, respectively.  Without loss of generality, we assume that $c(u) \simeq_\A c(v)$, and so pairs $(c(u), c(v)), (c(u) + 1, c(v) + 1)$ are in $\A$. 

We begin our modifications of $\A$ to match $\paren(\F)(c(u')\dd c(u)]$ to $\paren(\G)(c(v')\dd c(v)]$. Figure~\ref{fig:treealign-closing} provides a depiction of the changes we will make to $\A$ for these substrings, which we describe in detail as follows. In order to maintain alignment constraints as per Definition~\ref{def:alignment}, before we can add anything, we first remove any previous pairs of $\A$ containing an index corresponding to a parenthesis in $\paren(\F)(c(u')\dd c(u)]$ or $\paren(\G)(c(v')\dd c(v)]$. Let $(i_\F, i_\G) \in A$ be the rightmost pair in $\A$ such that $i_\F \leq c(u')$ and $i_\G \leq c(v')$. By the definition of alignments, it must be the case that at least one of $i_\F = c(u')$ or $i_\G = c(v')$ is true; without loss of generality, let $i_\F = c(u')$. We remove all pairs of $\A$ between pairs $(c(u'), i_\G)$ and $(c(u) + 1, c(v) + 1)$, exclusive. We insert a matching sequence $(c(u'), c(v')), (c(u') + 1, c(v') + 1), (c(u') + 2, c(v') + 2), \ldots, (c(u), c(v))$ into $\A$ at the position before pair $(c(u) + 1, c(v) + 1)$. We then insert a deletion sequence $(c(u'), i_\G + 1), (c(u'), i_\G + 2), \ldots, (c(u'), c(v') - 1)$ into $\A$ at the position after pair $(c(u'), i_\G)$. This completes the matching of $\paren(\F)(c(u')\dd c(u)]$ to $\paren(\G)(c(v')\dd c(v)]$ and deletion of any additional parentheses needed to maintain alignment constraints.

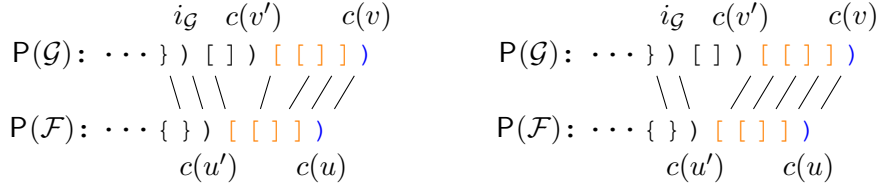
\begin{figure}
    \centering
\begin{tikzpicture}{\fontfamily{pcr}\selectfont 
    \node (F1) at (0, 0) {...};
    \node (F2) at (.5, 0) {\}};
    \node (F3) at (.8, 0) {)};
    \node (F4) at (1.1, 0) {[};
    \node (F5) at (1.4, 0) {]};
    \node (F6) at (1.7, 0) {)};
    \node (F7) at (2, 0) {\textcolor{orange}{[}};
    \node (F8) at (2.3, 0) {\textcolor{orange}{[}};    
    \node (F9) at (2.6, 0) {\textcolor{orange}{]}};
    \node (F10) at (2.9, 0) {\textcolor{orange}{]}};
    \node (F11) at (3.2, 0) {\textcolor{blue}{)}};
    \node (G1) at (0, -1) {...};
    \node (G2) at (.5, -1) {\{};
    \node (G3) at (.8, -1) {\}};
    \node (G4) at (1.1, -1) {)};
    \node (G5) at (1.4, -1) {\textcolor{orange}{[}};
    \node (G6) at (1.7, -1) {\textcolor{orange}{[}};
    \node (G7) at (2, -1) {\textcolor{orange}{]}};
    \node (G8) at (2.3, -1) {\textcolor{orange}{]}};
    \node (G9) at (2.6, -1) {\textcolor{blue}{)}};
    \node (F) at (-1, 0) {$\str{\G}$:};
    \node (G) at (-1, -1) {$\str{\F}$:};
    }
    \node (u) at (2.6, -1.5) {$c(u)$};
    \node (u') at (1.1, -1.5) {$c(u')$};
    \node (v') at (1.7, .5) {$c(v')$};
    \node (v) at (3.2, .5) {$c(v)$};
    \node (iG) at (.8, .5) {$i_\G$};
    \draw (F2) -- (G3);
    \draw (F3) -- (G4);
    \draw (F4) -- (G5);
    \draw (F7) -- (G6);
    \draw (F9) -- (G7);
    \draw (F10) -- (G8);
    \draw (F11) -- (G9);
\end{tikzpicture}\hspace{1cm}
\begin{tikzpicture}{\fontfamily{pcr}\selectfont 
    \node (F1) at (0, 0) {...};
    \node (F2) at (.5, 0) {\}};
    \node (F3) at (.8, 0) {)};
    \node (F4) at (1.1, 0) {[};
    \node (F5) at (1.4, 0) {]};
    \node (F6) at (1.7, 0) {)};
    \node (F7) at (2, 0) {\textcolor{orange}{[}};
    \node (F8) at (2.3, 0) {\textcolor{orange}{[}};    
    \node (F9) at (2.6, 0) {\textcolor{orange}{]}};
    \node (F10) at (2.9, 0) {\textcolor{orange}{]}};
    \node (F11) at (3.2, 0) {\textcolor{blue}{)}};
    \node (G1) at (0, -1) {...};
    \node (G2) at (.5, -1) {\{};
    \node (G3) at (.8, -1) {\}};
    \node (G4) at (1.1, -1) {)};
    \node (G5) at (1.4, -1) {\textcolor{orange}{[}};
    \node (G6) at (1.7, -1) {\textcolor{orange}{[}};
    \node (G7) at (2, -1) {\textcolor{orange}{]}};
    \node (G8) at (2.3, -1) {\textcolor{orange}{]}};
    \node (G9) at (2.6, -1) {\textcolor{blue}{)}};
    \node (F) at (-1, 0) {$\str{\G}$:};
    \node (G) at (-1, -1) {$\str{\F}$:};
    }
    \node (u) at (2.6, -1.5) {$c(u)$};
    \node (u') at (1.1, -1.5) {$c(u')$};
    \node (v') at (1.7, .5) {$c(v')$};
    \node (v) at (3.2, .5) {$c(v)$};
    \node (iG) at (.8, .5) {$i_\G$};
    \draw (F2) -- (G3);
    \draw (F3) -- (G4);
    \draw (F7) -- (G5);
    \draw (F8) -- (G6);
    \draw (F9) -- (G7);
    \draw (F10) -- (G8);
    \draw (F11) -- (G9);
\end{tikzpicture}
    \caption{Example of an alignment $\A$ (left) and the result of $\mathsf{TreeAlign}(\A, u)$ (right) for a node $u$ in Case 1 of the $\mathsf{TreeAlign}$ routine. Parentheses with lines connecting them are aligned, and parentheses with no lines connecting them to another are deleted. For a chain $C$, $u = C[k]$ corresponds to the rightmost (blue) parenthesis of $\paren(\F)$ and $v = C[k+1]$ corresponds to the rightmost (blue) parenthesis in $\paren(\G)$. After $\mathsf{TreeAlign}(\A, u)$, $o(u) \simeq_\A o(v)$ and the rest of the light subtrees of $u$ and $v$, depicted by the orange parentheses, are matched as well. The parentheses previously aligned with the light subtree of $u$, which correspond to range $[i_\G, c(v))$ are now deleted.}
    \label{fig:treealign-closing}
\end{figure}

We now finish matching the light subtrees of $u$ and $v$ by modifying $\A$ to match $\paren(\F)[o(u)\dd o(u'))$ and $\paren(\G)[o(v)\dd o(v'))$. See Figure~\ref{fig:treealign} for a depiction of these modifications to $\A$ that we describe formally as follows. Let $(o(u), i_\G)$ be the rightmost pair in $\A$ with left index $o(u)$ and let $(i_\F, o(v'))$ be the leftmost pair in $\A$ with right index $o(v')$. Observe that $o(v) \simeq_\A o(C[k+2])$ with $o(u) < o(C[k+2])$ by our earlier assumption of $C$.  Since $o(u) < o(C[k+2])$, $o(v') > o(v)$ and $(o(C[k+2], o(v)) \in \A$, by the monotonicity of alignments, $i_G \leq o(v)$ and $i_\F \geq o(u')$.  We remove all pairs of $\A$ between $(o(u), i_\G)$ and $(i_\F, o(v'))$, exclusive. We insert deletion sequence $(o(u), i_\G + 1),(o(u), i_\G + 2), \ldots, (o(u), o(v))$ into $\A$ at the position after pair $(o(u), i_\G)$. Next, we insert matching sequence $(o(u) + 1, o(v) + 1), (o(u) + 2, o(v) + 2), \ldots, (o(u'), o(v'))$ into $\A$  at the position after pair $(o(u), o(v))$. Finally, we insert deletion sequence $(o(u') + 1, o(v')), (o(u') + 2, o(v')), \ldots, (i_\F - 1, o(v'))$ into $\A$ at the position after pair $(o(u'), o(v'))$. The above steps modify $\A$ to match substrings $\paren(\F)[o(u)\dd  o(u'))$ to $\paren(\G)[o(v)\dd o(v'))$ while maintaining alignment constraints.
\begin{figure}
    \centering
\begin{tikzpicture}{\fontfamily{pcr}\selectfont 
    \node (F1) at (0, 0) {\textcolor{blue}{(}};
    \node (F2) at (.3, 0) {[};
    \node (F3) at (.6, 0) {[};
    \node (F4) at (.9, 0) {]};
    \node (F5) at (1.2, 0) {]};
    \node (F6) at (1.5, 0) {\textcolor{blue}{(}};
    \node (F7) at (1.8, 0) {\textcolor{orange}{[}};    
    \node (F8) at (2.1, 0) {\textcolor{orange}{]}};
    \node (F9) at (2.4, 0) {\{};
    \node (F10) at (2.9, 0) {...};
    \node (G1) at (0, -1) {\{};
    \node (G2) at (.3, -1) {\}};
    \node (G3) at (.6, -1) {\textcolor{blue}{(}};
    \node (G4) at (.9, -1) {\textcolor{orange}{[}};
    \node (G5) at (1.2, -1) {\textcolor{orange}{]}};
    \node (G6) at (1.5, -1) {(};
    \node (G7) at (1.8, -1) {\textcolor{blue}{(}};
    \node (G8) at (2.1, -1) {[};
    \node (G9) at (2.4, -1) {]};
    \node (G10) at (2.7, -1) {\{};
    \node (G11) at (3.2, -1) {...};
    \node (F) at (-.8, 0) {$\str{\G}$:};
    \node (G) at (-.8, -1) {$\str{\F}$:};
    }
    \node (u) at (.7, -1.5) {$o(u)$};
    \node (v) at (1.6, .5) {$o(v)$};
    \node (iG) at (-.1, .5) {$i_\G$};
    \node (iF) at (2.8, -1.5) {$i_\F$};
    \draw (F1) -- (G3);
    \draw (F3) -- (G4);
    \draw (F4) -- (G5);
    \draw (F6) -- (G7);
    \draw (F7) -- (G8);
    \draw (F8) -- (G9);
    \draw (F9) -- (G10);
\end{tikzpicture}\hspace{1cm}
\begin{tikzpicture}{\fontfamily{pcr}\selectfont 
    \node (F1) at (0, 0) {\textcolor{blue}{(}};
    \node (F2) at (.3, 0) {[};
    \node (F3) at (.6, 0) {[};
    \node (F4) at (.9, 0) {]};
    \node (F5) at (1.2, 0) {]};
    \node (F6) at (1.5, 0) {\textcolor{blue}{(}};
    \node (F7) at (1.8, 0) {\textcolor{orange}{[}};    
    \node (F8) at (2.1, 0) {\textcolor{orange}{]}};
    \node (F9) at (2.4, 0) {\{};
    \node (F10) at (2.9, 0) {...};
    \node (G1) at (0, -1) {\{};
    \node (G2) at (.3, -1) {\}};
    \node (G3) at (.6, -1) {\textcolor{blue}{(}};
    \node (G4) at (.9, -1) {\textcolor{orange}{[}};
    \node (G5) at (1.2, -1) {\textcolor{orange}{]}};
    \node (G6) at (1.5, -1) {(};
    \node (G7) at (1.8, -1) {\textcolor{blue}{(}};
    \node (G8) at (2.1, -1) {[};
    \node (G9) at (2.4, -1) {]};
    \node (G10) at (2.7, -1) {\{};
    \node (G11) at (3.2, -1) {...};
    \node (F) at (-.8, 0) {$\str{\G}$:};
    \node (G) at (-.8, -1) {$\str{\F}$:};
    }
    \node (u) at (.7, -1.5) {$o(u)$};
    \node (v) at (1.6, .5) {$o(v)$};
    \node (iG) at (-.1, .5) {$i_\G$};
    \node (iF) at (2.8, -1.5) {$i_\F$};
    \draw (F6) -- (G3);
    \draw (F7) -- (G4);
    \draw (F8) -- (G5);
    \draw (F9) -- (G10);
\end{tikzpicture}
    \caption{Example of an alignment $\A$ (left) and the result of $\mathsf{TreeAlign}(\A, u)$ (right) for a node $u$ in Case 1 of the $\mathsf{TreeAlign}$ routine. Parentheses with lines connecting them are aligned, and parentheses with no lines connecting them to another are deleted. For a chain $C$, $u = C[k]$ corresponds to the leftmost (blue) parenthesis of $\paren(\F)$ and $v = C[k+1]$ corresponds to the rightmost (blue) parenthesis in $\paren(\G)$. After $\mathsf{TreeAlign}(\A, u)$, $o(u) \simeq_\A o(v)$ and the rest of the light subtrees of $u$ and $v$, depicted by the orange parentheses, are matched as well. The parentheses previously aligned with the light subtrees of $u$ and $v$, which correspond to ranges $[o(u'), i_\F)$ and $[i_\G, o(v))$ are now deleted (where $u'$ is the heavy child of $u$, corresponding to the black opening parenthesis '\texttt{(}' in $\F$).}
    \label{fig:treealign}
\end{figure}
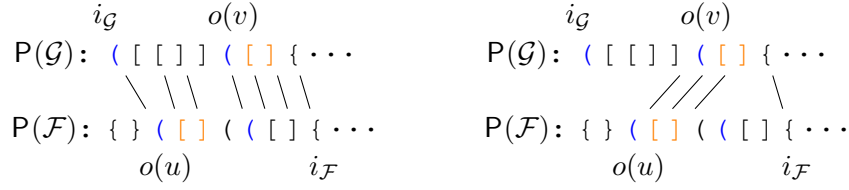

\subsubsection{Tree-Aligning All Nodes}
\label{subsubsec:allheavy}
We iteratively apply the $\mathsf{TreeAlign}$ subroutine to fix all misaligned nodes of $\A$ in $\F$.  
Let $\hvy = \{H_1, H_2, \ldots, H_\ell\}$ denote the set of all maximal heavy paths in $\F$, where each $H_i = (u^i_1, u^i_2, \ldots, u^i_{m_i})$ and the paths in $\hvy$ form a partition of the nodes of $\F$.  
By \cref{def:heavy-light}, any two nodes $u^i_{j_1}, u^i_{j_2} \in H_i$ share the same heavy depth.  
We order the heavy paths in $\hvy$ by increasing heavy depth.  
Starting with $H_1$, the path of smallest heavy depth, we process its nodes from top to bottom, invoking $\mathsf{TreeAlign}$ for each misaligned node.  
Formally, let $H_1 = (u^1_1, u^1_2, \ldots, u^1_{m_1})$, where $u^1_i$ is an ancestor of $u^1_j$ for all $i \le j$.  
For every $i$ from $1$ to $m_1$, we call $\mathsf{TreeAlign}(u^1_i, \A)$ if $u^1_i$ is misaligned.  
After processing $H_1$, we repeat the procedure for $H_2$, $H_3$, and so on, in order of increasing heavy depth, until all nodes have been processed.

\subsection{Tree Alignment Constraint Analysis}
\label{subsec:treeconstraints}

First, we show that for alignment $\A: \paren(\F) \onto \paren(\G)$ and node $u \in \F$, $\mathsf{TreeAlign}(\A, u)$ does in fact yield a sequence of pairs satisfying the alignment definition. 

\begin{lemma}
\label{lem:align-constraint}
    Given a deletion-only alignment $\A$ and a misaligned node $u \in \F$, sequence $\A$ after performing $\mathsf{TreeAlign}(\A, u)$ is an alignment. 
\end{lemma}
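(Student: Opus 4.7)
The plan is to directly verify the two defining conditions of \cref{def:alignment}: the endpoints $(0,0)$ and $(|\paren(\F)|,|\paren(\G)|)$ must be preserved, and every consecutive pair must differ by one of $(+1,0)$, $(0,+1)$, or $(+1,+1)$. Since $\mathsf{TreeAlign}(\A,u)$ only modifies the interior of $\A$, endpoint preservation is immediate, and the argument reduces to checking step validity at the seams introduced by each case of the routine.

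For Case~2 (``delete node $u$''), the modification is strictly local. If $o(u)\simeq_\A o(v)$, the consecutive entries $(o(u),o(v)),(o(u)+1,o(v)+1)$ are replaced by the three-pair sequence $(o(u),o(v)),(o(u)+1,o(v)),(o(u)+1,o(v)+1)$, introducing one $(+1,0)$ step followed by one $(0,+1)$ step; if $o(u)$ was already unmatched, nothing is inserted. The symmetric argument handles $c(u)$. Each such modification alters only two consecutive steps, and no pair outside is affected, so the result remains an alignment.

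For Case~1, I assume without loss of generality that $C$ is a closing chain (the opening case is symmetric) and verify the two block replacements separately. For the closing-side replacement, the pair $(c(u'),i_\G)$ lies immediately before the removed block and $(c(u)+1,c(v)+1)$ lies immediately after (the latter belongs to $\A$ because $c(u)\simeq_\A c(v)$). The inserted subsequence is a $(0,+1)$ block $(c(u'),i_\G+1),\ldots,(c(u'),c(v')-1)$ followed by a $(+1,+1)$ block $(c(u'),c(v')),(c(u')+1,c(v')+1),\ldots,(c(u),c(v))$. The step from $(c(u'),c(v')-1)$ to $(c(u'),c(v'))$ is $(0,+1)$, and the step from $(c(u),c(v))$ to $(c(u)+1,c(v)+1)$ is $(+1,+1)$; monotonicity holds since $i_\G\le c(v')-1$ and $c(u')\le c(u)$. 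The equality of the match-block lengths on the two sides, $c(u)-c(u')=c(v)-c(v')$, follows from $\F'(u)=\G'(v)$ via \cref{obs:chain-rep}. The opening-side replacement is verified analogously, using the bounds $i_\G\le o(v)$ and $i_\F\ge o(u')$; these follow from monotonicity of $\A$ together with the chain property $o(C[k+2])\simeq_\A o(v)$ (from \cref{def:chain} applied to the closing chain $C$) and the inequalities $o(u)<o(C[k+2])$ and $o(u')<o(v')$ supplied by the construction.

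The main obstacle I anticipate is the careful bookkeeping: in Case~1 each block replacement consists of three juxtaposed inserted subsequences, so step validity must be verified at several internal seams as well as at the two external boundaries with the unaltered portions of $\A$. No conceptually new ideas are required beyond these seam checks, the monotonicity inequalities that the construction already establishes, and the length equalities supplied by \cref{obs:chain-rep}.
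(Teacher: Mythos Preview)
Your proposal is correct and follows essentially the same approach as the paper's proof: both verify \cref{def:alignment} directly by noting that the endpoints are untouched, handling Case~2 via the local two-step replacement, and handling Case~1 by writing out the inserted block sequences and checking step validity at each seam. Your version is slightly more explicit in invoking \cref{obs:chain-rep} to justify the length equality $c(u)-c(u')=c(v)-c(v')$ needed for the match block, which the paper leaves implicit.
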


\begin{proof}
    By Definition~\ref{def:alignment}, we must make sure that $(0, 0), (|\paren(\F)|, |\paren(\G)| \in \A$ and $(x_{t_1}, y_{t+1}) \in {(x_t + 1, y_t + 1), (x_t + 1, y_t), (x_t, y_t + 1)}$ for all $t \in [0, m)$ where $\A = (x_t, y_t)_{t=0}^m$ after modifying $\A$. Note that in both cases of $\mathsf{TreeAlign}(\A, u)$, we never replace the first or last pair of $\A$, so we can focus on the third constraint as the first two are trivially true. It is easy to see that both cases of $\mathsf{TreeAlign}$ follow this constraint, but in the following, we show this explicitly for the sake of completion.
    If $u$ is handled in Case 2 of $\mathsf{TreeAlign}$, it is deleted. In this case pairs $(o(u), o(v)),(o(u) + 1, j+ 1)$ are replaced with $(o(u), o(v)),(o(u) + 1, o(v)), (o(u) + 1, o(v) + 1)$ or $(c(u), c(v)), (c(u) + 1, c(v) + 1)$ are replaced with $(c(u), c(v)), (c(u) + 1, c(v)), (c(u) + 1, c(v) + 1)$.  The only change to $\A$ are new pairs $(o(u) + 1, o(v))$ and $(c(u) + 1, c(v))$, which both satisfy the alignment constraint.

    If $u$ is handled in Case 1 of $\mathsf{TreeAlign}$, its light subtree is matched to the light subtree of a node $v$ where $C[k] = u, C[k+1] = v$ for some chain $C$. As in case 1, we denote $u'$ as the heavy child of $u$ and $v'$ as the heavy child of $v$. We assume without loss of generality that $c(u) \simeq_\A c(v)$ and that the rightmost pair $(i_\F, i_\G)$ in $\A$ satisfying $i_\F \leq c(u'), i_\G \leq c(v')$ has $i_\G = c(v')$ (see the previous ``\textbf{Match light subtrees}'' section for discussion of these assumptions). First, the $\mathsf{TreeAlign}$ routine replaces the sequence of pairs between $(c(u'), i_\G)$ and $(c(u) + 1, c(v) + 1)$ with the sequence $(c(u'), i_\G + 1), (c(u'), i_\G + 2), \ldots, (c(u'), c(v') - 1), (c(u'), c(v')), (c(u') - 1, c(v') - 1), (c(u') - 2, c(v') - 2), \ldots, (c(u), c(v))$. Thus, each index increments by at most one between every pair in the new alignment $\A$ after these modifications and so, the alignment constraint is satisfied. Second, the $\mathsf{TreeAlign}$ routine replaces the sequence of pairs between $(o(u), i_\G)$ and $(i_\F, o(v'))$ (where $(o(u), i_\G)$ is the first pair in $\A$ with left index $o(u)$ and $(i_\F, o(v'))$ is the first pair in $\A$ with right index $o(v')$) with sequence $(o(u), i_\G + 1),(o(u), i_\G + 2), \ldots, (o(u), o(v)), (o(u) + 1, o(v) + 1), (o(u) + 2, o(v) + 2), \ldots, (o(u'), o(v')), (o(u') + 1, o(v')), (o(u') + 2, o(v')), \ldots, (i_\F - 1, o(v'))$. Thus, again each index increments by at most one between every pair in the new alignment $\A$ after these modifications, and so $\A$ remains an alignment in either case of $\mathsf{TreeAlign}$.
\end{proof}

Recall that $\mathsf{TreeAlign}(\A)$ denotes the alignment $\A$ after performing $\mathsf{TreeAlign}(\A, u)$ on all misaligned nodes. 
Next, we show that in addition to $\mathsf{TreeAlign}(\A)$ satisfying the alignment requirements, it is also a tree alignment.  To this end, we show that there are no nodes of $\F$ misaligned by $\mathsf{TreeAlign}(\A)$. A useful observation about alignments is that if there is a misaligned node of $\G$, then there must be a misaligned node of $\F$. Therefore, if there are no misaligned nodes of $\F$, then all nodes of $\F$ and $\G$ are tree aligned by $\mathsf{TreeAlign}(\A)$. The proof of this observation is as follows.

\begin{observation}
\label{obs:F-only}
    Given an alignment $\A: \paren(\F) \to \paren(\G)$, if $u \in \G$ is misaligned by $\A$, then there must be some $v \in \F$ that is also misaligned.
\end{observation}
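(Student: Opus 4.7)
The plan is a short case analysis that exploits the key structural fact that an alignment induces a partial function on positions: each aligned position of $\paren(\G)$ has a unique partner in $\paren(\F)$ and vice versa. So if $u \in \G$ is misaligned, I intend to exhibit a specific node $v \in \F$ and derive a contradiction from the assumption that $v$ is tree-aligned.

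First I would split on the kind of misalignment exhibited by $u$. In Case~(a), exactly one of $o(u), c(u)$ is aligned while the other is deleted; without loss of generality $o(u) \sim_\A o(v)$ for some $v \in \F$ while $c(u)$ is deleted. In Case~(b), both $o(u)$ and $c(u)$ are aligned, say $o(u) \sim_\A o(v_1)$ and $c(u) \sim_\A c(v_2)$; since $u$ is misaligned we must have $v_1 \ne v_2$, and I set $v \coloneqq v_1$. In either case I then argue that $v$ is misaligned: if it were tree-aligned, there would exist $w \in \G$ with $o(v) \sim_\A o(w)$ and $c(v) \sim_\A c(w)$. Uniqueness of the $\A$-partner of $o(v)$ forces $w = u$, hence $c(v) \sim_\A c(u)$. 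In Case~(a) this contradicts $c(u)$ being deleted, and in Case~(b) uniqueness of the $\A$-partner of $c(u)$ then forces $v = v_2$, contradicting $v_1 \ne v_2$. Thus $v \in \F$ is misaligned, as claimed.

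The only piece of bookkeeping worth flagging concerns which kinds of alignments are allowed. The routine $\mathsf{TreeAlign}$ operates on deletion-only alignments (substitutions have already been converted into pairs of deletions), so every aligned opening paren of $\G$ is aligned to an opening paren of $\F$ and likewise for closing parens. This ensures that the witness $v \in \F$ produced in Cases~(a) and~(b) is an actual node of $\F$ rather than an artifact of a cross-parity substitution, and justifies treating the classification partially-deleted/single-branch/multi-branch as exhaustive on the $\G$-side.

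No genuine obstacle is anticipated; the whole statement follows from a single invocation of the partial-function property of $\A$ in each case. If a strengthening to general (not necessarily deletion-only) alignments were needed, the same proof would go through after first observing that for any $\G$-node whose parens are both aligned, the aligned partners in $\paren(\F)$ must themselves be the two parens of some (possibly distinct) nodes of $\F$, so that the witness $v$ is still well-defined.
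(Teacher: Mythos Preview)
Your proposal is correct and follows essentially the same approach as the paper: pick a node $v \in \F$ aligned to one of $u$'s parentheses, and use uniqueness of alignment partners to conclude $v$ cannot be tree-aligned. The paper's proof is slightly terser (it argues directly that $c(v)$ does not match $c(u)$, rather than via contradiction), but the content is identical.
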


\begin{proof}
    Let $u \in \G$ be any misaligned node by $\A$.  By definition of misaligned nodes, at least one of $o(u)$ or $c(u)$ must be matched by $\A$. If $o(u) \simeq_\A o(v)$ for some $v \in \F$, by definition of misaligned nodes, $c(u)$ is either deleted or matched to a different parenthesis $c(v')$ for $v' \ne v$. Therefore, $c(v)$ does not match to $c(u)$ by $\A$, and so, $v$ is also misaligned. A symmetric argument can be shown for when $c(u) \simeq_\A c(v)$ and $o(u)$ is deleted or matched to a different parenthesis $c(v')$ for $v' \ne v$.
\end{proof}

As a precursor to the final lemma of the section, we show that when running $\mathsf{TreeAlign}$ on a misaligned node $u'$ in a heavy path $H$, any other node $u$ in $H$ previously aligned by $\mathsf{TreeAlign}$ remains unaffected by the modifications made during the alignment of $u'$.

\begin{lemma}
\label{lem:invariant}
Consider forests $\F$ and $\G$, a deletion-only alignment $\A: \F \onto \G$, and a heavy path $H$ in $\F$. 
If $\A$ tree-aligns a node $u \in H$ to a node $v \in \G$, then after performing $\mathsf{TreeAlign}(\A, u')$ for any other $u' \in H$, the alignment $\A$ still tree-aligns $u$ to $v$.
\end{lemma}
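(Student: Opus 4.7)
The plan is to split on the two branches of $\mathsf{TreeAlign}(\A, u')$. Case~2 will be straightforward: the routine only replaces matches involving $o(u')$ or $c(u')$ with deletions, and since $u\ne u'$, the pairs witnessing the tree-alignment of $u$ to $v$ are disjoint from the affected indices and therefore remain in $\A$.

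For Case~1, I will fix notation: let $C$ be the chain with $C[k]=u'$, let $b\coloneqq C[k+1]$, and let $u'_h,b_h$ be the heavy children (WLOG $C$ is a closing chain), and abbreviate $X_i\coloneqq C[k+2i]$ and $Y_i\coloneqq C[k+2i+1]$. By \cref{obs:chain-heavy}, all $X_i$ lie on $H$ and, for $i\ge 1$, each $Y_i$ is a descendant of $b_h$ on a single heavy path of $\G$. The routine only modifies pairs whose $\F$-index lies in $[o(u'),o(u'_h)]\cup[c(u'_h),c(u')]$ and inserts new pairs enforcing $o(u'_h)\simeq_\A o(b_h)$ and $c(u'_h)\simeq_\A c(b_h)$. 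The goal reduces to showing that $o(u),c(u)$ avoid the $\F$-modification range and $o(v),c(v)$ avoid the symmetric $\G$-range $[o(b),o(b_h)]\cup[c(b_h),c(b)]$.

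Since $u,u'\in H$ with $u\ne u'$, $u$ is either a strict ancestor or a strict descendant of $u'$ in $H$. The ancestor case is easy: $o(u)<o(u')$ and $c(u)>c(u')$ put the $\F$-indices outside the modification range, and monotonicity of $\A$ combined with the chain matches $o(X_1)\simeq_\A o(b)$ and $c(u')\simeq_\A c(b)$ forces $o(v)<o(b)$ and $c(v)>c(b)$, so the $\G$-indices are outside as well. In the descendant case I will first rule out $u=u'_h$: since $u$ is tree-aligned it is not in $C$, so $X_1$ is a strict heavy descendant of $u$; monotonicity applied to $o(u)<o(X_1)\simeq_\A o(b)$, $c(u)>c(X_1)\simeq_\A c(Y_1)$, and $c(u)<c(u')\simeq_\A c(b)$ then forces $[o(v),c(v)]$ to contain $o(b)$ but not $c(b)$, contradicting the non-crossing property of twins (\cref{obs:noncrossingtwins}).

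The main obstacle is the remaining sub-case, where $u$ is a strict descendant of $u'_h$; here $o(u),c(u)\in(o(u'_h),c(u'_h))$ are automatically safe on the $\F$-side, but the $\G$-side requires more work. The plan is to use the same monotonicity-plus-non-crossing mechanism to eliminate every placement of $u$ in $H$ below $u'_h$ except strict descendancy of the deepest $\F$-chain node $X_m$ (with $m\ge 2$ since Case~1 assumes $|C|\ge k+5$): any intermediate position of $u$ strictly between two consecutive chain nodes on $H$ (or between $u'_h$ and $X_1$ when $u'_h\ne X_1$) would, together with the adjacent chain matches, force $[o(v),c(v)]$ to cross some $[o(Y_i),c(Y_i)]$ or $[o(b),c(b)]$, and $u$ cannot coincide with any misaligned $X_j$. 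Once $u$ is pinned to be a strict descendant of $X_m$, monotonicity via $o(X_m)\simeq_\A o(Y_{m-1})$ and $c(X_{m-1})\simeq_\A c(Y_{m-1})$ will give $o(v)>o(Y_{m-1})\ge o(b_h)$ and $c(v)<c(Y_{m-1})\le c(b_h)$, placing both $\G$-indices safely inside $(o(b_h),c(b_h))$. The opening-chain case will be handled symmetrically, and the boundary $k=0$ will be accommodated by appealing to $o(X_1)\simeq_\A o(b)$ in place of any pair involving $o(u')$.
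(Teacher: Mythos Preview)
Your route is different from the paper's and essentially works, but one step in your reduction is stated incorrectly.

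\textbf{Comparison of approaches.} The paper argues locally: for each of the four combinations (opening/closing parenthesis of $u_j$, first/second modification step) it assumes the relevant matching pair falls in the removal interval and derives a contradiction using only the nearby chain nodes $C[k],C[k+2],C[k+3],C[k+4]$, together with the fact that the light-subtree substrings $\paren(\G)(o(v_i)\dd o(v_{i+1}))$ and $\paren(\G)(c(v_{i+1})\dd c(v_i))$ are balanced. You instead argue globally: you push $u$ all the way below the deepest $\F$-chain node $X_m$ (this is exactly \cref{fct:non-aligned} applied repeatedly, so you need not re-derive it) and then use the chain matches at $X_{m-1},X_m,Y_{m-1}$ to place $v$ strictly inside the subtree of $b_h$. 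Both strategies work; the paper's uses fewer chain nodes and avoids the placement case-split along $H$, while yours gives a single clean statement about where $v$ must live.

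\textbf{The gap.} Your assertion that ``the routine only modifies pairs whose $\F$-index lies in $[o(u'),o(u'_h)]\cup[c(u'_h),c(u')]$'' is false. In the opening-side step the removed pairs are exactly those strictly between $(o(u'),i_\G)$ and $(i_\F,o(b_h))$, i.e.\ those with left index $>o(u')$ \emph{and} right index $<o(b_h)$; the bound on the $\F$-side is $i_\F$, which can exceed $o(u'_h)$ (the paper only establishes $i_\F\le o(C[k+4])$). Symmetrically, on the closing side the $\F$-range can dip below $c(u'_h)$. Consequently your reduction ``survive iff $o(u),c(u)$ avoid the $\F$-range and $o(v),c(v)$ avoid the $\G$-range'' is not the right criterion, and the sentence ``$o(u),c(u)\in(o(u'_h),c(u'_h))$ are automatically safe on the $\F$-side'' is unjustified. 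Fortunately this does not break your argument: the conclusion you actually reach in the descendant case, $o(v),c(v)\in(o(b_h),c(b_h))$, is by itself enough to keep both matching pairs outside both removal intervals (and the ancestor conclusions $o(v)<o(b)$, $c(v)>c(b)$ likewise suffice). So the fix is simply to drop the incorrect description of the modification range and instead verify directly, from $o(v)>o(b_h)$ and $c(v)<c(b_h)$, that each pair lies before $(i_\F,i_\G)$ for the closing step and has right index $\ge o(b_h)$ for the opening step.
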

\begin{proof}
    Denote $H = (u_1, \ldots, u_m)$, with nodes listed from highest to lowest, and suppose $u = u_j$ and $u' = u_i$ for $i \neq j$.

    First, if $u_i$ is already tree-aligned by $\A$, then $\mathsf{TreeAlign}(\A, u_i)$ makes no modifications to $\A$.  
    If $u_i$ is misaligned but falls into Case~2 of $\mathsf{TreeAlign}(\A, u_i)$, the claim follows immediately, since $u_i$ is the only node of $\F$ affected by the updates to $\A$.

    If $u_i$ is misaligned and falls into Case~1, we examine the two steps performed by $\mathsf{TreeAlign}(\A, u_i)$ and their effect on a node $u_j \in H$ that is tree-aligned with $v \in \G$ under $\A$.  
    We will argue that these steps leave the alignment pairs for $u_j$ and $v$ unchanged; otherwise, the monotonicity of $\A$ would be violated.

    By the conditions of Case~1, $u_i$ belongs to a chain $C$ such that $C[k] = u_i$ for some $k \in [0 \dd |C| - 5]$.  
    By \cref{obs:chain-heavy}, $u_i$ has a heavy child $u_{i+1}$, and likewise $v_i \coloneqq C[k+1]$ has a heavy child $v_{i+1}$.  
    By symmetry (up to transposition), we may assume without loss of generality that the chain $C$ is closing, and therefore $c(u_i) \simeq_\A c(v_i)$.

    Our algorithm modifies $\A$ in two steps, which we analyze separately.  
    For the first step, let $(i_\F, i_\G) \in \A$ be the rightmost pair such that $i_\F \le c(u_{i+1})$ and $i_\G \le c(v_{i+1})$.  
    The first step of $\mathsf{TreeAlign}$ then modifies all pairs strictly between $(i_\F, i_\G)$ and $(c(u_i) + 1, c(v_i) + 1)$.  
    See Figure~\ref{fig:treealign-closing} for an illustration of these modifications.

    By definition, either $i_\F = c(u_{i+1})$ or $i_\G = c(v_{i+1})$.  
    In either case, we claim that $o(u_j) < i_\F$.  
    If $i_\F = c(u_{i+1})$, this is immediate since the opening parentheses corresponding to nodes in $H$ always appear before their closing parentheses in $\paren(\F)$.  
    If $i_\G = c(v_{i+1})$, the claim follows from the monotonicity of $\A$: since $(i_\F, c(v_{i+1}))$ and $(c(C[k+2]), c(C[k+3]))$ are in $\A$ with $c(C[k+3]) \le c(v_{i+1})$, it follows that $o(u_j) < c(C[k+2]) \le i_\F$.  
    Therefore, the position $o(u_j)$ lies to the left of the modified interval, and the first step of $\mathsf{TreeAlign}$ leaves $o(u_j)$ and its alignment untouched.

    If $j < i$, then $u_j$ is an ancestor of $u_i$ with $c(u_j) > c(u_i)$, so its closing parenthesis lies strictly to the right of all indices modified in the first step.  
    Hence, $c(u_j)$ is also unaffected in this case.  

    If $j > i$, then $c(u_j)$ could only be affected if $i_\F < c(u_j) < c(u_i)$.  
    However, since $c(u_j) \le c(u_{i+1})$, this would imply $i_\F < c(u_{i+1})$, and therefore $i_\G = c(v_{i+1})$.  
    Because $(i_\F, c(v_{i+1}))$, $(c(u_j), c(v))$, and $(c(u_i), c(v_i))$ all belong to $\A$, the monotonicity of $\A$ then enforces $c(u_{i+1}) < c(v) < c(u_i)$.  
    Moreover, since the substring $\paren(\G)(c(v_{i+1}) \dd c(v_i))$ is balanced (it represents the right portion of the light subtree of $v_i$) we must also have  $c(u_{i+1}) < o(v) < c(v) < c(u_i)$.
    This, however, yields a contradiction: monotonicity is violated between $(o(u_j), o(v))$ and $(i_\F, c(v_{i+1}))$, because $o(u_j) < i_\F$ while $o(v) > c(v_{i+1})$.  
    Therefore, the assumption that $i_\F < c(u_j) < c(u_i)$ cannot hold, and we conclude that $u_j$ is unaffected by the first step of $\mathsf{TreeAlign}$.

    Now, consider the second step of the procedure modifying $\A$.
    Let $(o(u_i), i_\G)$ be the rightmost pair in $\A$ with left index $o(u_i)$ and let $(i_\F, o(v_{i+1}))$ be the leftmost pair in $\A$ with right index $o(v_{i+1})$. The second step of the $\mathsf{TreeAlign}$ routine modifies all pairs between $(o(u_i), i_\G)$ and $(i_\F, o(v_{i+1}))$, exclusive (see Figure~\ref{fig:treealign} for example). 
    Since $(o(C[k+4]),o(C[k+3]))\in \A$ with $o(C[k+3])\ge o(v_{i+1})$, we have $i_\F \le o(C[k+4])$. Since $c(u_j) > o(C[k+4])$, this means that $c(u_j)$ is unaffected by the modifications of the first step of $\mathsf{TreeAlign}$.
    If $j < i$, then $o(u_j) < o(u_i)$, so $o(u_j)$ is also unaffected in this case.

    If $j > i$, on the other hand, then $o(u_j)$ could be affected if $o(u_{i}) < o(u_j) < i_\F$. 
    Since $(o(u_j),o(v)),(i_\F,o(v_{i+1}))\in \A$, the monotonicity of $\A$ implies that, in this case, $o(v) < o(v_{i+1})$. 
    At the same time, $u_j$ is a descendant of $u_i$, and thus $o(C[k+4]) < c(u_j) < c(C[k])$. 
    Since $o(C[k+4])\simeq_\A o(C[k+3])$, $c(u_j)\simeq_\A c(v)$, and $c(C[k])\simeq_\A c(C[k+1])$, the monotonicity of $\A$ implies $o(C[k+1]) < o(C[k+3]) < c(v) < c(C[k+1])$.
    By the non-crossing property of parentheses in $\str{\G}$, this means that $v$ is proper descendant of $v_{i}=C[k+1]$, i.e., $o(v_i) < o(v) < c(v) < c(v_i)$.
    Overall, we conclude that $o(v_i) < o(v) < o(v_{i+1})$. Since $\paren(\G)(o(v_i)\dd o(v_{i+1}))$ is balanced (it represents the left part of the light subtree of $v_i$), we must also have $o(v_i) < o(v)  < c(v) < o(v_{i+1})$, which contradicts $o(C[k+3]) < c(v)$ that we derived later.

    Overall, we conclude that $u_j$ is not modified by either step of $\mathsf{TreeAlign}(\A, u_i)$. 
\end{proof}

We now give an inductive argument to prove that all nodes of $\F$ are tree aligned. The main challenge of the proof comes from Case 1 of the $\mathsf{TreeAlign}(\A, u)$ routine, which affects the alignment of nodes beyond just $u$. The previous lemma already shows that no other node in the same heavy path as $u$ is affected, but we must also make sure any nodes not in this heavy path are also still tree-aligned after this routine. In the modifications of this case, we were careful to add deletions for all affected parentheses, so in fact even for nodes in other heavy paths, these modifications maintain that nodes of $\F$ are still tree-aligned. We discuss this formally as follows.

\begin{lemma}
\label{lem:tree-constraint}
    Given an alignment $\A$, $\mathsf{TreeAlign}(\A)$ is a tree alignment. 
\end{lemma}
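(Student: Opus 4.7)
The plan is to combine \cref{lem:align-constraint} (which ensures each $\mathsf{TreeAlign}(\A, u)$ invocation preserves validity of the alignment) with an induction over the processing order of \cref{subsubsec:allheavy}, showing that at the end of the procedure every node of $\F$ is either fully deleted or tree-aligned to some node in $\G$. Once this is established, \cref{obs:F-only} rules out misaligned nodes of $\G$ as well, so both consistency conditions of \cref{def:ta} are met and $\mathsf{TreeAlign}(\A)$ is a tree alignment.

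I would set up the induction over the processing order (heavy paths of $\F$ in order of increasing heavy depth, and top-to-bottom within each path), with the invariant that after any prefix of this order every processed node is either deleted or tree-aligned under the current $\A$. The base case is vacuous. Case~2 is short: deleting $u$ alters only the pairs touching $o(u)$ or $c(u)$, so $u$ becomes tree-aligned by deletion without disturbing any other node. The crux is Case~1. Here $u=C[k]$ and $v=C[k+1]$ are consecutive nodes of a chain and, by \cref{obs:chain-heavy}, both have heavy children $u'$ and $v'$. By \cref{obs:chain-rep} the light subtrees $\F'(u)$ and $\G'(v)$ coincide as labelled ordered trees, so the substrings $\paren(\F)[o(u)\dd o(u'))$ and $\paren(\G)[o(v)\dd o(v'))$ are identical, and likewise $\paren(\F)(c(u')\dd c(u)]$ and $\paren(\G)(c(v')\dd c(v)]$ are identical. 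The two blocks of consecutive matches inserted by $\mathsf{TreeAlign}(\A,u)$ therefore simultaneously tree-align every node of $V_\F$ belonging to $\F'(u)$, including $u$ itself, to its counterpart in $\G'(v)$.

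The main obstacle is showing that this Case~1 rewrite does not destroy any previously established tree alignment. I would partition the previously processed nodes $x \in V_\F$ into three groups. Group (i) consists of nodes on the same heavy path as $u$ and is handled by \cref{lem:invariant}. Group (ii) consists of proper descendants of $u$ that belong to $\F'(u)$; each such node sits in a subtree rooted at a light child of $u$, and hence in a heavy path of strictly smaller heavy depth than that of $u$, so it has already been processed, and the argument of the preceding paragraph reassigns it to the matching node in $\G'(v)$, preserving tree alignment. Group (iii) consists of the remaining previously processed nodes, which lie either strictly outside $\F(u)$ or inside $\F(u')$ but off $u$'s heavy path. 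For any such $x$ with $x\simeq_\A y$, I would use monotonicity of $\A$ together with the chain identities $o(u)\simeq_\A o(C[k-1])$ and $c(u)\simeq_\A c(v)$ (closing-chain case; the opening case is symmetric) and the fact from \cref{def:chain} that $C[k-1]$ is a proper ancestor of $v$ in $\G$, to show that $y$'s parentheses lie either strictly outside $[o(v),c(v)]$ or inside $[o(v'),c(v')]$. In both regions the $\G$-side edits of \cref{subsec:constr} do not touch any pair of $\A$ involving $y$, and the corresponding $\F$-side edits avoid $x$'s indices, which are either outside $[o(u),c(u)]$ or strictly inside $(o(u'),c(u'))$. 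The technically delicate subcase is $x\in\F(u')$: here the boundary pair $(c(u'),i_\G)\in\A$ with $i_\G\le c(v')$ used in the construction, combined with monotonicity, pins $y$'s parentheses inside $\G(v')$, so the modifications again bypass the pair aligning $x$ with $y$. Iterating the induction across all processed nodes yields a final alignment with no misaligned node in $\F$, and the lemma follows.
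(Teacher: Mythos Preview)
Your overall scaffold matches the paper's: induction over heavy paths by increasing heavy depth, \cref{lem:invariant} for same-path nodes, explicit re-matching of $\F'(u)$ to $\G'(v)$, and \cref{obs:F-only} to finish. The gap is in your Group~(iii) treatment of nodes $x\in\F(u')$ off the heavy path.

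You claim two things there: first, that ``the corresponding $\F$-side edits avoid $x$'s indices, which are \ldots\ strictly inside $(o(u'),c(u'))$''; second, that monotonicity with the boundary pair $(c(u'),i_\G)$ ``pins $y$'s parentheses inside $\G(v')$''. Neither holds in general. For the first, look at the opening-side rewrite in \cref{subsec:constr}: the inserted deletion block covers $\F$-indices $o(u')+1,\ldots,i_\F'-1$, where $(i_\F',o(v'))$ is the leftmost pair with right index $o(v')$; since $(o(C[k+4]),o(C[k+3]))\in\A$ with $o(C[k+3])\ge o(v')$, one gets $i_\F'\le o(C[k+4])$, but $i_\F'$ may well exceed $o(u')$, so indices strictly inside $(o(u'),c(u'))$ \emph{are} edited. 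For the second, your monotonicity step yields only $c(y)\le i_\G\le c(v')$; it does not give $o(y)\ge o(v')$. In fact $y$ can sit in the light subtree $\G'(v)$ (say with $o(v)<o(y)<c(y)<o(v')$), and then $(o(x),o(y))$ lies between $(o(u),i_\G')$ and $(i_\F',o(v'))$ and is removed. Also note your appeal to $C[k-1]$ is unavailable when $k=0$.

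The paper closes this case differently: it does not argue that $x$ is bypassed, but rather that any affected $x$ outside $\F'(u)$ lies in some heavy path $H_{z'}$ with $z'<z$ (smaller heavy depth), hence is already tree-aligned; since both of its parentheses were matched into the light subtree of $v$, the rewrite deletes \emph{both} $o(x)$ and $c(x)$, leaving $x$ tree-aligned by deletion. You should replace the ``bypass'' claim with this ``both-deleted'' argument; the rest of your plan then goes through as written.
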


\begin{proof}
    Let $\hvy= \{H_1, H_2, \ldots, H_\ell\}$ be the set of all maximal-length heavy paths in order from smallest node heavy depth to largest node heavy depth. We prove that all nodes in $\mathsf{TreeAlign}(\A)$ are tree-aligned by induction on $i$ where $i$ is the index of the current heavy path $H_i$ whose misaligned nodes are currently being processed by the $\mathsf{TreeAlign}$ routine.
    
    As the base case, we consider  $\mathsf{TreeAlign}(\A, u^1_i)$ for some $u^1_i$ in $H_1 = (u^1_1, u^1_2, \ldots, u^1_{m_1})$ where the nodes of $H_1$ are ordered from highest level to lowest level in $\F$ such that $u^1_i$ is the ancestor of all $u^1_j$ with $j > i$. We show that after performing $\mathsf{TreeAlign}(\A, u^1_i)$ for all $i$ from 1 to $m_1$, all nodes in $H_1$ are tree-aligned. First, if $u^1_i$ does not satisfy the conditions of case 1, $\A$ is modified to delete both $o(u^1_i)$ and $c(u^1_i)$.  Clearly, $u^1_i$ will be tree-aligned after this step and no other node alignment will be affected. Instead, if $u^1_i$ falls into case 1, then it is contained in a chain $C$ such that $C[k] = u^1_i$ and $\A$ is modified so that $u^1_i$ is tree-aligned to $v = C[k+1] \in \G$. By Lemma~\ref{lem:invariant}, all other nodes of $H$ that are already tree-aligned will remain tree-aligned, and so after all nodes of $H_1$ are passed to $\mathsf{TreeAlign}$, all nodes in $H_1$ will be tree-aligned.

    Now, we assume that all nodes in heavy paths up to path $H_{z-1}$ are tree-aligned, and we want to show that all nodes in heavy paths up to $H_z$ are tree-aligned after performing $\mathsf{TreeAlign}(\A, u^z_i)$ for all $i$ from 1 to $m_z$.  By the same argument as the base case, all misaligned nodes $u^z_i$ of $H_z$ will be tree-aligned after performing either case 1 or case 2 of $\mathsf{TreeAlign}(\A, u^z_i)$. We now must confirm that all nodes from $H_j$ for $j < z$ are also still tree-aligned after performing these steps. If $u^z_i$ falls into case 2 of the $\mathsf{TreeAlign}(\A, u^z_i)$ routine, it is deleted and no nodes of any prior paths are affected, so we only need to consider case 1.
    
    If $u^z_i$ falls into case 1, it is contained in a chain $C$ such that $C[k] = u^z_i$ and will be matched to $v = C[k+1] \in \G$. As in the base case, assume without loss of generality that $c(u^z_i) \simeq_\A c(v)$ and let $(i_\F, i_\G)$  denote the rightmost pair in $\A$ satisfying $i_\F \leq c(u^z_{i+1}), i_\G \leq c(v')$ where $v'$ is the heavy child of $v$. First, the $\mathsf{TreeAlign}$ routine replaces the sequence of pairs between $(i_\F, i_\G)$ and $(c(u^z_i) + 1, c(v) + 1)$ with a sequence of matches between the light subtrees $u^z_i$ and $v$ and then deletions of all other parentheses corresponding to indices of the removed pairs. 
    
    If $i_\F = c(u^z_{i+1})$, then only nodes $u' \in \F$ with $c(u^z_{i+1}) < o(u') < c(u') < c(u^z_i)$ are affected by these modifications. These are exactly the nodes in the light subtree of $u^z_i$ and are therefore matched and tree-aligned to corresponding nodes in the light subtree of $v$.

    If $i_\F < c(u^z_{i+1})$, then $i_G = c(v')$. In this case, there may exist some $u' \in \F$ that lie outside of the light subtree of $u^z_i$ and are affected by modifications to $\A$ $\mathsf{TreeAlign}(\A, u_i^z)$ as long as $u'$ is tree-aligned with a node in the light subtree of $v$ before modifications.  Note that for any such $u'$, $o(u^z_i) < o(u') < c(u') < c(u^z_i)$, and so $u'$ must either be in $H_z$ or a path $H_{z'}$ for $z' < z$. Since we have already dealt with nodes in $H_z$, we only concern ourselves with the latter case. By the induction hypothesis these nodes are already tree-aligned.  Therefore, both $o(u')$ and $c(u')$ must be deleted or aligned to parentheses in the light subtree of $v$ before $\mathsf{TreeAlign}$ modifies $\A$, and so, after modifications, these parentheses will both be deleted and $u'$ will still be tree-aligned.
        
    Now let $(o(u^z_i), i_\G)$ be the first pair in $\A$ with left index $o(u^z_i)$ and $(i_\F, o(v'))$ be the first pair in $\A$ with right index $o(v')$. Second, the $\mathsf{TreeAlign}$ routine replaces the sequence of pairs between $(o(u^z_i), i_\G)$ and $(i_\F, o(v'))$  with a matching of the parentheses corresponding to the light subtrees of $u^z_i$ and $v$ and then deletions of all other parentheses corresponding to removed pairs. By an analogous argument to the previous case, any node $u'$ affected by these modifications must be in a heavy path $H_{z'}$ for $z' < z$, and so by the induction hypothesis, after modifications, $u'$ is tree-aligned to a node in the light subtree of $v$ or both $o(u')$ and $o(v')$ are now deleted.
    
    By induction, all nodes in $\F$ will be tree-aligned in $\mathsf{TreeAlign}(\A)$.  Recall that $\A$ is modified to delete misaligned light nodes of $\F$, so trivially, there will be no misaligned light nodes in $\F$ in the final alignment. By Observation~\ref{obs:F-only}, all nodes in $\F$ and $\G$ are tree-aligned, and so $\mathsf{TreeAlign}(\A)$ is a tree alignment
\end{proof}

\subsection{Alignment Cost Analysis}
\label{subsec:costanalysis}

We now show that applying $\mathsf{TreeAlign}$ increases the number of edits in $\A$ by at most a $\Oh(\sqrt{n})$ multiplicative factor.  
The main challenge is to bound the additional deletions arising from case~1 of the $\mathsf{TreeAlign}$ routine.  
Every node that falls into this case lies in a chain, so we begin by analyzing the endpoints of chains.  

By definition, chains contain nodes along single branches in each forest, so they are finite, and the last node of every chain is a partially deleted node (Corollary~\ref{cor:chain-ending}).  
Consequently, we can charge the deletions introduced by $\mathsf{TreeAlign}$ to the deletions already present at the chain endpoints.  
This idea extends naturally to sequences that include multi-branch misaligned nodes as well.  
We formalize these sequences next using extended chains.

\begin{definition}[Extended chain]
\label{def:extended}
Given an alignment $\A: \paren(\F) \onto \paren(\G)$, an \emph{extended chain} $\C$ is an inclusion-wise maximal sequence of at least three nodes alternating between $\F$ and $\G$, such that one of the following two conditions holds:
\begin{enumerate}
    \item For every $i \in [0 \dd |\C|)$ with $\C[i] \in \F$, we have $o(\C[i]) \simeq_\A o(\C[i+1])$ if $i < |\C|-1$ and $c(\C[i]) \simeq_\A c(\C[i-1])$ if $i > 0$.
    \item For every $i \in [0 \dd |\C|)$ with $\C[i] \in \F$, we have $c(\C[i]) \simeq_\A c(\C[i+1])$ if $i < |\C|-1$ and $o(\C[i]) \simeq_\A o(\C[i-1])$ if $i > 0$.
\end{enumerate}
\end{definition}

Note the difference between the above definition and Definition~\ref{def:chain} is that we no longer require $C[i]$ to be an ancestor of $C[i+2]$, so the misaligned nodes of an extended chain do not need to be on a single branch of $\F$ or $\G$. This allows us to follow maximal-length sequences of aligned parentheses corresponding to misaligned nodes. Similarly to chains, we show that every extended chain ends in a partially deleted node. We now prove this formally for chains and then extended chains.

\begin{lemma}\label{lem:chain-cont}
Consider a chain $C$ with respect to an alignment $\A: \paren(\F) \onto \paren(\G)$.  
Let $u = C[i]$ for some $i \in [1 \dd |C|)$.  
If $\A$ matches $o(u)$ with $o(v)$ or $c(u)$ with $o(v)$ for some node $v$, then $v = C[i-1]$ or $v = C[i+1]$ (with $i \ne |C|-1$ in the latter case).
\end{lemma}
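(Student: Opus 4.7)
My plan is to proceed by case analysis on the type of chain ($C$ opening vs.\ closing) and on whether $u \in \F$ or $u \in \G$. By the symmetry between $\F$ and $\G$, and between opening and closing chains (the latter obtained by transposing the parenthesis strings), it suffices to handle a single representative case, which I will take to be $C$ opening with $u = C[i] \in \F$. The key observation underlying every case is that $\A$ matches each position in $\paren(\F)$ to at most one position in $\paren(\G)$, so whenever Definition~\ref{def:chain} already fixes a match for $o(u)$ or $c(u)$, the identity of $v$ is immediately forced.

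For the closing side, since $i \ge 1$, Definition~\ref{def:chain} directly yields $c(C[i]) \simeq_\A c(C[i-1])$, so if $c(u) \simeq_\A c(v)$ then uniqueness forces $v = C[i-1]$. For the opening side, if $i < |C|-1$ then Definition~\ref{def:chain} gives $o(C[i]) \simeq_\A o(C[i+1])$, and the same uniqueness argument forces $v = C[i+1]$. The essential subcase is $i = |C|-1$, where the chain definition does not pin down $o(u)$; here I will argue by the maximality of $C$. Supposing $o(u) \simeq_\A o(v)$ with $v \ne C[i-1]$, the goal is to show that appending $v$ as $C[|C|]$ produces a strictly longer valid chain, contradicting maximality.

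The extended sequence $(C[0], \ldots, C[|C|-1], v)$ alternates correctly between $\F$ and $\G$, and the equality conditions of Definition~\ref{def:chain} remain satisfied: the only new obligation, coming from $C[|C|-1] = u \in \F$ with index $|C|-1 < |C|$, is $o(u) \simeq_\A o(v)$, which is our hypothesis. What remains to verify is the new ancestor condition, namely that $C[|C|-2]$ is a proper ancestor of $v$ in $\G$. From $c(u) \simeq_\A c(C[|C|-2])$ (the chain equality at index $i > 0$) together with $(o(u),o(v)) \in \A$ and $o(u) < c(u)$, the monotonicity of $\A$ gives $o(v) < c(C[|C|-2])$. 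For the lower bound $o(C[|C|-2]) < o(v)$, I will use $o(C[|C|-3]) \simeq_\A o(C[|C|-2])$ when $|C| \ge 4$ (or $o(C[0]) \simeq_\A o(C[1])$ when $|C| = 3$), combined with the chain's ancestor condition giving $o(C[|C|-3]) < o(u)$, and apply monotonicity once more. Well-nesting of brackets in $\paren(\G)$ then upgrades this to $o(C[|C|-2]) < o(v) < c(v) < c(C[|C|-2])$, placing $v$ in the subtree of $C[|C|-2]$; since $v \ne C[|C|-2]$, it is a proper descendant, as required.

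The main obstacle is this ancestor-relation verification: it requires stringing together several applications of alignment monotonicity with the pre-existing ancestor and equality conditions of the original chain, and one must separately treat the small edge case $|C| = 3$ where no $C[|C|-3]$ exists. Once that is settled, the remaining configurations (closing chain, or $u \in \G$) follow by the symmetries noted above, so no new ideas should be needed. I will also read the second disjunct in the hypothesis as ``$c(u)$ with $c(v)$'' (the statement as literally written is vacuous, since $c(u)$ and $o(v)$ are parentheses of opposite parity and thus cannot satisfy $\simeq_\A$).
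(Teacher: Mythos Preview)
Your proposal is correct and follows essentially the same argument as the paper's proof: reduce by symmetry to an opening chain with $u\in\F$, dispatch the $c(u)$ case and the $o(u)$ case with $i<|C|-1$ directly from Definition~\ref{def:chain}, and for $i=|C|-1$ use the matchings $o(C[i-2])\simeq_\A o(C[i-1])$, $o(u)\simeq_\A o(v)$, $c(u)\simeq_\A c(C[i-1])$ together with monotonicity and well-nesting to place $v$ strictly inside the subtree of $C[i-1]$, contradicting maximality. One small remark: your separate treatment of $|C|=3$ is unnecessary, since $C[|C|-3]=C[0]$ exists whenever $|C|\ge 3$, so the two subcases you distinguish are in fact identical.
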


\begin{proof}
By symmetry between $\F$ and $\G$, assume without loss of generality that $u \in \F$ and $v \in \G$.  
Moreover, by symmetry with respect to transposition, assume $C$ is an opening chain.  
Then $c(u) \simeq_\A c(C[i-1])$, so $v = C[i-1]$ is the unique node such that $\A$ matches $c(u)$ with $c(v)$.  
If $i \ne |C|-1$, then $o(u) \simeq_\A o(C[i+1])$, and $v = C[i+1]$ is the unique node such that $\A$ matches $o(u)$ with $o(v)$.  

It remains to consider the case $o(u) \simeq_\A o(v)$ with $i = |C|-1$.  
By Definition~\ref{def:chain}, $|C| \ge 3$, so nodes $v' \coloneqq C[i-1]$ and $u' \coloneqq C[i-2]$ exist.  
Since $u'$ is an ancestor of $u$, $c(u) \simeq_\A c(v')$, and $o(u') \simeq_\A o(v')$, we have $o(u') < o(u) < c(u)$.  
Monotonicity of the matching implies $o(v') < o(v) < c(v')$.  
By the non-crossing property of parentheses in $\G$, we further get $o(v') < o(v) < c(v) < c(v')$, i.e., $v'$ is a proper ancestor of $v$.  

Since $o(u) \simeq_\A o(v)$ and $C$ is opening, the node $v$ could be used to extend $C$, contradicting its maximality.  
Hence, the claim follows.
\end{proof}

\begin{corollary}
\label{cor:chain-ending}
Given an alignment $\A: \paren(\F) \onto \paren(\G)$ and a chain $C$ in $\A$, the last node $C[|C|-1]$ is a partially deleted node.
\end{corollary}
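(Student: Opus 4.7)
The plan is to invoke \cref{lem:chain-cont} directly at the endpoint $i = |C|-1$. Let $u \coloneqq C[|C|-1]$. Using the two symmetries available to us---swapping the roles of $\F$ and $\G$, and applying transposition (which exchanges opening and closing parentheses)---I would reduce to the case where $u \in \F$ and $C$ is an opening chain. The defining condition of an opening chain then immediately yields $c(u) \simeq_\A c(C[|C|-2])$, so $c(u)$ is aligned (indeed, matched) by $\A$. It therefore suffices to argue that $o(u)$ is \emph{not} aligned.

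Recall from the first step of \cref{subsec:constr} that $\A$ has already been replaced by a deletion-only alignment, so every aligned pair is in fact a match. Suppose, toward contradiction, that $o(u) \simeq_\A o(v)$ for some $v \in \G$. Applying \cref{lem:chain-cont} with $i = |C|-1$, the only candidates for $v$ are $C[i-1]$ and $C[i+1]$, and $v = C[i+1]$ is explicitly excluded by the hypothesis $i \ne |C|-1$ of that lemma (also $C[|C|]$ does not exist). This leaves $v = C[|C|-2]$. However, $|C| \ge 3$ together with the alternation of the chain guarantees that $C[|C|-3] \in \F$ exists with index $|C|-3 < |C|-1$, and the opening-chain condition then supplies $o(C[|C|-3]) \simeq_\A o(C[|C|-2])$. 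Since each position of $\paren(\G)$ can participate in at most one matched pair of $\A$, this forbids $o(u) \simeq_\A o(C[|C|-2])$, the desired contradiction. Hence $o(u)$ is deleted and $u$ is partially deleted by definition.

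The key step is the WLOG reduction via the two symmetries: once $u$ is fixed as the last node of an opening chain lying in $\F$, the chain definition pins down which parenthesis of $u$ (namely $c(u)$) must be matched up the chain, and then \cref{lem:chain-cont} combined with the injectivity of the matching pins down that the other parenthesis ($o(u)$) cannot be matched anywhere. I do not anticipate a real obstacle; the corollary is essentially a rephrasing of the ``$i = |C|-1$'' endpoint of \cref{lem:chain-cont}, with the deletion-only assumption upgrading ``not matched'' to ``deleted.''
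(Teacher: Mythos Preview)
Your proof is correct and follows essentially the same route as the paper: apply \cref{lem:chain-cont} at the endpoint $i=|C|-1$ to force any match of the ``free'' parenthesis of $u$ to land on $C[|C|-2]$, and then rule that out. You make the injectivity step explicit (that $o(C[|C|-2])$ is already taken by $o(C[|C|-3])$), whereas the paper leaves this implicit and instead adds a brief finiteness argument for $C$; both differences are cosmetic.
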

\begin{proof}
    By Lemma~\ref{lem:chain-cont}, the last node $u = C[|C|-1]$ cannot have both of its parentheses aligned to any other node, so at least one of $o(u)$ or $c(u)$ is deleted by $\A$.  
    Hence, $u$ is a partially deleted node.  
    It remains to argue that $C$ is finite.  
    Suppose, for contradiction, that $C$ is infinite. Then some node appears twice in $C$, say $C[i] = C[j]\in \F$ with $i < j$.  
    But this violates the property from Definition~\ref{def:chain} that $C[i]$ is a (proper) ancestor of $C[j]$.
    Therefore, every chain is finite and must end at a partially deleted node.
\end{proof}

Now, we turn our attention back to extended chains. Every extended chain has a very predictable structure; each extended chain contains at most 1 multi-branch misaligned node and at most 2 chains. We now prove this formally.

\begin{lemma}
\label{lem:extended-structure}
Every extended chain $\C$ with respect to an alignment $\A: \paren(\F) \onto \paren(\G)$ consists of at most one multi-branch misaligned node and at most two chains.
\end{lemma}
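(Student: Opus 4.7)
The plan is to combine monotonicity of the underlying matching (inherited from $\A$) with non-crossing of parentheses in $\paren(\F)$ and $\paren(\G)$ to tightly constrain where branch jumps---non-ancestor transitions between consecutive same-forest nodes of $\C$---can occur. Treating the opening case and invoking symmetry for closing chains, I index the nodes as $\C[0], \ldots, \C[m-1]$, alternating between $\F$ and $\G$ per \cref{def:extended}, and write the $\F$-nodes as $u_0, u_1, \ldots$ and the $\G$-nodes as $v_0, v_1, \ldots$ in their order of appearance. By the chain condition, every middle node has both parentheses chain-matched, while each endpoint of $\C$ has exactly one free parenthesis (otherwise $\C$ could be extended, contradicting maximality; an argument mirroring \cref{cor:chain-ending} identifies such endpoint nodes as partially deleted).

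The main technical step is a descent claim: for two consecutive same-forest nodes $\C[i], \C[i+2]$, if the three surrounding chain matches all exist, monotonicity yields $o(\C[i]) < o(\C[i+2]) < c(\C[i])$, and non-crossing then forces $\C[i+2]$ to be a descendant of $\C[i]$. For the $\G$-side triple $(o(u_k), o(v_k))$, $(o(u_{k+1}), o(v_{k+1}))$, $(c(u_{k+1}), c(v_k))$, these matches always exist in the interior of $\C$, so $v_{k+1}$ always descends into $v_k$; the analogous $\F$-side argument uses the additional match $(c(u_k), c(v_{k-1}))$. Failures of this descent are therefore localized to transitions at a chain endpoint: the first transition (where $c(u_0)$ is free if $u_0 = \C[0]$) or the last transition (where $o(u_K)$ is free if $u_K = \C[m-1]$). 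At the last transition, however, the terminal $c$-match $c(u_K) \simeq c(v_{K-1})$ still exists; combined with $\G$-side descent and monotonicity, it forces $c(u_K) < c(u_{K-1})$, and non-crossing in $\paren(\F)$ then forces $u_K$ to descend into $u_{K-1}$ after all, ruling out an end-jump whenever $K \geq 2$. Consequently the only possible jump is at the first transition, creating exactly one multi-branch misaligned node (located at $\C[1]$ and matched across the jump).

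Finally, removing this multi-branch node (if any) from $\C$ leaves two consecutive subsequences whose match structure is inherited from \cref{def:extended} and whose interiors are jump-free, so each subsequence satisfies \cref{def:chain} as a chain whenever it has length at least three. Hence $\C$ decomposes into at most one multi-branch misaligned node and at most two chains. The main obstacle I anticipate is handling the four endpoint-type configurations of $\C$ (both endpoints in $\F$, both in $\G$, or mixed) uniformly, since the set of positions that qualify as chain endpoints depends on whether each endpoint lies in $\F$ or $\G$; I plan to reduce this case work by symmetry under $\F \leftrightarrow \G$ and reversal of $\C$, treating the short-chain edge case $m = 3$ (where the end-jump argument degenerates) separately.
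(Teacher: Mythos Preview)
Your central descent claim does not follow from the three matches you list.  From $(o(u_k),o(v_k))$, $(o(u_{k+1}),o(v_{k+1}))$, $(c(u_{k+1}),c(v_k))$ and monotonicity you obtain only $o(v_{k+1}) < c(v_k)$ (via $o(u_{k+1}) < c(u_{k+1})$) and $o(u_k) < c(u_{k+1})$ (via $o(v_k) < c(v_k)$).  Neither forces $v_{k+1}$ to be a descendant of $v_k$: by non-crossing, $o(v_{k+1}) < c(v_k)$ is equally consistent with $v_k$ nested in $v_{k+1}$.  Concretely, take an opening extended chain in which every $u_k$ is a \emph{descendant} of $u_{k+1}$ (the chain ascends in $\F$).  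Then $o(u_{k+1}) < o(u_k)$, so by your first two matches $o(v_{k+1}) < o(v_k)$, and non-crossing forces $v_k$ to be a descendant of $v_{k+1}$ as well.  This is a perfectly valid extended chain (it is a single closing chain when reversed), it has no multi-branch nodes, and your descent claim fails at \emph{every} interior transition, not just the first one.  The same issue infects the $\F$-side argument: to get $o(u_k) < o(u_{k+1})$ from $(o(u_k),o(v_k))$ and $(o(u_{k+1}),o(v_{k+1}))$ you would need $o(v_k) < o(v_{k+1})$, which is precisely the $\G$-side descent you have not yet established.  The two sides are coupled, and you cannot bootstrap one from the other using only local triples.

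This also undermines your conclusion that the multi-branch node sits at $\C[1]$: in general it can be at any interior position.  The paper avoids the circularity by arguing in the opposite direction: it \emph{starts} from a multi-branch node $\C[i]$, uses the disjointness $o(v_1) < c(v_1) < o(v_2) < c(v_2)$ of its two matched $\G$-neighbors to directly compute $o(\C[i]) < o(\C[i+2]) < c(\C[i])$, and then invokes \cref{lem:chain-cont} to show that the maximal chain seeded by $\C[i],\C[i+1],\C[i+2]$ must absorb every subsequent $\C[j]$.  The multi-branch configuration supplies exactly the extra ordering information that your local triples lack; without it (or some inductive substitute), the descent step does not go through.
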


\begin{proof}
Let $\C$ be an extended chain containing a multi-branch misaligned node $u_1 = \C[i]$ for some $i\in [2\dd |C|-2)$.  
Without loss of generality, assume $u_1 \in \F$, and let $v_1 = \C[i-1] \in \G$ and $v_2 = \C[i+1] \in \G$.  
Moreover, by symmetry (up to reversal of $\C$) assume without loss of generality that \cref{def:extended} is satisfied due to $o(u_1) \simeq_\A o(v_1)$ and $c(u_1) \simeq_\A c(v_2)$.  
Since $u_1$ is multi-branch misaligned, we have $o(v_1) < c(v_1) < o(v_2) < c(v_2)$.  

Consider the next node $u_2 = \C[i+2]$.  
By the alignment, $o(u_2) \simeq_\A o(v_2)$.  
Furthermore, because $o(u_1)\simeq_\A o(v_1)$ and $c(u_1)\simeq_\A c(v_2)$, and $o(v_1) < o(v_2) < c(v_2)$, we have $o(u_1) < o(u_2) < c(u_1)$.  
Hence, by Definition~\ref{def:chain}, $u_1, v_2, u_2$ form a chain.  

By Lemma~\ref{lem:chain-cont}, any node $\C[j]$ with $j\in (i\dd |\C|)$ also belongs to this chain, and $\C[|\C|-1]$ must be a partially deleted node.  
Symmetrically, the nodes $\C[i], \C[i-1], \ldots, \C[0]$ form a chain, with $\C[1]$ being a partially deleted node.  
This shows that $\C$ contains at most one multi-branch misaligned node and at most two chains.
\end{proof}

Since we know that extended chains only contain a single multi-branch misaligned node and can only continue as long as the chains contained in them, we know that eventually they must end in a partial deletion as chains do.
\begin{corollary}
\label{cor:extended-end}
    Given an alignment $\A: \paren(\F) \onto \paren(\G)$ and an extended chain $\C$ in $\A$, then $\C[|\C| - 1]$ is a partially-deleted node.
\end{corollary}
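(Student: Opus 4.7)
My plan is to mirror the proof of Corollary~\ref{cor:chain-ending}, but instead of invoking the chain-specific Lemma~\ref{lem:chain-cont}, I will exploit the inclusion-wise maximality of $\C$ directly. The idea is simple: if the last node had both parentheses matched, then one of those matches would witness a valid extension of $\C$ by one more node, contradicting maximality. By the transposition symmetry between opening and closing parentheses, I will treat only the case where $\C$ is an opening extended chain (Definition~\ref{def:extended}, condition~1); the closing case is analogous.

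Write $u \coloneqq \C[|\C|-1]$ and suppose, for contradiction, that $u$ is not partially deleted, so both $o(u)$ and $c(u)$ are matched by $\A$. First consider $u \in \F$. Since $|\C|\ge 3$, Definition~\ref{def:extended} already forces $c(u) \simeq_\A c(\C[|\C|-2])$, and by hypothesis $o(u) \simeq_\A o(v)$ for some $v\in \G$. I claim that $\C' = \C \cdot (v)$ is a strictly longer opening extended chain. Alternation is preserved because $v \in \G$ and $u \in \F$; the length is $|\C|+1 \ge 3$; the Definition~\ref{def:extended} constraints for all $\F$-nodes of $\C$ other than $u$ are unchanged, and the constraint for $u$ (which is now at an interior position) requires only $o(u) \simeq_\A o(v)$, which holds by construction. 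This contradicts the maximality of $\C$. If instead $u \in \G$, then Definition~\ref{def:extended} applied to $\C[|\C|-2]\in \F$ already yields $o(\C[|\C|-2]) \simeq_\A o(u)$, so the nontrivial hypothesis is that $c(u) \simeq_\A c(v)$ for some $v\in\F$. Appending $v$ produces a longer opening extended chain — the constraint on the new last $\F$-node $v$ is merely $c(v)\simeq_\A c(u)$, which holds by construction — again contradicting maximality. In either case, $u$ must be partially deleted.

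A minor subsidiary detail is finiteness of $\C$, which is needed for $\C[|\C|-1]$ to be well-defined. This is immediate from the monotonicity of $\A$: each position in $\paren(\F)$ and $\paren(\G)$ participates in at most one matched pair, and the consecutive matches witnessing $\C$ must use distinct positions, so $|\C|$ is bounded by the number of matches in $\A$. Alternatively, one could invoke Lemma~\ref{lem:extended-structure} to exhibit $\C$ as a concatenation of at most two chains (with at most one interposed multi-branch misaligned node), each of which is finite by Corollary~\ref{cor:chain-ending}.

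The only place where care is needed is bookkeeping — verifying that the one-element extension $\C \cdot (v)$ really satisfies every clause of Definition~\ref{def:extended}. Since that definition phrases its conditions entirely in terms of $\F$-nodes, appending a $\G$-node leaves the conditions for all previously present $\F$-nodes intact, and the single new constraint it introduces (on whichever of $u$ or $v$ is the $\F$-node adjacent to the new boundary) is exactly the match we assumed existed. So the extension argument is clean, and the corollary follows.
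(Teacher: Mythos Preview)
Your proof is correct. Both your argument and the paper's ultimately rest on the inclusion-wise maximality of $\C$: if the last node had both parentheses matched, one could append the corresponding node and violate maximality. The paper's proof takes a small detour, invoking Lemma~\ref{lem:extended-structure} and Corollary~\ref{cor:chain-ending} to argue finiteness via the decomposition into chains, and then concludes in one terse line from Definition~\ref{def:extended}. Your route is more direct and self-contained --- you spell out the extension $\C\cdot(v)$ and verify Definition~\ref{def:extended} clause by clause without appealing to the structure lemma at all. One small caveat: your direct finiteness argument (``consecutive matches must use distinct positions'') does not quite rule out a periodic configuration such as $(u,v,u,v,\ldots)$ arising from a tree-aligned pair $u\in\F$, $v\in\G$; but since an extended chain is by hypothesis a (finite) maximal sequence, finiteness is already granted, and your alternative justification via Lemma~\ref{lem:extended-structure} matches the paper's anyway.
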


\begin{proof}
    By Lemma~\ref{lem:extended-structure}, every extended chain has at most 2 chains and at most 1 multi-branch misaligned node. By Corollary~\ref{cor:chain-ending}, chains must end in a partially deleted node.  Thus, extended chains are finite, and by Definition~\ref{def:extended}, the last node will have either $o(\C[|\C| - 1])$ or $c(\C[|\C| - 1])$ deleted.
\end{proof}

We now establish a few additional properties of chains.  
First, we observe that chains are well separated from tree-aligned nodes.

\begin{fact}\label{fct:non-aligned}
Consider a chain $C$ with respect to an alignment $\A: \paren(\F) \onto \paren(\G)$, and let $C[i], C[i+2] \in \F$.  
There is no tree-aligned node $u$ on the path between $C[i]$ and $C[i+2]$.
\end{fact}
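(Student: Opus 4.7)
The plan is to argue by contradiction. Suppose some tree-aligned node $u$ lies strictly between $C[i]$ and $C[i+2]$ on the ancestor-descendant path in $\F$, so that $o(C[i]) < o(u) < o(C[i+2]) < c(C[i+2]) < c(u) < c(C[i])$. Tree-alignment of $u$ means there exists $w \in \G$ with $o(u) \simeq_\A o(w)$ and $c(u) \simeq_\A c(w)$. Let $v \coloneqq C[i+1] \in \G$. By \cref{def:chain}, either $C$ is an opening chain, in which case $o(C[i]) \simeq_\A o(v)$ and $c(C[i+2]) \simeq_\A c(v)$, or $C$ is a closing chain, in which case $c(C[i]) \simeq_\A c(v)$ and $o(C[i+2]) \simeq_\A o(v)$. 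I will derive a contradiction with the non-crossing property of parentheses in $\paren(\G)$ in each case using monotonicity of the matching underlying $\A$.

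Consider first the opening case. From $o(C[i]) < o(u)$ and the matches $o(C[i]) \simeq_\A o(v)$, $o(u) \simeq_\A o(w)$, monotonicity yields $o(v) < o(w)$ (strict because distinct positions of $\paren(\F)$ match to distinct positions of $\paren(\G)$). From $o(u) < c(C[i+2])$ with $o(u) \simeq_\A o(w)$ and $c(C[i+2]) \simeq_\A c(v)$, monotonicity yields $o(w) \le c(v)$. Finally, from $c(C[i+2]) < c(u)$ with $c(C[i+2]) \simeq_\A c(v)$ and $c(u) \simeq_\A c(w)$, we get $c(v) < c(w)$. Combining, $o(v) < o(w) \le c(v) < c(w)$: the parenthesis at $o(w)$ lies strictly between $o(v)$ and $c(v)$, while $c(w)$ lies outside this range, contradicting the fact that parentheses in $\paren(\G)$ are well-nested.

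The closing case is symmetric. From $o(C[i]) < o(u)$ with $o(u) \simeq_\A o(w)$ and (no direct match for $o(C[i])$ in the chain, but) $o(C[i+2]) \simeq_\A o(v)$ together with $o(u) < o(C[i+2])$, monotonicity gives $o(w) < o(v)$. From $o(C[i+2]) < c(u)$ with $o(C[i+2]) \simeq_\A o(v)$ and $c(u) \simeq_\A c(w)$, we get $o(v) \le c(w)$. From $c(u) < c(C[i])$ with $c(C[i]) \simeq_\A c(v)$, monotonicity gives $c(w) < c(v)$. Combining, $o(w) < o(v) \le c(w) < c(v)$, and again $c(w)$ lies strictly between $o(v)$ and $c(v)$ while $o(w)$ lies outside, contradicting the non-crossing structure of $\paren(\G)$.

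Since both chain types lead to contradiction, no tree-aligned node can sit on the path strictly between $C[i]$ and $C[i+2]$. The only subtle point is to verify that every inequality obtained via monotonicity is strict where required; this follows because all six indices $o(C[i]), o(u), o(C[i+2]), c(C[i+2]), c(u), c(C[i])$ are pairwise distinct and the matching $\M$ is a function on matched positions, so distinct positions of $\paren(\F)$ map to distinct positions of $\paren(\G)$. I do not anticipate any further obstacle; the argument is a direct application of monotonicity combined with the well-nested structure of parenthesis representations.
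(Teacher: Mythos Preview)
Your proof is correct and follows essentially the same approach as the paper: assume a tree-aligned node $u$ strictly between $C[i]$ and $C[i+2]$, use the chain matches on $C[i+1]$ together with monotonicity of $\A$ to deduce a crossing pair of parenthesis intervals in $\paren(\G)$, and obtain a contradiction. The only cosmetic difference is that the paper calls the node aligned to $u$ by $v$ and argues one case by symmetry, whereas you name it $w$, reserve $v$ for $C[i+1]$, and spell out both the opening and closing cases.
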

\begin{proof}
Suppose, for contradiction, that there exists a tree-aligned node $u$ on the path from $C[i]$ to $C[i+2]$, and let $v \in \G$ be the node to which $u$ is tree-aligned, so that $o(u)\simeq_\A o(v)$ and $c(u)\simeq_\A c(v)$.  
Because $u$ lies strictly between $C[i]$ and $C[i+2]$ in $\F$, its parentheses interval is properly nested between those of $C[i]$ and $C[i+2]$, that is, $o(C[i]) < o(u) < o(C[i+2]) < c(C[i+2]) < c(u) < c(C[i])$.

By symmetry, it suffices to consider the case where $C$ is an opening chain; the closing case is symmetric. 
Then, by definition of an opening chain, we have $o(C[i])\simeq_\A o(C[i+1])$ and $c(C[{i+2}])\simeq_\A c(C[i+1])$.  
Combining these with $o(C[i]) < o(u) < c(C[i+2]) < c(u)$, the monotonicity of~$\A$ implies $o(C[i+1]) < o(v) < c(C[i+1]) < c(v)$.  
This configuration would force the intervals corresponding to $C[i+1]$ and $v$ to cross in $\G$, violating the non-crossing property of parentheses in $\str{\G}$.  
Hence, no tree-aligned node $u$ can lie on the path between $C[i]$ and $C[i+2]$.
\end{proof}

Next, we show that two chains sharing the same heavy path exhibit a highly constrained structure.  
Specifically, if $C$ and $C'$ are two chains that lie on the same heavy path, then between any two consecutive nodes of $C$ there can be at most one node of $C'$, and symmetrically, between any two consecutive nodes of $C'$ there can be at most one node of $C$.  
In particular, in the region where $C$ and $C'$ overlap, their nodes must alternate along the path.  
Figure~\ref{fig:overlapping-chains} illustrates this phenomenon: on the left, two chains interleave in the only possible monotone configuration, while on the right, attempting to modify the alignment of only one chain leads to a violation of monotonicity.

\begin{figure}[th]
    \centering
    \begin{tikzpicture}
{\fontfamily{pcr}\selectfont 
    \node(F1) at (0, 0) {\textcolor{red}{(}};
    \node (F2) at (.3, 0) {[};
    \node (F3) at (.6, 0) {\textcolor{blue}{(}};
    \node (F4) at (.9, 0) {[};
    \node (F5) at (1.2, 0) {\textcolor{blue}{(}};
    \node (F6) at (1.7, 0) {...};
    \node (F7) at (2.2, 0) {\textcolor{blue}{)}};
    \node (F8) at (2.5, 0) {]};
    \node (F9) at (2.8, 0) {\textcolor{blue}{)}};
    \node (G1) at (0, -1) {[};
    \node (G2) at (.3, -1) {\textcolor{blue}{(}};
    \node (G3) at (.6, -1) {[};
    \node (G4) at (.9, -1) {\textcolor{blue}{(}};
    \node (G5) at (1.4, -1) {...};
    \node (G6) at (1.9, -1) {\textcolor{blue}{)}};
    \node (G7) at (2.2, -1) {]};
    \node (G8) at (2.5, -1) {\textcolor{blue}{)}};
    \node (G9) at (2.8, -1) {\textcolor{red}{]}};
    \node (F) at (-.8, 0) {$\str{\F}$:};
    \node (G) at (-.8, -1) {$\str{\G}$:};
    }
    \draw (F2) -- (G1);
    \draw (F3) -- (G2);
    \draw (F4) -- (G3);
    \draw (F5) -- (G4);
    \draw (F7) -- (G6);
    \draw (F8) -- (G7);
    \draw (F9) -- (G8);
\end{tikzpicture}\hspace{1cm}
\begin{tikzpicture}
{\fontfamily{pcr}\selectfont 
    \node(F1) at (0, 0) {(};
    \node (F2) at (.3, 0) {[};
    \node (F3) at (.6, 0) {\textcolor{blue}{(}};
    \node (F4) at (.9, 0) {[};
    \node (F5) at (1.2, 0) {\textcolor{blue}{(}};
    \node (F6) at (1.7, 0) {...};
    \node (F7) at (2.2, 0) {\textcolor{blue}{)}};
    \node (F8) at (2.5, 0) {]};
    \node (F9) at (2.8, 0) {\textcolor{blue}{)}};
    \node (G1) at (0, -1) {[};
    \node (G2) at (.3, -1) {\textcolor{blue}{(}};
    \node (G3) at (.6, -1) {[};
    \node (G4) at (.9, -1) {\textcolor{blue}{(}};
    \node (G5) at (1.4, -1) {...};
    \node (G6) at (1.9, -1) {\textcolor{blue}{)}};
    \node (G7) at (2.2, -1) {]};
    \node (G8) at (2.5, -1) {\textcolor{blue}{)}};
    \node (G9) at (2.8, -1) {]};
    \node (F) at (-.8, 0) {$\str{\F}$:};
    \node (G) at (-.8, -1) {$\str{\G}$:};
    }
    \draw [red] (F2) -- (G3);
    \draw (F3) -- (G2);
    \draw [red] (F4) -- (G5);
    \draw (F5) -- (G4);
    \draw [red] (F6) -- (G7);
    \draw (F7) -- (G6);
    \draw [red] (F8) -- (G9);
    \draw (F9) -- (G8);
\end{tikzpicture}
    \caption{(\textbf{Left}) Two overlapping chains $C$ and $C'$, where $C$ includes the blue parentheses and $C'$ includes the black brackets.  
    This is the only possible alignment structure for two chains when $C'[i']$ lies on the path from $C[i]$ to $C[i+2]$.  
    (\textbf{Right}) An example of two overlapping chains whose structure violates monotonicity.}
    \label{fig:overlapping-chains}
\end{figure}

\begin{lemma}
\label{lem:chain-overlap}
Consider an alignment $\A: \paren(\F) \onto \paren(\G)$, and let $C[i]$ and $C'[i']$ be nodes lying on the same heavy path of $\F$ for some $0 \le i \le |C| - 3$ and $0 \le i' \le |C'| - 3$.
Then, $C[i]$ is an ancestor of $C'[i']$ if and only if $C[i+2]$ is an ancestor of $C'[i'+2]$.
\end{lemma}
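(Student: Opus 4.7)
The plan is to prove the forward direction ($\Rightarrow$) by contradiction, and then derive the backward direction ($\Leftarrow$) by applying the forward direction with the roles of $C$ and $C'$ swapped, together with a small auxiliary claim. By \cref{obs:chain-heavy}, the $\F$-nodes of each chain lie on a single heavy path of $\F$; combined with the hypothesis that $C[i]$ and $C'[i']$ share a heavy path, all four nodes $C[i], C[i+2], C'[i'], C'[i'+2]$ lie on a common heavy path $P$ of $\F$. On $P$ the ancestor relation is a total order, so I will write $p(u)$ for the position of $u$ along $P$ (smaller closer to the root), so that $u$ is an ancestor of $v$ iff $p(u) \le p(v)$.

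For the forward direction, I suppose for contradiction that $p(C[i]) \le p(C'[i'])$ and $p(C'[i'+2]) < p(C[i+2])$. Combined with the proper-ancestor property of \cref{def:chain}, this yields the chain of strict inequalities $p(C[i]) \le p(C'[i']) < p(C'[i'+2]) < p(C[i+2])$, so $C[i]$ is a strict ancestor of $C'[i'+2]$ and $C'[i']$ is a strict ancestor of $C[i+2]$. The contradiction will follow from a three-step pattern. \emph{First,} using two of the four chain matches provided by \cref{def:chain} (one from $C$ and one from $C'$), monotonicity of $\A$ transfers two $\F$-inequalities among the parentheses of $C[i], C[i+2], C'[i'], C'[i'+2]$ (dictated by the orientations of the chains and the above ancestor relations) into $\G$-inequalities between the parentheses of $C[i+1]$ and $C'[i'+1]$, of the form $o(C[i+1]) < o(C'[i'+1])$ and $c(C[i+1]) < c(C'[i'+1])$. \emph{Second,} the non-crossing property of parentheses in $\paren(\G)$ forces the parenthesis ranges of $C[i+1]$ and $C'[i'+1]$ to be disjoint, yielding $c(C[i+1]) < o(C'[i'+1])$. \emph{Third,} the remaining two chain matches transfer this $\G$-inequality back to $\paren(\F)$, producing an inequality such as $c(C[i+2]) < o(C'[i'])$ or $c(C[i+2]) < o(C'[i'+2])$, which contradicts one of the ancestor relations established on $P$ in the setup. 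The same three-step pattern applies in each of the four orientation combinations of $C$ and $C'$, with different specific choices of the two chain matches used in each transfer.

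For the backward direction, I apply the forward direction with $C \leftrightarrow C'$ (and $i \leftrightarrow i'$) to get $p(C'[i']) \le p(C[i]) \Rightarrow p(C'[i'+2]) \le p(C[i+2])$. Assuming for contradiction $p(C[i+2]) \le p(C'[i'+2])$ and $p(C'[i']) < p(C[i])$, this swapped forward gives $p(C'[i'+2]) \le p(C[i+2])$, which combined with the first assumption forces $p(C[i+2]) = p(C'[i'+2])$, i.e., $C[i+2] = C'[i'+2]$. The finish is the auxiliary claim that $C[i+2] = C'[i'+2]$ forces $C[i] = C'[i']$, contradicting $p(C'[i']) < p(C[i])$: when $C, C'$ share orientation, the common chain match at $C[i+2]$ yields $C[i+1] = C'[i'+1]$ and then a second common match yields $C[i] = C'[i']$; when they have opposite orientations, combining the two chain matches at $C[i+2]$ with the chain-ancestor relations on $\G$-nodes forces two chain-distance-$2$ nodes within one of the chains to coincide, contradicting \cref{def:chain}. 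Boundary corner cases (where some chain match at $C[i+2]$ is absent because $C[i+2]$ or $C'[i'+2]$ sits at a chain endpoint) are handled via the maximality of chains built into \cref{def:chain} together with \cref{cor:chain-ending}. The main obstacle is this case analysis: the three-step forward pattern requires slightly different pairs of matches across the four orientation combinations of $C, C'$, and the auxiliary claim for the backward direction is most delicate in the opposite-orientation case near chain endpoints, where chain maximality must be invoked carefully.
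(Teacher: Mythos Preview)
Your proposal is correct but more laborious than the paper's route. The paper proves one direction by contradiction with a simpler two-step pattern: from the nested configuration on the heavy path it pushes the chain of \emph{three} $\F$-inequalities $o(C'[i']) < o(C[i]) < c(C'[i'+2]) \le c(C[i+2])$ to $\G$ via the four chain matches, obtaining $o(C'[i'+1]) < o(C[i+1]) < c(C'[i'+1]) \le c(C[i+1])$, which is already a crossing in $\str{\G}$---done, with no return trip to $\F$. Your third step (transfer back) becomes unnecessary once you also transfer the mixed inequality $o(C[i]) < c(C'[i'+2])$; this also lets you halve the case analysis by taking $C$ opening without loss of generality (transposition symmetry). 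For the converse, the paper simply writes ``by symmetry, exchanging the roles of $C$ and $C'$''; as you implicitly noticed, swap-plus-contrapositive on a total order yields only the strict-ancestor version, leaving the equality case $C[i] = C'[i']$ unaddressed, so your explicit auxiliary claim makes the argument more complete than the paper's. One small imprecision in your description of that claim: in the opposite-orientation subcase (away from endpoints) what you actually derive is $C[i+1] = C'[i'+3]$ and $C[i+3] = C'[i'+1]$, forcing $C[i+1]$ and $C[i+3]$ to be mutual proper ancestors in $\G$---still a contradiction with \cref{def:chain}, just not the ``coincidence'' you named; the endpoint handling via \cref{cor:chain-ending} is fine.
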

\begin{proof}
    By \cref{obs:chain-heavy}, all four nodes lie on the same heavy path.
    Further, \cref{def:chain} implies that $C[i+2]$ is a descendant of $C[i]$ and $C'[i'+2]$ is a descendant of $C'[i']$.
    Assume for contradiction that $C[i+2]$ is an ancestor of $C'[i'+2]$, but $C[i]$ is not an ancestor of $C'[i']$.  
    In this case, the four nodes must appear along the path in the order  
    $C'[i']$, $C[i]$, $C[i+2]$, $C'[i'+2]$.

    By symmetry, assume without loss of generality that $C$ is an opening chain.  
    Then, $o(C[i]) \simeq_\A o(C[i+1])$ and $c(C[i+2]) \simeq_\A c(C[i+1])$.  
    If $C'$ is also an opening chain, then $o(C'[i']) \simeq_\A o(C'[i'+1])$ and $c(C'[i'+2]) \simeq_\A c(C'[i'+1])$.  
    Because $o(C'[i']) < o(C[i]) < c(C'[i'+2]) \le c(C[i+2])$, the monotonicity of $\A$ implies  
    $o(C'[i'+1]) < o(C[i+1]) < c(C'[i'+1]) \le c(C[i+1])$,  
    which violates the non-crossing property of the parentheses in $\str{\G}$.

    If instead $C'$ is a closing chain, then $c(C'[i']) \simeq_\A c(C'[i'+1])$ and $o(C'[i'+2]) \simeq_\A o(C'[i'+1])$.  
    Since $o(C[i]) < o(C'[i'+2]) < c(C[i+2]) < c(C'[i'])$, monotonicity implies  
    $o(C[i+1]) < o(C'[i'+1]) < c(C[i+1]) < c(C'[i'+1])$,  
    which again contradicts the non-crossing property of $\str{\G}$.

    Hence, the assumption leads to a contradiction, proving that if $C[i+2]$ is an ancestor of $C'[i'+2]$, then $C[i]$ must also be an ancestor of $C'[i']$.  
    By symmetry, exchanging the roles of $C$ and $C'$ gives the converse.  
    Thus, $C[i]$ is an ancestor of $C'[i']$ if and only if $C[i+2]$ is an ancestor of $C'[i'+2]$.
\end{proof}

With the above lemma, we can now analyze the case when two nodes $C[i]$ and $C'[i']$ on the same heavy path fall into distinct branches of the $\mathsf{TreeAlign}$ procedure due to $w(C[i],C[i+4]) < \sqrt{n} \le w(C'[i'],C'[i'+4])$.

\begin{lemma}
\label{lem:big-width-misaligned-strong}
Consider chains $C$ and $C'$ with respect to an alignment $\A: \paren(\F) \onto \paren(\G)$.  
Suppose that for some $i \in [0\dd |C|-5]$ and $i' \in [0\dd |C'|-5]$, the nodes $C[i], C'[i'], C[i+4], C'[i'+4]$ appear in this order along a single heavy path of $\F$.  
If $w(C[i],C[i+4]) < w(C'[i'],C'[i'+4])$, then there exists a node $v$ on the path between $C[i+4]$ and $C'[i'+4]$ such that:
\begin{enumerate}
    \item $v$ is either deleted, partially deleted, multi-branch misaligned, or among the first two nodes of $\F$ in the chain in belongs to, and
    \item $w(C'[i'],v) \le w(C[i],C[i+4])$.
\end{enumerate}
\end{lemma}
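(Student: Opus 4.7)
The plan is to find $v$ by walking along the F-nodes of chain $C$ starting from $C[i+4]$ and identifying the first opportunity to stop under one of the four alternatives in condition~(a). Before walking, I first rephrase condition~(b) using the ancestor relations on the heavy path: since the candidate $v$ lies between $C[i+4]$ and $C'[i'+4]$ and all of $C'[i']$, $C[i+4]$, $v$, $C'[i'+4]$ sit on the same heavy path of $\F$, condition~(b) is equivalent to $|\F(C[i+4])| - |\F(v)| \le |\F(C[i])| - |\F(C'[i'])|$. The hypothesis $w(C[i],C[i+4]) < w(C'[i'],C'[i'+4])$ expands, using the nesting $\F(C[i]) \supsetneq \F(C'[i']) \supsetneq \F(C[i+4]) \supsetneq \F(C'[i'+4])$, to the strict inequality $\gamma < \Delta$ where $\gamma = |\F(C[i])| - |\F(C'[i'])|$ and $\Delta = |\F(C[i+4])| - |\F(C'[i'+4])|$. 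This is the main quantitative ingredient: we must find a problematic $v$ within subtree-size distance $\gamma$ of $C[i+4]$, and the fact that the total subtree gap $\Delta$ strictly exceeds $\gamma$ is what forces such a $v$ to exist.

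First, I would apply \cref{lem:chain-overlap} iteratively to pin down the relative order of the F-nodes of $C$ and $C'$ along the heavy path, distinguishing the two sub-configurations according to whether $C'[i'] > C[i+2]$ or $C[i+2] > C'[i']$, and thereby locating $C'[i'+2]$ and $C[i+6]$ relative to $C'[i'+4]$. This geometric picture tells us whether the next F-node $C[i+6]$ stays strictly between $C[i+4]$ and $C'[i'+4]$ on the path (in which case the walk must continue) or drops to $C'[i'+4]$ or below (in which case $C'[i'+4]$ itself is a candidate witness).

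Second, I would follow the sequence $C[i+4], C[i+6], C[i+8],\ldots$ until one of three events occurs. Event~(i): chain $C$ terminates by \cref{cor:chain-ending}, so that the terminal node is partially deleted; tracing back via \cref{lem:chain-cont} this yields a partially deleted F-node on the path, which serves as $v$. Event~(ii): the next F-node of $C$ in the walk drops to a descendant of $C'[i'+4]$; then $C'[i'+4]$ itself lies on the path, and we analyze its status — if $C'[i'+4]$ is not partially deleted or multi-branch misaligned, then the alignment configuration forces a new chain $C''$ to start in the interval between $C[i+4]$ and $C'[i'+4]$, with an F-node among its first two on this path, which we take as $v$. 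Event~(iii): a misaligned node not in chain $C$ appears on the path; by \cref{fct:non-aligned} no such node lies in the interior of a segment between two consecutive F-nodes of $C$, so it must coincide with a chain boundary, and Lemma~\ref{lem:chain-cont} gives it the required structure.

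The main obstacle will be verifying condition~(b) in event~(ii) and ensuring that the chosen $v$ really does fall within the subtree-size tolerance $\gamma$. The key accounting shows that each advance of chain $C$ through an F-node costs at least as much subtree budget as the corresponding advance of chain $C'$ through the matched G-nodes, controlled via the monotonicity of $\A$ and the opening/closing chain structure; together with the strict inequality $\gamma < \Delta$, this forces the walk to halt at an F-node whose subtree loss from $C[i+4]$ is bounded by $\gamma$. A delicate sub-case is when chain $C$ does not end by outright partial deletion or multi-branching but instead splices into a maximal chain $C''$; here we must leverage the maximality of $C$ together with \cref{lem:chain-cont} to argue that the encountered node is necessarily $C''[0]$ or $C''[2]$ (i.e.\ one of the first two F-nodes of $C''$), which is the precise role of the fourth alternative in condition~(a).
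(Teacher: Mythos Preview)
Your plan has a genuine gap in establishing condition~(2). You walk along the $\F$-nodes $C[i+4], C[i+6], C[i+8],\ldots$ of the single chain $C$, but the bound $w(C'[i'],v)\le w(C[i],C[i+4])$, equivalently $w(C[i+4],v)\le w(C[i],C'[i'])$, requires accounting for \emph{every} node on the heavy path between $C[i+4]$ and $v$. Between two consecutive $\F$-nodes $C[i+4]$ and $C[i+6]$ there may be many intermediate heavy-path nodes, each contributing its own light subtree to $w(C[i+4],C[i+6])$; your walk never visits them, so your ``key accounting'' paragraph has no handle on their contribution. Relatedly, your use of \cref{fct:non-aligned} in Event~(iii) is inverted: that fact excludes \emph{tree-aligned} nodes from the interior of a chain segment, so every interior node is misaligned or deleted --- exactly the opposite of what you wrote --- and in particular nodes of \emph{other} chains may well sit there.

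The paper's argument is quite different and sidesteps this difficulty. It defines $v$ directly as the highest node on the path from $C[i+4]$ to $C'[i'+4]$ that is \emph{not} of the form $C_u[i_u+4]$ for some chain $C_u$ with $i_u\in[0\dd|C_u|-5]$. Then every node $u$ on the path in $[C[i+4],v)$ equals $C_u[i_u+4]$ for some chain $C_u$ (possibly different from both $C$ and $C'$). Applying \cref{lem:chain-overlap} to each such $C_u$ against $C$ and against $C'$ shows that the map $u\mapsto C_u[i_u]$ is order-preserving, hence injective, into the nodes of $[C[i],C'[i'])$; and by \cref{obs:chain-rep}, $\F'(u)=\F'(C_u[i_u])$. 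Summing light-subtree sizes immediately yields $w(C[i+4],v)\le w(C[i],C'[i'])$, which is condition~(2). The strict hypothesis then rules out $v=C'[i'+4]$, and condition~(1) follows by a one-line case check using \cref{fct:non-aligned}. The missing idea in your proposal is precisely this: look at \emph{all} chains through the heavy path, not only $C$, and use the order-preserving correspondence $C_u[i_u+4]\mapsto C_u[i_u]$ together with light-subtree equality.
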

\begin{proof}
    Let $v$ be the highest node on the path from $C[i+4]$ to $C'[i'+4]$ that is not of the form $C_v[i_v+4]$ for any chain $C_v$ and index $i_v \in [0\dd |C_v|-5]$; if no such node exists, set $v = C'[i'+4]$.

    By construction, every node $u$ on the path from $C[i+4]$ (inclusive) to $v$ (exclusive) must be of the form $C_u[i_u+4]$ for some chain $C_u$ and index $i_u$.  
    By \cref{lem:chain-overlap}, the corresponding nodes $C_u[i_u]$ appear in the same relative order along the path from $C[i]$ (inclusive) to $C'[i']$ (exclusive).  
    Moreover, for each such $u$, we have $\F'(C_u[i_u]) = \F'(C_u[i_u+4])$, implying that the light subtrees rooted between $C[i]$ and $C'[i']$ are in total at least as large as the light subtrees rooted between $C[i+4]$ and $v$.
    Thus, $w(C[i+4],v) \le w(C[i],C'[i'])$, and consequently,
    \[
    w(C'[i'],v) = w(C'[i'],C[i+4]) + w(C[i+4],v)
        \le  w(C'[i'],C[i+4]) + w(C[i],C'[i'])
        = w(C[i],C[i+4]).
    \]
    Since $w(C'[i'],C'[i'+4])>w(C[i],C[i+4])$, it follows that $v \neq C'[i'+4]$.

    If $v$ is deleted, partially deleted, or multi-branch misaligned, or if $v$ is among the first two nodes of $\F$ in some chain, the lemma holds immediately.  
    By \cref{fct:non-aligned}, $v$ cannot be tree-aligned.  
    The only remaining possibility is that $v$ itself is of the form $C_v[i_v+4]$ for some chain $C_v$ and index $i_v \in [0\dd |C_v|-5]$, contradicting the definition of $v$.  
    Hence, such a node $v$ as described must exist.
\end{proof}
Finally, we are ready to prove the main lemma of this section.  
Let $\A: \paren(\F) \to \paren(\G)$ be an optimal alignment, and let $\A'$ denote $\mathsf{TreeAlign}(\A)$.  
Let $D_\A$ be the deleted parentheses of $\A$, and let $D_{\A'}$ be the deleted parentheses produced by $\mathsf{TreeAlign}(\A, u)$.  
For a misaligned node $u$, let $\C(u)$ denote the extended chain containing $u$.  
For an extended chain $\C(u)$, let $d_{\C(u)}$ denote the deleted parenthesis corresponding to $\C[|\C|-1]$.  
We define a mapping $\gamma: D_{\A'} \onto D_\A$ and show that the preimage $\gamma^{-1}(d)$ of any $d \in D_\A$ has size $\Oh(\sqrt{n})$.

To construct $\gamma$, we examine the deletions caused by $\mathsf{TreeAlign}(\A, u)$ for each heavy node $u$ in $\F$.  
If $u$ is tree-aligned, no deletions occur, so we only consider misaligned nodes.  
Suppose $u$ lies on a heavy path $H$, and let $\C$ be the extended chain containing $u$.

\paragraph{Case 1:}
If $u$ falls into Case~1, then $u = C[k]$ for some chain $C$ and integer $k \in [0\dd |C|-5]$, with $w(C[k], C[k+4]) < \sqrt{n}$.  
The routine $\mathsf{TreeAlign}$ then modifies $\A$ so that the nodes in the subtree $\F'(u)$ are aligned to those in $\G'(v)$, where $v = C[k+1]$, deleting any parentheses whose alignments would violate monotonicity after this modification.  
Let $\paren(\F)[a]$ be one of the parentheses deleted in this step, and let $u_a \in H$ denote the heavy node in $H$ whose light subtree contains the node corresponding to $\paren(\F)[a]$.  
By the characterization from \cref{subsec:treeconstraints}, the position $a$ corresponds to a node in $\F(C[k]) \setminus \F(C[k+4])$, and hence $u_a$ lies on the path from $C[k]$ to $C[k+4]$.

To define $\gamma$ for such $a$, we further subdivide this case to ensure that no more than $\Oh(\sqrt{n})$ deletions of $\A'$ map to a single deletion of $\A$.

Before discussing the subcases, note that if $u_a$ is a node $C_a[k_a]$ in some chain $C_a$ with $k_a \in [0\dd |C_a|-5]$ and $w(C_a[k_a], C_a[k_a+4]) < \sqrt{n}$, then $u_a$ itself falls into Case~1 of $\mathsf{TreeAlign}$.  
In that case, the light subtree of $u_a$ (including $\paren(\F)[a]$) will later be matched perfectly by $\mathsf{TreeAlign}(\A, u_a)$.  
Therefore, we do not define $\gamma(a)$ at this point; if $a \in D_{\A'}$ is ultimately deleted, then we will instead define $\gamma(a)$ as we analyze the final call to $\mathsf{TreeAlign}(\A, u')$ that deletes $\paren(\F)[a]$.

\begin{enumerate}[label=(\alph*)]
    \item If $\paren(\F)[a]$ is already deleted in $\A$, set $\gamma(a) = a$.

    \item If $u_a$ is partially deleted or multi-branch misaligned, set $\gamma(a) = d_{\C(u_a)}$.

    \item If $u_a = C_a[k_a]$ for some chain $C_a$ and $k_a \ge |C_a| - 4$, set $\gamma(a) = d_{\C(u_a)}$.

    \item If $u_a = C_a[k_a]$ for some chain $C_a$ and $w(C_a[k_a], C_a[k_a+4]) \ge \sqrt{n}$, then we apply \cref{lem:big-width-misaligned-strong} to obtain a node $u'$ on the path from $C[k+4]$ to $C_a[k_a+4]$ such that $u'$ is deleted, partially deleted, multi-branch misaligned, or among the first two nodes of $\F$ in its chain.  
    Moreover, $u'$ satisfies $w(C_a[k_a], u') = w(C[k], C[k+4]) < \sqrt{n}$.  
    We then set $\gamma(a) = o(u')\in D_\A$ if $u'$ is deleted, and $\gamma(a) = d_{\C(u')}$ otherwise.
\end{enumerate}

\paragraph{Case 2:}
If $u$ falls into Case 2, then $u$ is either a partially deleted node, a multi-branch misaligned node, or a single-branch misaligned node whose chain $C$ does not satisfy the width condition required for Case 1 in $\mathsf{TreeAlign}$.  
In this case, we set $\gamma(o(u)) = \gamma(c(u)) = d_{\C(u)}$.
We now proceed to analyze the above mapping in the proof of the main lemma of this section.

\begin{lemma}
\label{lem:underapprox}
 For all forests $\F$ and $\G$ of size at most $n$, we have \[\ted(\F, \G) \leq \Oh(\sqrt{n}) \cdot \ed(\paren(\F), \paren(\G)).\]
\end{lemma}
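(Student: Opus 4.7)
The plan is to reduce the claim to analyzing the mapping $\gamma\colon D_{\A'}\to D_\A$ defined just above the lemma, where $\A\colon \paren(\F)\to\paren(\G)$ is an optimal alignment and $\A'=\mathsf{TreeAlign}(\A)$. First I would replace substitutions in $\A$ by pairs of deletions (losing only a factor of $2$), so that $|D_\A|\le 2\ed(\paren(\F),\paren(\G))$, and recall from \cref{lem:tree-constraint} that $\A'$ is a valid tree alignment whose cost equals $|D_{\A'}|$. Since two deletions in $\A'$ correspond to one node edit, it suffices to prove $|D_{\A'}|\le \Oh(\sqrt{n})\cdot |D_\A|$, and this in turn follows if $\gamma$ is well-defined with every fiber $\gamma^{-1}(d)$ of size $\Oh(\sqrt{n})$.

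For well-definedness, I would verify each case of the construction. In Case~(a) the image $a$ lies in $D_\A$ by assumption. In Cases~(b), (c), and Case~2, the target $d_{\C(u_a)}$ (resp.\ $d_{\C(u)}$) is a deletion of $\A$ by \cref{cor:extended-end}, which guarantees that the last node of every extended chain is partially deleted. In Case~(d), the node $u'$ supplied by \cref{lem:big-width-misaligned-strong} is deleted, partially deleted, multi-branch misaligned, or within the first two positions of its chain $\C(u')$; in the last three subcases $d_{\C(u')}$ exists, again by \cref{cor:extended-end}, and in the first subcase $o(u')\in D_\A$ directly. One subtlety worth spelling out is that if $u_a=C_a[k_a]$ with $k_a\in[0\dd|C_a|-5]$ and $w(C_a[k_a],C_a[k_a+4])<\sqrt{n}$, then $u_a$ itself falls into Case~1 of $\mathsf{TreeAlign}$ and actually matches $\paren(\F)[a]$ later, so $a\notin D_{\A'}$; hence the case distinction in the construction is exhaustive.

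For the fiber bound, I would fix $d\in D_\A$ and enumerate its sources. From Case~(a) there is at most one preimage, namely $d$ itself. The remaining preimages are of the form $d=d_\C$ (or $d=o(u')$ in Case~(d)) for some extended chain $\C$; by \cref{cor:extended-end} each extended chain owns a unique $d_\C$, so it suffices to bound the number of parentheses $a\in D_{\A'}$ that can be charged to a single extended chain endpoint. Case~2 contributes at most $2$ charges per node $u\in\C$ handled in Case~2, and across one extended chain the relevant nodes lie along a single heavy path, so those charges come from the two chains contained in $\C$ (\cref{lem:extended-structure}); Case~(b), (c), (d) charges, one can show, arise only from Case~1 invocations $\mathsf{TreeAlign}(\A,C[k])$ with $w(C[k],C[k+4])<\sqrt{n}$, each of which deletes at most $\Oh(w(C[k],C[k+4]))=\Oh(\sqrt{n})$ parentheses. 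The key quantitative observation is that along any heavy path of $\F$, the intervals $[C[k],C[k+4]]$ of distinct chains interleave in a controlled way via \cref{lem:chain-overlap}, so that at most $\Oh(1)$ Case~1 calls can route their overflow deletions into a single endpoint $d_\C$ via \cref{lem:big-width-misaligned-strong}; hence at most $\Oh(\sqrt{n})$ preimages accumulate per endpoint.

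The main obstacle is the last step: ruling out that a single $d_\C$ receives unbounded charges through repeated applications of Case~(d) across many different heavy paths and chains. I plan to handle this by induction along the heavy-path order used in \cref{subsubsec:allheavy}, arguing that each transfer dictated by \cref{lem:big-width-misaligned-strong} strictly decreases $w(C[k],C[k+4])$ and terminates within $\Oh(1)$ hops at a node that is either already deleted in $\A$ (charge~$a$ to itself, collapsing to Case~(a)) or sits at the ultimate endpoint of an extended chain (charge~$d_\C$, collapsing to Cases~(b)/(c)). Combining these bounds gives $|D_{\A'}|\le \Oh(\sqrt{n})\cdot|D_\A|$, and together with $|D_\A|\le 2\ed(\paren(\F),\paren(\G))$ and $2\ted(\F,\G)\le |D_{\A'}|$, the stated inequality follows.
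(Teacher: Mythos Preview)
Your overall strategy—bounding the fibers of $\gamma$—matches the paper, and your well-definedness discussion is essentially right. However, your fiber bound has a genuine gap in Case~2, and the proposed ``induction along the heavy-path order'' for Case~1(d) is not the correct mechanism.

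For Case~2 you write that the charges ``come from the two chains contained in $\C$'', but a chain can contain arbitrarily many nodes of $\F$, so this alone gives no bound. The missing step is a packing argument: if $u=C[k]$ falls into Case~2 with $k<|C|-4$, then by the failure of the Case~1 condition $w(C[k],C[k+4])\ge\sqrt{n}$, so at least one of $w(C[k],C[k+2])$ or $w(C[k+2],C[k+4])$ is $\ge\sqrt{n}/2$. These are widths of disjoint segments along a single heavy path of total length at most $2n$, so there can be only $\Oh(n/\sqrt{n})=\Oh(\sqrt{n})$ such nodes $u$ per chain. Together with the $\Oh(1)$ nodes with $k\ge|C|-4$ and the $\Oh(1)$ partially deleted or multi-branch nodes in $\C$ (via \cref{lem:extended-structure}), this is what gives $\Oh(\sqrt{n})$ Case~2 charges to $d_\C$. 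Without this packing step, your Case~2 bound is unproved.

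For Case~1(d), no inductive hopping is needed: \cref{lem:big-width-misaligned-strong} is applied once and already produces a node $u'$ with $w(C_a[k_a],u')<\sqrt{n}$ that is deleted, partially deleted, multi-branch misaligned, or among the first two $\F$-nodes of its chain. The point is that, for a fixed extended chain $\C$, there are only $\Oh(1)$ nodes of these special types (again via \cref{lem:extended-structure}), and for each such $u'$, every position $a$ charged through it lies in $\F(C_a[k_a])\setminus\F(u')$, a set of size below $\sqrt{n}$. Hence $\Oh(1)\cdot\Oh(\sqrt{n})$ charges from Case~1(d). Your phrase ``strictly decreases $w$ and terminates within $\Oh(1)$ hops'' does not describe this argument and, as stated, does not yield the bound; the width bound $w(C_a[k_a],u')<\sqrt{n}$ from the lemma is what directly controls the number of preimages, not a telescoping over hops.
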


\begin{proof}
We consider each of the cases in the definition of the mapping $\gamma$ and show that, for any $d \in D_{\A}$, the number of parentheses $a$ such that $\gamma(a) = d$ is at most $\Oh(\sqrt{n})$.

\vspace{1mm}
\noindent\textbf{Case 1(a):}  
Here $a = d$, so trivially only a single element is mapped to $d$ from this case. 
\vspace{1mm}

\noindent\textbf{Case 1(b):}  
If $a$ is mapped to $d$ in this case due to a partially deleted or multi-branch misaligned node $u_a$, then $d$ must belong to the same extended chain as $u_a$.  
By Lemma~\ref{lem:extended-structure}, each extended chain contains at most one multi-branch misaligned node and at most two partially deleted nodes (by Definition~\ref{def:extended}).  
Since $w(C[k], C[k+4]) < \sqrt{n}$, fewer than $\sqrt{n}$ parentheses lie in the light subtree of $u_a$, and hence at most $\Oh(\sqrt{n})$ parentheses $a$ map to $d$ in this case.
\vspace{1mm}

\noindent\textbf{Case 1(c):}  
Here $u_a$ is one of the last two nodes of $C_a$ in $\F$.  
As in Case~1(b), since $w(C[k], C[k+4]) < \sqrt{n}$, at most $\Oh(\sqrt{n})$ deletions are mapped to $d_{\C(u_a)}$ via $u_a$.  
By Lemma~\ref{lem:extended-structure}, there are only $\Oh(1)$ such nodes per extended chain, and hence $\Oh(\sqrt{n})$ total deletions mapped to $d_{\C(u_a)}$.
\vspace{1mm}

\noindent\textbf{Case 1(d):}  
In this case, there exists a deleted or misaligned node $u'$ on the path from $C_a[k_a]$ to $C_a[k_a+4]$ such that $w(C_a[k_a], u') < \sqrt{n}$.  
There are fewer than $\sqrt{n}$ deletions attributed to $u'$, as they all correspond to nodes in the light subtree $\F'(C_a[k_a])$, which lies entirely within the subtree $\F(C_a[k_a]) \setminus \F(u')$ of size less than $\frac{1}{2}\sqrt{n}$.  
If $u'$ is deleted, these deletions are mapped directly to $u'$.  
Otherwise, $u'$ is partially deleted, multi-branch misaligned, or among the first two nodes in $\F$ of its chain.  
By Lemma~\ref{lem:extended-structure}, there are $\Oh(1)$ such nodes $u'$ per extended chain $\C(u')$, and each contributes at most $\Oh(\sqrt{n})$ deletions mapped to $d_{\C(u')}$.
\vspace{1mm}

\noindent\textbf{Case 2:}  
For $u$ in Case~2, $u$ must either be partially deleted, multi-branch misaligned, or a single-branch misaligned node whose chain $C$ does not satisfy the requirements of Case~1.  
The deletions of $o(u)$ and $c(u)$ are assigned to $d_{\C(u)}$.  
By Definition~\ref{def:extended} and Lemma~\ref{lem:extended-structure}, only $\Oh(1)$ multi-branch or partially deleted nodes can be mapped to $d_{\C}$ from this case.  
If $u$ is the $k$th node in a chain $C$, then either $k \ge |C| - 4$, $w(C[k], C[k+2]) \ge \sqrt{n}$, or $w(C[k+2], C[k+4]) \ge \sqrt{n}$.  
The first case implies that $u$ is the last or second-to-last node of $\F$ in $C$, so it can occur at most twice per chain overlapping with $\C$.  
In the remaining two cases, since each occurrence of $w(\cdot,\cdot) \ge \sqrt{n}$ corresponds to a distinct $\sqrt{n}$-sized interval, it can occur only $\Oh(n / \sqrt{n}) = \Oh(\sqrt{n})$ times per chain.  
Thus, at most $\Oh(\sqrt{n})$ nodes are mapped to $d_{\C(u)}$ from Case~2.

\vspace{1mm}
In all cases, at most $\Oh(\sqrt{n})$ parentheses are mapped to any $d \in D_{\A}$.  
Hence $|D_{\A'}| \le \Oh(\sqrt{n}) \cdot |D_{\A}|$, and therefore
\[
    \ted(\F, \G) \le |D_{\A'}| \le \Oh(\sqrt{n}) \cdot |D_{\A}| = \Oh(\sqrt{n}) \cdot \ed(\paren(\F), \paren(\G)).\qedhere
\]
\end{proof}

\subsection{Approximate Dynamic Tree Edit Distance}
\label{subsec:dynTreeSec}
We now show that our novel tree to string reduction can be maintained dynamically and used in conjunction with the dynamic string edit distance algorithm of \cite{KMS2023} to obtain a dynamic approximation for tree edit distance with sub-polynomial update time. Our reduction relies on maintaining a heavy-light decomposition on the input trees as dynamic edits occur on the tree. Fortunately, we have already shown how to dynamically maintain our heavy-light decomposition in our dynamic Dyck edit distance algorithm; observe that the relationship between input trees and their parentheses representation is precisely the same relationship between a parenthesis string and its reduction tree from Section~\ref{sec:fastapprox}. The only difference is that now dynamic edits come directly to the tree, which correspond to two edits on the parentheses string. With this in mind, we can directly use our earlier result that states that we can maintain the heavy-light paths of our decomposition in poly-logarithmic time.  The main challenge of this section then is showing how we update the subtree embeddings in the labels of our reduction strings according to Definition~\ref{def:parenF}. Recall that in our reduction strings, a node's label contains the subtree of all its light children. For any dynamic insertion or deletion, we therefore need to reflect the dynamic edit in the labels of $\Oh(\log n)$ ancestor nodes, one per light node. Additionally, for any nodes that change from heavy to light or light to heavy we need to adjust its parents label accordingly.

In order to reflect the dynamic edit in light node ancestor labels, we first need a way to compute the set of light node ancestors of the dynamic edit.  We show that using a binary tree built on top of the parenthesis representation, we can do so.

\begin{lemma}
\label{lem:lightancestorset}
    For a forest $\F$, a data structure can be dynamically maintained to compute the light node ancestor set of node $v$ for all $v \in \F$ in time $\Oh(\log^2 |\F|)$ with update time $\Oh(\log^2 |\F|)$ when a node of $\F$ changes from heavy to light or light to heavy.  The data structure can be initialized in time $\tOh(|\F|)$.
\end{lemma}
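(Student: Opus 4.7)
The plan is to piggy-back on the heavy-path structure already maintained by \cref{lem:heavylight} (applied to $\str{\F}$), since in our heavy-light decomposition the light ancestors of any node $v$ are exactly the heads of the distinct heavy paths crossed on the way from $v$ to the root, and by \cref{obs:heavy-light} there are at most $\Oh(\log |\F|)$ of them. I would therefore augment each heavy path $P$ with a pointer to its head (already kept by \cref{lem:heavylight}) and with a pointer to the heavy path containing the parent of that head, or $\mathsf{null}$ when the head has no parent. Each node $v$ additionally stores a pointer to the unique heavy path containing it.

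Given this, the query for a node $v$ is straightforward: retrieve the heavy path $P_0$ of $v$, emit its head, jump to $P_1$ via the parent-path pointer of $P_0$, emit its head, and repeat until reaching $\mathsf{null}$. Each step takes $\Oh(\log |\F|)$ time once one accounts for navigating the balanced tree $B$ of \cref{lem:parenIdx} to recover the opening parenthesis of each head together with its heavy path, and the walk has $\Oh(\log |\F|)$ steps, giving the claimed $\Oh(\log^2 |\F|)$ bound.

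For maintenance, when a node $u$ switches between heavy and light \cref{lem:heavylight} already reconciles the set of heavy paths, and because heavy paths are never split or merged in our decomposition only $\Oh(1)$ parent-path pointers are affected per flip. If $u$ becomes light, a heavy path is born (or extended at its top) with $u$ as its head, and I would compute its parent-path pointer by locating the parent of $u$ and reading off the heavy path stored there; if $u$ becomes heavy, the parent-path pointer that used to target the path headed by $u$ is redirected to the heavy path now absorbing $u$. The only non-trivial ingredient is locating the parent of $u$ from $o(u)$, equivalently finding the largest index $j < o(u)$ with $h(j) = h(o(u)) - 1$; since heights change by one at a time, this is the same as the largest $j < o(u)$ with $h(j) < h(o(u))$. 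I plan to support this predecessor-by-height query in $\Oh(\log^2 |\F|)$ time using the same lazy-propagation segment tree that supports $\rg$ in \cref{lem:segtree2}: descend from the root, recursing at each internal node into the rightmost child whose minimum-height summary certifies the existence of a qualifying position. Hence each heavy/light flip is handled in $\Oh(\log^2 |\F|)$ time overall, and initialization in $\tOh(|\F|)$ follows by building $B$, the segment tree, and the initial heavy-light decomposition in near-linear time and populating all pointers in one top-down pass.

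The hard part will be engineering the predecessor-by-height query so that it cohabits with the existing segment tree without disturbing its $\Oh(\log |\F|)$ update time; my intended fix is to reuse the per-node minimum-height summaries already required by $\rg$, so that dynamic edits to $\str{\F}$ keep working unchanged and the new query is just one additional $\Oh(\log^2 |\F|)$-time descent through the same structure.
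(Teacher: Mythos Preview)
Your proposal takes a genuinely different route from the paper. The paper builds a balanced binary tree $B$ over $\str{\F}$ in which each internal node stores the number of \emph{unmatched light} opening and closing parentheses in its substring; a query starts from the leaf $o(v)$, walks up $B$ while tracking unmatched closing parentheses to the left of $o(v)$, marks the subtrees in which new unmatched light opening parentheses appear, and then descends into those subtrees to recover the actual indices. A status flip of one node $u$ only changes two leaves of $B$ (those for $o(u)$ and $c(u)$), so updating the two root-to-leaf paths suffices.

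There is, however, a real gap in your maintenance argument. The slogan ``heavy paths are never split or merged'' describes the behaviour of \cref{lem:heavylight} under a \emph{character edit to the underlying string}; it does not describe the primitive operation this lemma must support, namely a single node $u$ flipping its heavy/light status. That flip is precisely a split of one heavy path into two (if $u$ becomes light) or a merge of two heavy paths (if $u$ becomes heavy). In either case far more than $\Oh(1)$ pointers can go stale: when a path splits into $P_{\mathrm{top}}$ and $P_{\mathrm{bot}}$, every light child of a node on $P_{\mathrm{bot}}$ heads a path whose parent-path pointer previously targeted the old combined path and now must target $P_{\mathrm{bot}}$; the merge case is symmetric. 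Your per-node ``pointer to the heavy path containing it'' has the same defect---after a split or merge, all nodes on one of the two pieces need theirs rewritten, and that piece can have $\Theta(|\F|)$ nodes. Also, \cref{lem:heavylight} is not the right citation here: it maintains $\hvy_{\T_X}$ under string edits, not under isolated status flips, and it maintains neither parent-path nor node-to-path pointers. To rescue the idea you would need either link--cut machinery (path split/merge in $\Oh(\log|\F|)$, find-root for the head) or to drop the pointer layer and locate each next light ancestor directly via an augmented search structure---at which point you have essentially reconverged to the paper's construction.
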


\begin{proof}
    We begin by computing a heavy-light decomposition of $\F$. Recall that by Observation~\ref{obs:paren-anc}, ancestors of node $v$ are precisely those nodes $u$ with $o(u) < o(v) < c(v) < c(u)$. So, we show how to find the set of light nodes satisfying these conditions. To do so, we utilize a binary tree $B$ built on top of the parenthesis representation of $\F$. 

    The leaves of $B$ partition $\str{\F}$ into individual characters so the $i$th leaf corresponds to the $i$th character in $\str{\F}$ and each internal node $v$ corresponds to the concatenation of the strings in its children, which we denote as $S_v$.  Each node $v$ of $B$ stores the number of unmatched light node opening parentheses $U^O_v$ and unmatched light node closing parentheses $U^C_v$ in $S_v$. Note here that parenthesis $o(u)$ ($c(u)$) for a node $u$ is unmatched in a substring if $c(u)$ ($o(u)$) is not contained in that substring. Observe that given node $v$ with left child $u$ and right child $w$, $U^O_v = \max\{U^O_u - U^C_w, 0\} + U^O_w$ and $U^C_v = \max\{U^C_w - U^O_u, 0\} + U^C_u$. It is easy to see that constructing $B$ bottom-up can be done in $\Oh(|\F|)$-time with this fact.  Furthermore, if a single parentheses is deleted from $\F$ or changed from light to heavy or heavy to light, updating the root-to-leaf path of the updated parenthesis can be done in time logarithmic of the height of tree, which can be maintained using a self-balancing tree, e.g., an AVL tree, while only affecting an additional factor of $\Oh(\log |\F|)$ nodes per update. Therefore, when a node $u$ changes from heavy to light, we can rebuild the root-to-leaf path to the leaves corresponding to $o(u)$ and $c(u)$ in $\Oh(\log^2 |\F|)$ time.

    Now, we discuss how to use $B$ to find the set of light node ancestors of a node $v \in \F$. We start at the leaf corresponding to $o(v)$ and traverse upwards through parent-child links. As we traverse upwards, if we are at a node $u$, we will keep track of the number of unmatched closing parentheses to the left of $o(v)$ in $S_u$, which we denote as $L_u$. If we are currently at node $w \in B$ moving to parent node $u$, we may compute $L_u$ from $L_w$ as follows.  If $w$ is the left child of $u$ and $w'$ is the right child of $u$, then $L_u = L_w$.  If $w$ is the right child and $w'$ is the left child of $u$, $L_u = \max\{L_w - U^O_{w'}, 0\}$.

    We use the above values $L_u$ to determine when there is a new unmatched light node parenthesis $o(v')$ in $S_u$ to the left of $o(v)$, which implies that $o(v') < o(v) < c(v) < c(v')$, i.e., $v'$ is a light node ancestor of $v$. If $w$, the previous node we moved to $u$ from is the left child of $u$, we know that there is no new such parenthesis $o(v')$ to the left of $u$ in $S_u$ that was not already present in $S_w$, so we do nothing.  However, if $w$ is the right child of $u$ and $w'$ denotes the left child of $u$, we know the number of new light node parentheses to the left of $o(v)$ in $S_u$ is $U^O_{w'} - L_w$.  If this value is positive, we mark the left child $w'$ and continue traversing to the root. 
    
    For all marked nodes $u$, we traverse back down the tree from $u$ to find the indices of the light node ancestors of $v$. Note that all parentheses in strings corresponding to marked nodes lie to the left of $o(v)$ by our marking algorithm. Therefore, we just need to determine the exact leaf nodes corresponding to opening parentheses whose match is to the right of $o(v)$. Let $w$ be the left child of $u$ and let $w'$ be the right child of $u$.  If $U^O_{w'}$ is greater than 0, we traverse right. If $U^O_{w} - U^C_{w'} > 0$, we traverse left. In either case, we know that there is an unmatched opening parenthesis corresponding to a light node ancestor of $v$. These two cases are not exclusive, we may do one or both traversals.  When we reach the leaf nodes, we may take the index stored there and add its corresponding node in $\F$ to the set of light node ancestors.  A single ancestor node requires traversing up and down the tree once each where the height of the tree is $\Oh(\log |\F|)$, and since there are $\Oh(\log |\F|)$ such light node ancestors, we may compute the entire set in time $\Oh(\log^2 |\F|)$.
\end{proof}

To dynamically maintain our subtree embeddings efficiently, we utilize a dynamic strings data structure of \cite{DBLP:conf/soda/GawrychowskiKKL18}, which allows us to store and perform splitting and merging operations on a collection of strings. Additionally, each string in the structure is assigned a unique integer identifier (see \cite{DBLP:conf/soda/GawrychowskiKKL18} for details on how this is done).

We now state some necessary definitions from \cite{DBLP:conf/soda/GawrychowskiKKL18}. As mentioned, we will dynamically maintain this data structure on a collection of strings $\X$ that will be initially created during a pre-processing step. We define the following operations for the data structure:

\begin{theorem}[\cite{DBLP:conf/soda/GawrychowskiKKL18} Thm. 6.12]\label{thm:dynstr}
    A collection $\X$ of persistent strings of total length $n$ can be dynamically maintained subject to the following operations:
    \begin{description}
        \item[$\add(X)$] for $X \in \Sigma^+$ results in $\X = \X \cup \{X\}$ and takes $\Oh(\log n + |X|)$ time.
        \item[$\con(X_1, X_2)$] for $X_1, X_2 \in \X$ results in $\X := \X \cup \{X_1\cdot X_2\}$ and takes $\Oh(\log n)$ time.
        \item[$\spl(X, k)$] for $X \in \X, k \in [|X|]$ results in $\X := \X \cup \{X[0\dd k), X[k \dd |X|)\}$ and takes $\Oh(\log n)$ time.
        \item[$\lcp(X_1,X_2)$] for $X_1,X_2\in \X$ computes the length of longest common prefix of $X_1$ and $X_2$  and takes $\Oh(1)$ time.
    \end{description}
    The algorithms are Las-Vegas randomized and the running time bounds hold with high probability.
\end{theorem}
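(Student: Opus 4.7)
The plan is to maintain each persistent string $X \in \X$ as a balanced binary tree whose leaves, read left-to-right, spell out $X$, and to equip the tree structure with a canonical labeling scheme so that identical substrings receive identical labels at each level of a hierarchical decomposition. Concretely, I would use a deterministic locally-consistent parsing (a ``signature encoding'' in the style of Mehlhorn--Sundar--Uhrig or a weight-balanced variant of an AVL-tree) that partitions a string into blocks based purely on the local neighborhood, recursing $\Theta(\log n)$ times until a single root symbol remains. Persistence is achieved by path copying: every node is immutable, and each operation creates $\Oh(\log n)$ new nodes while sharing the rest of the structure with the previous version.

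Assuming this representation, the first three operations are then routine. For $\add(X)$, I would build the signature tree bottom-up on the $|X|$ new characters, which takes $\Oh(|X|)$ work together with $\Oh(\log n)$ calls into the global dictionary that maps block signatures to canonical identifiers; since the dictionary itself must grow dynamically, I would implement it as a hash table (for the randomized version) or as a dynamic predecessor structure, matching the $\Oh(\log n + |X|)$ bound. For $\spl(X,k)$, I would walk from the root to the $k$-th leaf, detach the spine, and re-balance the two sides while recomputing canonical labels along the spine; only $\Oh(\log n)$ nodes change, and the local-consistency property ensures that recomputation does not cascade beyond the $\Oh(\log n)$ spine. The $\con$ operation is symmetric: splice the two roots together and rebalance along the new seam in $\Oh(\log n)$ time.

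The main obstacle is delivering $\lcp(X_1, X_2)$ in $\Oh(1)$ worst-case time, since a naive descent in both signature trees would give $\Oh(\log n)$. The key idea I would use here is that, because the signature encoding is locally consistent, two strings share a longest common prefix of length $\ell$ if and only if their level-by-level canonical representations agree on a prefix of level-$i$ blocks whose total length is $\ell$ minus a telescoping correction of $\Oh(1)$ per level. I would precompute, for every node in the persistent forest, a single integer identifier summarizing its level-by-level prefix; equality of two such identifiers then reduces LCP to longest common extension in an auxiliary array that is itself maintained under split/concatenate. Supporting $\Oh(1)$ LCE on this auxiliary array requires a further layer of persistent micro-macro decomposition (storing, at each tree node, the length of the longest common prefix of its canonical block sequence with each of its siblings), and this is where most of the technical work sits. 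Once this is in place, an LCP query reduces to reading one precomputed field and adjusting by a constant number of boundary-block comparisons.

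Finally, I would argue correctness by induction on the number of operations, using the invariant that equal substrings (in the multiset-of-strings sense) have equal canonical identifiers at every level of the signature tree. The failure probability bounds come solely from the hashing used to implement the signature-to-identifier dictionary; a union bound over the polynomially many block signatures ever created yields the ``with high probability'' guarantee, and the Las Vegas nature follows because equality of canonical identifiers can be verified deterministically in $\Oh(\log n)$ time in the rare case one needs to double-check (falling back to a direct tree walk).
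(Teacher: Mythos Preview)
This theorem is not proved in the paper at all: it is quoted verbatim as Theorem~6.12 of Gawrychowski, Karczmarz, Kociumaka, {\L}{\k{a}}cki, and Sankowski (SODA~2018) and used purely as a black box. There is therefore nothing in the paper to compare your proposal against.

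For what it is worth, your sketch is in the right neighborhood of the actual construction in the cited paper (locally consistent parsing / signature trees with persistence), but several details are off or underspecified. In particular, the $\Oh(1)$ LCP bound in the cited result is obtained not by the micro--macro scheme you outline but by maintaining, alongside the signature hierarchy, a compact data structure that supports constant-time longest-common-extension queries via the specific combinatorics of recompression-style parsing; your ``precomputed field plus boundary correction'' idea does not obviously survive splits and concatenations in $\Oh(\log n)$ time. If you actually need to prove this theorem rather than cite it, you should consult the original paper directly.
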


We now discuss how to maintain our label embeddings formally in the proof of the main theorem of this section.

\begin{restatable}{theorem}{DynamicTree}
\label{thm:dyn-tree}
   Given integers $2\le b \le n$, there exists a randomized dynamic algorithm, that maintains an $\Oh(b\log_b n \log^2n \cdot \sqrt{n})$-approximation of $\ted(\F, \G)$ (correctly with high probability against an oblivious adversary) for (initially empty) forests $\F, \G$ of size at most $n$ undergoing edits. The expected update time of the algorithm is $\Oh(b^2 (\log n)^{\Oh(\log_b n)})$ per edit.
\end{restatable}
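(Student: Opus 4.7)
The plan is to instantiate the static reduction of Lemmas~\ref{lem:overapprox} and~\ref{lem:underapprox} dynamically, maintain the modified parenthesis representations $\paren(\F)$ and $\paren(\G)$ under forest updates, and feed their edits into the dynamic string edit distance algorithm of Theorem~\ref{thm:dyned} as a black box. Writing $V$ for the value maintained by that algorithm, I will output $\Oh(\sqrt n)\cdot V$ as the estimate of $\ted(\F,\G)$. Combining the reduction bounds with Theorem~\ref{thm:dyned} will give the stated approximation factor, and the stated update time will follow by bounding the number of character edits triggered on $\paren(\F)$ by a single forest edit.

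The first component is to maintain a heavy--light decomposition of $\F$ and $\G$. The relationship between a forest and its parenthesis representation is structurally the same as the one between a parenthesis string and the height tree of Definition~\ref{def:reduc}, so the dynamic machinery of Lemma~\ref{lem:heavylight} applies, and a single node insertion, deletion, or relabeling can change the heavy/light status of at most $\Oh(\log n)$ nodes, all identifiable in polylogarithmic time. In parallel, I store $\paren(\F)$, together with the auxiliary strings $\paren(\F'(v))$ for every node $v$ with a heavy child, in the dynamic strings data structure of Theorem~\ref{thm:dynstr}. Each character of $\paren(\F)$ is then labeled by the pair $(\text{identifier of }\paren(\F'(v)),\,\heavy{\F(v)})$, which is a faithful canonical representative of $\lambda_{\hld}(v)$: two characters agree exactly when the modified labels do.

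Given an update at a node $u$, the labels that can change are exactly those enumerated in Observations~\ref{obs:node_relabel} and~\ref{obs:node_delete}: the parents of the $\Oh(\log n)$ light ancestors of $u$ (whose light subtrees contain $u$), plus the $\Oh(\log n)$ nodes whose heavy/light status flips, together with their parents. The former are listed by Lemma~\ref{lem:lightancestorset}; the latter are produced by the heavy--light maintenance subroutine. For each affected node $v$, the new light-subtree string $\paren(\F'(v))$ differs from the old one only inside the piece coming from the edited subtree, so I can rebuild it using a constant number of $\spl$ and $\con$ operations on the structure of Theorem~\ref{thm:dynstr}, read off the new identifier, and substitute the two characters $o(v)$ and $c(v)$ of $\paren(\F)$ accordingly. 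The bookkeeping per forest update is polylogarithmic and produces $\Oh(\log n)$ character edits on $\paren(\F)$, which I forward to Theorem~\ref{thm:dyned}. Since each such character edit is handled in $\Oh(b^2(\log n)^{\Oh(\log_b n)})$ expected time, the total expected update time is $\Oh(b^2(\log n)^{\Oh(\log_b n)})$. For the approximation, Lemma~\ref{lem:underapprox} gives $\ted(\F,\G)\le \Oh(\sqrt n)\cdot \ed(\paren(\F),\paren(\G))\le \Oh(\sqrt n)\cdot V$, so $\Oh(\sqrt n)\cdot V$ is a valid upper bound on $\ted(\F,\G)$, while Theorem~\ref{thm:dyned} combined with Lemma~\ref{lem:overapprox} yields $V\le \Oh(b\log_b n)\cdot \ed(\paren(\F),\paren(\G)) \le \Oh(b\log_b n\log n)\cdot \ted(\F,\G)$, placing the output within the claimed $\Oh(\sqrt n\cdot b\log_b n\log^2 n)$ factor of $\ted(\F,\G)$.

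The main obstacle is the label propagation step: naively, each affected string $\paren(\F'(v))$ could be as large as $\F(v)$, and propagating a deeply nested edit up to $\Oh(\log n)$ ancestors would be too slow. The idea that rescues the plan is that the dynamic strings structure of Theorem~\ref{thm:dynstr} identifies equal strings by a single identifier check and supports surgery in $\Oh(\log n)$ time, so the new identifier of $\paren(\F'(v))$ can be assembled with $\Oh(1)$ split/concat calls per affected ancestor, independently of the subtree's size. Combined with the fact that only $\Oh(\log n)$ labels change per forest update, this yields the polylogarithmic bookkeeping cost, after which the dominating term becomes the $\Oh(\log n)$ calls into the dynamic string edit distance subroutine of Theorem~\ref{thm:dyned}.
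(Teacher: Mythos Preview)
Your proposal is correct and follows essentially the same approach as the paper: maintain $\paren(\F),\paren(\G)$ via the dynamic strings structure of Theorem~\ref{thm:dynstr}, use the heavy--light maintenance of Lemma~\ref{lem:heavylight} and the light-ancestor enumeration of Lemma~\ref{lem:lightancestorset} to locate the $\Oh(\log n)$ affected labels (as characterized in Observations~\ref{obs:node_relabel}--\ref{obs:node_delete}), rebuild each label by a constant number of $\spl/\con$ calls, and feed the resulting $\Oh(\log n)$ character edits into Theorem~\ref{thm:dyned}. One small notational slip: the identifier stored in a character should encode $\str{\F'(v)}$ (the raw parenthesis representation of the light subtree), not $\paren(\F'(v))$, so that labels are not recursively defined; your surrounding discussion shows you already have this non-recursive picture in mind.
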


\begin{proof}
    Let $n = |\F|$. Without loss of generality let the dynamic update occur in $\F$.  We will utilize a dynamic stings data structure instance $\X$ from Theorem~\ref{thm:dynstr} to store the parentheses representations of light subtrees for all nodes in $\F$ and $\G$, and we use the unique integer identifier assigned to each string in $\X$ as the label of each node in $\paren(\F)$.
    
    First, we show how to maintain $\str{\F}$ in $\X$ so that we may quickly find the parenthesis representation of the light subtree of any node in $\F$. If $u \in \F$ is deleted, then we may use $\spl(\str{\F}, o(u))$ and $\spl(\str{\F}, o(u) + 1)$ followed by $\con(\str{F}[0 \dd o(u)-1], \str{\F}[o(u) + 1 \dd))$ to remove $\str{F}[o(u)]$ from $\str{\F}$ in $\X$.  We may then remove $\str{F}[c(u)]$ similarly.  If $u \in \F$ is inserted with label $\lambda(u)$, then we may similarly perform $\spl(\str{\F}, o(u)), \spl(\str{\F}, o(u)+ 1)$ and then $\con(\str{F}[0 \dd o(u) - 1], \lambda(u))$, $\con(\str{F}[0 \dd o(u) - 1] \cdot \lambda(u), \str{F}[o(u) \dd])$ to insert $\lambda(u)$ at $\str{F}[o(u)]$.  We may insert $\lambda(u)$ at $\str{F}[c(u)]$, similarly. By Theorem~\ref{thm:dynstr}, these steps take time $\Oh(\log n)$ since we are using $\Oh(1)$ $\spl$ and $\con$ operations.

    We now show how to maintain $\paren(\F) \in \X$. For a dynamically deleted node $u \in \F$, we may remove it from $\paren(\F)$ analogously to deletions from $\str{\F}$ described above. 
    For a dynamically inserted node $u \in \F$, we need to find a unique integer identifier for its light subtree that we may use as its label in $\paren(\F)$. To find the parenthesis representation of the light subtree of $u$, we just take its subtree in $\paren(F)$ and replace its heavy child $v$ with a dummy character \#. We perform $\spl(\str{F}, o(u))$ and then on the resulting string $\str{F}[o(u) \dd)$, perform $\spl(\str{F}[o(u) \dd], o(v))$ and $\spl(\str{F}[o(u)\dd n], c(v) + 1)$. We obtain $\str{F}[o(u)\dd o(v))$ and $\str{F}[c(u)\dd c(v))$, which are the two strings corresponding to the light subtree of $u$.  We then can concatenate the two substrings together with an additional \# between them to obtain the parenthesis representation of the light subtree of $u$, which may done with two $\con$ operations. If $u$ is a light node, we also concatenate its heavy depth as well as per Definition~\ref{def:parenF}. We let the unique integer identifier for the string corresponding to the light subtree of $u$ represent the label of $u$ in $\paren(\F)$ in $\X$.  We then insert the resulting integer identifier at positions $o(u)$ and $c(u)$ in $\paren(\F)$ analogously to insertions into $\str{F}$.  Note that the above steps again take time $\Oh(\log n)$.

    Now, a dynamic edit affects not only the edited node itself in $\paren(\F)$ but also the labels of other nodes.  
    The proofs of \cref{obs:node_delete,obs:node_relabel} characterize precisely which nodes have their labels affected: these are the parents or grandparents of light ancestors of the edited node $u$.  
    By \cref{lem:lightancestorset}, we can efficiently obtain the set of light ancestors of $v$ and recompute the light-subtree labels of their parents and grandparents, in a manner analogous to the dynamic insertion procedure described above.

    Furthermore, some nodes may change from heavy to light and light to heavy during a dynamic edit.  By Lemma~\ref{lem:heavylight}, we may maintain the heavy-light decomposition of $\F$ and $\G$ in $\Oh(\log^5)$-time.  For each such node $u$ that changes from heavy to light or light to heavy, we update the data structure of Lemma~\ref{lem:lightancestorset} accordingly in $\Oh(\log^2 n)$-time by Lemma~\ref{lem:lightancestorset} and reconstruct the light subtree label of its parent and add the resulting label to $\paren(\F)$ in $\X$ as done above. In total, this can be done in $\Oh(\log^7 n)$ time, and so each update to $\paren(\F)$ can be maintained in poly-logarithmic time.

    Finally, we use the dynamic string edit distance algorithm of Theorem~\ref{thm:dyned} to compute $\ed(\paren(\F), \paren(\G))$, which is our approximation of $\ted(\paren(\F), \paren(\G)).$ The approximation factor follows from Theorem~\ref{thm:dyned}, Lemma~\ref{lem:underapprox}, Lemma~\ref{lem:overapprox}, and the update time is dominated by Theorem~\ref{thm:dyned}.
\end{proof}

\section{\boldmath $k$ versus $\Oh(k^2 \log n)$ Approximation for Tree Edit Distance}
\label{sec:kTreeApprox}

We begin by giving a new static approximation algorithm for tree edit distance. Given forests $\F, \G$ that we wish to approximate the distance of, the following approximation algorithm maintains a collection of \emph{pieces} of $\F$ and iteratively checks if each piece has a corresponding occurrence in $\str{\G}$.  A \emph{piece} of $\str{\F}$ may be either a \emph{forest} $\str{\F}[i\dd j)$ which is a balanced parenthesis string, or a \emph{context} $(\str{\F}[i_1\dd j_1), \str{\F}[i_2\dd j_2))$ with $i_1 < j_1 < i_2 < j_2$ which is a pair of substrings such that $\str{\F}[i_1 \dd j_2)$ is a tree and $\str{\F}[j_1 \dd i_2)$ is itself a forest. A context may be thought of as a subtree of a node $w \in \F$ corresponding to indices $o(w) = i_1$ and $c(w) = j_2 - 1$ excluding, for some descendant $u$ of $w$, a contiguous set of children of $u$ and their subtrees. We call $\str{\F}[i_1 \dd j_1)$ the \emph{left} part and $\str{\F}[i_2 \dd j_2)$ the \emph{right} part. See Figure~\ref{fig:context} for an example. 
For a given forest piece $\str{\F}[i_1\dd j_1]$, we define $|\piece| := j - i$, and for a given context piece  $(\str{\F}[i_1\dd j_1), \str{\F}[i_2\dd j_2))$, we define $|\piece| := j_1 - i_1 + j_2 - i_2$.

In each iteration of the approximation algorithm, a single piece of the collection is checked for a corresponding occurrence in $\str{\G}$ up to a potential left or right shift in index of up to size $k$, and if an occurrence is found successfully, we remove that piece from the collection.  If no occurrence is found, it is partitioned into a constant number of new pieces of size smaller by a constant factor, which are added back into the collection. At any step, if a considered piece of our collection is of size $\Oh(k)$, we may remove it from the collection and move on to the next piece. We will argue that only $k \log |\F|$ iterations of this procedure is enough to distinguish whether $\ted(\F, \G) \leq O(k)$ or $\ted(\F, \G) \geq  \Oh(k^2 \log |\F|)$.   Since we only consider $\Oh(k \log |\F|)$ pieces in total, we know that ignoring pieces of size $\Oh(k)$ only incurs potentially $\Oh(k^2 \log |\F|)$ total edits. If $\ted(\F, \G) \leq k$, we will show that all $k$ edits will fall into pieces of size at most $\Oh(k)$ after $\Oh(k \log |\F|)$ steps, and so at the end of the algorithm, the collection of pieces will be empty in this case. Furthermore, the set of pairs of matched pieces found by the algorithm can be complemented by a set of deletions for any unmatched pieces to form a valid tree alignment of distance $\Oh(k^2 \log n)$, which we discuss in the proof of Theorem~\ref{thm:kApproxTree} below. If the collection of pieces is non-empty, this implies instead that $\ted(\F, \G) \geq k^2 \log |\F|$. 

\SetKwFunction{hm}{HasMatch}

An important subroutine in our algorithm will be finding corresponding pieces in $\str{\G}$ for each piece of $\str{\F}$. We refer to this subroutine as $\hm$ in the following pseudocode. For a given forest piece $\piece = \str{F}[i^{\F} \dd j^{\F})$ of $\F$, $\hm(\piece)$ returns \textsf{true} if there exists $i^{\G} \dd j^{\G}$ such that $ \str{\F}[i^{\F}\dd j^{\F}) =  \str{\G}[i^{\G} \dd j^{\G})$ and $|i_\F - i_\G| \leq k$, otherwise it returns \textsf{false}. For a given context piece $\piece_\F = (\str{\F}[i_1^{\F}\dd j_1^{\F})\dd \str{\F}[i_2^{\F} \dd j_2^{\F}))$, $\hm(\piece_\F)$ with $j_1^{\F} - i_1^{\F} \geq 2k$ and $j_2^{\F} - i_2^{\F} \geq 2k$ returns \textsf{true} if there exists indices $i_1^{\G}, i_2^{\G}, j_1^{\G}, j_2^{\G}$ such that $\piece_\G = (\str{\G}[i_1^{\G}\dd j_1^{\G}), \str{\G}[i_2^{\G} \dd j_2^{\G}))$ is a context with $\piece_\F = \piece_\G$ and $|i_1^\F - i_1^\G|, |i_2^\F - i_2^\G| \leq k$, otherwise it returns \textsf{false}.  For a context piece with only one of either $j_1^{\F} - i_1^{\F} \geq 2k$ or $j_2^{\F} - i_2^{\F} \geq 2k$, i.e. only the left or right part of $\piece_\F$ has size at least $2k$, we return \textsf{true} if the large part has a match in $\G$ up to a shift of size $k$ (and we ignore the smaller part since it does not affect the approximation factor).

In the following algorithm, we assume $M = O(1)$ and $C < 1$ is a constant.

\begin{algorithm}
\caption{$O(k^2 \log n)$-approximate Tree Edit Distance}
\label{alg:kApprox-Tree}
$S \leftarrow \str{\F}$ \;
\For{$i = 0$ \KwSty{to} $(1+C) k \log_{1/C} |\F|$}{
    $\piece \leftarrow S[0]$\;
    \uIf{$|\piece| \leq 4k$}{
        Remove $\piece$ from $S$\;
    }
    \uElseIf{$\hm(\piece)$}{
        Remove $\piece$ from $S$\;
    }
    \uElse{
        Partition $\piece$ into at most $M$ pieces $\piece_1, \piece_2, \ldots, \piece_M$ of size at most $C|\piece|$\;
        Add $\piece_1, \piece_2, \ldots, \piece_M$ to $S$\;
    }
}
\eIf{$|S| = 0$}
{
    Return \textsf{Yes}\;
}
{
    Return \textsf{No}\;
}
\end{algorithm}

\begin{theorem}
\label{thm:kApproxTree}
    Given forests $\F$ and $\G$ such that either $\ted(\F, \G) \leq k$ or $\ted(\F, \G) \geq k^2 \log |\F|$, Algorithm~\ref{alg:kApprox-Tree} outputs \textsf{Yes} if $\ted(\F, \G) \leq k$ and \textsf{No} if $\ted(\F, \G) \geq k^2 \log |\F|$.
\end{theorem}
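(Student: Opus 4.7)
The plan is to analyze the algorithm through its \emph{recursion tree} $\T$ of pieces: the root is $\str{\F}$, and whenever a piece $\piece$ fails $\hm$, its at-most-$M$ partition parts become its children. Leaves of $\T$ are therefore either small pieces ($|\piece|\le 4k$) or pieces matched exactly by $\hm$. Two structural facts will drive the whole argument. First, pieces at a common depth $d$ of $\T$ are pairwise disjoint substrings of $\str{\F}$, and since each partition step shrinks size by a factor of at most $C$, the depth of $\T$ is at most $\log_{1/C}|\F|$. Second, each main-loop iteration inspects exactly one node of $\T$, so the algorithm fully traverses $\T$ (and outputs \textsf{Yes}) if and only if $|V(\T)| \le (1+C)k\log_{1/C}|\F|$.

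For the \textsf{Yes} direction, I fix an optimal tree alignment $\A$ of cost $\ted(\F,\G)\le k$, which performs at most $2k$ character edits on $\str{\F}$ and induces a position shift of magnitude $O(k)$ between $\str{\F}$ and $\str{\G}$. The key claim is that any piece with $|\piece|>4k$ that avoids every edited character position of $\str{\F}$ is mapped by $\A$ to an equal (possibly shifted) substring or context of $\str{\G}$, so $\hm$ returns true on it once its shift tolerance is taken to be a sufficiently large constant multiple of $k$. Since pieces at a common depth of $\T$ are disjoint and collectively overlap at most $2k$ edited positions, each depth contributes at most $2k$ internal nodes, hence at most $O(Mk)$ nodes overall. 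Summing over the $O(\log|\F|)$ depths yields $|V(\T)|=O(k\log|\F|)$, fitting into the iteration budget for suitable constants, so $S$ empties and the algorithm outputs \textsf{Yes}.

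For the \textsf{No} direction I prove the contrapositive: if $S=\emptyset$ at termination, then $\ted(\F,\G)\le O(k^2\log|\F|)$. The leaves of $\T$ partition $\str{\F}$; matched leaves come with an exact occurrence of their content in $\str{\G}$ returned by $\hm$, and small leaves (at most $O(k\log|\F|)$ of them, each of size at most $4k$) can be deleted at a cumulative cost of $O(k^2\log|\F|)$ node edits, while characters of $\str{\G}$ not covered by any matched image are paid for by insertions of symmetric total cost. From this data I will construct an explicit tree alignment by recursing bottom-up on $\T$: the matches and deletions chosen for the children of an internal node $\piece$ combine into an alignment of $\piece$ to the corresponding region of $\str{\G}$, using the forest/context typing of pieces to enforce the nesting required by \cref{def:ta}.

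The main obstacle is the coherent assembly step of the soundness proof: $\hm$ picks a match for each leaf of $\T$ independently, so a priori the chosen images in $\str{\G}$ may be out of order, may overlap, or, most subtly, when a forest piece sits inside a context piece, may fail to nest inside the context's image as required. My plan is to exploit the $k$-shift slack in $\hm$ to locally perturb the chosen match positions so that sibling images become monotone and properly nested; any conflict that cannot be resolved in this way is absorbed by promoting the offending piece to a small-piece deletion, paying an extra $O(k)$ node-edits per conflict. Showing that the number of such conflicts is $O(k\log|\F|)$ overall—so that the total added cost stays within the $O(k^2\log|\F|)$ budget—will be the core technical content of the proof, and it will rely on charging each conflict back to an internal node of $\T$ or to one of the $2k$ character edits of the alignment being constructed.
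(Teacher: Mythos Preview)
Your \textsf{Yes} direction is essentially the paper's argument: the recursion tree has depth $O(\log_{1/C}|\F|)$, disjoint pieces at each level, and at most $O(k)$ internal nodes per level because only pieces containing an edit position can fail $\hm$.

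For the \textsf{No} (soundness) direction, you have correctly isolated the obstacle---the $\hm$ matches are chosen independently and can overlap or be out of order in $\str{\G}$---but the paper sidesteps your perturbation/charging machinery with a single blunt trick: for every matched leaf $\piece$, it simply \emph{discards the first and last $2k$ characters} of the match (on both the $\F$ and $\G$ sides) and turns those into deletions. Because shifts are at most $k$, two matched images in $\str{\G}$ can overlap by at most $2k$ characters, so after trimming they are disjoint and appear in the same order as in $\str{\F}$; the resulting matches therefore assemble directly into a monotone alignment with no recursion or conflict resolution needed. The extra deletions cost $O(k)$ per leaf, and there are $O(k\log|\F|)$ leaves, so the budget is met. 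The only additional care is to delete both parentheses of any node one of whose twins falls in a trimmed region, to satisfy the tree-alignment consistency of \cref{def:ta}.

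Your plan also has a concrete gap: ``promoting the offending piece to a small-piece deletion, paying an extra $O(k)$ node-edits per conflict'' is not obviously sound, since a matched leaf can have size up to $|\F|$, not $O(k)$; deleting it wholesale would blow the budget. If you intend to delete only a boundary region of the piece, you are effectively reinventing the paper's trimming, and then the charging argument is unnecessary.
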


\begin{proof}

First, we consider when $\ted(\F, \G) \leq k$ and show that Algorithm~\ref{alg:kApprox-Tree} always outputs \textsf{YES}. Let $n = |\F|$. Consider a tree in which each node of the tree corresponds to one of the pieces processed by Algorithm~\ref{alg:kApprox-Tree}. Any internal node corresponding to a piece $\piece$ is the parent of at most $M$ children
whose corresponding pieces $\piece_1, \piece_2, \ldots, \piece_M$ 
partition $\piece$ and each have size at most $C|\piece|$.  Any leaf node corresponds to a piece $\piece$ such that either $|\piece| \leq 4k$ or $\hm(\piece)$ is \textsf{true}. Note that each parenthesis in $\str{\F}$ appears in the piece of at most one node per level in the tree. Since $\ted(\F, \G) \leq k$, there at most $k$ parentheses undergoing edits in either $\F$ or $\G$, and so, at each level of the tree, at most $k$ pieces will be unmatched and potentially be partitioned further. Furthermore, the size of a piece at level $i$ is at most $\frac{n}{C^i}$, so there are at most $\log_{1/C} \frac{n}{k}$ levels before all pieces are smaller than $4k$ and removed from $S$. In total, we have at most $k \cdot \log_{1/C} (\frac{n}{k})$ internal nodes and therefore, a total of at most $(1+C)\cdot k  \log_{1/C} (\frac{n}{k})$ that nodes in the entire tree. In other words, after at most $(1+C)\cdot k  \log_{1/C} (\frac{n}{k})$ pieces are processed, Algorithm~\ref{alg:kApprox-Tree} will terminate with $S$ empty when $\ted(\F, \G) \leq k$. Since our algorithm runs for exactly $(1+C) k \cdot \log_{1/C} n$ iterations, the algorithm will return \textsf{Yes}.

Now, we consider when Algorithm~\ref{alg:kApprox-Tree} outputs \textsf{Yes}, and we show that in this case, $\ted(\F, \G) \leq \Oh(k^2 \log n)$. Even better, we will explicitly construct a sequence of pairs $\A$, which will be a tree alignment between $\str{\F}$ and $\str{\G}$ with cost $\Oh(k^2 \log n)$. 

Observe that when Algorithm~\ref{alg:kApprox-Tree} outputs \textsf{Yes}, $S$ must be empty, and so every parenthesis of $\F$ must have been part of a removed piece $\piece_\F$ such that either $|\piece_\F| \leq 2k$ or $\hm(\piece_\F)$ is \textsf{true}.  If $\hm(\piece_\F)$ is \textsf{true}, there must be some $\piece_\G$ that is a subforest or context of $\G$ such that $\piece_G = \piece_\F$ with starting index or indices matching that of $\piece_\F$ up to a shift of $k$ in either direction. We add a sequence of pairs to match $\piece_\F$ and $\piece_\G$, excluding the first and last $2k$ parentheses and their matches.

Formally, if $\piece_\F$ is a forest $\str{\F}[i^\F \dd i^\F]$ with match $\str{\G}[i^\G\dd j^\G]$, we add $(i^\F + 2k, i^\G + 2k), (i^\F +2k + 1, i^\G +2k+1), (i^\F + 2k + 2, i^\G + +2k 2), \ldots, (j^\F - 2k, j^\G -2k), (j^\F -2k, j^\G -2k)$ to $\A$. Note by Definition~\ref{def:ta}, we must careful that all nodes $u \in \F$ have both $o(u)$ and $c(u)$ either deleted or both aligned to some $o(v)$ and $c(v)$ for $v \in \G$. Since $\piece_\F$ is a forest, there are no nodes $u \in \F$ such that $i \leq o(u) < j < c(u)$, i.e., all parentheses in $\piece_\F$ have their corresponding twin in $\piece_\F$ as well. Thus, the only issue we must be careful about is if the opening parenthesis $o(u)$ of a node $u \in \F$ is in the first $2k$ parentheses of $\piece_\F$ and therefore excluded from the added matching sequence, but the corresponding twin $c(u)$ is contained in the matching sequence. In this case, we may replace the pairs matching $c(u)$ with two deletions.  Specifically, $(c(u), c(v)), (c(u) + 1, c(v)+1)$ is replaced with $(c(u), c(v)), (c(u) + 1, c(v)), (c(u) + 1, c(v) + 1)$, which deletes $c(u)$ and $c(v)$. Later, we will add deletions for $o(u)$ and $o(v)$ after adding all matching sequences to satisfy Definition~\ref{def:ta}.  We make the analogous modifications for the symmetric case when $c(u)$ is in the last $2k$ parentheses of $\piece_\F$ and excluded from the matching sequence but $o(u)$ is not excluded.

If $\piece_\F$ is a context $(\str{\F}[i_1^{\F}\dd j_1^{\F}], \str{\F}[i_2^{\F}, j_2^{\F}])$ with match $(\str{G}[i_1^{\G}\dd j_1^{\G}], \str{G}[i_2^{\G}\dd j_2^{\G}])$, similarly we add $(i_1^\F + 2k, i_1^\G + 2k), (i_1^{\F} + 2k +1, i_1^{\G} + 2k + 1), \ldots (j_1^\F -2k + 1, j_1^\G -2k + 1)$ to $\A$ if $j_1^\F - i_1^\F \leq 2k$ and $(i_2^\F +2k, i_2^\G + 2k), (i_2^\F + 2k + 1, i_2^\G + 2k + 1), \ldots, (j_2^\F -2k + 1, j_2^\G -2k + 1)$. We additionally modify this sequence to add deletions for any matched parentheses whose twins are not matched as in the forest case above. Furthermore, recall that if the left or right part of $\piece_\F$ has size less than $2k$, we do not find a match for it in the $\hm$ routine.  In this case, we add a matching sequence for only the part of $\piece_\F$ which has size at least $2k$.

Observe that after our above steps, any left index $x$ of a pair $(x, y) \in \A$ with $0 \leq x \leq |\F|$ is added to $\A$ at most once since when a piece of $\F$ has a match in $\G$, it is removed from $S$ for the remainder of Algorithm~\ref{alg:kApprox-Tree}. Any parentheses contained in a matched piece will never be matched in a future piece.
 Furthermore, any right index $y$ of a pair $(x, y) \in \A$ is similarly added to $\A$ at most once.
 Consider a second pair $(\piece_2^\F, \piece_2^\G)$ of matched pieces handled by the algorithm. First, since $\piece_1^\F, \piece_2^\F$ are both matched, they must have no overlap in $\str{\F}$, and second, according to the definition of $\hm$, since we only allow shifts of up to $k$, their corresponding pieces in $\G$ can only overlap in at most $2k$ characters. Since we are careful to skip the first and last $2k$ indices in our matching sequences added to $\A$, these right indices will never be added twice. 
 
 To finish our alignment $\A$, if $(0, 0)$ or $(|\F|, |\G|)$ is not already included in $\A$, we add these pairs to the beginning or end of $\A$, respectively. Finally, we add deletions for any remaining unmatched indices of $\F$ and $\G$. Formally, for any adjacent pairs $(x_1, y_1), (x_2, y_2)$ in $\A$ with $x_2 - x_1 > 1$ ($y_2 - y_1 > 1$) we add sequence $(x_1 + 1, y_1, x_1 + 2, y_1), \ldots, (x_1, y_2)$ ($(x_1, y_1+1, x_1, y_1 + 2), \ldots, (x_1, y_2)$) to $\A$ between these pairs. Deletions of indices in $\F$ correspond exactly to the pieces $\piece$ with $|\piece| \leq k$ and the $2k$ beginning and ending indices per matched piece. Since there are $(1+C)k \log_{1/C} n$ such pieces, we have at most $\Oh(k^2 \log_{1/C} n)$ deletions in $\F$. By choosing $\F$ and $\G$ such that $|\F| \geq |\G|$, there similarly must be at most $\G - \F + \Oh(k^2 \log_{1/C} n) \leq \Oh(k^2 \log_{1/C} n)$ deletions of indices in $\G$. Thus, $\A$ is a tree alignment with cost $\Oh(k^2 \log_{1/C} n)$.
\end{proof}

\subsection{$\tOh(k)$-time implementation}

We now discuss how Algorithm~\ref{alg:kApprox-Tree} can be implemented in $\tOh(k)$-time in the dynamic setting. The main two challenges we consider are: 1) the partitioning of a piece of forest $\F$ into $O(1)$ pieces with size at most a constant fraction of the original piece size 2) the $\hm$ routine.

To perform some useful basic operations on trees in the dynamic setting, we utilize the data structure of \cite{Navarro2014}. This dynamic tree data structure supports poly-logarithmic dynamic node insertion and deletion and can be used to answer a number of common query types that we list in the following. We assume, as in \cite{Navarro2014}, that forest $\F$ is stored in its parenthesis string form $\str{\F}$ and dynamic updates occur on this string.

\begin{theorem}[Theorem 2 and Table 1 of \cite{Navarro2014}]
\label{thm:navarro}
    Given an ordered tree $T$ with $n$ nodes the following operations can be carried out with $2n + \Oh(n \log \log n / \log n)$ bits:
    \begin{itemize}
        \item $\mathsf{insert}(i, j)$, insert a node at indices $i$ and $j$ in $\str{\F}$ in time $\Oh(\log n/ \log \log n)$.
        \item $\mathsf{delete}(v)$, delete a node $v$, in time $\Oh(\log n/ \log \log n)$.
        \item $\mathsf{depth}(v)$, return the depth of node $v$ in time $\Oh(\log n / \log \log n)$.
        \item $\laq(v, d)$, \emph{level ancestor queries}, which returns the ancestor $u$ of $v$ with $depth(u) = depth(v) - d$ in time $\Oh(\log n)$.
        \item $\lca(u, v)$, \emph{lowest common ancestor},  which returns the lowest common ancestor of nodes $u$ and $v$ in time $\Oh(\log n/ \log \log n)$.
        \item $\mathsf{parent}(v)$, which returns the parent of node $v$ in time $\Oh(\log n / \log \log n)$.
    \end{itemize}
\end{theorem}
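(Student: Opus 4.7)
The plan here is essentially not to give a proof at all, since the statement is a verbatim citation of Theorem~2 and Table~1 of Navarro and Sadakane's work \cite{Navarro2014}; we invoke it as a black box. That said, at a high level one would establish it as follows. Represent the ordered tree $T$ by its balanced-parenthesis encoding $\str{T}\in\{(,)\}^{2n}$, stored in a dynamic succinct bit-vector of $2n+\Oh(n\log\log n/\log n)$ bits that supports \textsf{rank}, \textsf{select}, and single-bit edits in $\Oh(\log n/\log\log n)$ worst-case time. On top of that bit-vector, maintain a dynamic \emph{range min-max tree}, a balanced B-tree–like structure whose leaves cover $\Theta(\log^2 n)$-bit blocks of $\str{T}$ and whose internal nodes store the minimum, maximum, and sum of the excess function $\pm 1$ over their subtree. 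This is the key data structure that turns parenthesis-level queries into $\Oh(\log n/\log\log n)$-time operations.

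With these primitives in place, the operations in the statement reduce to standard excess-based navigation. Identifying each node $v$ with the position $o(v)$ of its opening parenthesis: $\mathsf{insert}(i,j)$ and $\mathsf{delete}(v)$ are bit-pair edits on $\str{T}$; $\mathsf{depth}(v)$ is the excess at $o(v)$, read off from the range min-max tree; $\mathsf{parent}(v)$ is the nearest preceding position with excess $\mathsf{depth}(v)-1$, found by a min-excess search; and $\lca(u,v)$ is the position of minimum excess in $\str{T}[o(u)\dd o(v)]$, followed by one $\mathsf{parent}$ query. The $\Oh(\log n)$ cost of $\laq(v,d)$ comes from reducing a level-ancestor query to a weighted-ancestor search on the excess profile, which needs one extra factor of $\log\log n$ relative to the other queries and is the only query not achieving the base $\Oh(\log n/\log\log n)$ bound.

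The main obstacle, and the reason we cite rather than re-prove, is the bit-level engineering required to make the dynamic bit-vector and the range min-max tree both correct and succinct under insertions and deletions: one must handle chunked storage with $\Theta(\log^2 n)$-bit leaves, use table lookups for in-chunk excess queries, and amortize rebalancing of the B-tree so that the redundancy stays $\Oh(n\log\log n/\log n)$. All of this is carried out carefully in \cite{Navarro2014}, and reproducing it would not add anything to the present paper, so we simply quote their result and use the listed operations as primitives in the dynamic implementation of Algorithm~\ref{alg:kApprox-Tree}.
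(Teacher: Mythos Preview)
Your proposal is correct and matches the paper's approach: the theorem is cited as a black box from \cite{Navarro2014} with no proof given in the paper itself. Your high-level sketch of the range min-max tree construction is accurate but goes beyond what the paper provides.
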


We will create an instance of the above data structure for each tree of forests $\F$ and $\G$ and use the insert and delete operations to maintain these instances when a dynamic update occurs in $\F$ and $\G$ at the cost of only $\Oh(\log n / \log \log n)$-time per update.

In addition to the above operations, for a forest $\F$ and node $u$, we will often want to know $|\F(u)|$, the size of the subtree rooted at $u$ (in the parenthesis representation of $\F$). By Observation~\ref{obs:paren-anc}, it is easy to see that $\str{F}[o(u) \dd c(u)] = |\F(u)|$, and so we assume we have constant-time access to these sizes.

\subsubsection{Partitioning Forest and Context Pieces}

In order to partition a forest piece $\piece$ of $S$, we find a central node $u$ contained in a tree $T \in \piece$ such that the following four sets all have size at most $n/2$:
\begin{itemize}
    \item  siblings to the left of $u$ and their subtrees,
    \item siblings to the right of $u$ and their subtrees,
    \item the subtree of of $u$,
    \item all remaining nodes.
\end{itemize}

We must be careful with the fourth set, the remaining nodes, as they may not be either a forest nor context. To handle this, we further separate the remaining nodes into three more additional sets:
\begin{itemize}
    \item the forest to the left of $T$,
    \item the forest to the right of $T$,
    \item the context formed of the remaining nodes of $T$ not yet in a piece.
\end{itemize}
We will later show that each part of the decomposition has size at most $|\piece|/2$, and show that with some additional work, a similar decomposition works for context pieces as well. We utilize the data structure of \cite{Navarro2014} to answer two additional query types to help us find our key node $u$.

\begin{lemma}
\label{lem:findHighestAnc}
    Given a forest $\F$ with $n = |\F|$, a subforest $\str{\F}[i \dd j)$, and node $v$ with $i \leq o(v) < c(v) < j$, the highest ancestor $u$ of $v$ satisfying $i \leq o(u) < c(u) < j$ can be found in time $\Oh(\log^2 n / \log \log n)$.
\end{lemma}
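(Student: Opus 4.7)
The plan is to solve the problem using a constant number of queries supported by the Navarro data structure (Theorem~\ref{thm:navarro}), together with one position-to-node lookup. The structural key is that, since $\str{\F}[i\dd j)$ is a balanced parenthesis string, its top-level subtrees form a contiguous run of sibling subtrees in $\F$ all sharing one parent $p^{\ast}$, which satisfies $o(p^{\ast})<i$ and $c(p^{\ast})\ge j$. Consequently, the highest ancestor of $v$ whose entire interval lies in $[i\dd j)$ is exactly the top-level root of the subforest containing $v$, namely the unique child of $p^{\ast}$ that is an ancestor of (or equal to) $v$.

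First, I would retrieve the node $v_i$ whose opening parenthesis sits at position $i$, which is well defined because a balanced substring must begin with an opening parenthesis; the data structure used in Lemma~\ref{lem:parenIdx} supports this kind of index-to-node lookup in $\Oh(\log n)$ time. Next, compute $q := \lca(v, v_i)$. If $q = v_i$, then $v$ lies in the subtree rooted at $v_i$, so $v_i$ itself is the answer: every strict ancestor of $v_i$ has $o$-index strictly smaller than $i$ and is therefore excluded. Otherwise, $q$ is a strict ancestor of $v_i$ and hence $o(q)<i$, so $q$ lies outside the subforest.

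In this second case I would argue $q=p^{\ast}$: the parent $p^{\ast}$ of the top-level roots is already a common ancestor of both $v$ (which lives in the subforest) and $v_i$ (which is itself a top-level root), so $q$, being the lowest common ancestor, cannot lie strictly above $p^{\ast}$; and since $q$ is a strict ancestor of $v_i$, it is at least as high as $p^{\ast}$. Once $q=p^{\ast}$ is identified, the desired $u$ is the ancestor of $v$ at depth $\mathsf{depth}(q)+1$, obtained by a single level-ancestor query $u = \laq(v,\,\mathsf{depth}(v)-\mathsf{depth}(q)-1)$. That $u$ satisfies $i\le o(u)$ and $c(u)<j$ follows automatically from $u$ being a top-level root of the subforest.

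The main obstacle is formalizing the structural claim that the top-level roots of any balanced substring of $\str{\F}$ are consecutive siblings sharing one parent $p^{\ast}$; I would prove this by a simple induction that walks through the balanced substring, invoking the non-crossing property of matched parentheses (\cref{obs:noncrossingtwins}) together with \cref{obs:paren-anc}. Once this is established, the rest of the argument is immediate, and the overall running time is bounded by one $\laq$ call plus $\Oh(1)$ calls to $\mathsf{depth}$, $\lca$, and the position-to-node lookup, which comfortably fits within the claimed $\Oh(\log^2 n/\log\log n)$ bound.
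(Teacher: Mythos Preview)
Your approach is correct and in fact cleaner and faster than the paper's. The paper proves this lemma by a binary search along the root-to-$v$ path: it repeatedly calls $\laq(v,\cdot)$ to pick an ancestor, tests whether its opening index is below $i$, and halves the search interval, for $\Oh(\log n)$ iterations at $\Oh(\log n)$ each. You instead exploit the structural fact that the top-level roots of a balanced substring are consecutive siblings, and reduce the whole problem to a single $\lca$ plus a single $\laq$. This is a genuinely different route: the paper's binary search never uses $\lca$ at all, and your approach achieves $\Oh(\log n)$ rather than the claimed $\Oh(\log^2 n/\log\log n)$.

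Two small points worth tightening. First, your appeal to Lemma~\ref{lem:parenIdx} is slightly misplaced: that lemma lives in the Dyck section and goes from node to index, not the reverse. In the present section the relevant tool is the Navarro structure of \cref{thm:navarro}, which represents nodes directly by their positions in the parenthesis sequence, so no separate ``index-to-node'' lookup is even needed. Second, your argument assumes a common parent $p^{\ast}$ exists; when the subforest coincides with a run of top-level trees of $\F$ (e.g., the very first call with $S=\str{\F}$), there is no such $p^{\ast}$. This is harmless, handled by working over a single virtual root joining all trees of $\F$, but you should say so explicitly.
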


\begin{proof}
    To find $u$ from node $v$, we utilize a binary search on the path from the root of the tree $T$ containing $v$.  First, using the data structure of Theorem~\ref{thm:navarro}, we get the depth $d$ of $v$ and perform a level ancestor query $\laq(v, d/2)$.  Let $v'$ be the returned node of this query and let $v''$ denote its parent if it exists.
    
    If $o(v') < i$, $v'$ is outside of the subforest $\str{F}[i\dd j]$ and so, we continue the search recursively moving back towards $v$, i.e., we would next query $\laq(v, 3d/4)$ in order to find an ancestor in the target subforest. Otherwise, we continue the search recursively moving towards the root of $T$. In this case, we would next query $\laq(v, d/4)$. If we always traverse towards the root and eventually reach the root itself without stopping in one of our earlier cases, we return the root as the highest ancestor of $v$ in $\str{F}[i, j]$. Otherwise, we must have found our target node on the root to $v$ path and will return that node instead from one of the above cases.

    Since we halve our search range every iteration, our algorithm will finish in $\Oh(\log n)$ iterations and by Theorem~\ref{thm:navarro}, each iteration takes time $\Oh(\log n/ \log \log n)$ for a total time of $\Oh(\log^2 n / \log \log n)$.
\end{proof}

\begin{lemma}
\label{lem:findMidAncestor}
    Given a forest $\F$ with $n = |\F|$, subforest $\str{F}[i \dd j)$, a threshold $m$, and node $v$ with $i \leq o(v) < c(v) < j$, the highest ancestor $u$ of $v$ satisfying $|\F(u)| \leq m$ and $i \leq o(u) < c(u) < j$ can be found in time $\Oh(\log^2 n / \log \log n)$ if it exists.
\end{lemma}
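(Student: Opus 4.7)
The strategy is to adapt the binary-search-on-depth approach used in \cref{lem:findHighestAnc}, strengthening the per-probe test to account for the size bound. The key observation is that both defining conditions are monotone along the root-to-$v$ path: as we ascend from $v$ towards the root, $o(u)$ only decreases and $c(u)$ only increases (so the window-containment condition $i \le o(u) < c(u) < j$ turns from true to false at most once), and $|\F(u)| = (c(u)-o(u)+1)/2$ only grows (so the size condition $|\F(u)| \le m$ also turns from true to false at most once). Hence the set of ancestors of $v$ satisfying both conditions is a contiguous prefix of the root-to-$v$ path beginning at $v$, and the highest such ancestor (if any exists) is exactly what we seek.

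As a preprocessing check, I would test whether $v$ itself satisfies $|\F(v)|\le m$; if not, the size bound already fails at $v$ and therefore fails at every proper ancestor, so we report that no valid $u$ exists. Otherwise, let $d := \mathsf{depth}(v)$, obtained in $\Oh(\log n/\log\log n)$ time via \cref{thm:navarro}. I would then binary-search on the integer parameter $t \in [0\dd d]$, where $t$ is the number of steps to move from $v$ toward the root. At each probe $t$, I invoke $\laq(v,t)$ to obtain the candidate ancestor $u'$, then verify both conditions (a) $i \le o(u') < c(u') < j$ and (b) $|\F(u')|\le m$ using constant-time access to $o(u')$ and $c(u')$ together with the derived subtree size. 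If both hold, I enlarge $t$ in the next probe (attempting to climb higher); otherwise I decrease $t$. The largest $t$ for which both conditions succeed identifies the target ancestor, and if $t=0$ is the only solution then $u=v$.

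The running time follows immediately: the binary search terminates after $\Oh(\log n)$ probes, each probe is dominated by a single level-ancestor query of cost $\Oh(\log n/\log\log n)$ (per \cref{thm:navarro}), and the two condition checks are constant time. This yields a total of $\Oh(\log^2 n/\log\log n)$, matching the claimed bound. There is no real obstacle here beyond phrasing the monotonicity argument carefully; in particular, the simultaneous monotonicity of the two conditions is what allows a single binary search (rather than a two-dimensional one) to suffice, so I would make a point of stating that observation explicitly before presenting the search.
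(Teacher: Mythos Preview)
Your proposal is correct and follows essentially the same approach as the paper: a binary search on depth along the root-to-$v$ path, probing via level-ancestor queries and testing the two conditions at each probe. Your explicit articulation of the monotonicity of both predicates (window containment and subtree size) is a nice clarification that the paper leaves implicit, and your preprocessing check on $v$ is a clean way to handle the ``no valid $u$'' case, but otherwise the argument and time analysis are the same.
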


\begin{proof}
    To find $u$ from node $v$, we utilize a binary search on the path from the root of the tree $T$ containing $v$.  First, using the data structure of Theorem~\ref{thm:navarro}, we get the depth $d$ of $v$ and perform a level ancestor query $\laq(v, d/2)$.  Let $v'$ be the returned node of this query and let $v''$ denote its parent if it exists.
    
    If $o(v') < i$, $v'$ is outside of the subforest $\str{F}[i\dd j]$ and so, we continue the search recursively moving back towards $v$, i.e., we would next query $\laq(v, 3d/4)$ in order to find an ancestor in the target subforest. Similarly, if $|\F(v')| > m$, we continue the search recursively towards $v$ since the size of the subtree of $v'$ is too large. Instead, if $|\F(v')| \leq m$ and $o(v'') < i$, then we return $v'$, the highest ancestor of $v$ in $\str{F}[i, j)$. If $|\F(v')| \leq m$ and $|\F(v'')| > m$,  we return $v'$ as it satisfies the lemma requirements. Finally, $|\F(v')| \leq m$ and $|\F(v'')| \leq m$, we continue the search recursively moving towards the root of $T$. In this case, we would next query $\laq(v, d/4)$. If we always traverse towards the root and eventually reach the root itself without stopping in one of our earlier cases, we return the root as the highest ancestor of $v$ in $\str{F}[i\dd j]$. Otherwise, we must have found our target node on the root to $v$ path and will return that node instead from one of the above cases.  If we finish our search and do not find $u$, we return nothing.

    Since we halve our search range every iteration, our algorithm will finish in $\Oh(\log n)$ iterations and by Theorem~\ref{thm:navarro}, each iteration takes time $\Oh(\log n/ \log \log n)$ for a total time of $\Oh(\log^2 n / \log \log n)$.
\end{proof}

We now give the formal piece partition statement.

\begin{lemma}
\label{lem:partitionpiece}
    Given forest $\F$ and either forest piece $\piece = \str{F}[i \dd j)$ or context piece $\piece = (\str{F}[i_1 \dd j_1), \str{F}[i_2 \dd j_2))$, we may partition $\piece$ into at most 8 pieces each of size at most $\lceil |\piece|/2 \rceil$ in time $\Oh(\log^2 n / \log \log n)$.
\end{lemma}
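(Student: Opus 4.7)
The plan is to treat the forest case first and then extend to contexts with a small amount of additional bookkeeping. For a forest piece $\piece = \str{\F}[i\dd j)$, I would locate a \emph{pivot} node $u \in \piece$ whose parentheses straddle the midpoint index $i+\lfloor|\piece|/2\rfloor$, whose subtree has size $|\F(u)| \le \lceil|\piece|/2\rceil$, and whose parent $u'$ (if $u'$ lies in $\piece$) satisfies $|\F(u')| > \lceil|\piece|/2\rceil$. First I use Theorem~\ref{thm:navarro} to read off, in $\Oh(\log n/\log\log n)$ time, the node $v$ containing the midpoint parenthesis. Then I invoke Lemma~\ref{lem:findMidAncestor} with threshold $\lceil|\piece|/2\rceil$ restricted to $[i\dd j)$, which walks up the ancestor chain of $v$ in $\Oh(\log^2 n/\log\log n)$ time and returns the highest qualifying ancestor. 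Centrality follows from $v$ straddling the midpoint; the parent condition follows from maximality of $u$.

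Letting $w$ be the root of the tree in $\piece$ containing $u$, I would then partition $\piece$ into six pieces in the style of Figure~\ref{fig:forest-part}: the subforest $\F(u)$; the subforests of left and right siblings of $u$ together with their subtrees; the context obtained from $\F(w)$ by removing the children of $u'$ and their subtrees (so $u'$ and all ancestors of $u'$ together with their side subtrees remain); and the two subforests of $\piece$ lying strictly to the left and right of $\F(w)$. Each piece is a valid subforest or context. For the size bounds, the sibling and outside pieces lie entirely on one side of $o(u)$ or $c(u)$ inside $\piece$, hence have size at most $\lceil|\piece|/2\rceil$ by centrality; the piece $\F(u)$ meets the bound by the ancestor-search threshold; and the context piece has size $|\piece| - |\F(u')| + 2 \le \lceil|\piece|/2\rceil$ by the parent condition (possibly after a small additive tweak to the threshold to absorb the $+2$).

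For a context piece $\piece$, I would first try the same procedure by searching for a pivot outside the gap of $\piece$; if it succeeds, the six-piece partition above applies verbatim. If every candidate pivot instead lies inside the gap, then the gap node $v$ must satisfy $|\F(v)| > \lceil|\piece|/2\rceil$, which I then exploit by emitting the outer context itself (equivalently $\piece$ viewed as a context whose enlarged gap now contains $v$ with its subtree) as the first piece, splitting the children of $v$ into a left and a right half by combined subtree size, declaring the smaller half a second piece of size at most $|\F(v)|/2 \le \lceil|\piece|/2\rceil$, and recursively partitioning the larger half (now itself a subforest) into six pieces via the forest procedure, yielding at most eight pieces in total. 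Each step uses only a constant number of calls to the subroutines of Theorem~\ref{thm:navarro} and Lemmas~\ref{lem:findHighestAnc}--\ref{lem:findMidAncestor}, giving the claimed $\Oh(\log^2 n/\log\log n)$ bound. The main obstacle is the context case: one must argue that failure of the midpoint search forces $|\F(v)|$ to exceed half of $|\piece|$ so that the outer context piece is small enough, and handle degenerate situations such as $u$ being a root of its own tree inside $\piece$ (so the context piece collapses) or $v$ having too few children to split meaningfully.
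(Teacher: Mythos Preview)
Your proposal follows essentially the same approach as the paper: the identical six-piece forest partition around a centrally located pivot $u$ found via \cref{lem:findMidAncestor}, and a two-subcase context treatment (reuse the forest scheme when possible; otherwise peel off the outer context and the lighter side of the gap, then apply the forest scheme to the heavier side). One refinement worth noting: in the context case the paper works directly with the gap's parent $v$ and tests $|\F(v)|-(i_2-j_1)\le m$ rather than running a midpoint search, and in the second subcase it splits $v$'s children \emph{at the gap position} (left-of-gap versus right-of-gap, whose sizes within $\piece$ sum to at most $|\piece|$, so the smaller is at most $m$) rather than by balancing subtree sizes---your stated condition $|\F(v)|>\lceil|\piece|/2\rceil$ and bound $|\F(v)|/2$ should both be phrased in terms of the size restricted to the piece, $|\F(v)|-(i_2-j_1)$, since $\F(v)$ contains the gap.
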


\begin{proof}
    We begin with the treatment for forest pieces $\piece = \str{F}[i \dd j)$. Let $m = \lceil \frac{j - i}{2} \rceil$. First, we check index $m + i$. We assume $\str{F}[m + i]$ is an opening parenthesis, otherwise the following steps may be done symmetrically in the other direction. Let $v$ be the node of $\F$ corresponding to $\str{F}[m + i]$.  Since $v$ is the node corresponding to the midpoint of $\piece$, by Lemmas~\ref{lem:findHighestAnc} and~\ref{lem:findMidAncestor}, we may find ancestors $w$, the highest ancestor of $v$ in $\str{F}[i \dd j)$ and $u$, the highest ancestor of $v$ in $\str{F}[i  \dd j)$ such that $|\F(u)| \leq m$ if it exists, in time $\Oh(\log^2 n / \log \log n)$. Additionally, by Theorem~\ref{thm:navarro}, we may find the parent $u'$ of $u$ if it exists in time $\Oh(\log n / \log \log n)$.  We now list the six parts we partition $\piece$ into and then discuss their size. We utilize the following observation that since $u$ is an ancestor of $v$, we have that $o(u) \leq o(v) = m < c(v) \leq c(v)$.

    \begin{enumerate}
        \item If $u$ exists, the first part is $\str{F}[o(u) \dd c(u)]$, the subtree rooted at node $u$. By Lemma~\ref{lem:findMidAncestor}, this part must have size at most $m$.
        \item If $u$ exists and $w \ne u$, the second part is $\str{F}(o(u') \dd o(u))$, the forest composed of subtrees rooted at siblings of $u$ to the left of $u$. Since $w \ne u$ and $w$ is the highest ancestor of $v$ satisfying $i \leq o(w)$, then we have that $i \leq o(w) \leq o(u')$. Furthermore, we know that $o(u) \leq o(v) = m + i$, and so the size of this part is bounded by $o(u) - o(u') \leq (m + i) - i$ = m.
        \item If $u$ exists and $w \ne u$, the third part is $\str{F}(c(u) \dd c(u'))$, the forest composed of subtrees rooted at siblings of $u$ to the right of $u$. By a symmetric argument as to the previous part, the size of this part is bounded by $m$.
        \item If $u$ exists and $w \ne u$, the fourth part is $(\str{F}[o(w)\dd o(u')], \str{F}[c(u') \dd c(w)])$, the context rooted at $w$ excluding the subtree of $u'$.  By Lemma~\ref{lem:findMidAncestor}, $|\F(u')| > m$ and so, this part can have size at most $j - i - m \leq m$.
        \item The fifth part is $\str{F}[i \dd o(w))$, the forest composed of subtrees rooted at siblings of $w$ to the left of $w$.  By the same reasoning as the fourth part, the size of this part is bounded by $m$.
        \item The sixth part is $\str{F}(c(w)\dd j)$, the forest composed of subtrees rooted at siblings of $w$ to the right of $w$.  By the same reasoning as the fourth part, the size of this part is bounded by $m$.
    \end{enumerate}

    The above partitions $\str{F}[i \dd j)$ into six pieces of size at most $m$, which is half of the original forest piece, in time $\Oh(\log^2 n / \log \log n)$.  See Figure~\ref{fig:forest-part} for an example of this partition scheme.

    Now we show how to partition a context piece $\piece = (\str{F}[i_1 \dd j_1), \str{F}[i_2 \dd j_2))$. Let $m = (j_1 - i_1 + j_2 - i_2)/2$.  We begin by finding node $v$, the parent of the subforest $\str{F}[j_1 \dd i_2)$, which we may find using Theorem~\ref{thm:navarro} by querying the parent of the node corresponding to $\str{F}[j_1]$. We will have two cases for partitioning contexts. 
    
    First, if $|\F(v)| - (i_2 - j) \leq m$ then we repeat the same partitioning as for forest pieces using nodes $w$ and $u$ from Lemmas~\ref{lem:findHighestAnc} and~\ref{lem:findMidAncestor} in relation to $v$.  The main change is that the first part, subforest $\str{F}[o(u) \dd c(u)]$ is replaced with context $(\str{F}[o(u) \dd j_1), \str{F}[i_2 \dd c(u)])$ as we must still exclude the missing subforest $\str{F}[j_1 \dd i_2)$.  Also, we note that by the definition of contexts, parts five and six will be empty since a context contains nodes only in a single tree, and so $\piece$ will not contain any siblings of the root of this tree. Therefore, we will have 4 pieces total of size $m$.

    If instead $|\F(v)| - (i_2 - j_1) > m$, we cannot repeat the same partitioning as already the subtree of $v$ is too large to be its own part. Observe that only one of $|(o(v) \dd j_1)| > m$ or $|[i_2 \dd c(v))| > m$ may be true. Without loss of generality, let us assume the former, $|(o(v) \dd j_1)| > m$.  In this case, we set $|[i_2 \dd c(v))|$ to be its own forest part of size at most $m$. We additionally set $(\str{F}[i_1 \dd o(v)], \str{F}[c(v)\dd j_2))$ to be a context part of size at most $(j_1 - i_1 + j_2 - i_2) - (|\F[v]| - (i_2 - j_1)) \leq 2m - m = m$. Finally, with the remaining forest $\str{F}(o(v) \dd j_1)$, we partition it into 6 pieces in the same way we did in the above case for forest pieces. Therefore, we get 8 pieces total of size at most $m$ in time $\Oh(\log^2 n/ \log \log n)$.
\end{proof}

\subsubsection{The \textsf{HasMatch} Routine}

We now show how, given a forest or context piece $\piece$ in $\F$, we may find a matching piece in $\G$ if one exists up to a $k$-index shift.

We primarily will rely on the data structure of \cite{KK22}, which supports updates and \emph{internal pattern matching queries} ($\ipm$) in $\Oh(\log^{1+o(1)} n)$ time each. Given a pattern $P$ and a text $T$, $\ipm(P, T)$ reports all occurrences of $P$ in $T$. For $|T| < 2|P|$, all occurrences are reported as a single arithmetic progression of starting positions. Since $\hm$ is only interested in finding matches up to a $2k$-shift in either direction, in order to limit our search, we use the $\spl$ operation so that we may cut out exactly the set of indices we need to perform an $\ipm$ query on.

Finding matches in $\G$ for forest pieces of $\F$ is straightforward applications of the previous theorem. Finding matches in $\G$ for context pieces of $\F$ requires additional work due to being two separate parts, which we cannot find a match for with a single IPM query.  Instead, we use IPM queries for each part of a context piece, which return two separate arithmetic progressions for the starting indices of matches for each part. Using $\mathsf{depth}$ queries from the data structure of Theorem~\ref{thm:navarro}, we may rewrite each progression as an arithmetic progressions of the depths of the starting indices of matches for each part.  Then, we find where these depths are equal between each progression and check if at such an equal depth these matches correspond to a single context piece in $\G$. We give the statement formally as follows.

\begin{lemma}
\label{lem:hm}
    Given forests $\F, \G$, and a piece $\piece$ of $\F$, $\hm(\F)$ can be computed in time $\Oh(\log (|\F| + |\G|))$.
\end{lemma}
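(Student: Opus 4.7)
The plan is to handle forest pieces and context pieces separately, in both cases relying on the internal pattern matching ($\ipm$) queries supported by the dynamic strings data structure. For a forest piece $\piece = \str{\F}[i\dd j)$ with $|\piece| \ge 2k$, I would first use $\spl$ operations to isolate the window $\str{\G}[\max(0,i-k)\dd \min(|\G|,j+k))$ and then invoke $\ipm$ with pattern $\piece$ against this window. Since the window has length at most $|\piece| + 2k \le 2|\piece|$, the query returns a single arithmetic progression of starting positions in $\Oh(\log n)$ time; $\hm(\piece)$ returns \textsf{true} exactly when this progression is nonempty, and the verification of the shift bound $|i-i^{\G}| \le k$ is enforced by the window itself.

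For a context piece $\piece = (\str{\F}[i_1\dd j_1), \str{\F}[i_2\dd j_2))$ with both halves of length at least $2k$, I would issue two $\ipm$ queries: one for the left half against a $k$-window of $\G$ around $i_1$, and one for the right half against a $k$-window around $i_2$. Each query returns a single arithmetic progression of candidate starting positions. The challenge is that an occurrence of the left half at position $p_1$ and an occurrence of the right half at position $p_2$ together yield a valid context match only when $\str{\G}[p_1]$ opens a node $w' \in \G$ whose closing parenthesis lies at position $p_2 + (j_2 - i_2) - 1$. Equivalently, both matched halves must be anchored at a common root node of $\G$, which forces them to begin at coordinated depths. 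My plan is to convert each starting-position progression into a depth progression via $\Oh(1)$ $\mathsf{depth}$ queries from \cref{thm:navarro}: because consecutive matches are separated by a period of the pattern, the periodic substring between consecutive occurrences has a fixed net balance, so the depths of consecutive starting positions differ by a fixed amount. Intersecting the two resulting depth progressions is a constant-time operation, from which I read off the smallest common depth $d^*$ if the intersection is nonempty.

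The main obstacle is to justify that checking only the smallest intersection $d^*$ suffices. The plan is to prove that the unique pair $(p_1^*, p_2^*)$ associated with $d^*$ is the only pair that can realize the twin relationship required for a valid context. Intuitively, the smallest depth corresponds to the outermost candidate root $w'$ in $\G$; any deeper common depth would force the matching halves to lie strictly nested inside $w'$, but then the non-crossing structure of $\str{\G}$, combined with the observation that the IPM progressions arise from a single period of the pattern, forces the left- and right-half offsets at the deeper candidates to be misaligned relative to their would-be twin. I will formalize this by showing that, if the twin test fails at $(p_1^*, p_2^*)$, then at any deeper candidate the left-half occurrence extends beyond the subtree of the candidate root, violating the non-crossing property of parentheses. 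Once $d^*$ is identified, verifying it requires one twin test between $\str{\G}[p_1^*]$ and $\str{\G}[p_2^* + (j_2-i_2)-1]$, implemented via $\lca$ or $\mathsf{parent}$ queries in $\Oh(\log n)$ time. Summing the costs of the $\spl$, $\ipm$, $\mathsf{depth}$, and tree queries yields the stated $\Oh(\log(|\F|+|\G|))$ bound. The degenerate case where only one half of $\piece$ has length at least $2k$ reduces to the forest case applied to that half.
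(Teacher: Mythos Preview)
Your overall plan matches the paper's approach: handle forest pieces with a single windowed $\ipm$ query, and handle context pieces with two $\ipm$ queries, convert the two arithmetic progressions of starting positions into arithmetic progressions of depths, intersect them, and verify only the minimum element of the intersection. The running-time accounting is also essentially the same.

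Where your proposal diverges from the paper, and where it has a genuine gap, is in the correctness argument for why checking only the minimum-depth intersection suffices. Your claim that $(p_1^*,p_2^*)$ is ``the only pair that can realize the twin relationship'' is false: several depths in the intersection can simultaneously give valid context matches (this is exactly what \cref{fig:context-match} illustrates, with two green root nodes). Consequently, the argument you sketch---that if the test fails at $d^*$ then at any deeper candidate ``the left-half occurrence extends beyond the subtree of the candidate root''---is not the right mechanism. The paper instead argues the contrapositive directly via a sandwiching observation: if some deeper pair $(p_1'',p_2'')$ yields a valid context, then both $\str{\G}[j_1''\dd i_2'')$ and $\str{\G}[i_1''\dd j_2'')$ are balanced, and the gap $\str{\G}[j_1'\dd i_2')$ of the minimum-depth candidate lies between these two balanced strings, forcing it to be balanced as well. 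This is what makes the minimum-depth check sound, not any uniqueness property.

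A second point you gloss over is \emph{why} the depth conversion is well-defined. It is not enough that consecutive occurrences are one period apart; you need that all candidate roots for the left part lie on a single root-to-leaf path of $\G$ (and similarly for the right part), so that depth is a bijection with position. The paper obtains this from the structural fact that the left part of a context contains no unmatched closing parenthesis; hence the entire span covered by its overlapping occurrences in $\G$ also contains no unmatched closing parenthesis, and all starting positions are ancestors of one another. Without this step, two distinct positions could share the same depth and the intersection of depth progressions would not correspond to a pair of compatible occurrences.
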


\begin{proof}
    We utilize an instance $\X$ of the dynamic strings data structure of \cite{KK22}.
    We first consider when a piece $\piece$ that we pass to the $\hm$ routine is a forest piece $\str{\F}[i \dd j)$.  In order to limit our search in $\G$ to just the starting index range $[i - 2k \dd i + 2k]$, we first perform two $\spl$ operations on $\str{\G}$ to retrieve $\str{\G}[i-2k\dd j+2k)$, the substring of $\G$ that may contain a match for $\piece$ with a shift of up to $2k$.  
    Finally, we query $\ipm(\piece, \str{\G}[i-2k \dd j + 2k))$ using $\X$ and, if there is a match, output $\hm(\piece) = \mathsf{true}$. By Theorems~\ref{thm:dynstr} and~\cite{KK22}, this takes time $\Oh(\log (|\F| + |\G|))$ with high probability. 

    Now, if a piece $\piece$ that we pass to the $\hm$ routine is a context $\langle\str{\F}[i_1 \dd j_1), \str{\F}[i_2 \dd j_2)\rangle$, we need to check the left and right parts of the context separately and then find if any matches for each part correspond to a single context\footnote{As per the description of $\hm$, if either part has size at most $4k$, we ignore it and treat the remaining part as a forest piece.}.
    Now, using the same steps as for forest pieces, we may find an arithmetic progression for all occurrences of each part in $\G$ separately with  a shift of size up to $k$ using the dynamic strings data structure $\X$ to perform $\spl$ and $\ipm$ operations. We may have multiple matches for the left and right parts separately, and we must check if any pair of these matches forms a single context, i.e., the unmatched opening parentheses in the left part correspond to the same nodes as the unmatched closing parentheses in the right part. 

    Since the string $\str{\F}[i_1 \dd j_1)$ is of length more than $4k$, its occurrences within $\str{\G}[i_1-2k \dd j_1+2k)$ overlap,
    and their starting positions form an arithmetic progression $i_1+\delta_{1,1},i_1+\delta_{1,2},\ldots, i_1+\delta_{1,s}$.
    Moreover, since $\str{\F}[i_1 \dd j_1)$ does not contain any unmatched closing parenthesis, the same must be true about
    $\str{\G}[i_1+\delta_{1,1}\dd j_1+\delta_{1,s})$.
    Consequently, all potential matches for the root of the context (the opening parenthesis at position $\str{F}[i_1]$)
    must lie on a single root-to-leaf path of~$\G$.
    Furthermore, the depths of these potential matches also form an arithmetic progression (whose difference is the number of unmatched opening parentheses in  $\str{G}[i_1+\delta_{1,1}\dd i_1+\delta_{1,2})=\ldots = \str{G}[i_{1}+\delta_{1,s-1}\dd i_1+\delta_{1,s})$.
    A symmetric argument shows that the possible matches for the root of the context based on the exact occurrences of the right part $\str{\F}[i_2 \dd j_2)$
    within $\str{\G}[i_2-2k \dd j_2+2k)$ also lie on a single root-to-leaf path at depths constituting an arithmetic progression.

     Let $\{a_1\cdot x + b_1 : x \in \{0,\ldots, c_1\}\}$ be the arithmetic progression of the depths for the set of returned left part matches and let $\{a_2\cdot x + b_2 :  x \in \{0,\ldots, c_2\}\}$ be the arithmetic progression of the depths for the right part matches, which we may obtain using $\mathsf{depth}$ queries in $\Oh(\log |\G|)$-time with the data structure of Theorem~\ref{thm:navarro}. 
     We intersect the two arithmetic progressions and, if the intersection is empty, return $\mathsf{false}$.
     If the intersection is non-empty, we identify its minimum element and retrieve the corresponding occurrences $\str{G}[i'_1\dd j'_1)$ and $\str{G}[i'_2\dd j'_2)$ of $\str{\F}[i_1 \dd j_1)$
     and $\str{\F}[i_2 \dd j_2)$, respectively.
     We return $\mathsf{true}$ if and only if $\str{G}[i'_1\dd j'_1)$ is balanced.
     
     In this case, $\langle \str{G}[i'_1\dd j'_1), \str{G}[i'_2\dd j'_2) \rangle$ is indeed a valid context: since $\str{G}[i'_1]$ and $\str{G}[j'_2-1]$ are opening and closing, respectively, and correspond to nodes at the same level, the unmatched parentheses from $\str{G}[i'_1\dd j'_1)$ and $\str{G}[i'_2\dd j'_2)$ are all matched within $\str{G}[i'_1\dd j'_2)$.

     It remains to prove that we have not lost a valid match by greedily taking the minimum depth.
     For this, suppose that there was valid match $\langle \str{G}[i''_1\dd j''_1), \str{G}[i''_2\dd j''_2) \rangle$ at a larger depth.
     In this case, $\str{G}[j''_1\dd i''_2)$ is balanced, and so is $\str{G}[i''_1\dd j''_2)$.
     Then $\str{G}[i'_1\dd j'_1)$ is sandwiched between the two, and hence also balanced: any hypothetical unmatched parenthesis within $\str{G}[i'_1\dd j'_1)$ must be matched within $\str{G}[i''_1\dd j''_2)$ but, since $\langle \str{G}[i''_1\dd j''_1), \str{G}[i''_2\dd j''_2) \rangle$ is a context, the unmatched parentheses of $\str{G}[i''_1\dd j'_1)$ are matched within $\str{G}[i'_2\dd j''_2)$, and vice versa.
\end{proof}

\subsection{Dynamic Approximation Algorithm}

The results above can be interpreted as a dynamic data structure supporting the following two operations:
\begin{description}
    \item[Updates] that modify  $\F$ or $\G$ using a given edit operation, supported in $\Oh(\log^{1+o(1)} n)$ time. These updates simply maintain the dynamic strings data structure \cite{KK22} and the dynamic forest data structure of \cref{thm:navarro}.
    \item[Queries] that, given an integer $k$, return \textsf{Yes} if $\ted(\F,\G) \le k$ and only if $\ted(\F,\G) \le  C k^2 \log n$ (for some explicit constant $C$).
    The query procedure works in $\Oh(k\log^{3+o(1)} n)$ time.
\end{description}

We now show how to apply these two procedures in order to maintain an approximate tree edit distance $\apted(\F,\G)$ that, all times, satisfies $\ted(\F,\G)\le \apted(\F,\G) \le \Oh(\ted(\F,\G)\log n)$,
and takes $\Oh(\log^{3+o(1)}n)$ worst-case time per update.

We always ask queries in succession for subsequent thresholds $k=1,2,4,\ldots$ until the answer is $\textsf{Yes}$. 
The threshold $\kappa$ for which this condition happens satisfies $\kappa/2 \le \ted(\F,\G) \le C \kappa^2 \log n$, and it is computed in time at most $\Oh(\kappa \log^{3+o(1)} n)$. 
We call this an \emph{extended query procedure}.

Our approximation algorithm maintains the aforementioned dynamic data structure, $\mathcal{D}$, which stores forests $\F$ and $\G$.
The updates are not immediately applied to $\mathcal{D}$, though, but stored in a buffer $B$ (implemented as a simple queue).
We occasionally run the extended query procedure on $\mathcal{D}$ in small $\Oh(\log^{3+o(1)}n)$-time steps so that a value $\kappa$ is guaranteed to be returned in no more than $\lceil 0.1\kappa \rceil$ steps.
Furthermore, we store the result $\kappa$ of the last completed execution of the extended query procedure, and the value $b$ representing the number of updates (applied to $\mathcal{D}$ or still in the buffer) since that execution has been launched.

Initially, $\mathcal{D}$ stores two empty forests, the buffer $B$ is empty, no extended query procedure is underway, $\kappa=0$ (which is the tree edit distance of empty strings), and $b=0$.
At each update, we perform the following steps:
\begin{enumerate}
    \item Buffer the incoming update and increment $b$.
    \item If there is no extended query procedure underway:
    \begin{itemize}
        \item Apply the first $\min(|B|,2)$ updates from the buffer.
        \item If the buffer is now empty, launch a new  extended query procedure (without running it yet).
    \end{itemize}
    \item If there is an extended query procedure underway (possibly just launched):
    \begin{itemize}
        \item Run the extended query procedure for one step.
        \item If the procedure terminates, update $\kappa$ and set $b := |B|$.
    \end{itemize}
    \item Return $C\cdot \kappa^2 \log n + b$.
\end{enumerate}

\begin{lemma}
    The returned value $\apted(\F,\G)$ satisfies $\ted(\F,\G)\le \apted(\F,\G) \le \Oh(\ted(\F,\G)^2 \log n)$.
\end{lemma}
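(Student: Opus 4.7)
The plan is to fix an arbitrary time $t$, identify the last completed extended query procedure $P_j$ (with launch time $s_j$, termination time $t_j$, and returned value $\kappa_j$), and prove the two inequalities separately. First I would verify the bookkeeping invariants: at time $t$, the stored $\kappa$ coincides with $\kappa_j$, and the stored $b$ equals the total number of updates received during $[s_j, t]$. This follows by tracing the update handler: at launch of $P_j$ the buffer is empty, so $\mathcal{D}$ holds the true forests $\F^*,\G^*$ at time $s_j$; every update during $[s_j,t_j]$ enters the buffer and increments $b$, and the reset $b := |B|$ at termination resynchronizes $b$ to exactly the number of updates that arrived during $P_j$; each subsequent update increments $b$ by one. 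Because all $b$ counted updates modify $\F$ or $\G$ one node at a time, the triangle inequality for $\ted$ gives $|\ted(\F,\G) - \ted(\F^*,\G^*)| \le b$.

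For the over-estimate direction, I combine this triangle bound with the query guarantee $\ted(\F^*,\G^*) \le C\kappa_j^2 \log n$ to obtain $\ted(\F,\G) \le \ted(\F^*,\G^*) + b \le C\kappa^2 \log n + b = \apted(\F,\G)$. For the approximation factor, I combine the symmetric bound $\ted(\F^*,\G^*) \le \ted(\F,\G) + b$ with the query lower bound $\kappa_j/2 \le \ted(\F^*,\G^*)$ to get $\kappa_j \le 2\ted(\F,\G) + 2b$. It then suffices to show $b = \Oh(\ted(\F,\G)+1)$, since the displayed inequality then yields $\kappa_j = \Oh(\ted(\F,\G)+1)$ as well, and hence $\apted = C\kappa^2 \log n + b = \Oh((\ted(\F,\G)+1)^2 \log n)$.

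To bound $b$, I would split into two cases. In Case A, no procedure is underway at $t$, i.e.\ $t \in [t_j, s_{j+1})$; here the value of $b$ just after the reset at $t_j$ is $b_j \le \lceil 0.1 \kappa_j \rceil$ by the ``at most $\lceil 0.1\kappa\rceil$ steps'' guarantee applied to $P_j$, and each subsequent incoming update increments $b$ by one while the handler drains two items from the buffer, so at most $b_j+1$ further updates can occur before the buffer empties and $P_{j+1}$ launches, giving $b \le 2b_j + 1 = \Oh(\kappa_j)$. In Case B, $P_{j+1}$ is underway at $t$, i.e.\ $t \in [s_{j+1}, t_{j+1})$; here the number of updates received since $s_{j+1}$ equals the number of steps of $P_{j+1}$ executed so far, which is strictly smaller than $\lceil 0.1 \kappa_{j+1} \rceil$. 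Applying the same triangle-inequality argument to $P_{j+1}$ gives $\kappa_{j+1}/2 \le \ted(\F,\G) + \lceil 0.1\kappa_{j+1}\rceil$, from which $\kappa_{j+1} = \Oh(\ted(\F,\G)+1)$; combined with the Case~A accumulation up to $s_{j+1}$, this yields $b = \Oh(\kappa_j) + \Oh(\ted(\F,\G)+1)$.

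The main obstacle is precisely Case B: we must bound the current $b$ without having finished computing $\kappa_{j+1}$. The resolution is the self-bounding observation above, that the steps already executed by $P_{j+1}$ are themselves bounded by $\lceil 0.1\kappa_{j+1}\rceil$, which allows the same triangle-inequality trick used for $P_j$ to pin down $\kappa_{j+1}$ in terms of $\ted(\F,\G)$ alone. Substituting these bounds back into $\kappa_j \le 2\ted(\F,\G) + 2b$ and solving the resulting linear inequality isolates $\kappa_j = \Oh(\ted(\F,\G)+1)$, and the final bound $\apted(\F,\G) = \Oh(\ted(\F,\G)^2 \log n)$ follows after absorbing the additive $+1$ into the asymptotic notation.
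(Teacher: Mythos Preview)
Your proposal is correct and follows essentially the same approach as the paper: both arguments hinge on the triangle inequality $|\ted(\F,\G)-\ted(\F^*,\G^*)|\le b$, the query guarantees $\kappa/2\le\ted(\F^*,\G^*)\le C\kappa^2\log n$, the step bound $\lceil 0.1\kappa\rceil$ controlling buffer growth, and (when a new procedure is underway) a self-bounding argument via the yet-unknown value $\kappa_{j+1}$.

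The only organizational difference is how the upper-bound case split is made. The paper splits on whether the currently running procedure has taken $s\le 0.05\kappa_j$ steps or $s>0.05\kappa_j$ steps; in the latter case the threshold immediately gives $\kappa_j\le 20s$, avoiding any linear-inequality solving. You instead split on whether a procedure is running at all, uniformly write $b\le 0.2\kappa_j+O(1)+s'$, and then substitute into $\kappa_j\le 2\ted+2b$ to solve for $\kappa_j$. This works precisely because the constant $0.2$ is strictly below $1/2$; you implicitly rely on this when you write ``solving the resulting linear inequality,'' and it is worth making that dependence explicit. One small slip: you phrase the plan as ``it suffices to show $b=O(\ted+1)$,'' but in Case~A you actually obtain $b=O(\kappa_j)$ first and only deduce $b=O(\ted+1)$ \emph{after} solving the linear inequality---the logic is circular as stated but not as executed, so just reorder the exposition.
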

\begin{proof}
    The lower bound for $\apted(\F,\G)$ is easy: the forests $\F^*$ and $\G^*$ from the moment when we launched the previous extended query procedure
    satisfy $\ted(\F^*,\G^*) \le C \kappa^2 \log n$ as well as $\ted(\F,\G) \le \ted(\F^*,\G^*)+b$.
    Combining these two inequalities, we get $\ted(\F,G) \le C \kappa^2 \log n + b = \apted(\F,\G)$.

    As for the upper bound, let us first note that 
    $\ted(\F,\G) \ge \ted(\F^*,\G^*)-b \ge 0.5\kappa-b$.
    When the extended query procedure terminates with value $\kappa$, it has been running for at most $\lceil{0.1 \kappa}\rceil$ steps, and thus the buffer contains at most $0.1\kappa$ unprocessed updates, that is, we have $b\le 0.1\kappa$ once the procedure terminates. These updates are applied in the double speed, and thus $b \le 0.2\kappa$ once the next extended procedure is launched.

    Consequently, if the update algorithm returns $\apted(\F,\G)$ when no extended query algorithm is underway or one has already been running for $s \le 0.05\kappa$ steps, then $b\le 0.25\kappa$, and hence
    $\ted(\F,\G) \ge 0.5\kappa - 0.25\kappa = 0.25\kappa$.
    Thus, $\apted(\F,\G) = C \kappa^2 \log n + b \le C \kappa^2 \log n + 0.25\kappa \le 16 C \ted(\F,\G)^2 \log n + \ted(\F,\G).$

    Otherwise, an extended query procedure is underway running for $s \ge 0.05\kappa$ steps already, and then $b \le 0.2\kappa + s \le 5s$.
    Moreover, we know that a value $\kappa'\ge 10s$ will be returned by the procedure underway, and the instance $\F',\G'$ from its launch time satisfies $\ted(\F',\G') \ge \kappa'/2 = 5s$.
    Since $s$ updates have been requested since the launch, we have $\ted(\F,\G) \ge \ted(\F',\G')-s \ge 4s$.
    Thus, $\apted(\F,\G) = C \kappa^2 \log n + b \le 400 C s^2 \log n + 5s \le 25 C \ted(\F,\G)^2 \log n + 1.25 \ted(\F,\G).$ 
\end{proof}

In order to derive \cref{thm:dyn-tree-kk2-simple}, it suffices to observe that the procedure that we run at each update takes $\Oh(\log^{3+o(1)} n)$ time, dominating by the single step of the extended query algorithm.

\section{Further Results on Approximating Dynamic Dyck Edit Distance}\label{sec:approx}
In this section, we present two algorithms that each return good approximate solutions for different regimes of Dyck edit distance.  The first algorithm provides an approximation factor of $\Oh(f(\epsilon)\log |X|)$ in $\tOh_\epsilon(|X|^{1+\epsilon}/\ded(X))$ update time where $f(\epsilon)$ is a function dependent only on a fixed parameter $\epsilon$. The second algorithm gives an $\Oh(\log |X|)$-approximations in $\tOh(\ded(X) + 1)$ amortized update time. Thus, the first algorithm is preferable for $\ded(X) = \Omega(\sqrt{n})$ while the second algorithm is preferable for $\ded = \Oh(\sqrt{n})$.  These algorithms can be combined to create a logarithmic-approximation algorithm for dynamic Dyck edit distance for any input string. 

In this section, we will often consider approximate Dyck edit distance with deletions only.  For a parenthesis string $X$, $\dedd(X)$ refers to the \emph{deletion-only} Dyck edit distance.
To see why considering only deletions is still satisfactory for our purposes, observe that any substitution in a minimum sequence of edits occurs so that two parentheses can be matched together, and instead, we may replace the substitution by deleting the formerly matched pair of parentheses. 
Since we already lose a factor of $\Oh(\log n)$ in our approximation algorithms, we may handle doubling the number of edits needed by disallowing substitutions without increasing our asymptotic approximation factor.
As for insertions, any minimal sequence of edits with insertions and deletions has a corresponding sequence of edits with deletions only with the same number of total edits.
This is because whenever a parenthesis is inserted into $X$ to be matched with a pre-existing parenthesis in $X$, we can instead forego the insertion and simply delete the existing parenthesis in $X$ to satisfy the requirements of a $\Dyck$ string.

\subsection{Large Dyck Edit Distance}\label{subsec:largeed}

We first consider the case when the Dyck edit distance of the input string $X\in \Sigma^n$ is large, that is $k = \dedd(X)$ satisfies $k\ge \sqrt{n}$.
Since $k$ is so large and a single edit changes $\dedd(X)$ by at most $\pm 1$, at a constant-factor loss in the approximation ratio,
it suffices to update our approximation of $k=\dedd(X)$ every $\Theta(k)$ updates. 
As a result, we can afford to use a static approximation algorithm because its large cost amortizes over the $\Theta(k)$ updates.

The first tool we use is an $\Oh(\log n)$-approximate reduction from Dyck edit distance to string edit distance from \cite{KSSODA2023}.
This technique is a large part of several of our later algorithms.
We give some necessary definitions to understand this reduction.  
A string $Y$ is called an \emph{LR-string} if it can be decomposed into a prefix $\Lp(Y)$ consisting of opening parentheses and a suffix $\Rp(Y)$ consisting of closing parentheses, i.e. $Y = \Lp(Y) \cdot \Rp(Y)$. 
A \emph{maximal LR-segment} of a string $X$ is an LR-string that is a substring of $X$ and cannot be extended to a larger LR-string. 
Given a string $X$, its maximal LR-strings determine a unique partition of $X$ into disjoint substrings, the \emph{LR-decomposition} of $X$.

We observe that finding the deletion-only Dyck edit distance of an LR-string $X$ is equivalent to finding the minimum number of edits needed to transform the maximal-length sequence of opening parentheses in $X$ into the transpose of the maximal-length sequence of closing parentheses in $X$.  

\begin{observation}
\label{obs:LRed}
    Every LR-string $X$ satisfies $\dedd(X) = \edd(\Lp(X), T(\Rp(X)))$.
\end{observation}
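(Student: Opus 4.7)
The plan is to exhibit an explicit bijection between deletion-only strategies that place an LR-string $X$ in $\Dyck$ and deletion-only alignments between $\Lp(X)$ and $T(\Rp(X))$, showing that the corresponding costs coincide. Observe first that any deletion-only strategy for $X$ is specified by the set of retained parentheses, and the output lies in $\Dyck$ if and only if the retained parentheses form a well-balanced string. Because $X=\Lp(X)\cdot \Rp(X)$ with $\Lp(X)$ consisting entirely of opening parentheses and $\Rp(X)$ of closing parentheses, every retained opening parenthesis necessarily lies in $\Lp(X)$ and every retained closing parenthesis in $\Rp(X)$, so the only way the retained string can be balanced is that equally many parentheses are retained on each side, and they are paired by the unique non-crossing matching available.

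Next, I would make this pairing explicit. Suppose we retain positions $i_1<i_2<\cdots<i_k$ in $\Lp(X)$ and positions $j_1>j_2>\cdots>j_k$ in $\Rp(X)$ (the reverse order on the $\Rp$ side being forced by non-crossing combined with the LR-structure). The retained string lies in $\Dyck$ precisely when, for each $s\in[1\dd k]$, the opening parenthesis $\Lp(X)[i_s]$ has the same type as the closing parenthesis $\Rp(X)[j_s]$. Translating to $T(\Rp(X))$, which reverses $\Rp(X)$ and flips each closing parenthesis to the opening parenthesis of the same type, the positions $j_1>\cdots>j_k$ become positions $|\Rp(X)|-1-j_1<\cdots<|\Rp(X)|-1-j_k$ in $T(\Rp(X))$ carrying opening parentheses of the very same types as $\Lp(X)[i_1],\ldots,\Lp(X)[i_k]$. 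Hence, valid deletion strategies for $X$ are in one-to-one correspondence with pairs of subsequences of $\Lp(X)$ and $T(\Rp(X))$ that are equal as strings, that is, with common subsequences of $\Lp(X)$ and $T(\Rp(X))$.

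Now I would convert this characterization into a cost identity. The cost of the deletion strategy is $|X|-2k = |\Lp(X)|+|\Rp(X)|-2k$, where $k$ is the length of the corresponding common subsequence. Maximizing $k$ minimizes the cost, giving $\dedd(X)=|\Lp(X)|+|T(\Rp(X))|-2\cdot\mathsf{LCS}(\Lp(X),T(\Rp(X)))$. On the other hand, it is a standard fact that the deletion-only edit distance between any two strings $Y$ and $Z$ equals $|Y|+|Z|-2\cdot\mathsf{LCS}(Y,Z)$, since an optimal deletion-only alignment keeps precisely the characters of a longest common subsequence. Combining these two identities yields $\dedd(X)=\edd(\Lp(X),T(\Rp(X)))$, as claimed.

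The main obstacle, if any, is making the forced pairing argument airtight: one must verify that the non-crossing property together with the hard separation of openings and closings in an LR-string leaves exactly one admissible matching of the retained parentheses, namely the innermost-outward pairing corresponding to the orderings $i_1<\cdots<i_k$ and $j_1>\cdots>j_k$. This follows directly from Observation~\ref{obs:noncrossingtwins} applied within the retained substring, since any crossing would require an opening parenthesis to appear between a pair of matched parentheses on the $\Rp$ side, which is impossible in an LR-string. Everything else reduces to bookkeeping about reversal and type-flipping in $T(\cdot)$ and to the standard LCS-based formula for deletion-only edit distance.
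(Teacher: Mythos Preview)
Your proof is correct and follows essentially the same approach as the paper: both reduce the claim to the longest common subsequence formulation, observing that Dyck subsequences of an LR-string are in bijection with common subsequences of $\Lp(X)$ and $T(\Rp(X))$, and then invoking the standard identity $\edd(Y,Z)=|Y|+|Z|-2\cdot\mathsf{LCS}(Y,Z)$. The paper's version is more compressed (it just asserts $\Lp(X')=T(\Rp(X'))$ for any Dyck LR-string $X'$ and proves the two inequalities separately), whereas you spell out the forced innermost-outward pairing explicitly; this extra detail is not strictly necessary but does no harm.
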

\begin{proof}
    Let $X'$ be a longest subsequence of $X$ such that $X' \in \Dyck$. 
    Since $X$ is an LR-string, $X'$ must also be an LR-string and, furthermore, since $X' \in \Dyck$, it must be that $\Lp(X') = T(\Rp(X'))$ is a common subsequence of $\Lp(X)$ and $T(\Rp(X))$.
    Therefore, $\dedd(X)= |X|-|X'|=|\Lp(X)|+|T(\Rp(X))|-2|\Lp(X')| \le  \edd(\Lp(X), T(\Rp(X)))$.
    Similarly, let $Z$ be a longest common subsequence of $\Lp(X)$ and $T(\Rp(X))$. 
    Note that $Z\cdot T(Z)$ belongs to $\Dyck$ and is a subsequence of $X$.
    Consequently, $\edd(\Lp(X), T(\Rp(X))=|\Lp(X)|+|T(\Rp(X))|-2|Z|=|X|-|Z\cdot T(Z)| \le \dedd(X)$.
\end{proof}

We now state the reduction.
\begin{theorem}[Theorem 7.1, \cite{KSSODA2023}]
\label{thm:dycktoed}
    There is a deterministic algorithm that, given a string $X$, in $\tOh(|X|)$ time outputs a collection $\C(X)$ of LR-strings such that:
    \begin{enumerate}
        \item $\sum_{Y \in \C(X)} |Y| = |X|$, and
        \item $\dedd(X) \leq \sum_{Y \in \C(X)} \dedd(Y) \leq \left(3 + 2\lg (\dedd(X))\right) \dedd(X)$.
    \end{enumerate}
\end{theorem}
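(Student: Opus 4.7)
The plan is to present a recursive LR-segment construction of $\C(X)$ and verify its three properties, with the approximation ratio emerging from a per-iteration bound coupled with a logarithmic iteration count in $k := \dedd(X)$.

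\textbf{Construction and easy properties.} I would set $X_0 := X$ and, at iteration $i$, compute the LR-decomposition of $X_i$ into maximal LR-segments $M_1^i,\ldots,M_{\ell_i}^i$, where each $M_j^i$ is a block of $a_j^i$ opens followed by a block of $b_j^i$ closes. For every non-monotone segment, trim a prefix of the opens (if $a_j^i > b_j^i$) or a suffix of the closes (otherwise) so that the remaining balanced LR-string $M_j'^i$ has $\min(a_j^i,b_j^i)$ opens and the same number of closes; add $M_j'^i$ to $\C(X)$ and let the trimmed residues, concatenated in their original order, form $X_{i+1}$. Every monotone segment of $X_i$ is instead added directly to $\C(X)$, bypassing the trim step so as not to loop. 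The recursion halts when $X_{i+1}=\varepsilon$. Each character of $X$ is placed into exactly one $Y\in\C(X)$, giving $\sum_Y|Y|=|X|$; each iteration runs in $\tOh(|X_i|)$ time, and the lengths telescope to $\tOh(|X|)$ by the halving argument below.

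\textbf{Lower bound.} I would prove $\dedd(X)\leq\sum_Y\dedd(Y)$ by induction on the iteration index, using the Dyck insertion property: if $U,V\in\Dyck$ then $U[0\dd u)\cdot V\cdot U[u\dd|U|)\in\Dyck$ for every $u\in[0\dd|U|]$. Since $X_i$ is obtained from $X_{i+1}$ by splicing each contiguous substring $M_j'^i$ back between the adjacent trimmed residues, applying an optimal deletion to each $M_j'^i$ in parallel with an optimal deletion of $X_{i+1}$ brings $X_i$ into $\Dyck$; this yields $\dedd(X_i)\leq \dedd(X_{i+1})+\sum_j\dedd(M_j'^i)$, and telescoping across iterations, while absorbing the monotone terminal pieces (whose Dyck distance equals their length), gives $\dedd(X)\leq\sum_{Y\in\C(X)}\dedd(Y)$.

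\textbf{Upper bound.} The reverse direction requires two ingredients. First, a per-iteration bound $\sum_j\dedd(M_j'^i)\leq 2\dedd(X_i)$: adapting Lemma~4.8, any character of $M_j'^i$ that an optimal deletion of $X_i$ aligns to a character outside $M_j'^i$ forces a within-segment deletion (by non-crossing of the induced matching), so cross-segment matches are charged to within-segment deletions, and together they dominate $\dedd(M_j'^i)$. Second, an iteration-count bound $h\leq 1+\lceil\lg k\rceil$: the key observation is $|X_{i+1}|=\sum_j|a_j^i-b_j^i|\leq\sum_j\dedd(M_j^i)\leq 2\dedd(X_i)\leq 2k$, so $|X_i|\leq 2k$ already for $i\geq 1$, and a halving argument (each non-monotone LR-segment contributes at most half of its length to the next round, amortized over the merging of adjacent monotone residues into new LR-segments) brings $|X_i|$ below $1$ within $1+\lceil\lg k\rceil$ further iterations. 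Putting these together:
\[
\sum_{Y\in\C(X)}\dedd(Y)\leq \sum_{i=0}^{h-1}2\dedd(X_i)+\dedd(X_h)\leq 2hk+k\leq (3+2\lg k)\,k,
\]
where the terminal bound $\dedd(X_h)\leq k$ holds because every monotone piece ever added to $\C(X)$ corresponds to a parenthesis that must be unmatched in any optimal deletion of $X$. The main obstacle is the halving step, since adjacent trimmed residues of opposite polarity (a closes-block followed by an opens-block) do not merge into a longer LR-segment and may persist, so the iteration-count bound demands a careful amortized analysis showing that such stubborn monotone residues are eventually absorbed into the terminal output rather than sustaining a constant length across iterations.
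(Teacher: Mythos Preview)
This theorem is quoted from \cite{KSSODA2023} and is not proved in the present paper; there is no paper-side argument to compare against. Your construction matches what the paper sketches in the technical overview and in Claim~\ref{clm:kreduction}, and your lower bound together with the per-iteration bound $\sum_j\dedd(M_j'^i)\le 2\dedd(X_i)$ are essentially the argument behind Lemma~\ref{lem:dycktostrnew}.

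The gap is in the iteration count. The step $\sum_j\dedd(M_j^i)\le 2\dedd(X_i)$ for the \emph{untrimmed} segments $M_j^i$ is false: take $X=\texttt{((()()))}\in\Dyck$, so $k=0$; its two LR-segments \texttt{((()} and \texttt{()))} each have $\dedd=2$, giving $\sum_j\dedd(M_j^0)=4>0$. Consequently $|X_1|\le 2k$ also fails---repeat the block $m$ times and append a single \texttt{(} to obtain $k=1$ while $|X_1|=4m$. The charging $N_j\le D_j$ that drives the per-iteration bound is specific to the \emph{balanced} trimmed parts $M_j'^i$, where an external match forces an internal deletion by the equal-halves parity; it does not transfer to $M_j^i$. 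With only the length-halving you allude to, you recover at best the $\lg|X|$ factor of Lemma~\ref{lem:dycktostrnew}, not the $\lg(\dedd(X))$ in the cited statement; the sharper bound requires an argument you have not supplied, and which the present paper does not reproduce either.
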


The preceding theorem, for a parenthesis string $X$, outputs a collection $\C(X)$ of LR-strings which may be used to find a logarithmic-factor approximation of the Dyck edit distance of $X$.
We will then use string edit distance algorithms to find the Dyck edit distance of each such LR-string according to Observation~\ref{obs:LRed}. 
The particular string edit distance approach we utilize will be as fast as possible while making sure not to add any super-logarithmic factors to the approximation ratio. 

\begin{theorem}[{\cite[Theorem 1.1]{ANFOCS20}}]\label{thm:consted}
    There exists a function $f:\mathbb{R}_+\to \mathbb{R}_+$ and a randomized algorithm that, given two strings $X,Y\in \Sigma^{\le n}$ and a parameter $\epsilon\in \mathbb{R}_+$, computes an $f(\epsilon)$-factor approximation of $\edd(X,Y)$ in $\Oh_\epsilon(n^{1+\epsilon})$ time correctly with high probability.\footnote{\label{ftn:fep}As discussed in \cite{ANFOCS20}, the value $f(\epsilon)$ is doubly exponential in $1/\epsilon$ and independent of any other variables. The subscript $\epsilon$ in $\tOh_\epsilon$ denotes hidden $\epsilon$ factors.}
\end{theorem}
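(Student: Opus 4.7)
The plan is to adapt the multi-level precision sampling framework of Andoni--Krauthgamer--Onak, refined through Chakraborty--Das--Goldhirsh--Koucký--Saks and Brakensiek--Rubinstein, to obtain a constant-factor (in $\epsilon$) approximation to string edit distance in $O_\epsilon(n^{1+\epsilon})$ time. The skeleton is a recursive algorithm with $d = \Theta(1/\epsilon)$ levels. At level $\ell$, both strings are partitioned into blocks of length $\approx n^{1-\ell/d}$, so each level reduces the block length by a factor $n^{1/d} \le n^{\epsilon}$. The edit distance $\edd(X,Y)$ is approximated by a block-alignment problem: find the minimum cost over monotone alignments of blocks where the cost of matching $X_i$ to $Y_j$ is (approximately) $\edd(X_i,Y_j)$ and the cost of skipping a block is its length. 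Given these block distances, the alignment can be solved by dynamic programming in time $\tilde{O}(T^2) = \tilde{O}(n^{2\epsilon})$ per top-level call.

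First I would set up a base case: when $n \le n_0(\epsilon)$, run the Landau--Vishkin $O(n+k^2)$ algorithm or the deterministic $\polylog n$-approximation of AKO, either of which runs in $\tilde{O}(n^{\epsilon})$ time on blocks of size $n^{\epsilon}$ and contributes a controlled multiplicative error. Next I would define, at each level, a family of tree-metric surrogates for edit distance on blocks, consistent across levels, so that a coarse upper estimate at level $\ell+1$ can guide which block pairs at level $\ell$ need to be resolved precisely. The recursive call structure then evaluates only a small subset of the $T^2$ possible block-pair distances at each level, using precision sampling: pair $(i,j)$ is evaluated with probability proportional to the precision with which its contribution to the block-alignment DP must be known. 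The expected number of evaluated pairs per level is $\polylog n$, so the work per level telescopes to $\tilde{O}_\epsilon(n^{1+\epsilon})$, and summing over the $d$ levels yields the claimed runtime.

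The main obstacle is the approximation analysis, and this is where the doubly exponential $f(\epsilon)$ (noted in the footnote) arises. Two competing phenomena must be balanced: (i) each recursion level contributes a multiplicative factor $c_\ell > 1$ to the approximation, so the product $\prod_\ell c_\ell$ must remain bounded independently of $n$; and (ii) the precision-sampling rates that make each evaluation cheap introduce variance that scales with the current approximation factor, so the sampling rates must shrink at each deeper level to counteract error amplification. Choosing the sampling rates as a doubly exponential tower in $1/\epsilon$ is what forces $f(\epsilon)$ to be doubly exponential; establishing that, with this choice, a Bernstein/Hoeffding-style argument yields concentration with high probability for the sampled block-alignment value is the core technical step. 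One must carefully show that the dynamic program evaluated on sampled, noisy entries still produces an alignment whose total cost is within a factor $O(1)$ of the true optimum at each level, so that the errors across levels multiply rather than interact in more destructive ways.

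Finally, I would wrap the whole recursion in a standard boosting step: the randomized estimate is computed $O(\log n)$ times in parallel and the median returned, yielding high-probability correctness while only multiplying the runtime by a $\polylog n$ factor. The output is a value $\widetilde{k}$ with $\edd(X,Y) \le \widetilde{k} \le f(\epsilon)\cdot \edd(X,Y)$ produced in $O_\epsilon(n^{1+\epsilon})$ time, as required.
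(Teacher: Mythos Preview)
This theorem is a cited result from \cite{ANFOCS20}; the paper does not prove it and simply invokes it as a black box in the proof of \cref{thm:largeded}. No proof is required here.

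Your proposal is a plausible high-level sketch of the Andoni--Nosatzki argument, but it is misplaced: the task was to compare against the paper's own proof, and the paper offers none. If anything, your write-up would belong in a survey of \cite{ANFOCS20}, not in this paper. One minor caution: your attribution of the refinement chain (AKO $\to$ CDGKS $\to$ Brakensiek--Rubinstein) and the precise mechanism for the doubly exponential blowup are stated with more confidence than a sketch at this level of detail can support; if you were actually reconstructing the proof you would need to verify those details against the source.
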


With the above techniques, the main result of this subsection follows.

\largeDed*

\begin{proof}
    The algorithm proceeds in epochs.
    Each epoch starts with finding the collection $\C(X)$ using the algorithm of Theorem~\ref{thm:dycktoed}. 
    Recall that each string $Y \in \C(X)$ is an LR-string and $\dedd(Y) = \edd(\Lp(Y), T(\Rp(Y)))$ holds by Observation~\ref{obs:LRed}.
    For each $Y \in \C(X)$, the second step is to approximate $\edd(\Lp(Y), T(\Rp(Y)))$ using the algorithm of Theorem~\ref{thm:consted}.
    Let $a_Y$ represent the approximated edit distance found for each $Y \in \C(X)$; we are guaranteed that $\edd(\Lp(Y), T(\Rp(Y))) \le a_Y \le f(\epsilon) \edd(\Lp(Y), T(\Rp(Y)))$. 
    We use $a := 2\sum_{Y\in \C(X)} a_Y$ as the approximation of $\dedd(X)$ for the duration of the epoch, which lasts for $m:=\lfloor a / (4f(\epsilon)(3+2\lg a))\rfloor$ dynamic edits.
    At the beginning of the epoch,
    \begin{align*}
        a = 2\sum_{Y \in \C(X)} a_Y 
        & \geq 2\sum_{Y \in \C(X)} \edd(\Lp(Y), T(\Rp(Y)))\\
        & =2\sum_{Y \in \C(X)} \dedd(Y)\\
        &\ge 2\cdot \dedd(X) \ge 2\cdot \dedd(X) \\
        & \text{and}\\
        a= 2\sum_{Y \in \C(X)} a_Y 
        &\leq 2\sum_{Y \in \C(X)} f(\epsilon) \edd(\Lp(Y), T(\Rp(Y))) \\
        &= 2f(\epsilon) \sum_{Y \in \C(X)} \dedd(Y) \\
        &\leq 2f(\epsilon)(3 + 2\lg (\dedd(X))) \dedd(X)\\
        &\leq 2f(\epsilon)(3 + 2\lg a) \dedd(X) \\
        &\leq 4f(\epsilon)(3 + 2 \lg a)\dedd(X),
    \end{align*}
    where, in both chains, the first inequality follows from Theorem~\ref{thm:consted} and the second inequality follows from Theorem~\ref{thm:dycktoed}.

    Throughout the epoch, the value $\dedd(X)$ may change by at most $\pm m$, so we still have 
    $a \ge \frac{a}{2}+m \ge \dedd(X)-m+m = \dedd(X)$
    and $a = 2a - a \le 8f(\epsilon)(3 + 2\lg a)(\dedd(X)+m)-a \le 8f(\epsilon)(3 + 2\lg a)\dedd(X)+8f(\epsilon)(3 + 2\lg a)m-a \le 8f(\epsilon)(3 + 2\lg a)\dedd(X)$, so the value $a$ remains an $\Oh(f(\epsilon)\log n)$-approximation of $\dedd(X)$.

    It remains to analyze the running time. The application of Theorem~\ref{thm:dycktoed} takes $\tOh(|X|)$ time.
    The application of Theorem~\ref{thm:consted} takes $\Oh_\epsilon(|Y|^{1+\epsilon})$ time for each $Y\in \C(X)$.
    Across all $Y\in \C(X)$, this cost is proportional to
    \[
        \sum_{Y \in \C(X)} |Y|^{1+\epsilon} \leq \sum_{Y \in \C(X)} |Y|\cdot |X|^{\epsilon} = |X|^{1+\epsilon}.
    \]  Furthermore, summing up each approximation for the edit distance of all strings in the LR-decomposition of $X$ takes at most $\Oh(|X|)$ time. 
    Consequently, the total cost of each epoch is $\Oh_\epsilon(|X|^{1+\epsilon})$. 
    Across $m=\Omega(\dedd(X)/\log |X|)$ updates within the epoch, this yields $\tOh(|X|^{1+\epsilon}/\dedd(X))$ amortized time per update.
\end{proof}
\subsection{Small Dyck Edit Distance}\label{subsec:smalled}

To complement the previous dynamic approximate Dyck edit distance algorithm, we consider the case when the Dyck edit distance of an input string is small.  Specifically, given an input string $X$ with $|X| = n$, let $k = \dedd(X)$; we aim to match our approximation ratio and update time from the large edit distance case of the previous section when $k = \Oh(\sqrt{n})$ instead. To this end, we pre-process $X$ such that all neighboring parentheses which may be matched together are removed.  We denote the remaining string as $\hat{X}$, which has the same Dyck edit distance as $X$ but is composed of $\Oh(k)$ maximal LR-segments instead of potentially $\Oh(n)$ segments. We are careful to maintain $\hat{X}$ as $X$ is updated, potentially removing any new pairs of neighboring matching parentheses or even reintroducing any previously removed parentheses if their neighbor no longer matches after an update. 
We again use the Dyck to string edit distance reduction to find a collection of LR-segments that we use to approximate the Dyck edit distance of $X$. Then, we introduce a string edit distance algorithm of \cite{Cole2002} to compute the edit distance of each segment in time $\Oh(k^2)$ each, which we then sum to find our approximation for $\dedd(X)$. 
In each step of the preceding process, we will rely on the dynamic strings data structure (\cite{DBLP:conf/soda/GawrychowskiKKL18}) to build a balanced binary tree on top of our input string $X$ which allows us to efficiently construct, store, and query information of $\hat{X}$.

In addition to the listed operations in Theorem~\ref{thm:dynstr}, we augment the data structure with two additional operations to efficiently maintain $\hat{X}$ and the Dyck to string edit distance reduction.  First, for any string $X \in \X$, the transpose of $X$, $T(X)$, may be queried and will also be stored in $\X$. Second, for any string $X \in \X$, we will answer $\lcpt(X)$ queries (recall $\lcpt(X)$ is the longest prefix of $X$ composed of only opening or only closing parentheses) in $\Oh(\log n)$ time.

\begin{lemma}
\label{lem:augdynstr}
    A collection $\X$ of parenthesis strings of total length $n$ can be dynamically maintained subject to the following queries in addition to the operations of \cref{thm:dynstr}:
    \begin{description}
        \item[$\addT(X)$] for $X\in \X$ results in $\X=\X\cup\{T(X)\}$ and takes $\Oh(1)$ time,
        \item[$\lcpt(X)$] for $X\in \X$ computes the length of the longest monotone prefix of $X$ (consisting of opening or closing parentheses only) and takes $\Oh(\log n)$ time.
    \end{description}
\end{lemma}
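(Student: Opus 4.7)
The plan is to enrich the data structure of \cref{thm:dynstr} with two auxiliaries maintained alongside every parenthesis string $X \in \X$: the transpose $T(X)$ and the \emph{opening signature} $\sigma(X) \in \{0,1\}^{|X|}$, defined by $\sigma(X)[i]=1$ iff $X[i]$ is an opening parenthesis. To avoid infinite cascades of auxiliaries of auxiliaries, I keep the binary signatures in a second, independent instance of \cref{thm:dynstr} dedicated to the alphabet $\{0,1\}$, so signatures carry no tracked auxiliaries of their own. Within the primary (parenthesis) instance, I impose the invariant that $X \in \X$ implies $T(X) \in \X$; the identity $T(T(X)) = X$ terminates the recursion after one step. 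At initialization I also insert two binary reference strings $\mathbf{0}^N$ and $\mathbf{1}^N$ into the signature instance; if a future query ever needs a longer reference I double $N$ and re-insert, amortizing to $\Oh(1)$ per character.

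Next I would verify that the invariant is preserved with only $\Oh(1)$ overhead per underlying operation. The identities
\[
T(X_1\cdot X_2)=T(X_2)\cdot T(X_1), \qquad \sigma(X_1\cdot X_2)=\sigma(X_1)\cdot \sigma(X_2)
\]
turn each $\con(X_1,X_2)$ into a constant number of auxiliary $\con$ calls on strings already present in the parenthesis and signature instances. The identities
\[
T(X[0\dd k))=T(X)[|X|-k\dd |X|), \qquad \sigma(X[0\dd k))=\sigma(X)[0\dd k)
\]
turn each $\spl(X,k)$ into a constant number of auxiliary $\spl$ calls. For a fresh $\add(X)$, I read $X$ once to build $T(X)$ and $\sigma(X)$ character-by-character in $\Oh(|X|)$ time and then issue three underlying $\add$ calls, staying within the $\Oh(\log n + |X|)$ bound. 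Hence each basic operation of \cref{thm:dynstr} retains its original asymptotic cost.

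With the invariant in place, both new operations are immediate. For $\addT(X)$, the string $T(X)$ is already stored, so we simply return its handle in $\Oh(1)$ time. For $\lcpt(X)$, the longest monotone prefix of $X$ is the longer of the maximal $1$-run and maximal $0$-run at the start of $\sigma(X)$, i.e.\ $\lcpt(X) = \max\{\lcp(\sigma(X), \mathbf{1}^{|X|}),\, \lcp(\sigma(X), \mathbf{0}^{|X|})\}$, computable in $\Oh(1)$ via two $\lcp$ queries on the signature instance, comfortably within the required $\Oh(\log n)$ bound. The main thing to verify is that the auxiliary bookkeeping introduces no hidden cascades: since every lifted operation triggers only the constant number of identities displayed above, and since $\mathbf{0}^N$ and $\mathbf{1}^N$ are touched only during the amortized doubling step, the overall time bounds of \cref{thm:dynstr} carry over unchanged while the two new operations attain the advertised complexities.
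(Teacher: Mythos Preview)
Your proposal is correct and follows essentially the same approach as the paper: maintain, in parallel with $\X$, the transposes and a binary ``outline'' of each string, lifting each $\add$, $\spl$, and $\con$ to a constant number of auxiliary calls via the identities you state. The only real difference is how you compute $\lcpt$: the paper compares the outline against its own shift, $\lcpt(X)=\lcp(\out(X),\out(X)[1\dd|X|))+1$, using one $\spl$ and one $\lcp$, whereas you compare against precomputed constant strings $\mathbf{0}^N$ and $\mathbf{1}^N$ with a doubling trick. Both are valid; the paper's self-shift avoids the (easy but extra) amortized bookkeeping of the reference strings, while your version avoids the $\spl$ per query.
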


\begin{proof}
    Internally, we use \cref{thm:dynstr} to maintain two collections of strings $\X_T= \X \cup \{T(X) : X\in \X\}$ and $\X_{\out}=\{\out(X) : X \in \X_T\}$, where $\out(X)$ is the \emph{outline} of $X$, defined as a binary of string of length $|X|$ such that $\out(X)[i] = 0$ if $X[i]$ is an opening parenthesis and $\out(X)[i] = 1$ if $X[i]$ is a closing parenthesis for all $i \in [|X|]$.

    Let us first describe how to maintain $\X_T$.
    For every operation $\add(X)$ on $\X$, we compute $T(X)$ in $\Oh(|X|)$ time and use both $\add(X)$ and  $\add(T(X))$ on $\X_T$. 
    For every operation $\spl(X,k)$ on $\X$, we perform both $\spl(X,k)$ and $\spl(T(X),|X|-k)$ on $\X_T$ so that, besides $X[0 \dd k)$ and $X[k \dd |X|)$, also $T(X[0\dd k))=T(X)[|X|-k\dd |X|)$ and $T(X[k\dd |X|))=T(X)[0\dd |X|-k)$ are also added to $\X_T$.
    For every operation $\con(X_1,X_2)$ on $\X$, we perform both $\con(X_1,X_2)$ and $\con(T(X_2),T(X_1))$ on $\X_T$ so that, besides $X_1\cdot X_2$, also $T(X_1\cdot X_2)=T(X_2)\cdot T(X_1)$ is added to $\X_T$.
    The $\lcp(X_1,X_2)$ operation on $\X$ is simply performed on $\X_T$.
    It is easy to see that, with the above implementations, the asymptotic update and query times match those of \cref{thm:dynstr}.
    Moreover, the $\addT(X)$ operation on $\X$ does not require any update on $\X_T$: since we already have $T(X)\in \X_T$, it suffices to return the handle to $T(X)$. 
    
    Next, we focus on maintaining $\X_\out$. 
    When we perform $\add(X)$ on $\X_T$, we also straightforwardly compute $\out(X)$ in $\Oh(n)$-time and perform $\add(\out(X))$ on $\X_\out$. When we perform $\spl(X, k)$ or $\con(X_1, X_2)$ on $\X_T$, we also perform $\spl(\out(X), k)$ or $\con(\out(X_1), \out(X_2))$ on $\X_\out$, respectively. 
    Again, it is easy to see that, with the above implementations, the asymptotic update and query times match those of \cref{thm:dynstr}.
    It remains to implement the $\lcpt(X)$ query on $\X$. 
    For this, we observe that \[\lcpt(X) = \lcp(\out(X), \out(X)[1\dd|X|)) + 1.\]
    Consequently, we may perform $\spl(\out(X), 1)$ in $\Oh(\log n)$-time and then using the above $\lcp$ query to find $\lcpt(X)$ in $\Oh(\log n)$ total time. 
    To make sure that $\X_O$ does not grow too much, we remove the outputs of $\spl(\out(X), 1)$ as soon as the $\lcp$ query is completed.
\end{proof}

\cite{DBLP:conf/focs/Charalampopoulos20} and \cite{Cole2002} show that an oracle that answers $\lcp$ queries in logarithmic time is sufficient to efficiently compute the string edit distance between two strings. These papers utilize a slightly modified implementation of the classic dynamic programming solution of \cite{LV1989} for string edit distance that only relies on $\lcp$ queries. We may utilize the dynamic strings data structure to answer these queries and solve an instance of string edit distance in quadratic runtime with respect to the edit distance between the two strings. Since in this section we are considering when the edit distance is at most $\Oh(\sqrt{n})$, we can afford a quadratic blow-up (and some additional log factors). 

\begin{lemma}[\cite{DBLP:conf/focs/Charalampopoulos20} Lemma 6.2, \cite{Cole2002}, Section 5]
\label{lem:ked}
    Given strings $X_1, X_2$, $k \in \mathbf{Z}^+$, and an oracle that answers $\lcp$ queries of any pair of substrings of $X_1, X_2$ in $\Oh(\log n)$ time, $\min\{\edd(X_1, X_2), k\}$ can be computed in $\tOh(k^2)$ time.
\end{lemma}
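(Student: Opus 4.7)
The plan is to implement the classical Landau--Vishkin $k$-difference algorithm~\cite{LandauV11988} on top of the $\lcp$ oracle, which lets us ``jump'' along matching stretches rather than scanning them character by character. Concretely, define diagonals $d \in [-|X_2|\dd |X_1|]$ by $d = i - j$, and for each nonnegative $e$ let
\[
    D[d,e] \coloneqq \max\{\, i \in [0\dd |X_1|] : \edd(X_1[0\dd i), X_2[0\dd i-d)) \le e \,\},
\]
with $D[d,e] = -\infty$ if no such $i$ exists (in particular when $i-d \notin [0\dd|X_2|]$). Then $\edd(X_1,X_2) \le k$ if and only if there is some $e \le k$ with $D[|X_1|-|X_2|,e] = |X_1|$, and we can read off $\min\{\edd(X_1,X_2),k\}$ as the smallest such $e$ (returning $k+1$ if none exists, which we then cap at $k$).

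First I would establish the standard recurrence: to compute $D[d,e]$, take
\[
    i_0 \coloneqq \max\bigl(D[d-1,e-1],\; D[d+1,e-1]+1,\; D[d,e-1]+1\bigr),
\]
corresponding to an insertion, a deletion, and a substitution respectively, and then ``slide'' along diagonal $d$ by setting $D[d,e] = i_0 + \lcp(X_1[i_0\dd |X_1|),\, X_2[i_0-d\dd |X_2|))$, truncating to $[0\dd|X_1|]$ and to the valid range on diagonal $d$. Invalid predecessors (e.g.\ out-of-range diagonals or $-\infty$ values) are handled by the convention that $-\infty$ propagates unless superseded by another valid case. Correctness of the recurrence is exactly as in~\cite{LandauV11988}.

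Next I would bound the work. Since a single edit advances to a neighboring diagonal at most, $D[d,e]$ is $-\infty$ whenever $|d - (\text{starting diagonal } 0)| > e$, so only the $2e+1$ diagonals with $|d|\le e$ need to be stored for each $e \in [0\dd k]$; moreover we can early-terminate as soon as a cell reaches $|X_1|$ on the target diagonal $|X_1|-|X_2|$, which can happen only if $|d|\le k$ there too. Thus at most $\Oh(k^2)$ table entries are ever computed, and each entry requires $\Oh(1)$ arithmetic plus one $\lcp$ query on suffixes of $X_1,X_2$, which by assumption costs $\Oh(\log n)$. Summing gives $\Oh(k^2 \log n) = \tOh(k^2)$ time, as desired. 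Storage is $\Oh(k^2)$ and can be reduced to $\Oh(k)$ by keeping only two consecutive ``error levels'' at a time.

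The main (minor) obstacle is purely bookkeeping: guaranteeing that all the $\lcp$ queries invoked are on suffixes of $X_1$ and $X_2$ of the form promised by the oracle, and handling boundary cases $i_0 = |X_1|$ or $i_0 - d \in \{0, |X_2|\}$ correctly so that we do not query past the end of either string. Both are routine: we clamp $i_0$ to $[0\dd|X_1|]$ and $i_0-d$ to $[0\dd|X_2|]$ before issuing the $\lcp$ query, and interpret the returned length as the slide distance. With these conventions, the algorithm of~\cite{LandauV11988} goes through unchanged on top of the oracle interface provided by \cref{lem:augdynstr}, yielding the stated $\tOh(k^2)$ bound.
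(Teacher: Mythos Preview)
Your proposal is essentially the same approach as the paper's proof, which is just a one-line citation observing that the Landau--Vishkin/Cole diagonal algorithm uses $\Oh(k^2)$ $\lcp$ queries, each at $\Oh(\log n)$ cost. Your write-up is in fact considerably more detailed than the paper's.

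One small correction: the distance here is $\edd$, the \emph{deletion-only} edit distance (no substitutions), so the ``stay on the same diagonal at cost $1$'' transition $D[d,e-1]+1$ should be dropped from your recurrence; keeping it would undercount $\edd$. With that term removed (leaving only the two deletion transitions $D[d-1,e-1]$ and $D[d+1,e-1]+1$), the rest of your argument and the $\tOh(k^2)$ bound go through unchanged.
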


\begin{proof}
    As noted in \cite{DBLP:conf/focs/Charalampopoulos20}, the algorithm of Section 5 \cite{Cole2002} can be utilized to find $\edd(X_1, X_2)$ with $\Oh(k^2)$ $\lcp$ queries. Therefore, if each $\lcp$ query may be answered in $\Oh(\log n)$ time, we may compute $\edd(X_1, X_2)$ in total time $\Oh(k^2 \log n)$.
\end{proof}

As a result of the preceding lemma, it turns out that the main challenge of this section is efficiently maintaining the dynamic reduction of Dyck to string edit distance, rather than solving the string edit distance instances themselves. In order to limit how many instances of string edit distance we must compute, we first pre-process $X$ to find and store $\hat{X}$ in a collection $\X$ of \cref{lem:augdynstr}.

\begin{definition}
\label{def:preproc}
    Given a parenthesis string $X$, we define $\hat{X}$ to be the parenthesis string resulting from removing every neighboring pair of an opening parenthesis followed by a matching closing parenthesis until no more removals can be performed.
\end{definition} 

An important detail about $\hat{X}$ is that it contains no balanced substrings since we remove all neighboring matching parentheses from $X$. Any sequence of opening parentheses followed by closing parentheses in $\hat{X}$ requires at least 1 edit to balance. Therefore, $\C(\hat{X})$, which is a collection of LR-segments by Theorem~\ref{thm:dycktoed}, only contains at most $\Oh(k)$ segments. This allows us to limit the number of string edit distance instances that we need to solve to $\Oh(k)$ each time we recompute our Dyck edit distance approximation via the reduction to string edit distance. 

\begin{lemma}[\cite{DBLP:conf/pods/BackursO16} Claim 35]
\label{lem:kpeaks}
    Given a parenthesis string $X$ with $\dedd(X) \leq k$, there are $\Oh(k)$ maximal LR-segments in $\hat{X}$.
\end{lemma}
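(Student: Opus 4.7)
The plan is to reduce the count of maximal LR-segments in $\hat{X}$ to a counting of ``peaks,'' where by a peak I mean an index $i$ with $\hat{X}[i]$ opening and $\hat{X}[i+1]$ closing, and a valley an index $i$ with $\hat{X}[i]$ closing and $\hat{X}[i+1]$ opening. Since every maximal LR-segment has the form $\texttt{(}^a\texttt{)}^b$ and two consecutive maximal LR-segments are separated in $\hat{X}$ precisely by a valley, the number of maximal LR-segments equals one plus the number of valleys. A simple parity argument based on whether $\hat{X}$ begins with an opening or closing parenthesis (and similarly for its last character) yields $\#\text{valleys}\le \#\text{peaks}+1$, so it suffices to bound the number of peaks by $\dedd(X)$.

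The key step is to show that $\#\text{peaks} \le \dedd(\hat{X})$ and that $\dedd(\hat{X})=\dedd(X)$. The latter equality follows by induction on the number of adjacent matching pairs removed in the construction of $\hat{X}$: an adjacent matching pair $\texttt{(}_a\texttt{)}_a$ can always be kept (matched to each other) in some optimal deletion-only solution without loss of generality, so its removal preserves the deletion-only Dyck edit distance. For the former inequality, fix any deletion-only edit of $\hat{X}$, viewed as selecting a subsequence that is a Dyck string. At a peak $i$, the opening $\hat{X}[i]$ has some type $a$ and the closing $\hat{X}[i+1]$ has some type $b\neq a$; indeed, if $a=b$ then this adjacent matching pair would have been removed in constructing $\hat{X}$. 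Were both kept, they would be consecutive in the Dyck subsequence; but consecutive characters in a Dyck string are either two openings, two closings, or a matched pair of the same type, none of which is compatible with a type-$a$ opening followed by a type-$b$ closing. Hence at least one of $\hat{X}[i],\hat{X}[i+1]$ is deleted.

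It remains to argue that distinct peaks force distinct deletions. If peaks lie at positions $i<j$, then the pairs $\{i,i+1\}$ and $\{j,j+1\}$ can share an index only when $j=i+1$, which would force $\hat{X}[i+1]$ to be simultaneously closing (as the right endpoint of the peak at $i$) and opening (as the left endpoint of the peak at $j$), a contradiction. Thus the pairs are disjoint, so each deletion is charged by at most one peak, giving $\#\text{peaks}\le \dedd(\hat{X})=\dedd(X)\le k$. Combining everything yields $\#\text{LR-segments}\le \#\text{valleys}+1 \le \#\text{peaks}+2 \le k+2 = O(k)$. I do not foresee a significant obstacle; the only subtle point is verifying $\dedd(\hat{X})=\dedd(X)$ together with the ``disjoint pairs'' observation above, both of which are short and elementary.
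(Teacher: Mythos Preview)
Your proof is correct. The paper does not actually prove this lemma; it is stated with a citation to Backurs and Onak (Claim~35) and used as a black box. Your argument supplies a self-contained proof: the identification of maximal LR-segments with valleys (plus one), the bound $\#\text{valleys}\le \#\text{peaks}+1$ via alternation of transition types, the preservation $\dedd(\hat X)=\dedd(X)$ under removal of adjacent matching pairs, and the charging of each peak to a forced deletion are all sound. The only step that deserves a line of justification in a polished write-up is the exchange argument for $\dedd(\hat X)\le\dedd(X)$ (showing that if an optimal deletion set does not match an adjacent pair $\texttt{(}_a\texttt{)}_a$ to each other, one can modify it at no extra cost so that it does), but you have correctly identified this as the one subtle point and the standard non-crossing swap works.
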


An important step of constructing $\C(\hat{X})$ is finding sequences of opening parentheses followed by equal-length sequences of closing parentheses. We construct $\out(\hat{X})$ to quickly find such neighboring sequences since we do not care about the type of parentheses in this step, only whether they are opening or closing. 

\begin{lemma}
    \label{clm:hatX}
        For a parenthesis string $X$ with $n = |X|$, the string $\hat{X}$ can be maintained with $\tOh(n)$ pre-processing time and $\Oh(\log^2 n)$ update time when $X$ undergoes dynamic insertions, deletions, and substitutions.
    \end{lemma}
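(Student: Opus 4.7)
My plan is to maintain $\hat{X}$ via a balanced binary tree $B$ (for example, an AVL tree) over $X$ whose leaves are the characters of $X$ and whose internal nodes $v$ cover substrings $S_v$; at each $v$ I will store a handle to $\hat{S_v}$ inside the augmented dynamic-strings collection $\X$ of \cref{lem:augdynstr}. Because \cref{lem:augdynstr} simultaneously maintains $T(\hat{S_v})$ for every registered string, both the forward and transposed forms will be accessible.

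The heart of the argument is a local combination rule. Since an elementary reduction of the input preserves the stack-effect of every prefix, one has $\hat{S_v}=\hat{\hat{S_u}\cdot\hat{S_w}}$ for the two children $u,w$ of $v$; and since $\hat{S_u}$, $\hat{S_w}$ are individually irreducible, any further reductions in their concatenation can only occur at the boundary. I will show that when $\hat{S_u}$ ends with an opening parenthesis and $\hat{S_w}$ begins with a closing one, the number of cancelled adjacent pairs is exactly
\[
\ell \;=\; \lcp(T(\hat{S_u}),\, \hat{S_w}),
\]
and $\ell=0$ otherwise. The intuition is that $T(\hat{S_u})$ begins with the reverse-toggled ending opening-block of $\hat{S_u}$, so the prefix of $T(\hat{S_u})$ agrees with the prefix of $\hat{S_w}$ precisely on the positions that pair up under parenthesis matching.

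Given this rule, each combination step consists of $\Oh(1)$ endpoint inspections, a single $\lcp$ query to obtain $\ell$, and a constant number of $\spl$ and $\con$ operations that drop the last $\ell$ characters of $\hat{S_u}$ and the first $\ell$ characters of $\hat{S_w}$ before concatenating; by \cref{lem:augdynstr} this costs $\Oh(\log n)$ per node. I will build $B$ bottom up, performing one combination at each of the $\Oh(n)$ internal nodes, giving $\tOh(n)$ preprocessing. On each update to $X$ only the $\Oh(\log n)$ ancestors of the affected leaf can have their $\hat{S_v}$ change (plus $\Oh(\log n)$ constant-locality AVL rotations, each triggering a constant number of additional combinations), which yields the claimed $\Oh(\log^2 n)$ worst-case update time.

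The main obstacle I anticipate is justifying the boundary-cancellation claim cleanly: namely, that successive cancellations never escape the ending opening-block of $\hat{S_u}$ or the starting closing-block of $\hat{S_w}$, and that their count coincides with $\lcp(T(\hat{S_u}),\hat{S_w})$. The containment follows because once the opening-block of $\hat{S_u}$ has been erased, the new right end of the left side is a closing parenthesis, which cannot form a reducible pair with any remaining closing parenthesis on the other side; and the numerical identity with $\lcp(T(\hat{S_u}),\hat{S_w})$ is a direct translation via the definition of $T$. Once this lemma is in place, the remainder is routine bookkeeping on top of \cref{lem:augdynstr}.
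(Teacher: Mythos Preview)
Your overall approach matches the paper's exactly: a balanced binary tree over $X$, with each node storing a handle to the reduced string of its range inside the dynamic strings collection, and a local combination rule at each internal node that trims matching characters from the boundary using $\lcp$, $\spl$, and $\con$.

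However, your boundary-cancellation formula $\ell = \lcp(T(\hat{S_u}),\hat{S_w})$ has a genuine gap: the $\lcp$ can overshoot the actual number of cancellations. Consider $\hat{S_u}=\texttt{)(}$ and $\hat{S_w}=\texttt{)(}$. Both are irreducible, $\hat{S_u}$ ends with an opening parenthesis, and $\hat{S_w}$ begins with a closing one, so your check fires. Here $T(\hat{S_u})=\texttt{)(}$, so $\lcp(T(\hat{S_u}),\hat{S_w})=2$; your rule would delete all four characters and output the empty string. But the concatenation $\texttt{)()(}$ reduces to $\texttt{)(}$, so only one pair cancels. The problem is that when the ending opening-block of $\hat{S_u}$ and the starting closing-block of $\hat{S_w}$ have equal length, the two strings $T(\hat{S_u})$ and $\hat{S_w}$ may continue to agree past that block for reasons unrelated to cancellation. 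Your containment argument correctly bounds where cancellations can occur, but it does not bound the $\lcp$ value itself.

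The paper handles exactly this by taking $m=\min\bigl(\lcp(T(\hat{S_u}),\hat{S_w}),\,\lcpt(\hat{S_w})\bigr)$, explicitly capping by the length of the leading monotone closing-block of $\hat{S_w}$. With that cap your argument goes through: if the opening-block of $\hat{S_u}$ is shorter than the closing-block of $\hat{S_w}$, the $\lcp$ already stops there (opening versus closing mismatch), and otherwise the cap enforces the bound. Adding this single $\min$ with an $\lcpt$ query (still $\Oh(\log n)$ per node) fixes the proof.
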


\begin{proof}
        First, we describe how to construct and maintain $\hat{X}$. 
        We are going to build a binary tree where the leaves of the tree partition $X$ into individual characters such that each leaf contains a single parenthesis of $X$.
         Each parent of a leaf node will concatenate the characters in its children, removing any pairs of matching parenthesis. We will continue this process level by level: for any internal node, we concatenate the substrings in its children and combine them, removing any new matching neighboring parentheses. We set the height of the binary tree such that the root node will contain $\hat{X}$ since we made sure to remove matching neighboring parentheses at each level of the tree.  
        
        To build the above binary tree, we use a dynamically maintained collection of strings. First, we let $\X$ be an empty collection of strings. Let $L = \Oh(\log n)$ be the number of levels in the binary tree we wish to build. We denote each node of the binary tree as $v_I^\ell$ where $\ell \in [L]$ is the level of the node and $I \subseteq [0 \dd |X| - 1]$ is the interval such that $v_I^\ell$ contains a pointer to $\hat{X_I} \in \X$ where $X_I = X[I]$ and $I$ is the union of the intervals of the children of $v_I^\ell$.
        Handling each leaf node $v_{I}^L$ is straightforward since $I$ contains just a single index and so $X_I = \hat{X_I}$ since there can be no pairs of parentheses in any of the intervals at the leaf level. We sort the leaves from left-to-right in increasing index such that the leftmost leaf is $v_{\{0\}}^L$ and the rightmost leaf is $v_{\{|X| - 1\}}^L$. We additionally perform $\add(X[i])$
        to $\X$ and store a pointer to $X[i]$
        in leaf $v_{\{i\}}^L$ for all $i \in [|X| - 1]$. Now we show how to continue building the binary tree bottom-up using $\X$. We assume for any level of the tree, the intervals of the nodes at that level form a non-overlapping partition of $[|X|]$ and are in sorted order. It is easy to see this will be the case inductively since the leaf level already partitions $[|X|]$ in sorted order, and we defined the interval of a node to be the union of the intervals of its children. For a given node $v_{I}^\ell$ with $0 \leq \ell \leq L$, we wish to add $\hat{X_I}$
        to $\X$ and store pointers to these substrings at this node. We may assume its children $v_{I_1}^{\ell + 1}$ and $v_{I_2}^{\ell+1}$ have pointers to $\hat{X_{I_1}} \in \X$ and $\hat{X_{I_2}} \in \X$, respectively and $I = I_1 \cup I_2$ as per our construction. To construct $\hat{X_I}$, we want to concatenate $\hat{X_{I_1}}, \hat{X_{I_2}}$ and remove any neighboring matching parentheses. By Definition~\ref{def:preproc}, there must not be any neighboring matching parentheses within either $\hat{X_{I_1}}$ or $\hat{X_{I_2}}$; we only need to check parentheses that are matches between the suffix of $\hat{X_{I_1}}$ and prefix of $\hat{X_{I_2}}$. To do so, we take $m = \min(\lcp(T(\hat{X_{I_1}}), \hat{X_{I_2}}), \lcpt(\hat{X_{I_2}}))$, which tells us exactly how long the sequence of matching neighboring parentheses between the two substrings is. Note that we only want to find the number of opening parentheses in the suffix of $\hat{X_{I_1}}$ that match with the closing parentheses in the prefix of $\hat{X_{I_2}}$, and we must be careful not to include any closing parentheses of $\hat{X_{I_1}}$ or any opening parentheses of $\hat{X_{I_2}}$ by making sure the largest value we can return is $\lcpt(\hat{X_{I_2}})$.
        We then perform $\spl(\hat{X_{I_1}}, |\hat{X_{I_1}}| - m)$ and $\spl(\hat{X_{I_2}}, m)$, and finally we concatenate the resulting substrings $\con(\hat{X_{I_1}}[0 \dd |\hat{X_{I_1}}| - m]), \hat{X_{I_2}}[m \dd])$ to add $\hat{X_I}$ to $\X$ and store a pointer to $\hat{X_I}$ in node $v_I^\ell$.
        We will continue concatenating substrings of each node's intervals in this way level by level until we reach the root and obtain $\hat{X}$.

        First, we note that there are $\Oh(\log n)$ levels of the binary tree, and therefore $\Oh(n)$ total nodes in the binary tree.  Each node requires 1 $\lcp$ query, 1 $\lcpt$ query, 2 $\spl$ operations, and 1 $\con$ operation to construct the substring for its interval. Therefore, each node takes time $\Oh(\log n)$ to process by Theorem~\ref{thm:dynstr} and Lemma~\ref{lem:augdynstr}, and so the total pre-processing time to build $\hat{X}$ is $\Oh(n \log n)$.

        When $X$ gets updated, we need only update nodes whose corresponding substring intervals are affected by the update. When the height of the binary tree is $\Oh(\log n)$, there are only $\Oh(\log n)$ such nodes. 
        Since we used a constant number of $\spl$, $\add$, and $\con$ operations each taking time $\Oh(\log n)$, reconstructing our affected paths may take time $\Oh(\log^2 n)$. Since deletions and insertions may cause the tree to become unbalanced, instead of a standard binary tree we may use a self-balancing tree such as an AVL tree instead to maintain the height of $\Oh(\log n)$ while still affecting $\Oh(\log n)$ nodes per update.
    \end{proof}

We now have all the ingredients to prove the main theorem of this subsection.

\smallDed*

\begin{proof}
    We will dynamically maintain a collection of strings $\X$ according to Theorem~\ref{thm:dynstr}, Lemma~\ref{lem:augdynstr} and
    construct $\hat{X}$ according to Lemma~\ref{clm:hatX}. Using these tools, we build our reduction to string edit distance in time $\tOh(k)$ where $k = \dedd(X)$. Normally the reduction takes $\tOh(n)$ time according to Theorem~\ref{thm:dycktoed}, but we show how to reduce computation time utilizing our dynamic strings collection in the following claim.

    \begin{claim}
    \label{clm:kreduction}
        Given a parenthesis string $X$ belonging to a dynamic collection $\X$ of \cref{lem:augdynstr}, one can implement the Dyck-to-string reduction of \cref{thm:dycktoed} in time $\tOh(s)$, where $s$ is the number of maximal LR-segments of $X$,
        with each string in $\C(X)$ reported as a handle to an element of $\X$.
    \end{claim}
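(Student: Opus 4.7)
The plan is to simulate the iterative reduction behind \cref{thm:dycktoed} in the compressed space of \emph{monotone chunks}, where a chunk is a handle in $\X$ to one maximal monotone substring of the current string. Since $X$ consists of $s$ maximal LR-segments, its monotone refinement has at most $2s$ chunks, and every LR-segment is either a consecutive (opening, closing) pair in the chunk list or a one-sided standalone chunk at the ends of the list. I will maintain the chunks in a doubly linked list tagged by parity, with back-pointers so that each modified chunk can be located in $\Oh(1)$ time.

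To initialize, I would start from the handle to $X$ and repeatedly call $\lcpt$ on the unprocessed suffix to learn the length of the leading monotone run, then $\spl$ to detach it. This yields all $\Oh(s)$ chunks in $\Oh(s\log|X|)$ time by \cref{thm:dynstr,lem:augdynstr}, after which one scan records the LR-segments. Within each outer iteration, for every LR-segment presented as an adjacent pair $(O,C)$, I would compare $|O|$ to $|C|$, apply at most one $\spl$ to detach the unbalanced excess of the longer chunk, and apply one $\con$ to assemble a balanced block of length $2\min(|O|,|C|)$ whose handle is appended to $\C(X)$; the leftover (possibly empty) replaces the pair in the list. One-sided LR-segments are appended to $\C(X)$ directly. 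Before the next outer iteration, $\con$ fuses consecutive chunks of equal parity and the list is rescanned to re-identify LR-segments; each such step uses $\Oh(1)$ handle operations costing $\Oh(\log|X|)$ apiece.

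For the budget, observe that every LR-segment contributes at most one surviving chunk to the next iteration, so the number $s_t$ of LR-segments satisfies $s_{t+1}\le s_t\le s$ across all iterations. Combined with the $T=\Oh(\log|X|)$ iteration bound implicit in the proof of \cref{thm:dycktoed}, this yields $\sum_{t=1}^{T}s_t=\Oh(s\log|X|)$, and at $\Oh(\log|X|)$ per LR-segment the total running time is $\Oh(s\log^{2}|X|)=\tOh(s)$; every string placed into $\C(X)$ is a handle to an element of $\X$ by construction. The main subtlety I expect to navigate is ensuring that re-identifying LR-segments between outer iterations remains local: the back-pointers kept in the doubly linked list confine the merge step and the LR-segment enumeration of iteration $t$ to the $\Oh(s_t)$ nodes modified during that iteration, avoiding any full list rescan.
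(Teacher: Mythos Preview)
Your proposal is correct and follows essentially the same approach as the paper: parse the current string into monotone runs via $\lcpt$ and $\spl$, extract the balanced part of each LR-segment with one $\spl$ and one $\con$, keep the leftover, and iterate for $\Oh(\log |X|)$ rounds with at most $s$ segments per round, yielding $\Oh(s\log^2|X|)=\tOh(s)$ total. The only cosmetic difference is that you maintain an explicit doubly linked list of chunks and merge same-parity neighbors, whereas the paper concatenates leftover fragments on the fly into the next round's string; the ``locality'' subtlety you flag is not actually needed, since even a full rescan of the $\Oh(s)$ chunks each round fits within the stated budget.
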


    \begin{proof}
        We first recall the algorithm of \cite{KSSODA2023} for a parenthesis string $X$, which constructs $\C(X)$ satisfying Theorem~\ref{thm:dycktoed} as follows:  
        \begin{enumerate}
            \item\label{st:1} Construct the LR-decomposition $\D$ of $X$.
            \item\label{st:2} If the first (last) segment of $\D$ only contains closing (opening) parentheses, add this segment to $\C(X)$.
            \item\label{st:3} For each remaining segment $Y$ in $\D$, let $Y'$ be the maximal-length substring of $Y$ composed of a sequence of opening parentheses followed by an equal-length sequence of closing parentheses. Add $Y'$ to $\C(X)$.
            \item\label{st:4} Let $X'$ be the remainder of the string of $X$ after removing all such substrings $Y'$ from the previous step. If $X'$ is not empty, repeat steps \ref{st:1}--\ref{st:3} on $X'$.
        \end{enumerate}

        Note that in step \ref{st:3}, we fully remove either all the opening parentheses or all the closing parentheses in each segment, and therefore, we reduce the number of segments by at least half in each round of the algorithm. Thus, steps \ref{st:1}--\ref{st:3} may be repeated at most $\Oh(\log n)$ times. Unfortunately, without utilizing our dynamic string collection, each such repetition of steps \ref{st:1}--\ref{st:3} may take time $\Oh(|X|)$ to construct the LR-decomposition of $X$ and handle each segment. We now explain how to leverage $\X$ to avoid this linear-time blowup. 
    
        Since the LR-decomposition of $X$ consists of $\Oh(s)$ segments handled in steps \ref{st:2}--\ref{st:3} and we repeat steps \ref{st:1}--\ref{st:3} at most $\Oh(\log n)$ times, then the total reduction will take $\tOh(s)$ time if each segment can be processed in $\tOh(1)$ time.

        Recall that via Lemma~\ref{lem:augdynstr}, we may find the length of maximal prefixes of strings in $\X$ composed of only opening or closing parentheses in $\Oh(\log n)$-time.  Let $\ell_1 = \lcpt(X)$, we then perform $\spl(X, \ell_1)$ to isolate the first maximal sequence of opening or closing parentheses in $X$. If $X[0]$ is a closing parenthesis, then we add $X[0 \dd \ell_1)$ to $\C(X)$ as per step \ref{st:2}. If $X[0]$ is an opening parenthesis, we find the length of the following sequence of closing parentheses $\ell_2 = \lcpt(X(\ell_1\dd])$.  
        Define $m = \min(\ell_1, \ell_2)$, we then add $X[\ell_1 - m \dd \ell_1 + m)$ to $\C(X)$, which is the maximal substring of the first LR-segment in $X$ composed of an equal number of opening parentheses and following closing parentheses. If $m = \ell_1$, then there are more closing parentheses in this segment, and so, we use $\spl$ to add the remaining fragment $Y_1 = X(\ell_1 + m \dd \ell_1 + \ell_2]$ to dynamic string collection $\X$. If $m = \ell_2$, we instead use $\spl$ to add the remaining fragment $Y_1 = X[0 \dd \ell_1 - m)$ to $\X$. This marks the end of the processing for the first LR-segment, and we now repeat the same steps on $X(\ell_1 + \ell_2 \dd]$ to handle the remaining segments.  When we find the new remaining fragment $Y_2$ for the second segment, we perform $\con(Y_1, Y_2)$ and continue as before.  We process each segment, querying $\X$ to determine the length of the maximal sequence of opening and closing parentheses in each segment of the LR-decomposition.  Then, we find the remaining fragment and concatenate it to all previous fragments. After processing all segments, we will have concatenated all remaining fragments of LR-segments not added to $\C(X)$, and we may recurse on the resulting concatenation as per step \ref{st:4}.

        Each segment of the LR-decomposition requires a constant number of $\spl$, $\con$ operations and $\lcp$, $\lcpt$ queries each of which takes at most $\Oh(\log n)$ time by Theorem~\ref{thm:dynstr} and Lemma \ref{lem:augdynstr}.  There are $\Oh(s)$ segments per LR-decomposition and $\Oh(\log n)$ levels of recursion.  Therefore, in total computing $\C(X)$ takes time $\tOh(s)$.
    \end{proof}

    Since the runtime of \cref{clm:kreduction} depends on the number of segments in the LR-decomposition of $X$, instead of applying the reduction directly to $X$, we apply the reduction to $\hat{X}$, which has $\Oh(k)$ segments in its LR-decomposition by \cref{lem:kpeaks}.
    This is possible feasible because $\ded(X)=\ded(\hat{X})$ and, by Lemma~\ref{clm:hatX}, $\hat{X}$ can be maintained in $\X$.
    Once we have our set of LR-strings $\C(\hat{X})$, we may solve the corresponding string edit distance instances according to Observation~\ref{obs:LRed}. We will utilize Lemma~\ref{lem:ked} to solve all instances in total time $\tOh(k^2)$. Note that one condition of Lemma~\ref{lem:ked} is that we may answer $\lcp$ queries in $\Oh(\log n)$ time. Note that Claim~\ref{clm:kreduction} ensures that  $Y \in \X$ for all $Y \in \C(\hat{X})$. Therefore, for any substrings $Y_1, Y_2$ of $Y$ we wish to query, we only need a constant number of $\spl$ operations on $\X$ to add $Y_1$ and $Y_2$ to $\X$ and then perform our $\lcp$ query. By Theorem~\ref{thm:dynstr}, each $\spl$ takes $\Oh(\log n)$ time and each $\lcp$ query takes $\Oh(1)$, so we indeed satisfy the conditions of Lemma~\ref{lem:ked}. Lemma~\ref{lem:ked} gives an exact solution for string edit distance\footnote{If we do not know $k$, we may perform a binary search for the value of $k$ at the cost of an additional $\Oh(\log k)$ factor.}, and so by summing the string edit distance of each $Y \in \C(\hat{X})$, we obtain an $\Oh(\log n)$-approximation for $\dedd(\hat{X})$, and thereby also $\ded(X)$, by Theorem~\ref{thm:dycktoed}.

    Our algorithm proceeds in epochs.  Each epoch starts by computing $\hat{X}$ according to Lemma~\ref{clm:hatX} and the collection $\C(\hat{X})$ using the algorithm of Claim~\ref{clm:kreduction}. For each $Y \in \C(\hat{X})$, the next step is to compute $\dedd(Y) = \edd(\Lp(Y), T(\Rp(Y)))$ using the exact algorithm of Lemma~\ref{lem:ked}. Let $a_Y$ denote $\dedd(Y)$ for each $Y \in \C(\hat{X})$. We now discuss the approximation factor in a similar analysis to the Proof of Theorem~\ref{thm:largeded}. We use $a = 2\sum_{Y \in \C(Y)} a_Y$ as our approximation of $\dedd(X)$ for the duration of the epoch, which lasts for $m:=\lfloor a / (4(3+2\lg a))\rfloor$ dynamic edits.
    At the beginning of the epoch,
    \begin{align*}
        a = 2\sum_{Y \in \C(\hat{X})} a_Y 
        &= 2\sum_{Y \in \C(\hat{X})} \edd(\Lp(Y), T(\Rp(Y)))\\
        & =2\sum_{Y \in \C(\hat{X})} \dedd(Y)\\
        &\ge 2\cdot \dedd(X) \ge 2\cdot \dedd(X) \\
        & \text{and}\\
        a= 2\sum_{Y \in \C(\hat{X})} a_Y 
        &= 2\sum_{Y \in \C(\hat{X})} \edd(\Lp(Y), T(\Rp(Y))) \\
        &= 2\sum_{Y \in \C(\hat{X})} \dedd(Y) \\
        &\leq 2(3 + 2\lg (\dedd(X))) \dedd(X)\\
        &\leq 2(3 + 2\lg a) \dedd(X) \\
        &\leq 4(3 + 2 \lg a)\dedd(X),
    \end{align*}
    
    Throughout the epoch, the value $\dedd(X)$ may change by at most $\pm m$, so we still have 
    $a \ge \frac{a}{2}+m \ge \dedd(X)-m+m = \dedd(X)$
    and $a = 2a - a \le 8(3 + 2\lg a)(\dedd(X)+m)-a \le 8(3 + 2\lg a)\dedd(X)+8(3 + 2\lg a)m-a \le 8(3 + 2\lg a)\dedd(X)$, so the value $a$ remains an $\Oh(\log n)$-approximation of $\dedd(X)$.

    For each dynamic edit to $X$, we must update $\hat{X}$ and $\C(\hat{X})$, which takes time $\tOh(\log n + k)$ according to Lemma~\ref{clm:hatX} and Claim~\ref{clm:kreduction}. We only recompute $a$ using the $\Oh(k^2)$ string edit distance algorithm of Lemma~\ref{lem:ked} once per epoch that contains $m = \Omega(k/ \log n)$ edits; this yields $\tOh(\log n + k + k^2 \log n/k) = \tOh(1 + k)$ amortized time per update.
\end{proof}

\subsection{Combined Algorithm}
We may combine the algorithms of Theorem~\ref{thm:largeded} and Theorem~\ref{thm:smallded} and transition between the two routines according to our current approximation of Dyck edit distance for the parenthesis string.
\combinedDed*

\section{Exact Dynamic $k$-Dyck Edit Distance}
\label{sec:exact}

In this section, we give an algorithm for exact dynamic Dyck edit distance by keeping track of a unique decomposition of parentheses strings in trapezoidal segments and the remaining clusters \cite{DBLP:conf/soda/FriedGKKPS22}. The algorithm supports dynamic updates in time $\Oh((1 + \ded(X))^5)$. To build this decomposition, we consider the graph of the heights of all the parentheses in a parenthesis string, as shown in Figure~\ref{fig:parenthesis-tree}a, where recall the height of an index is the difference in the number of preceding closing parentheses from the number of opening parentheses. We then use this graph to find natural trapezoids, defined formally in the following definition, and build a tree structure on the resulting components, which we now state.
\begin{definition}[{Trapezoid, \cite[Definition 3.2]{DBLP:conf/soda/FriedGKKPS22}}]
Given a parenthesis string $X$, a \emph{trapezoid} $(a, b, c, d)$ is a quadruple such that 
\begin{enumerate}
    \item $h(a)=h(d)$,
    \item $b-a = h(b)-h(a)=h(c)-h(d)=d-c$, and
    \item $h(i) \geq h(b)=h(c)$ for all $i\in [b\dd c]$.
\end{enumerate}

\begin{figure}
\begin{center}
    \hfill\begin{tikzpicture}[scale=0.35,every node/.style={scale=0.75}]
\draw[->] (0,0)--(0,11) node[pos=0.95,right] {{$H(i)$}}; 
\draw[->] (0,0)--(29,0) node[pos=0.98,below] {{$i$}};

\draw (0,0)--(6,6)--(8,4)--(14,10)--(19,5)--(20,6)--(21,5)--(22,6)--(27,1);

\draw[very thick,green] (1,1)--(4,4);
\draw[very thick,green] (24,4)--(27,1);

\draw[very thick,red] (4,4)--(6,6)--(8,4);
\draw[very thick,yellow] (9,5)--(14,10)--(19,5);

\draw[thin,black] (4,4)--(24,4);
\draw[thin,black] (1,1)--(27,1);
\draw[thin,black] (9,5)--(19,5);
\draw[thin,black] (0,0)--(27,1);

\node at (1.5,0.5) {$C_1$};
\node at (13.5,2.5) {\textcolor{black}{$T_1$}};
\node at (13.5,6.5) {\textcolor{black}{$T_3$}};
\node at (6,5) {\textcolor{black}{$T_2$}};
\node at (15,4.5) {$C_2$};

\end{tikzpicture}\hfill\begin{tikzpicture}[xscale=0.35,yscale=-0.35,every node/.style={scale=0.75}]

\node[circle,draw=black, fill=white, inner sep=1pt,minimum size=7pt] (a) at (0,0) {\textcolor{black}{$C_1$}};
\node[circle,draw=black, fill=white, inner sep=1pt,minimum size=7pt] (b) at (0,-2) {\textcolor{black}{$T_1$}};
\node[circle,draw=black, fill=white, inner sep=1pt,minimum size=7pt] (c) at (0,-4) {\textcolor{black}{$C_2$}};
\node[circle,draw=black, fill=white, inner sep=1pt,minimum size=7pt] (d) at (-2,-6) {\textcolor{black}{$T_2$}};
\node[circle,draw=black, fill=white, inner sep=1pt,minimum size=7pt] (e) at (2,-6) {\textcolor{black}{$T_3$}};

\draw (a)--(b);
\draw (b)--(c);
\draw (c)--(d);
\draw (c)--(e);

\end{tikzpicture}\hfill\;
\end{center}
\caption{Example of a trapezoid and cluster decomposition (left) and corresponding decomposition (right) from \cite{DBLP:conf/soda/FriedGKKPS22}. Note that colored edges depict tall maximal trapezoids while black edges depict clusters.}
\label{fig:decompTree}
\end{figure}
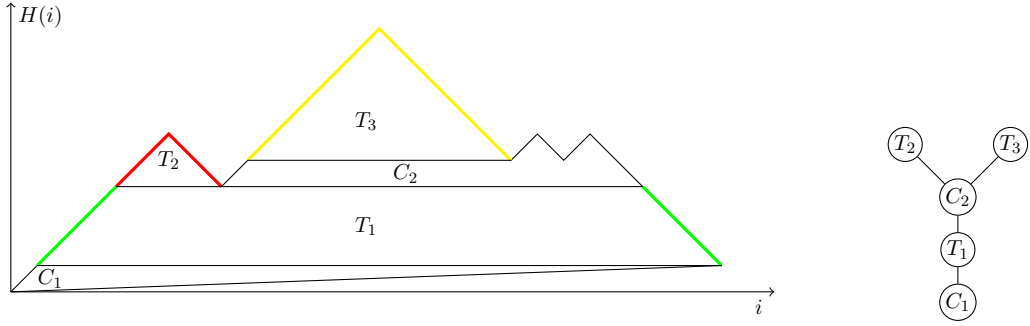

A trapezoid is \emph{maximal} if it cannot be extended, i.e., neither $(a - 1, b, c, d + 1)$ nor $(a, b + 1, c - 1, d)$ is a trapezoid.  A trapezoid is \emph{tall} if $b - a \geq 2k$, where $k = \ded(X)$\footnote{Since $k$ is initially unknown, we may use an exponential search for $k$ by running the algorithm on $k$ set to $1, 2, 4, 8, \ldots$ until it succeeds with the nearest power of $2$ larger than $\ded(X)$.  This only adds a factor of $\Oh(\log k)$ to the pre-processing of the data structure. When dynamic edits occur, we may use our computed value of $\ded(X) + 2$ before the edit as our upper bound since Dyck edit distance can only change by 2 in either direction per dynamic edit.}.
\end{definition}

\begin{definition}[{Cluster, \cite[Definition 3.5]{DBLP:conf/soda/FriedGKKPS22}}]
    Consider a cycle on $n+1$ vertices $[0\dd n]$ with edges $\{(i, i + 1) | i \in [0\dd n)\} \cup \{(n, 0)\}$.  Given a parenthesis string $X$, for each tall maximal trapezoid $(a, b, c, d)$ of $X$, remove all edges $(i, i + 1)$ for $i \in [a \dd b) \cup [c \dd d)$, remove vertices $(a\dd b)\cup (c\dd d)$, and add edges $(a,d)$, $(b,c)$. A \emph{cluster} is a connected component of the resulting graph.
\end{definition}

The set of tall maximal trapezoids and clusters partition the indices of $X$, and there exists a natural tree, the \emph{decomposition tree} such that each node of the tree corresponds to a tall maximal trapezoid or a cluster. The cluster containing indices 0 and $n$ is the root of the tree.  Given a cluster $C$ and trapezoid $T = (a, b, c, d)$, $C$ is the parent of $T$ if $a, d \in C$, and $T$ is the parent of $C$ if $b, c \in C$. See Figure~\ref{fig:decompTree} for an example of such a decomposition.

The key component of the trapezoid and cluster decomposition is that the Dyck edit distance of $\hat{X}$ (recall $\hat{X}$ is the parenthesis string obtained by removing all neighboring pairs of matching parentheses from $X$ iteratively until no such pairs remain and has the same Dyck edit distance as $X$) may be computed efficiently when $k$ is small using this tree.  There are at most $\Oh(k)$ trapezoids and $\Oh(k^2)$ indices of $\hat{X}$ in clusters. Each trapezoid may be processed in time $\tOh(k^3)$ with access to an oracle that answers $\lcp$ queries in $\Oh(1)$-time, and each cluster processed in time $\Oh(k^5)$. The main difficulty in the dynamic setting is constructing the trapezoid and cluster decomposition and maintaining the decomposition tree efficiently. The static setting requires a linear time construction, but we may utilize the LR-decomposition, the partition of $\hat{X}$ into maximal-length substrings of opening parentheses followed by closing parentheses, first defined in Section~\ref{subsec:largeed} to compute this in polynomial time dependent on $k$. Similarly to Section 3, we utilize the dynamic strings data structure to build the LR-decomposition, and we augment the data structure to store additional information about $X$ that we wish to access later.

\begin{lemma}
\label{lem:LRdecomp}
    Given a parenthesis string $X$ with $k = \ded(X)$ and a dynamically maintained collection of strings $X$ with $\hat{X} \in \X$, the LR-decomposition $\D$ of $\hat{X}$ can be computed in time $\Oh(k \log n)$.  Furthermore, $\D$ may constructed such that given a segment $S \in \D$, $|\Lp(S)|, |\Rp(S)|$ and for any index $i \in [|X| - 1]$, $h(i)$ may be found in $\Oh(\log k)$ time. 
\end{lemma}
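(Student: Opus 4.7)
The plan is to reuse the segment-by-segment sweep that underlies Claim~\ref{clm:kreduction}: since $\hat{X}\in\X$ by Lemma~\ref{clm:hatX} and $\hat{X}$ has only $\Oh(k)$ maximal LR-segments by Lemma~\ref{lem:kpeaks}, each segment of $\D$ can be isolated with $\Oh(1)$ operations on the dynamic string collection, each costing $\Oh(\log n)$ time by Theorem~\ref{thm:dynstr} and Lemma~\ref{lem:augdynstr}.

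First I would enumerate the segments left-to-right. Maintain a handle $Y$ to the unprocessed suffix (initially $\hat{X}$), its starting position $p$ in $\hat{X}$, and the height $H$ of $\hat{X}$ at that position. At each iteration, query $\ell_1\coloneqq \lcpt(Y)$. If $Y[0]$ is closing, the next LR-segment is $Y[0\dd\ell_1)$ with $|\Lp|=0$ and $|\Rp|=\ell_1$. Otherwise $Y[0]$ is opening; use $\spl(Y,\ell_1)$ to expose the tail and take $\ell_2\coloneqq\lcpt(Y[\ell_1\dd|Y|))$ (setting $\ell_2\coloneqq 0$ if the tail is empty), yielding an LR-segment of length $\ell_1+\ell_2$ with $|\Lp|=\ell_1$ and $|\Rp|=\ell_2$. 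In either case a final $\spl$ detaches the segment from $Y$; store its handle together with $|\Lp|$, $|\Rp|$, and $H$, then update $p$ and $H$ by adding the segment's length and net change $|\Lp|-|\Rp|$, respectively.

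Next I would organize the extracted segments into a balanced binary search tree $B$ keyed by starting position in $\hat{X}$; since there are $\Oh(k)$ of them, $B$ has depth $\Oh(\log k)$. The fields stored at each node directly answer $|\Lp(S)|$ and $|\Rp(S)|$ queries in $\Oh(1)$ time after the $\Oh(\log k)$ descent. For a height query at index $i$, the descent in $B$ returns the segment $S$ containing $i$ along with its starting position $p_S$, starting height $H_S$, and $|\Lp(S)|$; the height profile inside an LR-segment is a tent function, so
\[
    h(i) = \begin{cases} H_S + (i - p_S) & \text{if } i - p_S \le |\Lp(S)|,\\ H_S + 2|\Lp(S)| - (i - p_S) & \text{otherwise,}\end{cases}
\]
which is $\Oh(1)$ additional work.

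The running time then follows: $\Oh(k)$ iterations, each performing $\Oh(1)$ dynamic-string operations at $\Oh(\log n)$ cost, yield the claimed $\Oh(k \log n)$ bound, and building $B$ on $\Oh(k)$ items in $\Oh(k\log k)$ time is absorbed. I expect the main obstacle to be boundary bookkeeping around the first and last segments: $\hat{X}$ may begin with a pure closing run (so $|\Lp|=0$ for the first segment) or end with a pure opening run (so $|\Rp|=0$ for the last), and the tail exposed by $\spl$ may be empty. All three cases are covered by the convention $\ell_2\coloneqq 0$ when the tail is empty and by branching on the parity of $Y[0]$ before each iteration, so no tools beyond those in Theorem~\ref{thm:dynstr} and Lemma~\ref{lem:augdynstr} are required.
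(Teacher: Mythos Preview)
Your proposal is correct and follows essentially the same approach as the paper: both sweep left to right using $\lcpt$ and $\spl$ on the dynamic string collection to peel off one LR-segment at a time, invoke Lemma~\ref{lem:kpeaks} to bound the number of segments by $\Oh(k)$, and answer height queries by binary-searching for the containing segment and evaluating the piecewise-linear (tent) height profile inside it. The only cosmetic difference is that you maintain the running height $H$ incrementally during the sweep, whereas the paper records it as the partial sum $\sum_{j<i}(|\Lp(S_j)|-|\Rp(S_j)|)$; the two are equivalent.
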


\begin{proof}
    Let $\ell_1 = \lcpt(\hat{X})$, where recall $\lcpt(\hat{X})$ is the maximal-length prefix of $\hat{X}$ composed of only opening or only closing parentheses. Next, we perform $\spl(\hat{X}, \ell_1 - 1)$ to separate this prefix from $\hat{X}$. If $\hat{X}[0]$ is a closing parenthesis, we add $S_1 = \hat{X}[0 \dd \ell_1)$ to $\D$. Otherwise, we let $\ell_2 = \lcpt(\hat{X}[m\dd])$ and perform $\spl(\hat{X}, \ell_1 + \ell_2 - 1)$ and then add $S_1 = \hat{X}[0 \dd \ell_1 + \ell_2)$ to $\D$. We continue repeating the same steps for the remaining string $X' = X[\ell_1 + \ell_2 \dd]$ until $X'$ is empty after splitting the last segment.  By Lemma~\ref{lem:kpeaks}, there are $\Oh(k)$ segments in $\Oh(\hat{X})$. Each segment may be added to $\D$ using a constant number of $\spl$ and $\lcpt$ queries.  By Theorem~\ref{thm:dynstr}, each $\spl$ takes time $\Oh(\log n)$ and by Lemma~\ref{lem:augdynstr}, each $\lcpt$ query takes time $\Oh(\log n)$. Therefore, in total, constructing $\D$ takes time $\Oh(k \log n)$.

    Now, during construction of $\D$, when finding a segment $S_i \in \D$, we observe that the values $\ell_1$ and $\ell_2$ correspond to lengths $|\Lp(S_i)|, |\Rp(S_i)|$, respectively. Furthermore, we may easily store the heights of the first index $\alpha^\Lp_i$ in $\Lp(S_i)$ and the first index $\alpha^\Rp_i$ in $\Rp(S_i)$ by taking $h(\alpha^\Lp_i) = \sum_{j = 0}^{i-1} |\Lp(S_j)| - |\Rp(S_j)|$ and $\alpha^\Rp_i = |\Lp(S_i)| + \sum_{j = 0}^{i-1} |\Lp(S_j)| - |\Rp(S_j)|$, respectively. Then, for any index $i \in [|X|]$, we may first use a binary search through $\D$ to identify which segment $S_i \in \D$ that $i$ belongs to.  Then, if $X[i] \in \Lp(S_i)$, we find that $h(i) = h(\alpha^\Lp_i) + i - \alpha^\Lp_i$. Similarly, if $X[i] \in \Rp(S_i)$, then $h(i) = h(\alpha^\Rp_i) + \alpha^\Rp_i - i$. Overall, this takes $\Oh(\log k)$ time per height query due to the binary search required.
\end{proof}

We now give the construction of the set of tall maximal trapezoids of $\hat{X}$.

\begin{lemma}
\label{lem:trapconstr}
    Given a parenthesis string $X$ and a dynamically maintained collection of strings $\X$ with $\hat{X} \in \X$, the set of tall maximal trapezoids of $\hat{X}$ can be computed in time $\Oh(k^2 \log n)$.
\end{lemma}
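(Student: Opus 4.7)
The plan is to leverage the LR-decomposition of $\hat{X}$, together with a small number of ``valley'' heights, to enumerate $\Oh(k^2)$ candidate quadruples $(a,b,c,d)$ that include all tall maximal trapezoids; since $\hat X$ has only $\Oh(k)$ LR-segments by Lemma~\ref{lem:kpeaks}, this is affordable.

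First, I would invoke Lemma~\ref{lem:LRdecomp} to compute the LR-decomposition $\D = (S_1, \ldots, S_m)$ of $\hat{X}$ in $\Oh(k \log n)$ time, together with its height-query oracle and with the values $|\Lp(S_i)|, |\Rp(S_i)|$. From these lengths I would record, in additional $\Oh(k)$ time, the \emph{valley heights} $v_0, v_1, \ldots, v_m$, where $v_{i-1}$ is the height at the left boundary of $S_i$ and $v_m$ is the height at the end of $\hat X$.

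Next, I would establish the following structural claim: every tall maximal trapezoid $(a,b,c,d)$ of $\hat X$ has its rising edge $[a,b)$ contained entirely in some $\Lp(S_i)$ and its falling edge $[c,d)$ contained entirely in some $\Rp(S_j)$ with $i \le j$, because both edges are monotone runs of openings (resp.\ closings). Inward maximality, combined with the plateau condition $h(r) \ge h(b)$ for $r \in [b,c]$, then forces
\[
h(b) = h(c) = \min_{i \le i' < j} v_{i'} \quad \text{(if } i<j\text{)}, \qquad h(b) = \text{peak of } S_i \quad \text{(if } i=j\text{)},
\]
while outward maximality (the base $[a,b)$ and $[c,d)$ cannot be grown further) forces $h(a) = h(d) = \max(v_{i-1}, v_j)$. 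Because within $\Lp(S_i)$ and $\Rp(S_j)$ the height changes by $\pm 1$ per step, the positions $a,b,c,d$ are then uniquely determined by these heights.

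Finally, I would iterate over all $\Oh(k^2)$ pairs $(i,j)$ with $i \le j$. A running minimum, updated as $j$ increases, yields $\min_{i \le i' < j} v_{i'}$ in $\Oh(1)$ amortized time per pair. For each pair I would recover the candidate $(a,b,c,d)$ using a constant number of $\Oh(\log k)$-time height queries, verify non-degeneracy ($h(a) < h(b)$, valid indices inside $\Lp(S_i)$ and $\Rp(S_j)$), and keep the candidate iff $b - a \ge 2k$. Different pairs $(i,j)$ necessarily produce distinct trapezoids, so no deduplication is needed, and the total time is $\Oh(k \log n) + \Oh(k^2 \log k) = \Oh(k^2 \log n)$. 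The main technical hurdle is the structural claim: one must carefully argue that both directions of maximality really do pin down $h(a)$ and $h(b)$ uniquely, and handle the edge cases where $|\Lp(S_i)| = 0$ or $|\Rp(S_j)| = 0$ (no candidate exists), where $i = j$ (the plateau is a single point, and the inward-maximality argument devolves to the peak of $S_i$), and where $v_{i-1} = v_j$ (both ends of the base are simultaneously tight). Once the characterization is established, the bookkeeping is routine.
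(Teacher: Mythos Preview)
Your proposal is correct and shares the paper's overall skeleton: both compute the LR-decomposition of $\hat X$ (which has $\Oh(k)$ segments by Lemma~\ref{lem:kpeaks}) and then iterate over all $\Oh(k^2)$ ordered pairs $(S_i,S_j)$, extracting at most one candidate trapezoid per pair at $\Oh(\log n)$ cost. The difference lies in \emph{how} that candidate is extracted. The paper runs a greedy sweep: for each fixed $S_i$ it maintains a ``matched'' counter recording how many opening parentheses at the top of $\Lp(S_i)$ have already been paired off, and as $j$ increases it reads the trapezoid coordinates directly from the current counter value and from $h(\beta^{\Rp}_j)$. You instead give an explicit closed-form characterization in terms of valley heights, $h(a)=\max(v_{i-1},v_j)$ and $h(b)=\min_{i\le i'<j}v_{i'}$ (with the peak of $S_i$ when $i=j$), and then recover $a,b,c,d$ arithmetically. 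Your formulation makes both directions of maximality transparent and sidesteps the counter bookkeeping; the paper's sweep is closer to a direct implementation and implicitly produces the same quantities through the running counter. Either way the cost is $\Oh(k^2\log n)$.
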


\begin{proof}
    First, we compute the LR-decomposition $\D$ of $\hat{X}$ in time $\Oh(k\log n)$ according to Lemma~\ref{lem:LRdecomp}.

    We assume as per their construction that the segments $S_1, S_2, \ldots, S_m$ of $\D$ are sorted according to their starting indices, e.g. $X[0] \in S_1$. For each segment $S_i$, let $\alpha^{\Lp}_i$ and $\beta^{\Lp}_i$ be the first and last index of $\Lp(S_i)$, respectively, and define $\alpha^{\Rp}_i$ and $\beta^{\Rp}_i$ similarly for $\Rp(S_i)$. For each segment $S_i$, we observe that any index $j$ of $\Lp(S_i)$ will be part of a trapezoid with some index in $\Rp(S_j)$ for some $j \geq i$ if $S_j$ is the nearest segment to the right of $S_i$ such that it contains an index of height exactly 1 more than $h(j)$. Therefore, we iterate through $\D$ from left-to-right to match every opening parenthesis in a segment of $\D$ to the closing parenthesis of a later (or the same) segment with the correct corresponding height. 
    
    We consider an index unmatched if it has not yet been included in a trapezoid by our construction algorithm. When considering two segments $S_i, S_j$ with $i \le j$, we first take the highest height of its unmatched opening parentheses and the lowest height of closing parentheses in $S_j$.  We note that as we are considering segments left-to-right, $S_j$ must be able to match all unmatched opening parentheses with heights between the highest height of $S_i$ and the lowest height of $S_j$. Therefore, we take this difference as the height of our trapezoid, taking care to not exceed the number of unmatched opening parentheses in $S_i$. If the computed trapezoid height is larger than $2k$, we add it to the set of tall trapezoids, otherwise we move onto the next segment $\Rp(S_{j + 1})$ to consider matching with $\Lp(S_i)$.  In either case, we update how many opening parentheses have been matched so far in $\Lp(S_i)$. The following is our construction algorithm, which we do for each $S_i$ for $i$ from 1 to $m$:

    \begin{enumerate}
        \item $matched \leftarrow 0$, $\T \leftarrow \emptyset$
        \item For each $S_j$ for $j$ from $i$ to $m$:
            \begin{enumerate}

                \item Let $d \leftarrow h(\beta^\Lp_i - matched) - h(\beta^\Rp_j)$, $d' \leftarrow |\Lp(S_i)| - matched$
                \item If $d' \geq d \geq 2k$, then $\T \leftarrow \T \cup (\beta_i^\Lp - matched - d, \beta^\Lp_i - matched, \beta^\Rp_j - d, \beta^\Rp_j)$
                \item If $d > d' \geq 2k$,
                then $\T \leftarrow \T \cup (\alpha^\Lp_i, \beta^\Lp_i - matched, \beta^\Rp_j - d, \beta^\Rp_j - d + d')$
                \item If $h(\beta^{\Lp}_i- matched) \geq h(\beta^\Rp_j)$, then set $matched \leftarrow \min(matched + h(\beta^{\Lp}_i- matched) - h(\beta^\Rp_j) + 1, |\Lp(S_i)|)$.
            \end{enumerate}
    \end{enumerate}

    Note that there are $m \leq k$ segments, so there are $k^2$ repetitions of step 2 for our construction of $\T$. By Lemma~\ref{lem:LRdecomp}, we may answer height queries in time $\Oh(\log n)$, and therefore, the total time of construction for $\T$ is $\Oh(k^2 \log n)$.
\end{proof}

\begin{lemma}
\label{lem:clusterconstr}
    Given a parenthesis string $X$, its LR-decomposition $\D$, and its corresponding set of tall maximal trapezoids $\T$, the set of clusters of $\hat{X}$ and the decomposition tree of $X$ can be computed in time $\tOh(k^2)$.
\end{lemma}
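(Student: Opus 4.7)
The plan exploits three size bounds established in the preceding material: $|\D| = \Oh(k)$ by \cref{lem:kpeaks}, $|\T| = \Oh(k)$ by the construction in \cref{lem:trapconstr}, and (as shown in \cite{DBLP:conf/soda/FriedGKKPS22}) the total number of indices of $\hat{X}$ belonging to clusters is $\Oh(k^2)$. The strategy is therefore to avoid any linear scan of $\hat{X}$ by treating trapezoid corners and LR-segment boundaries as the only ``anchor'' points, and to invoke the constant-time height/length primitives of \cref{lem:LRdecomp} for any local computation inside a segment.

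First I would enumerate all cluster indices explicitly. For each trapezoid $(a,b,c,d) \in \T$, the intervals $(a\dd b)$ and $(c\dd d)$ are excluded from clusters. Since within a single LR-segment $S_i$ the heights of indices in $\Lp(S_i)$ (resp.\ $\Rp(S_i)$) are strictly monotone, the excluded strips inside $S_i$ are pairwise nested, so I can sort the trapezoid corners falling in $S_i$ by height and peel off the nested strips one by one; the residual cluster indices in $S_i$ are exactly the short gaps between consecutive strips, together with the gap at the bottom of the segment. Using \cref{lem:LRdecomp} to jump between heights in $\Oh(\log n)$ time, this enumeration produces the set of $\Oh(k^2)$ cluster indices in $\tOh(k^2)$ time.

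Next I would compute the cluster components using a union-find structure over the enumerated indices. Walking through $[0 \dd |\hat{X}|]$ in cyclic order, I would union each cluster index with the preceding one whenever they are not separated by a tall trapezoid interior. For each trapezoid $(a,b,c,d) \in \T$, I would additionally union $\{a,d\}$ and $\{b,c\}$, representing the two shortcut edges introduced by the trapezoid. Finally I would union the wrap-around pair $\{|\hat{X}|,0\}$. This yields the set of clusters in $\tOh(k^2)$ total operations. For the decomposition tree, I would create one node per cluster and one node per trapezoid, then for each $(a,b,c,d) \in \T$ look up the parent cluster (the component containing $a$, equivalently $d$) and the child cluster (the component containing $b$, equivalently $c$), inserting both edges; the root is the cluster containing index $0$. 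This step costs $\tOh(k)$.

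The main obstacle is the enumeration step: because $|\hat{X}|$ itself can be much larger than $k^2$, one cannot afford to iterate over LR-segments character by character. The technical content lies in showing that after peeling off the nested trapezoid strips within each segment, only $\Oh(k)$ cluster indices per segment (hence $\Oh(k^2)$ total) actually remain, and that each of these gaps can be located in polylogarithmic time from the height/length data maintained by \cref{lem:LRdecomp}. Once this is established, the union-find bookkeeping and tree construction are routine, and the overall running time is $\tOh(k^2)$ as claimed.
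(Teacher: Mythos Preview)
Your approach is correct and reaches $\tOh(k^2)$, but it is not the paper's route. The paper never enumerates individual cluster indices or uses union-find; instead, after sorting the $O(k)$ trapezoids by starting index, it builds each cluster by a direct walk over the LR-segments, appending whole segments and segment fragments to the current cluster until it meets a trapezoid corner $a'$, then jumping across to $d'$ and continuing (recording that trapezoid as a child in the decomposition tree along the way). A non-root walk starts at some corner $b$ and terminates upon reaching the matching $c$, recording that trapezoid as the parent. The time bound is simply $O(k)$ clusters times $O(k)$ segments, with an $O(\log k)$ binary search per step to locate the next trapezoid corner; clusters are thus kept as lists of segment fragments rather than as sets of indices. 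Your explicit enumeration plus union-find is a legitimate alternative and has the side benefit that the cluster indices emerge in a form directly usable by the $O(k^5)$ DP of \cref{thm:exactk}. Two small corrections: the excluded strips within a single $\Lp(S_i)$ (or $\Rp(S_i)$) are pairwise \emph{disjoint}, not nested, since maximal trapezoids do not overlap---but your sort-the-corners-and-read-off-gaps procedure works identically for disjoint intervals, so this is only terminological; and ``$O(k)$ cluster indices per segment'' is not the right justification and is not obviously true segment-by-segment---what you actually need, and already cite, is the global $O(k^2)$ bound on cluster indices from~\cite{DBLP:conf/soda/FriedGKKPS22}.
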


\begin{proof}

We describe the procedure for computing each cluster.  Recall that each cluster is a connected component of the graph of parenthesis heights in $X$ with each edge of tall maximal trapezoids $(a, b, c, d)$ removed and edges $(a, d)$ and $(b, c)$ inserted. We assume the segments in $\D$ are in increasing sorted order according to the starting indices of the segments. Let $S_i$ be the first segment of $\D$ with a substring contained in a trapezoid $T = (a, b, c, d) \in \T$. We add all previous segments $S_j, j < i$ to our first cluster $C_1$ as well as the fragment of $S_i$ before index $a$. Then, we find the segment $S_j$ containing index $d$, and continue in a similar manner. Let the $S_{i'}$ be the next closest segment $i' \geq j$ contained in a tall maximal trapezoid $T' = (a', b', c', d') \ne T$. We add the fragment of $S_j$ after index $d$ and before trapezoid $T'$ to cluster $C_1$ as well as all segments between $S_j$ and $S_{i'}$. Additionally, we add the fragment of $S_{i'}$ before index $a'$. We then repeat these steps until we reach the end of $\hat{X}$. $C_1$ is the parent of any trapezoids traversed in this step.

We continue in a similar manner to construct the remaining connected components. To find the next cluster $C_m$, we find the first index $u$ of $X$ not contained in a previous cluster $C_\ell$, $\ell < m$, and find the segment $S_i \in \D$ containing $u$ (for example, for $C_2$ this index will be $b$ of trapezoid $T$ from our construction of $C_1$). We add every index in the remainder of $S_i$ and every following segment of $\D$ until we reach a trapezoid $T = (a, b, c, d)$.  If $a$ is the first index of $T$ that we reach, then we continue adding indices starting at $d$ as we did in the first step and set $T$ as the child of $C_m$.  If instead we reached index $c$ (e.g., in the case we started our connected component at index $b$ earlier), we have found the entire connected component, set $T$ as the parent of $C_m$, and move onto the next cluster.

There are at most $\Oh(k)$ clusters we need compute, and for each, we may have to check each segment of $\D$, where $|\D| = \Oh(k)$ by Lemma~\ref{lem:kpeaks}.  We may sort the set of trapezoids are sorted according to their starting indices in $\Oh(k \log k)$-time, and so we may use a binary search whenever we need to find the nearest trapezoid after an a given index of $\hat{X}$.  Therefore, the total time of construction will be $\tOh(k^2)$,

\end{proof}

\exactK*

\begin{proof}
The subroutine processing a tall trapezoid is given $\ded_{\le k}(X[i\dd j))$ for all $i,j\in [b-2k\dd b]\cup [c-2k\dd c]$, and the output is $\ded_{\le k}(X[i\dd j))$ for $i,j\in [a\dd a+2k]\cup [d-2k\dd d]$.
For $(i,j)\in [a\dd b]^2\cup [c\dd d]^2$, we have $\ded_{\le k}(X[i\dd j))=\lceil{(j-i)/2}\rceil$ because all parentheses in $X[i\dd j)$ are all opening or all closing. 
Furthermore, \cite[Lemma 3.4]{DBLP:conf/soda/FriedGKKPS22} claims that $(i,j)\in ([a\dd b]\times [c\dd d])\setminus ([b-2k\dd b]\times [c\dd c+2k])$ and $v\in [0\dd k]$, we have $\ded(X[i\dd j))\le v$ if and only if $L_v[(i+j)-(b+c)]\ge j$, where $L_v$ is computed by the following algorithm, which assumes that all out-of-bounds are initialized entries have value $-\infty$. We also denote $\delta_+ = \max(\delta,0)$ and $\delta_- =  \min(\delta,0)$.

{\footnotesize
\begin{algorithm}
\caption{Processing a trapezoid $(a,b,c,d)$}
\For{$v:=0$ \KwSty{to} $k$}{
    \For{$\delta:=-2v$ \KwSty{to} $2v$}{
        $L'_v[\delta]:=\min(d-2k+\delta_+,\linebreak\text{ }\qquad\quad\qquad \max(L_{v-1}[\delta-2]+2,L_{v-1}[\delta-1]+1, L_{v-1}[\delta]+1, L_{v-1}[\delta+1], L_{v-1}[\delta+2]))$\;
        \If{$\ded(X[b-2k+\delta_+\dd c+2k+\delta_-))\le v$}{
            $L'_v[\delta]:=\max(L'_v[\delta],c+2k+\delta_-)$\;
        }
        \If{$L'_v[\delta]\ne -\infty$}{
            $L_v[\delta]:=L'_v[\delta]+\lcp(T(X[a\dd b+c+\delta-L'_v[\delta])), X[L'_v[\delta]\dd d))$;
        }
    }
}
\end{algorithm}
}
This algorithm can be implemented in $\Oh(k^2 \log n)$ time assuming that $X$ is stored in a data structure of \cref{lem:augdynstr}. To process the clusters, we note that there are $\Oh(k^2)$ indices in clusters, so each index of the cluster may be solved with a standard cubic time Dyck edit distance algorithm in $\Oh(k^{2 + 3})$ total time (see Fried et al. \cite{DBLP:conf/soda/FriedGKKPS22} for details).
    By Lemmas~\ref{lem:trapconstr} and \ref{lem:clusterconstr}, we can compute the set of trapezoids, clusters, and decomposition tree in time $\Oh(k^2)$. So in total, we have $\Oh(k^5)$ update time.
\end{proof}

\bibliographystyle{alphaurl}
\bibliography{refs}

\appendix

\section{Missing Proofs of Section~\ref{sec:fastapprox}}\label{sec:missingproofs}

\subsection{Lazy Propagation}\label{sec:lazy}
For our sub-polynomial dynamic Dyck edit distance approximation algorithm, we often wish to find parentheses with specific height conditions, such as twins of a given parenthesis or the index of the parenthesis corresponding to the parent node of a given child node. In general, we wish to quickly answer what we call \emph{minimum height range queries}, queries $\rg(i, h)$ that return the maximum range starting at $i$ such that all parentheses in this range have heights greater than $h$.

In this section, we describe how we can use segment trees to pre-process a parenthesis string $X$ to answer such range queries in logarithmic time, and then show how to use a well-known technique \emph{lazy propagation} to correctly maintain this data structure when $X$ undergoes dynamic updates.

The segment tree is a binary tree. Each leaf node of the segment tree $\T$ for $X$ will correspond to an index $0 \leq i < |X|$.  Leaf nodes will be ordered such that the leftmost leaf node corresponds to index 0 and the rightmost leaf node corresponds to index $|X| - 1$. Each internal node will be the parent of two consecutive nodes from the following level and will correspond to the range of indices in the concatenation of the range of indices of its children.  Additionally, each node $v$ with corresponding range $[i, j]$ stores the minimum height of an index in this range, denoted $minh(v)$. For a leaf node $v$, $minh(v) = h(i)$ where $i$ is the index corresponding to $v$. 

\SetKwFunction{getrange}{GetMinRange}
\SetKwFunction{combine}{Combine}

\begin{algorithm}
    \Fn{\combine$([\ell_1\dd r_1], [\ell_2\dd r_2])$}{
        \eIf{$r_1 == \ell_2 - 1$}
        {
            Return $[\ell_1\dd r_2]$\;
        }
        {
            Return $[\ell_1\dd r_1]$\;
        }
    }
    
    \Fn{\getrange$(v, i, h)$}{
        \If{$i > v.r$}{
            Return $\emptyset$\;
        }
        \If{$i \leq v.\ell$ and $minh(v) > h$}
        {
            Return $[v.\ell\dd v.r]$\;
        }
        \uIf{$i \leq v.\ell$ and $minh(v.left) \leq h$ }{Return \getrange$(v.left, i, h)$}
        \uElseIf{$i \leq \ell$ and $minh(v.left) > h$}
        {
            Return $\combine([v.left.\ell, v.left.r], \getrange(v.right, i, h))$
        }
        \Else{Return $\combine(\getrange(v.left, i, h), \getrange(v.right, i, h))$\;
        }
    }
    \caption{Answering minimum height range queries with segment trees}
    \label{alg:rangequery}
\end{algorithm}

\begin{claim}
\label{clm:segtree}
    For a parenthesis string $X$, Algorithm~\ref{alg:rangequery} answers a minimum height range query $(i, h)$ in $\Oh(\log^2 n)$ time.
\end{claim}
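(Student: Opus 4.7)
The plan is to establish both correctness and the running-time bound for Algorithm~\ref{alg:rangequery}, which performs a tailored segment-tree traversal on the tree from \cref{lem:segtree2}.

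For correctness, I would argue by induction on the height of $v$ that $\getrange(v, i, h)$ returns the maximal range $[\max(i, v.\ell) \dd r']$ of consecutive indices within $[v.\ell \dd v.r]$ all of whose heights exceed $h$ (and $\emptyset$ if no such range exists, in particular when $i > v.r$). The three ``short-circuit'' branches of the algorithm handle: (i) the subtree is entirely to the left of the query (return $\emptyset$), (ii) the query covers the whole subtree and every height is $> h$ (return $[v.\ell \dd v.r]$), and (iii) the query covers the subtree and the left child already violates (descend only into the left child, since the left child must itself bound the answer). The interesting case is when the left child is wholly valid and we descend into the right child; here I would verify that $\combine$ correctly captures the invariant that the returned range from the left child ends exactly at $v.left.r$ (since it was returned whole), so that $\combine$ succeeds in gluing whenever the right child contributes a prefix $[v.right.\ell \dd r_2]$, and otherwise a gap within the right child triggers the ``return $[\ell_1 \dd r_1]$'' branch and correctly terminates the maximal range.

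For the running-time bound, I would argue the algorithm visits $\Oh(\log n)$ nodes in total. Standard segment-tree analysis applies: along the root-to-$i$ path, at each ancestor the algorithm recurses into at most two children, but in the right child (on the side away from $i$) the recursion follows a monotone descending path in which each call either returns a full range (a leaf case) or extends only into one further child (due to the guard on $minh(v.left)$), contributing $\Oh(\log n)$ additional visits in aggregate. The extra logarithmic factor producing the $\Oh(\log^2 n)$ bound then accounts for the dynamic maintenance of the tree: since $X$ undergoes insertions and deletions that shift heights of entire suffixes by $\pm 1$, the values $minh(v)$ are maintained using lazy-propagation markers on subtrees, and realizing the true $minh(v)$ when the algorithm inspects it may require pushing down pending offsets along the path of length $\Oh(\log n)$ from the root; each push-down is $\Oh(1)$ but is charged once per visited ancestor, yielding $\Oh(\log n)$ work per visited node in the worst case.

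The main obstacle will be the interaction between $\combine$ and the greedy descent: I must ensure that when a recursive call returns a \emph{proper} prefix of its subtree range, the caller does not mistakenly explore siblings that would violate the maximality of the reported range. I plan to make this precise by strengthening the inductive hypothesis to state that the returned interval $[v.\ell \dd r']$ satisfies $r' < v.r$ if and only if the block is strictly terminated inside the subtree by a height $\le h$ at index $r'+1$; the $r_1 == \ell_2 - 1$ test in $\combine$ then exactly decides whether extension across the subtree boundary is legal. Once this invariant is in place, correctness and the $\Oh(\log n)$ node-visit count follow cleanly, and combining with the $\Oh(\log n)$ lazy-propagation cost per visit yields the claimed $\Oh(\log^2 n)$ time.
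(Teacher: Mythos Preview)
Your correctness sketch is fine and in fact more careful than the paper's, but your running-time argument contains a genuine gap.

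You assert that the traversal touches only $\Oh(\log n)$ nodes, and then manufacture the second $\log n$ factor from lazy-propagation push-downs. Both halves of this are wrong. First, Algorithm~\ref{alg:rangequery} as written does \emph{not} visit only $\Oh(\log n)$ nodes: in the \texttt{else} branch (triggered whenever $v.\ell < i \le v.r$, i.e.\ at every node on the root-to-$i$ path) the algorithm recurses into \emph{both} children unconditionally, and the right recursion is launched regardless of whether the left recursion already found a height $\le h$. Thus each of the $\Oh(\log n)$ nodes on the root-to-$i$ path can spawn an independent single-path descent into its right sibling of depth up to $\Oh(\log n)$, giving $\Theta(\log^2 n)$ visited nodes in the worst case. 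The paper's proof makes exactly this count: it shows inductively that level $j$ contains at most $j$ active nodes, so the total is $\sum_{j=1}^{\Oh(\log n)} j = \Oh(\log^2 n)$, with $\Oh(1)$ work per node. (The paper then remarks, \emph{after} the proof, that a modified algorithm which checks whether a left neighbor has already found a violation can be held to three active nodes per level and hence $\Oh(\log n)$ total --- but that is a different algorithm.)

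Second, lazy propagation does not cost $\Oh(\log n)$ per visited node. Push-downs are performed once per node as you descend from the root, so they add $\Oh(1)$ work per visited node, not a multiplicative $\log n$. The extra logarithm in the claim has nothing to do with lazy markers; it is purely the node count.
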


\begin{proof}
    We first build the segment tree as described above for $X$.  We can then pass the root of the tree to the $\getrange$ function along with $\ell = 0, r = |X| - 1$ and the query values $i$ and $h$.  Whenever the range of $v$, the node we are considering in the $\getrange$ function lies completely to the left of the index $i$, we do not add it to our range or consider it any further. Conversely, if the range of $v$ lies fully to the right of $i$, if it has minimum height above $h$ we return the full range to be passed to the $\combine$ function later.  If the range of $v$ lies fully to the right $i$ but the minimum height in the left half of the range is not above $h$, then we only recurse on the left child, which corresponds to the left half of the range.  If instead, the left child range has minimum height fully above $h$, then we recurse on the right child and will return the full left child range concatenated with any parts of the right child found to be in the solution as well.  Finally, if the range of $v$ is not fully to the right of $i$ we just recurse on both children.

    Note that we want the algorithm to return a contiguous range starting from $i$. Therefore, when we have potential pieces of the solution from both a left and right child, we use the $\combine$ function to take the union of the left and right children's partial solutions if the resulting range would be contiguous, or otherwise only take the left child's returned range.

    We call a node \emph{active} if is passed to $\getrange$ at some timestep when Algorithm~\ref{alg:rangequery} is run. We show by induction that there are only $j$ active nodes from level $j$ of the segment tree.  As a base case, we note that there is 1 node, the root, at level 1 in the tree and it is passed directly to the algorithm at the start.  We now assume that there are $j-1$ active nodes at level $j-1$.  We note that at most 1 active node on each level may contain $j$ in its range.   This node may recurse on both its left and right child in the following level. For any node whose range falls completely to the left of $i$, we do not continue recursing on this node's subtree any further.  This is true for any node whose range falls completely to the right of $i$ with minimum height greater than $h$ as well. Finally, for any remaining node, which must have a range falling completely to the right of $i$, the algorithm only recurses on either the left or right child. Therefore, all nodes recurse on at most 1 child except for the node whose range contains $i$, which recurses on 2, i.e., there are at most $j - 1 + 1 = j$ active nodes in the next level of the tree.

     Each active node only requires constant time ignoring their recursive calls. Since the height of a segment tree is $\Oh(\log n)$, there will be $1 + 2 + \ldots + \Oh(\log n)$ active nodes, and in total, the run-time of Algorithm~\ref{alg:rangequery} will be $\Oh(\log^2 n)$.
\end{proof}

While we only care about answering such queries in poly-logarithmic time, we note that improving the run-time for each query to $\Oh(\log n)$ is possible. For any pair of active nodes $v_1, v_2$ on the same level of the segment tree with ranges $[\ell_1 \dd r_1], [\ell_2\dd r_2]$, respectively, such that $i < \ell_1 < \ell_2$, we note that only one of these active nodes needs to recurse on any children.  If $minh(v_1) > h$, then the algorithm returns $[\ell_1\dd r_1]$ as normal without recursing.  If instead $minh(v_1) \leq h$, this means that it is not possible for $[\ell_2\dd r_2]$ to be in the solution to the range query since there is already some parenthesis between $\ell_2$ and $i$ with height smaller than $h$.  We may put this idea into practice by maintaining a queue of active nodes at each level in sorted order according to their ranges.  Before processing an active node $v_2$, we just check the range $[\ell_1\dd r_1]$ and minimum height of the node $v_1$ in the queue to the left of $v_2$.  If $\ell_1 > i$ and $minh(v_1) \leq h$, we do not have to continue recursing on $v_2$ and can return an empty range. Now, by induction again we may show that there are at most 3 active nodes per level of the segment tree.  In the top level of the tree, trivially there is 1 active node.  Now we look at a lower level $j$ and assume there are 3 active nodes. 1 such node $v$ may correspond to a range containing $i$ and so may need to recurse on both children.  Any active node to the left of $v$ in the queue will not recurse on any children since their corresponding range falls completely to the left of $i$.  If there are two active nodes to the right of $v$, one of them will not recurse by our above argument.  The remaining active node to the right of $v$ will only recurse on a single child as in the proof of Claim~\ref{clm:segtree}. Therefore, there are 3 active nodes in level $i + 1$, completing the inductive argument. In total, we will visit less than $3\log n$ active nodes, and so the runtime will be $\Oh(\log n)$.


We now discuss how to quickly update the segment tree to handle dynamic insertions, deletions, and substitutions in the parenthesis string.  First, we want to make sure the height of the segment tree is always $\Oh(\log n)$.  Initially, this will be true due to our construction where each leaf corresponds to a single index and every parent node concatenates the two ranges of its children.  We may continue to guarantee $\Oh(\log n)$ height by using a self-balancing tree such as an AVL tree to handle when changes happen to $X$.  Every deletion or insertion corresponds to an insertion or deletion of a leaf.  We may additionally need to recompute $minh$ values for all nodes in the path to any inserted, deleted or rotated nodes after a rebalance, which may be done in $\Oh(\log n)$ time as $\Oh(\log n)$ nodes may be affected by a single update to $X$.  Furthermore, an update of $X$ at index $i$ changes the height of all parentheses with indices greater than $i$.  If the update is an inserted opening parenthesis or a deleted closing parenthesis, we must increment the height of all following parentheses by 1.  If the update is a deleted opening parenthesis or inserted closing parenthesis, heights decrease by 1.  If the update is a substitution, this is equivalent to a deletion and then insertion of the matching parenthesis, so the height changes by 2 for all subsequent parentheses in the relevant direction.

To handle efficient increments and decrements to the height of large sets of nodes in the segment tree, we use a well-known technique called \emph{lazy propagation}. At each node, we will add an additional variable to store any increment to its range. When we access the min height of a node's range, we will add the new variable's value to the min height to include any updates to the parenthesis string.  We can propagate any increments in a parent node to its children precisely when we are traversing through the segment tree during a range query and reset the parent node's increment variable to 0. This guarantees that the children have enough information to correctly compute their minimum height without needing to update every node, instead updating only when those nodes are needed. Algorithm~\ref{alg:updateseg} provides pseudo-code of the update function which takes a range of parentheses and the value of the change in height for all parentheses in this range.

\SetKwFunction{update}{Update}

\begin{algorithm}
    \Fn{\update$(v, \ell, r, incr)$}{
        \If{$\ell > r$}
        {
            Return\;
        }
        \eIf{$v.\ell == \ell$ and $v.r == r$}
        {
            $v.incr \leftarrow v.incr + incr$\;
        }
        {
            $\update(v.left, \ell, \min(v.left.r, r))$\;
            $\update(v.right, \max(v.right.\ell, \ell), r)$\;
        }
    }
    \caption{Update the heights in the segment tree when a dynamic edit occurs.}
    \label{alg:updateseg}
\end{algorithm}

\begin{claim}
    Algorithm~\ref{alg:updateseg} takes $\Oh(\log n)$ time.
\end{claim}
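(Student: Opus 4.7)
The approach is a level-by-level counting argument for the recursion tree of \update, essentially mirroring the analysis used for \getrange\ in the proof of Claim~\ref{clm:segtree}. Since the segment tree is maintained as an AVL-like self-balancing tree, its height is $O(\log n)$, so it suffices to show that on each level only $O(1)$ recursive calls to \update\ actually do nontrivial work.

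First I would classify each invocation $\update(v,\ell,r,\mathit{incr})$ into one of three types: (a) $\ell>r$, in which case the call returns immediately; (b) $[v.\ell,v.r]=[\ell,r]$, in which case the call adds $\mathit{incr}$ to $v.incr$ and stops; and (c) otherwise, the call spawns one recursive call on each child with the query range restricted by intersection. Types (a) and (b) contribute only $O(1)$ work and spawn no further calls, so the running time is proportional to the total number of calls generated, which in turn is dominated by the number of type-(c) calls plus a constant factor.

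Next I would bound the number of type-(c) calls at any fixed level of the segment tree. A type-(c) call on $v$ requires that the current query range $[\ell,r]$ is a proper sub-range of $[v.\ell,v.r]$ straddling the midpoint between $v$'s children; this forces $v$ to lie on the root-to-leaf path to the segment-tree leaf containing the original left endpoint $\ell_0$, or on the analogous path for $r_0$. Each such path meets every level at most once, so there are at most two type-(c) calls per level. Each type-(c) call spawns at most two child calls, which are themselves of type (a), (b), or (c); the first two are charged $O(1)$ at that call, and the third is accounted for in the next-level budget.

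Combining the per-level bound of $O(1)$ with the $O(\log n)$ height of the tree yields $O(\log n)$ total recursive calls, each doing $O(1)$ work (the lazy increment is a single addition), for a total running time of $O(\log n)$. The only subtlety I anticipate is verifying the type-(c) characterization directly from the pseudocode — in particular, checking that when the query range strictly contains a child's range, the corresponding recursive call indeed hits the type-(b) equality on that child and terminates immediately, rather than descending further. This is a routine case check but is the main place where a careless reading of the pseudocode could inflate the per-level bound.
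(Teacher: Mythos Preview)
Your proposal is correct and follows essentially the same approach as the paper: a level-by-level count of active calls showing $O(1)$ per level, combined with the $O(\log n)$ tree height. The paper phrases it as ``at most 4 active nodes per level'' (the two boundary nodes may recurse on both children, while the inner ones are fully covered and stop), which is exactly your type-(c)/type-(b) dichotomy with a slightly different bookkeeping of the constant.
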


\begin{proof}
    The proof is by induction.  An active node is any node passed to $\update$ at some time step of the algorithm. We argue inductively that there are at most 4 active nodes per level of the segment tree. At the root level, this is trivially true.  We assume this is true up to some level $j$.  The leftmost and rightmost nodes may recurse on both children.  However, the inner two nodes will not recurse since they are fully in the range of parentheses that need to be incremented.  Therefore, we just set these two nodes' $incr$ variables accordingly and do not recurse.  Thus, there will only be 4 nodes in level $j+1$.  

    Since there are 4 active nodes per level and the height of the segment tree is $\Oh(\log n)$, we have $\Oh(\log n)$ total run time.
\end{proof}

\end{document}